\newcommand{\pr}{\mathbb{P}}
\newcommand{\E}{\mathbb{E}}
\newcommand{\rmd}{\mathrm{d}}
\newcommand{\mc}{\mathcal}
\newcommand{\mb}{\mathbf}
\newcommand{\bigCI}{\mathrel{\text{\scalebox{1}{$\perp\mkern-10mu\perp$}}}}
\newtheorem{theorem}{Theorem}[section] 
\newtheorem{lemma}[theorem]{Lemma}
\newtheorem{proposition}[theorem]{Proposition}
\newtheorem{remark}[theorem]{Remark}
\newtheorem{assumption}[theorem]{Assumption}
\newtheorem{condition}{Condition}
\title{\bf Rate doubly robust estimation for weighted average treatment effects}
\author{
Yiming Wang$^{1}$ \and
Yi Liu$^{1,\dagger}$ \and
Shu Yang$^{1}$
}
\date{}
\begin{document}

\maketitle

\begin{center}
$^{1}$Department of Statistics, North Carolina State University, Raleigh, North Carolina, USA.\\[0.5em]
$^\dagger$Corresponding author: Yi Liu (\texttt{yliu297@ncsu.edu})
\end{center}

\bigskip

\begin{abstract}
The weighted average treatment effect (WATE) defines a versatile class of causal estimands for populations characterized by propensity score weights, including the average treatment effect (ATE), treatment effect on the treated (ATT), on controls (ATC), and for the overlap population (ATO). WATE has broad applicability in social and medical research, as many datasets from these fields align with its framework. However, the literature lacks a systematic investigation into the robustness and efficiency conditions for WATE estimation. Although doubly robust (DR) estimators are well-studied for ATE, their applicability to other WATEs remains uncertain. This paper investigates whether widely used WATEs admit DR or rate doubly robust (RDR) estimators and assesses the role of nuisance function accuracy, particularly with machine learning. Using semiparametric efficient influence function (EIF) theory and double/debiased machine learning (DML), we propose three RDR estimators under specific rate and regularity conditions and evaluate their performance via Monte Carlo simulations. Applications to NHANES data on smoking and blood lead levels, and SIPP data on 401(k) eligibility, demonstrate the methods' practical relevance in medical and social sciences.
\end{abstract}

\noindent \textbf{Keywords:} Propensity score weighting; Target population; Causal inference; Semiparametric efficient estimation; Double/debiased machine learning.

\doublespace

\section{Introduction}\label{sec:intro}

\subsection{Background}

Causal inference, or comparative effectiveness research methodology, is often used in medical and social science studies. Conventional estimands include the average treatment effect (ATE), the average treatment effect on the treated (ATT) and the average treatment effect on the controls (ATC). They are common measures for comparing mean outcomes across different treatments, exposures, or interventions for a target overall population, treated or controls, respectively. However, these estimands may not always align with the scientific question of interest in all research contexts. For example, in subgroup analyses \citep{yang2021propensity} or when covariate distributions differ systematically across treatment groups (structural violations of positivity \citep{petersen2012diagnosing}), investigators may prefer to focus on patients at intermediate risk or those with greater covariate overlap, rather than on the overall, treated, or control populations. Addressing such settings requires weight functions beyond those associated with the ATE, ATT, or ATC. One approach is trimming \citep{crump2006moving, crump2009dealing}, which shifts the estimand to a subpopulation $\mc O_\alpha(\mb X) = \{\mb X: \alpha<e(\mb X)<1-\alpha\}$, where $e(\mb X)$ is the propensity score and $\mb X$ is a vector of covariates and $\alpha\in(0,0.5)$. This corresponds to applying the weight function $\lambda\{e(\mb X)\} = I(\alpha<e(\mb X)<1-\alpha)$ to the overall population, where $I(\cdot)$ is the indicator function. Another possibility is the use of overlap weights $\lambda\{e(\mb X)\} = e(\mb X)\{1-e(\mb X)\}$ \citep{li2018balancing}, which emphasize individuals in regions of clinical equipoise ($e(\mb X)\approx 0.5$), where treatment assignment is most uncertain \citep{rizk2025and, thomas2020overlap}. Or, alternatively, one could define the target population directly by specifying the weight function as an explicit mapping of covariates, for example, $\lambda(\mb X) = I(\text{age}<65)$ \citep{tao2019doubly, parikh2025we}. In each of these settings, the ``goalpost'' of inference shifts from the overall population to a weighted (or sub-)population. The treatment effect on such a population is termed the \textit{weighted average treatment effect} (WATE) \citep{hirano2003efficient}. Estimating WATE enables investigators to focus on treatment effects within specific populations of interest, providing insights more closely aligned with the underlying scientific goals. In this way, the generalized WATE framework unifies a broad class of estimands as special cases while offering flexibility to define new ones tailored to particular research aims. 

In estimating WATEs from observational data, consistent estimation of nuisance functions, often the propensity score and outcome models, is crucial. Misspecification of these functions can lead to systematic bias \citep{kang2007demystifying}. The augmented inverse probability weighted (AIPW) estimator for the ATE \citep{glynn2010introduction, ogburn2015doubly}, also known as the {doubly robust (DR) estimator \citep{robins1994estimation, bang2005doubly, kang2007demystifying},} ensures consistency if either the propensity score or outcome model is correctly specified. This property, also termed ``model doubly robust'' \citep{ying2024geometric}, hinges on correct specification of at least one nuisance function model. However, for most WATEs beyond the ATE, no DR estimator exists \citep{mao2019propensity, tao2019doubly, matsouaka2024causal}, largely because the propensity score's role in these estimands is non-ancillary. When DR estimators are infeasible for specific WATEs, we can shift our focus to the broader concept of \textit{rate-double robustness (RDR)}, a perspective that, to our knowledge, has not been explored in existing literature. RDR ensures consistency if one or both nuisance function estimators satisfy certain rate convergence conditions. Unlike DR, RDR does not require correct model specification---that is, the nuisance functions are not required to exactly match the true, and typically unknown, functional forms; it only requires that the convergence rates satisfy certain thresholds, which can be more relaxed than parametric rates. Further details on these concepts are provided in Sections \ref{sec:EIF} and \ref{sec:DML}. By developing robust estimation methods applicable to any WATE, the framework serves as a versatile tool for addressing a wide range of practical problems.

Our paper is organized as follows. In the remainder of this section, we review prior work on various WATE estimands and the investigation of their DR estimators in Section \ref{subsec:relate}, and outline our novel contributions in exploring RDR estimators in Section \ref{subsec:ourcon}. Section \ref{sec:setup} introduces the notation and assumptions related to WATE. Section \ref{sec:EIF} derives the EIF for a general WATE estimand, followed by the EIF-based estimator and its theoretical properties. Section \ref{sec:DML} presents two DML-based estimators and discusses their asymptotic unbiasedness and efficiency. Section \ref{sec:simu} provides an extensive simulation study, comparing the finite-sample performance of several WATE estimators under diverse data-generating processes. Section \ref{sec:data} applies the proposed methods to two case studies: (i) the effects of smoking on blood lead levels using data from the 2007–2008 U.S. National Health and Nutrition Examination Survey (NHANES), and (ii) an evaluation of eligibility for enrolling in 401(k) pension plans using data from the 1991 Survey of Income and Program Participation (SIPP). Finally, Section \ref{sec:conc} concludes the paper with comments and remarks. Proofs of theorems and additional simulation results are provided in the \textit{Online Supplemental Material}.

\subsection{Related work}\label{subsec:relate}

There is a strand of literature investigating different WATEs. A seminal example was provided by Crump et al. \cite{crump2006moving} where they trimmed (excluded) units whose estimated propensity scores are out of a pre-specified range $[\alpha,1-\alpha]$, with $\alpha\in(0,0.5)$. This is because units with propensity scores that are too small or large can lead to over-weighting, making the ATE estimator inefficient and invalid under finite-sample. 
Yang and Ding \cite{yang2018asymptotic} further proposed a smooth trimming as a generalization of trimming, which stablizes the weights in the neighborhood of trimming thresholds. The smooth trimming involves a bias-variance trade-off where a small amount of bias is accepted to enhance the efficiency of estimating the treatment effect on the sub-population $\mc O_\alpha(\mb X)$.  Some literature also proposed data-driven based trimming for avoiding the ad hoc choice of threshold parameter $\alpha$. For example, Sturmer et al. \cite{sturmer2010treatment} proposed trimming based on propensity score quantiles, Ma and Wang \cite{ma2020robust}, Parikh et al. \cite{parikh2025we}, and Chaudhuri and Hill \cite{chaudhuri2014heavy} proposed methods that ensure the bias {of} ATE introduced by trimming asymptotically converges to 0. However, these data-driven trimmings alter the target population across examples, leading to unfair comparisons between study sites, and some methods fail to address the structural lack of positivity. 

To circumvent the arbitrary selection of trimming thresholds and simultaneously address the structural violation of positivity, researchers have investigated various alternative weights and their corresponding WATEs. Li and Greene \cite{li2013weighting} proposed matching weights (MW), which are analogous to one-to-one pair matching without replacement based on the propensity score with a caliper; the WATE derived from MW is termed the average treatment effect on matching (ATM). 
Li et al. \cite{li2018balancing} introduced overlap weights (OW) and average treatment effect on overlap (ATO). ATO is a remarkable member of the WATE class because of some desirable properties, such as exact balancing and minimal asymptotic variance shown by Li et al. \cite{li2018balancing}. In Table \ref{tab:WATEs} of Section \ref{sec:setup}, we summarize popular WATEs found in literature. 

As previously mentioned, investigating the existence of DR estimators for WATE is of particular interest. Several studies have explored the DR properties of the \textit{augmented estimator} for WATE, which is the natural generalization of the AIPW estimator of ATE \citep{matsouaka2024causal, moodie2018doubly}. Tao and Fu \cite{tao2019doubly} demonstrated that the augmented estimator is DR when the weight function 
$\lambda\{e(\mb X)\}$ is linear in the propensity score, i.e., 
$\lambda\{e(\mb X)\}=a+be(\mb X)$ for some constants $a,b,$ encompassing the ATE, ATT and ATC. Additionally, Matsouaka et al. \cite{matsouaka2024overlap} found that while the augmented estimators for ATO and its similar alternatives ATM and {the average treatment effect on entropy (ATEN) \cite{zhou2020propensity}} are not theoretically DR, they exhibit near-DR performance in finite samples. Table \ref{tab:WATEs} in Section \ref{sec:setup} summarizes the DR and RDR properties of several popular WATE estimators. Furthermore, this augmented estimator is based on the efficient influence function (EIF) for a general WATE, as derived by Hirano et al. \cite{hirano2003efficient}. However, their derivation treats the propensity score as known and fixed in the joint likelihood, which is an overly strong assumption in practice, as it overlooks the uncertainty associated with estimating the propensity score from data (see Remark \ref{rem:EIF} in Section \ref{sec:EIF} for further discussion). Therefore, to the best of our knowledge, the EIF-based estimator and its corresponding DR or RDR properties for a general WATE remain important topics for investigation.  

\subsection{Our contributions}\label{subsec:ourcon}

In this paper, we propose a unified framework for WATE estimation that incorporates robustness evaluation and efficiency quantification. While previous studies have assessed various estimators for different WATEs \citep{mao2019propensity, matsouaka2024overlap, li2021propensity}, detailed investigation of rate conditions under flexible nonparametric models for nuisance functions remains limited. This gap is significant, as requiring correctly specified models imposes overly strong assumptions in practice.

Building on the EIF framework \citep{tsiatis2006semiparametric} and DML theory \citep{chernozhukov2018double} aforementioned, we formalize three RDR estimators for WATEs. First, we derive the general form of the EIF for any WATE, which naturally leads to an EIF-based estimator. Our derivation for the EIF treats both the propensity score and the conditional outcome models as unknown, aligning with the realistic settings encountered in practice. The EIF-based estimator has the RDR property based on a key condition about the product-type prediction error for nusiance function estimates we found (Condition \ref{cond:EIF-nuisop} in Section \ref{sec:EIF}). However, the asymptotic efficiency of the EIF-based estimator relies on the ``Donsker condition,'' which requires the true nuisance function models to belong to some sufficiently small function class. To address the practical limitations imposed by this condition, we leverage modern machine learning techniques and a sample-splitting strategy to construct two DML-based estimators, following the framework of  Chernozhukov et al. \cite{chernozhukov2018double}. Under the conditions specified in Section \ref{sec:DML}, these two DML-based estimators satisfy the RDR property. 

Overall, our work provides a valuable guideline for practitioners utilizing WATEs, offering a clear pathway to determine whether an estimator is consistent and efficient by verifying the established conditions.

\section{Notation, Assumptions and Estimands}\label{sec:setup}

In this paper, we adopt the potential outcomes framework \citep{neyman1923applications, rubin1974estimating}. Let $A\in\{0,1\}$ denote a binary treatment, where $1$ stands for treated and $0$ stands for control. Let $Y(a)$ be the potential outcome, possibly contrary to the fact, had the individual received treatment $a$, for $a\in\{0,1\}$. 

In the observed data, we can only observe one of the potential outcomes because each participant only received one treatment. The observed outcome is denoted as $Y$, and is assumed to have a consistency relationship to the treatment assignment. That is, $Y=Y(A)=AY(1)+(1-A)Y(0)$. We also assume that {there is only one version of each treatment and }the potential outcomes for an individual are unaffected by the treatments received by other individuals or by their own potential outcomes (the standard Stable Unit Treatment Value Assumption [SUTVA]) \citep{rubin1980comment, rosenbaum1983central}.

Let $\mb X$ denote a vector of baseline covariates. The observed data is denoted as $\mc V=\{\mb V_i=(\mb X_i,A_i,Y_i)\}_{i=1}^n$ with a sample size $n>1$ are assumed to be independent and identically distributed (i.i.d.). To identify a causal effect with this data structure, we make the following additional identification assumptions. 

 

\begin{assumption}[Unconfoundedness]\label{assmp:unconfound} $\{Y(1),Y(0)\}\bigCI A\mid \mb X$.
\end{assumption}

\begin{assumption}[Positivity]\label{assmp:positivity}
There exists two constants $c_1$ and $c_2$, such that $0<c_1\leq e(\mb X)\leq c_2<1$ with probability 1.
\end{assumption}

Given the data structure, we define the conditional average treatment effect (CATE) by $\tau(\mb X)=\E[Y(1)-Y(0)\mid \mb X]$. Let $\mu_{a}(\mb X)=\E(Y\mid A=a,\mb X)$, for $a=0,1$, represent the conditional outcome mean for group $A=a$. Thus, $\tau(\mb X)=\mu_{1}(\mb X)-\mu_{0}(\mb X)$. The propensity score is defined by the probability of receiving the active treatment ($A=1$) given covariates, denoted by $e(\mb X)=P(A=1\mid \mb X)$. Assumption \ref{assmp:positivity} suggests that a deterministic treatment assignment mechanism is not permissible, ensuring all units have non-zero probabilities of receiving either treatment or control.  

To this end, we introduce a general class of estimands, the weighted average treatment effect (WATE). Assume the marginal density of $\mb X$ exists, denoted by $f(\mb X)$. For a specific target population of interest, we represent its marginal density (of $\mb X$) by $C_h^{-1}h(\mb X)f(\mb X)$, where $h(\mb X)$ is the pre-specified weight function that adjusts the distribution of $\mb X$ towards the target population, and $C_h=\E\{h(\mb X)\}$ is a normalizing constant, where the expectation is taken over density $f$. We define the WATE estimand as
\begin{align}\label{eq:WATEestimand}
    \gamma = \gamma(h) = \frac{\E\{h(\mb X)\tau(\mb X)\}}{\E\{h(\mb X)\}},
\end{align}
which represents the average treatment effect on the weighted population defined by $h$. 
Additionally, it is common to posit $h(\mb X)=\lambda\{e(\mb X)\}$ for a univariate function $\lambda(t)$, $t\in[0,1]$, leading to the following form of WATE: 
\begin{align}\label{eq:WATEestimand-PS}
	\gamma = \frac{\E[\lambda\{e(\mb X)\}\tau(\mb X)]}{\E[\lambda\{e(\mb X)\}]}.
\end{align}
Moving forward, we focus on such classes of weight functions defined by the propensity score, which also raises substantial interests in current studies \citep{matsouaka2024overlap, austin2023differences, barnard2024unified}. 
{Aside} from the more commonly considered estimands ATE, ATT and ATC  (with respect to their weight functions  $\lambda(t)=1, \lambda(t)=t$ and $\lambda(t)=1-t$, respectively), other critical WATE estimands of interests are summarized in Table \ref{tab:WATEs}. 

\begin{table}[H]
\caption{Summary of common WATEs, related literature, weight function, propensity score weights, and existences of their DR and RDR estimators}\label{tab:WATEs}
\centering
\footnotesize
\begin{tabular}{ccccccc}
    \toprule 
    Related literature  & {Weight function} $\lambda(t)$ & Target & {Propensity score weights} & DR  & RDR  \\
    \midrule 
    Rubin \cite{rubin1974estimating}  & $1$ & ATE & IPW & $\checkmark$ & $\checkmark$ \\
    Hirano et al. \cite{hirano2003efficient} & $t$ & ATT & IPW treated & $\checkmark$ & $\checkmark$ \\
    Tao and Fu \cite{tao2019doubly} & $1-t$ & ATC & IPW controls & $\checkmark$ & $\checkmark$ \\
    Crump et al. \cite{crump2006moving} & $I(\alpha<t<1-\alpha)$ & TrATE ($\alpha$)  & IPW trimming & $\times$ & $\times$ \\
    \addlinespace
    Yang and Ding \cite{yang2018asymptotic} & {$\Phi_{\epsilon}(1-\alpha-t)\times\Phi_{\epsilon}(t-\alpha)$} & TrATE ($\alpha$) & {Smooth IPW trimming} & $\times$ & $\checkmark$ \\
    \addlinespace
    Li and Greene \cite{li2013weighting} & $\min\{t,1-t\}$ & ATM & {Matching weights (MW)} & $\times$ & $\times$ \\
    \addlinespace
    Mao et al. \cite{mao2019propensity} & {$\min\{1,\min\{t,1-t\}l]$ $(l>1)$} & ATTZ & {Trapezoidal weights (TW)} & $\times$ & $\times$ \\
    \addlinespace
    Li et al. \cite{li2018balancing} & $t(1-t)$ & ATO & {Overlap weights (OW)} & $\times$ & $\checkmark$ \\
    \addlinespace
    Zhou et al. \cite{zhou2020propensity} & $\displaystyle\sum_{k\in\{t,1-t\}}-k\log k$ & ATEN & {Entropy weights (EW)} & $\times$ & $\checkmark$ \\
    \addlinespace
    Matsouaka and Zhou \cite{matsouaka2024causal} & {$t^{\nu_{1}-1}(1-t)^{\nu_{2}-1} (\nu_{1}, \nu_{2}\geq2)$} & ATB & {Beta weights (BW)} & $\times$ & $\checkmark$ \\
    \bottomrule 
\end{tabular}
\begin{tablenotes}
    \item DR: doubly robust. RDR: rate doubly robust. 
    \item TrATE ($\alpha$): Trimmed ATE by trimming threshold $\alpha$; $\Phi_\epsilon(\cdot)$ is the cumulative distribution function of normal with mean 0 and standard error $\epsilon>0$. $\Phi_{\epsilon}(t-\alpha)\Phi_{\epsilon}(1-\alpha-t)\to I(\alpha<t<1-\alpha)$ for every $t$ as $\epsilon\to 0$. 
    \item ATM, ATTZ, ATO, ATEN and ATB are abbreviations of average treatment effect defined by, respectively, MW, TW, OW, EW and BW. ATO is the special case of ATB when $\nu_1=\nu_2=2$. ATE, ATT and ATC are special cases of ATB, respectively, when $(\nu_1, \nu_2)=(1,1), (2,1)$ and $(1,2)$.
\end{tablenotes}
\end{table}

{From the definition of the WATE in \eqref{eq:WATEestimand-PS} and commonly seen WATE members in Table \ref{tab:WATEs}, we highlight the roles of the weight function $\lambda(t)$ (with $t = e(\mb X)$) and the outcome models $\mu_a(\mb X)$ ($a = 0,1$). The weight function $\lambda(t)$ determines the target population---for example, $\lambda(t) = 1$ targets the overall population, while $\lambda(t) = t(1 - t)$ (the overlap weight) emphasizes individuals with propensity scores near 0.5. These weights are applied to the CATE, $\tau(\mb X) = \mu_1(\mb X) - \mu_0(\mb X)$, where the outcome models $\mu_a(\mb X)$ represent expected outcomes under treatment $A=a$ given covariates $\mb X$. Together, $\lambda(t)$ and the outcome models define the estimand by weighting individual-level treatment effects across the population. Thus, accurate estimation of both the propensity score and outcome models is crucial for valid WATE inference, motivating our focus on estimators with the (rate) double robustness property, which ensures consistency under suitable rate convergence conditions on either or both models. }

Given all these setups and definitions, we are interested in efficient and robust inference for estimands in the WATE class. {A central component of our work is the derivation of the EIF for a general WATE estimand. We focus on the EIF for several key reasons. First, for the conventional ATE, the widely used augmented inverse probability weighting (AIPW) estimator \cite{hahn1998role}---celebrated for its double robustness---is directly motivated by the EIF of the ATE. Second, in general semiparametric estimation problems, the EIF is a foundational object that characterizes the most robust and efficient estimator within the class of regular and asymptotically linear estimators. Motivated by these considerations, we begin the next section by presenting the theoretical results for the EIF of WATE and the corresponding EIF-based estimator. }
	
\section{The Efficient Influence Function (EIF) and EIF-based Estimator}
\label{sec:EIF}

{To derive the EIF, we first introduce some general background on its underlying principles. For an estimand of interest in a general statistical problem (in our case, the WATE), denoted by $\gamma$, an estimator $\hat\gamma$ based on i.i.d. observations $\mc V = \{\mb V_i\}_{i=1}^n$ is said to be regular and asymptotically linear (RAL) if 
\begin{align*} 
\sqrt{n}(\hat\gamma - \gamma) = \frac{1}{\sqrt{n}}\sum_{i=1}^n \varphi(\mb V_i) + o_p(n^{-1/2}), \end{align*} 
where $\varphi(\cdot)$ is called an \textit{influence function}. The influence function characterizes how small perturbations in the data distribution affect the estimand $\gamma$, and it determines the asymptotic behavior of the estimator. At the same time, it expresses the estimator as an i.i.d. average, allowing the application of the central limit theorem to establish the asymptotic normality of $\hat\gamma$. 

Among all possible influence functions corresponding to different RAL estimators, the EIF is the one with the smallest asymptotic variance; thus, the RAL estimator associated with the EIF achieves the semiparametric efficiency bound under the given model. To derive the EIF based on the observed data, one first writes out the joint likelihood $f(\mb V)$. Then, to characterize the sensitivity of the estimand to changes in the data-generating distribution, one considers smooth parametric submodels $\{f_\theta(\mb V): \theta\in\mathbb{R}\}$ that pass through the true distribution at $\theta = 0$. By applying the score function and evaluating the pathwise derivative of the estimand $\gamma$ along these submodels, one identifies the EIF as the unique element in the nuisance tangent space representing this derivative \cite{kennedy2016semiparametric}. For a more detailed discussion of these technical aspects, we refer readers to Hines et al. \cite{hines2022demystifying} (particularly Sections 3.1--3.3) and Section A.1 of our \textit{Online Supplemental Material}. } 

In what follows, we derive the EIF of our WATE estimand. Denote the sample average (empirical mean) for any function $f(\cdot)$ of the observed data by $\pr_n[f(\mb V)]=n^{-1}\displaystyle\sum_{i=1}^{n}f(\mb V_i)$. The EIF-based estimator involves nuisance functions $\phi=(e,\mu_1,\mu_0)$. {Without loss of ambiguity, the propensity score $e(\mb X)$ and the two conditional outcome functions $\mu_0(\mb X)$ and $\mu_1(\mb X)$ are referred to as ``nuisance functions'' because, although they play a non-ancillary role in defining the corresponding WATE, they are not of direct inferential interest but are essential components in the estimation process.  }

{Another remark is that we decompose the joint likelihood of the observed data as $f(\mb V) = f(\mb X, A, Y) = f(\mb X) f(A \mid \mb X) f(Y \mid A, \mb X) = f(\mb X) e(\mb X)^A \{1 - e(\mb X)\}^{1-A} f(Y \mid A, \mb X)$. This decomposition explains why both the propensity score $e(\mb X)$ and the outcome models $\mu_0(\mb X)$ and $\mu_1(\mb X)$ appear in the form of the EIF derived later.  }

Denote $\hat{\phi}=(\hat{e},\hat{\mu}_1,\hat{\mu}_0)$ as the estimated nuisance functions with truth $\phi_0=(e,\mu_{0},\mu_1)$. Let $\gamma_0$ denote the truth of $\gamma$ and $\varphi(\mb V;\tilde\phi)$ denote the influence function \eqref{eq:EIF} {with the nuisance function substituted by a general value} $\tilde\phi$. Let $D(\mb V;\phi)=\lambda\{e(\mb X)\}+\dot{\lambda}\{e(\mb X)\}\{A-e(\mb X)\}$ and $N(\mb V;\phi)=\lambda\{e(\mb X)\}\psi_\tau(\mb V;\phi)+\dot{\lambda}\{e(\mb X)\}\tau(\mb X)\{A-e(\mb X)\}$, with $\psi_{\tau}(\mb V)=\dfrac{A}{e(\mb X)}\left\{ Y-\mu_{1}(\mb X)\right\} -\dfrac{1-A}{1-e(\mb X)}\left\{ Y-\mu_{0}(\mb X)\right\} - \tau(\mb X)$.  In Theorem \ref{thm:EIF}, we present the EIF of a general WATE estimand and a EIF-based estimator. 

\begin{theorem}\label{thm:EIF} For any weight function $\lambda(t)$ with first order derivative 
$\dot{\lambda}(t)=\rmd\lambda(t)/\rmd t$, the EIF of $\gamma_0=\E\{N(\mb V;\phi_0)\}/\E\{D(\mb V;\phi_0)\}$ is given by
\begin{align}\label{eq:EIF}
    \varphi(\mb V;\phi_0)=\dfrac{\lambda\{e(\mb X)\}}{\E\{D(\mb V;\phi_0)\}}\left\{\psi_{\tau}(\mb V)-\gamma_0\right\} + \dfrac{\dot{\lambda}\{e(\mb X)\}}{\E\{D(\mb V;\phi_0)\}}\{\tau(\mb X)-\gamma_0\}\{A-e(\mb X)\}. 
\end{align}
\end{theorem}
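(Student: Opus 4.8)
The plan is to obtain \eqref{eq:EIF} by a direct pathwise (Gateaux) differentiation of the map $\gamma$ along regular one-dimensional parametric submodels, using the likelihood factorization recorded just above the statement, $f(\mb V)=f(\mb X)\,e(\mb X)^{A}\{1-e(\mb X)\}^{1-A}\,f(Y\mid A,\mb X)$. Because the working model imposes only Assumptions~\ref{assmp:unconfound}--\ref{assmp:positivity}, the observed-data model is locally nonparametric: the tangent space is all of $L_2^0$ and splits orthogonally into the three pieces generated by the score components $s(\mb V)=s_{\mb X}(\mb X)+s_{A\mid\mb X}(A,\mb X)+s_{Y\mid A,\mb X}(Y,A,\mb X)$. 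Hence it suffices to exhibit a mean-zero $\varphi(\mb V;\phi_0)$ with $\frac{\rmd}{\rmd\theta}\gamma(P_\theta)\big|_{\theta=0}=\E\{\varphi(\mb V;\phi_0)\,s(\mb V)\}$ for every such submodel; that $\varphi$ is then necessarily the EIF.

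Concretely, I would write $\gamma_0=\beta_0/\alpha_0$ with $\alpha_0=\E\{D(\mb V;\phi_0)\}=\E[\lambda\{e(\mb X)\}]$ and $\beta_0=\E\{N(\mb V;\phi_0)\}=\E[\lambda\{e(\mb X)\}\tau(\mb X)]$, derive the EIFs of the two simpler functionals $\alpha_0$ and $\beta_0$ separately, and then invoke the quotient rule for influence functions, $\varphi=\alpha_0^{-1}\{\psi_{\beta_0}-\gamma_0\,\psi_{\alpha_0}\}$. The building blocks are: (i) the pathwise derivative of $e_\theta(\mb x)=\E_\theta[A\mid\mb X=\mb x]$, which equals $\E[(A-e(\mb X))\,s_{A\mid\mb X}\mid\mb X=\mb x]$, so that chaining through $\lambda$ and adding the $\mb X$-marginal contribution gives $\psi_{\alpha_0}=\lambda\{e(\mb X)\}+\dot\lambda\{e(\mb X)\}\{A-e(\mb X)\}-\alpha_0=D(\mb V;\phi_0)-\alpha_0$; and (ii) the pathwise derivative of $\mu_{a,\theta}(\mb x)=\E_\theta[Y\mid A=a,\mb X=\mb x]$, which equals $\E[(Y-\mu_a(\mb X))\,s_{Y\mid A,\mb X}\mid A=a,\mb X=\mb x]$. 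Differentiating $\beta_0$ through its three factors and using the change-of-measure identities $\E[\tfrac{A}{e(\mb X)}g\mid\mb X]=\E[g\mid A=1,\mb X]$ and $\E[\tfrac{1-A}{1-e(\mb X)}g\mid\mb X]=\E[g\mid A=0,\mb X]$ to re-express the outcome-regression term, one collects three contributions---from perturbing $f(\mb X)$, from perturbing $e$ inside $\lambda$, and from perturbing $(\mu_0,\mu_1)$---which, after centering, assemble into $\psi_{\beta_0}=N(\mb V;\phi_0)-\beta_0$.

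Finally I would substitute into the quotient formula and use $\beta_0=\gamma_0\alpha_0$ so the centering constants cancel, leaving $\varphi(\mb V;\phi_0)=\alpha_0^{-1}\{N(\mb V;\phi_0)-\gamma_0 D(\mb V;\phi_0)\}$; regrouping the $\lambda\{e(\mb X)\}$ terms and the $\dot\lambda\{e(\mb X)\}\{A-e(\mb X)\}$ terms, and recognizing $\psi_\tau(\mb V)-\gamma_0$ and $\tau(\mb X)-\gamma_0$ as the respective inner factors, then yields exactly \eqref{eq:EIF}. As a check I would confirm $\E\{\varphi(\mb V;\phi_0)\}=0$ and that the formula reduces to the AIPW influence function when $\lambda\equiv1$ and to the known ATT/ATC influence functions when $\lambda(t)=t$ and $\lambda(t)=1-t$.

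I expect the main obstacle to be the $\dot\lambda$ correction appearing in step~(i) and in the $\beta_0$ computation. Since $\lambda$ is applied to the \emph{estimated} propensity score, a submodel perturbation moves $e(\mb X)$ and propagates through $\lambda$ in both $\alpha_0$ and $\beta_0$; this is precisely the term that is absent when $e$ is treated as known (cf.\ Hirano et al.\ \cite{hirano2003efficient} and Remark~\ref{rem:EIF}), and it must be recovered without double-counting it against the outcome-model perturbation or the covariate-marginal perturbation. Keeping the orthogonal tangent-space decomposition explicit and attributing each term to exactly one score component is what makes this bookkeeping reliable; the remaining manipulations are routine.
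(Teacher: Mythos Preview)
Your proposal is correct and follows essentially the same route as the paper's proof: decompose $\gamma_0$ into numerator $\beta_0=\E[\lambda\{e(\mb X)\}\tau(\mb X)]$ and denominator $\alpha_0=\E[\lambda\{e(\mb X)\}]$, compute their individual EIFs via pathwise differentiation (the paper's Lemmata A.1.2--A.1.3 supply exactly the derivatives of $e_\theta$ and $\tau_\theta$ you describe), and then apply the ratio rule (the paper's Lemma A.1.1). The paper additionally spells out, term by term, that each piece of $\varphi_n$ and $\varphi_d$ lands in the correct orthogonal component $\mc H_1,\mc H_2,\mc H_3$ of the tangent space, which you cover implicitly by noting the model is locally nonparametric so the tangent space is all of $L_2^0$; either justification suffices.
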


\begin{remark}\label{rem:EIF}
    Hirano et al. \cite{hirano2003efficient} also derived an EIF for a general WATE, but its form differs from ours. This distinction may seem counterintuitive, given that the EIF is theoretically unique. However, it arises because Hirano et al. \cite{hirano2003efficient} treat the propensity score as known in the joint likelihood of the observed data, thereby excluding it from the parametric submodel. As a result, the nuisance tangent space in their setting is characterized over a smaller class of models that omits variation in the propensity score. This leads to a smaller asymptotic variance for the EIF-based WATE estimator when the propensity score is known. However, the ATE is a special case in which the knowledge of the propensity score does not affect the semiparametric efficiency bound---a fact also emphasized by Hahn \citep{hahn1998role}. For other WATE estimands such as the ATO, where the propensity score is explicitly used to weight the covariate distribution, the propensity score plays a non-ancillary role that is critical for valid uncertainty quantification. Therefore, we do not adopt the model assumption of Hirano et al. \cite{hirano2003efficient} in deriving our EIF-based estimator. 
    
    However, in certain contexts where the propensity scores are known, the semiparametric variance lower bound can indeed be reduced, and the EIF of Hirano et al. \cite{hirano2003efficient} becomes applicable. In Appendix \ref{subapp:rolePS}, we restate their result (Proposition \ref{prop:hinaro}) and demonstrate that the corresponding efficiency gain (the reduction in asymptotic variance) from knowing the true propensity score is given by 
    \begin{align*}
        \E\left[\dfrac{\dot{\lambda}\{e(\mb X)\}^2}{\E\{D(\mb V;\phi_0)\}^2} \{\tau(\mb X) - \gamma_0\}^2 \{A - e(\mb X)\}^2\right]. 
    \end{align*}
    Note that for the ATE, $\dot{\lambda}\{e(\mb X)\} = 0$, which further explains why the efficiency bound is unaffected by knowledge of the propensity score in this special case. A practical example where the true propensity scores are known for all participants is a randomized study---for instance, in a 1:1 randomized clinical trial, where $e(\mb X)\equiv 0.5$. 
\end{remark}

The semiparametric efficiency bound for estimating $\gamma$ is given by $\E\{\varphi(\mb V;\phi_0)^2\}$ \citep{bickel1993efficient, tsiatis2006semiparametric}. When all propensity score $e(\mb X)$ and outcome models $\mu_a(\mb X)$ ($a=0,1$) are correctly specified, this bound can be achieved asymptotically by the following \textit{EIF-based estimator}:
\begin{equation}\label{eq:EIFestimator}
\hat{\gamma}^{\text{eif}}=\frac{\pr_{n}[\lambda\{\hat{e}(\mb X)\}\hat{\psi}_{\tau}(\mb V)+\dot{\lambda}\{\hat{e}(\mb X)\}\hat{\tau}(\mb X)\{A-\hat{e}(\mb X)\}]}{\pr_{n}[\lambda\{\hat{e}(\mb X)\}+\dot{\lambda}\{\hat{e}(\mb X)\}\{A-\hat{e}(\mb X)\}]}.
\end{equation}

To mitigate the bias arising from model misspecifications, we aim to identify an estimator for the targeted WATE that possesses the rate doubly robust (RDR) property \citep{farrell2015robust}. This means that the estimator is consistent and asymptotically normal if its nuisance functions are estimated consistently at a sufficiently fast rate (not necessarily correctly specified). Alternatively, it may allow for a slower rate of convergence for some nuisance function estimate(s), provided that other nuisance function estimates converge in a faster rate. In Section \ref{subsec:CAN-EIF} below, we show that the proposed EIF-based estimator \eqref{eq:EIFestimator} is RDR under some regularity conditions, and Theorem \ref{thm:RDR-EIF} further establishes its consistency and asymptotic normality. 

\subsection{Regularity conditions and asymptotic properties for the EIF-based estimator}\label{subsec:CAN-EIF}

We first state all regularity conditions for the EIF-based estimator. Let $C_1,C_2, C_{\min}$ and $C_{\max}$ be some finite constants with $C_1\leq C_2<\infty$ and $0<C_{\min}<C_{\max}<\infty$.

\begin{condition}\label{cond:EIF-nuisop} 
$\displaystyle\Vert\hat{e}(\mb X)-e(\mb X)\Vert_2\left\{\sum_{a=0}^{1}\Vert\hat{\mu}_{a}(\mb X)-\mu_{a}(\mb X)\Vert_2+\Vert\dot{\lambda}\{\hat{e}(\mb X)\}-\dot{\lambda}\{e(\mb X)\}\Vert_2\right\}=o_p(n^{-1/2})$
\end{condition}

Condition \ref{cond:EIF-nuisop} characterizes the convergence rate required for the nuisance function estimates to ensure the consistency and asymptotic normality of the EIF-based estimator \eqref{eq:EIFestimator}. Notably, it does not require the exactly correct nuisance model specification, but only that the estimation error term on the left-hand side of Condition \ref{cond:EIF-nuisop} is controlled at the $n^{-1/2}$ rate. 

\begin{remark}\label{rmk:ateattatc}
For ATE, ATT and ATC, Condition \ref{cond:EIF-nuisop} can be reduced to $\displaystyle\Vert\hat{e}(\mb X)-e(\mb X)\Vert_2\sum_{a=0}^{1}\Vert\hat{\mu}_{a}(\mb X)-\mu_{a}(\mb X)\Vert_2=o_p(n^{-1/2})$. This is because their $\lambda(t)$ functions are linear to $t$ with the derivations $\dot\lambda(t)$ is a constant for both $t=e(\mb X)$ and $\hat e(\mb X)$ in the condition, therefore $\dot{\lambda}\{\hat{e}(\mb X)\}-\dot{\lambda}\{e(\mb X)\}=0$. This reduced condition for ATE, ATT and ATC gives the standard product-type bias error term characterized by the biases of propensity score and outcome models. This reduced {condition} also implies that when either the propensity score or outcome models consistently estimate their counterparts, the EIF-based estimator is consistent (if we only care about the consistency). 

However, for other WATEs, it is hard to relax this condition. One important reason is that the propensity score plays a non-ancillary role in the definitions of other WATEs, while for ATE, ATT and ATC, the role of propensity score can be ancillary. Note that for ATT and ATC, they can be re-written as $\E\{Y(1)-Y(0)\mid A=1\}$ and $\E\{Y(1)-Y(0)\mid A=0\}$, respectively, which do not involve propensity score. This difference illustrates that the specification of propensity score in other WATEs should satisfy some convergence rate condition to make the EIF-based estimator consistent{, characterized by the term $\Vert\hat{e}(\mb X)-e(\mb X)\Vert_2\cdot\Vert\dot{\lambda}\{\hat{e}(\mb X)\}-\dot{\lambda}\{e(\mb X)\}\Vert_2$ in Condition \ref{cond:EIF-nuisop}}. 
\end{remark} 

\begin{condition}\label{cond:EIFbound} 
$\left\vert\E[\lambda\{e(\mb X)\}\tau(\mb X)]\right\vert<C_{\max}$, and $0<C_{\min}<\left\vert\E[\lambda\{e(\mb X)\}]\right\vert<C_{\max}$.
\end{condition}

Condition \ref{cond:EIFbound} is mild, as it holds when both the weighting function $\lambda\{e(\mb X)\}$ and the CATE $\tau(\mb X)$ are finite almost surely for all $\mb X$. This assumption is often reasonable, since one would not expect the treatment effect to be infinite for any individual, nor the weighting function of the propensity score to take arbitrarily extreme values (under Assumption \ref{assmp:positivity}). 

\begin{condition}\label{cond:EIFdonsker} For $\tilde\phi=\hat\phi$ and $\phi_0$, $N(\mb V;\tilde\phi)$ and $D(\mb V;\tilde\phi)$ belong to two Donsker classes of functions, respectively.  
\end{condition}

Condition \ref{cond:EIFdonsker}, concerning the Donsker class, imposes a mild restriction on the flexibility of the nuisance estimators, but still encompasses many complex functions, allowing $\hat\phi$ to be constructed with flexible estimators; see van der Vaart \cite{van1996weak} and Kennedy \cite{kennedy2016semiparametric} for detailed discussions.

\begin{condition}\label{cond:EIFsup}  $\displaystyle\sup_{\mb X}\left(\left\vert \frac{\lambda\{\hat{e}(\mb X)\}}{\hat{e}(\mb X)}  \right\vert\vee \left\vert \frac{\lambda\{\hat{e}(\mb X)\}}{1-\hat{e}(\mb X)}  \right\vert \vee\left\vert \tau(\mb X) \right\vert\right)\leq C_{\max}$. 
\end{condition}

Condition \ref{cond:EIFsup} imposes an upper bound on certain functions of the covariates $\mb X$. Similar to Condition \ref{cond:EIFbound}, this requirement is mild: it rules out infinite CATE and unbounded functions of the propensity score, and can therefore be readily verified.  

\begin{condition}\label{cond:EIFLipch}
$\lambda(t)$ has continuous first derivatives
$\dot{\lambda}(t)$ and $\dot{\lambda}(t)$ is Lipschitz continuous on a bounded interval $[C_1,C_2]$ where $C_1<C_2<\infty$, $C_1=\displaystyle\inf_{\mb X}\{\hat{e}(\mb X)\wedge e(\mb X)\}$, and $C_2=\displaystyle\sup_{\mb X}\{\hat{e}(\mb X)\vee e(\mb X)\}$. 
\end{condition}

Condition \ref{cond:EIFLipch} regarding Lipschitz continuity holds for commonly considered WATE estimands (such as those in Table \ref{tab:WATEs}), ensuring that if $\hat e(\mb X)$ approximates $e(\mb X)$ well, then $\dot\lambda\{e(\mb X)\}$ can also be well approximated by $\dot\lambda\{\hat e(\mb X)\}$. For example, if $\Vert\hat{e}(\mb X)-e(\mb X)\Vert_2=o_p(n^{-1/4})$, then $\Vert\hat{e}(\mb X)-e(\mb X)\Vert_2\cdot\Vert\dot{\lambda}\{\hat{e}(\mb X)\}-\dot{\lambda}\{e(\mb X)\}\Vert_2=o_p(n^{-1/2})$ holds for any Lipschitz continuous function $\lambda(t)$.  

Under these conditions, Theorem \ref{thm:RDR-EIF} establishes the consistency and asymptotic normality of $\hat\gamma^{\text{eif}}$, with its asymptotic variance characterized by the EIF, which attains the semiparametric efficiency bound. 
	
\begin{theorem}\label{thm:RDR-EIF} 

{Under Conditions \ref{cond:EIF-nuisop}--\ref{cond:EIFLipch}}, the proposed estimator $\hat{\gamma}^{\text{eif}}$ satisfies 
\begin{align}
\sqrt{n}(\hat{\gamma}^{\text{eif}}-\gamma_0)\rightarrow_d\mathcal{N}(0,\E[\varphi(\mb V;\phi_0)^2]) \label{eq:EIFasyp}. 
\end{align}
\end{theorem}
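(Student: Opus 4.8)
## Proof Proposal for Theorem \ref{thm:RDR-EIF}

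**Overall strategy.** The plan is to prove asymptotic linearity of $\hat\gamma^{\text{eif}}$ by expanding it around $\gamma_0$ and showing the remainder after extracting the empirical process of the EIF is $o_p(n^{-1/2})$. Since $\hat\gamma^{\text{eif}}$ is a ratio of two empirical means, $\hat\gamma^{\text{eif}} = \pr_n[N(\mb V;\hat\phi)]/\pr_n[D(\mb V;\hat\phi)]$, I would first write
$$
\hat\gamma^{\text{eif}} - \gamma_0 = \frac{\pr_n\big[N(\mb V;\hat\phi) - \gamma_0 D(\mb V;\hat\phi)\big]}{\pr_n[D(\mb V;\hat\phi)]},
$$
and handle numerator and denominator separately. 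The denominator $\pr_n[D(\mb V;\hat\phi)]$ converges in probability to $\E\{D(\mb V;\phi_0)\}$, which is bounded away from zero by Condition \ref{cond:EIFbound} (noting $\E\{D(\mb V;\phi_0)\} = \E[\lambda\{e(\mb X)\}]$ after iterating expectations on the $\dot\lambda\{e(\mb X)\}\{A - e(\mb X)\}$ term); this reduces the problem to analyzing the numerator, call it $\hat R_n := \pr_n[N(\mb V;\hat\phi) - \gamma_0 D(\mb V;\hat\phi)]$.

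**Key steps.** First I would perform the standard empirical-process decomposition
$$
\sqrt{n}\,\hat R_n = \underbrace{\mathbb{G}_n\big[N(\mb V;\hat\phi) - N(\mb V;\phi_0) - \gamma_0\{D(\mb V;\hat\phi) - D(\mb V;\phi_0)\}\big]}_{(\mathrm{I})} + \underbrace{\sqrt{n}\,\pr_n\big[N(\mb V;\phi_0) - \gamma_0 D(\mb V;\phi_0)\big]}_{(\mathrm{II})} + \underbrace{\sqrt{n}\,\E\big[N(\mb V;\hat\phi) - \gamma_0 D(\mb V;\hat\phi)\big]}_{(\mathrm{III})},
$$
where $\mathbb{G}_n = \sqrt{n}(\pr_n - \E)$ and the expectation in (III) is over $\mb V$ with $\hat\phi$ held fixed (a cross-fitting or conditioning-on-training-sample argument is implicit; here the Donsker route via Condition \ref{cond:EIFdonsker} is used instead). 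Term (II) is exactly $\mathbb{G}_n[N - \gamma_0 D]$ evaluated at the truth, which by the CLT contributes the asymptotic-variance term $\E\{D(\mb V;\phi_0)\}^2 \cdot \E[\varphi(\mb V;\phi_0)^2]$ after dividing by the denominator limit. Term (I) is the stochastic equicontinuity term: by Condition \ref{cond:EIFdonsker} (Donsker classes) together with consistency of $\hat\phi$ implied by Condition \ref{cond:EIF-nuisop} and $L_2$-continuity of $N$ and $D$ in $\phi$ (which follows from Conditions \ref{cond:EIFbound}, \ref{cond:EIFsup}, \ref{cond:EIFLipch}), this term is $o_p(1)$. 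Term (III) is the second-order bias term, and this is where the real work lies.

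**The main obstacle: the bias term (III).** The crux is to Taylor-expand $\E[N(\mb V;\hat\phi) - \gamma_0 D(\mb V;\hat\phi)]$ around $\phi_0$ and show the first-order (Neumann/Gateaux) derivative in the direction $\hat\phi - \phi_0$ vanishes — this is the Neyman orthogonality of the EIF — leaving only second-order products of nuisance errors. Concretely, using $\E\{N(\mb V;\phi_0) - \gamma_0 D(\mb V;\phi_0)\} = 0$ and expanding, I expect the remainder to reduce (after iterated expectations that exploit $\E[A - e(\mb X)\mid \mb X] = 0$ and $\E[Y - \mu_a(\mb X)\mid A=a,\mb X]=0$) to a sum of terms each bounded by a product of two $L_2$ nuisance errors: cross terms of the form $\|\hat e - e\|_2\big(\|\hat\mu_1 - \mu_1\|_2 + \|\hat\mu_0 - \mu_0\|_2\big)$ coming from the inverse-probability-weighting structure in $\psi_\tau$, and a term $\|\hat e - e\|_2 \cdot \|\dot\lambda\{\hat e\} - \dot\lambda\{e\}\|_2$ coming from the derivative-of-weight augmentation piece (this uses Condition \ref{cond:EIFLipch} to control the $\dot\lambda$ difference and Conditions \ref{cond:EIFsup}, \ref{cond:EIFbound} for boundedness of multipliers). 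Condition \ref{cond:EIF-nuisop} is exactly designed so that the sum of these products is $o_p(n^{-1/2})$, giving $\sqrt n\cdot(\mathrm{III}) = o_p(1)$. The bookkeeping here — carefully isolating every cross term, absorbing bounded factors, and verifying no stray first-order term survives — is the delicate part; the argument that the pure first-order term vanishes is essentially a restatement that \eqref{eq:EIF} is the EIF. Combining: $\sqrt n(\hat\gamma^{\text{eif}} - \gamma_0) = \mathbb{G}_n[\varphi(\mb V;\phi_0)] + o_p(1) \to_d \mathcal N(0, \E[\varphi(\mb V;\phi_0)^2])$ by Slutsky and the CLT, which is \eqref{eq:EIFasyp}.
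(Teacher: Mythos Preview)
Your proposal is correct and follows essentially the same route as the paper: both decompose the estimator minus truth into a stochastic-equicontinuity term handled by the Donsker condition (Condition \ref{cond:EIFdonsker}), a leading CLT term that yields $\mathbb{G}_n[\varphi(\mb V;\phi_0)]$, and a second-order bias term bounded by the product-of-errors structure in Condition \ref{cond:EIF-nuisop}. The only cosmetic difference is that you linearize the ratio via the exact identity $\hat\gamma^{\text{eif}}-\gamma_0 = \pr_n[N-\gamma_0 D]/\pr_n[D]$, whereas the paper Taylor-expands $x/y$ at $(\pr[N(\phi_0)],\pr[D(\phi_0)])$ and then shows $\pr_n[N(\hat\phi)]-\pr_n[N(\phi_0)]$ and $\pr_n[D(\hat\phi)]-\pr_n[D(\phi_0)]$ are each $o_p(n^{-1/2})$; the underlying analysis is the same.
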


The proof of Theorem \ref{thm:RDR-EIF} is provided in the \textit{Online Supplemental Material} (Section A). By large sample theory, the variance in \eqref{eq:EIFasyp} can be approximated by substituting the analytical components with their estimated equivalents, and the expectation $\E$ with empirical average $\pr_n$. Thus, we estimate the variance of the EIF estimator using the empirical mean square of the estimated EIFs.

We identify several limitations of the EIF-based estimator. In scenarios where the true nuisance functions involve complex relationships with high-dimensional covariates, the breakdown of the Donsker property when using parametric models can lead to suboptimal performance. Additionally, simply plugging in estimated (including machine-learned) nuisance estimators into the EIF generally introduces non-negligible bias because the plug-in errors from estimating the nuisance functions can interact in a nonlinear way with the primary parameter of interest. This highlights the need for more flexible, nonparametric approaches that go beyond the traditional ``estimate-then-plug-in'' framework to better capture the true relationships between the outcome, treatment, and covariates. To address these challenges, we propose and evaluate estimation methods based on double/debiased machine learning (DML) in Section \ref{sec:DML}.

\section{Estimators Based on Double/Debiased Machine Learning (DML)}\label{sec:DML}

In this Section, we introduce two DML-based estimators (DML-1 and DML-2) that differ from the traditional ``estimate-then- plug-in'' approach. These two estimators are motivated by: (i) following the form of the EIF-based estimator in Section \ref{sec:EIF}, and (ii) extending the theory and algorithm developed by Chernozhukov et al. \cite{chernozhukov2018double} from ATE to WATE, which aim to mitigate or minimize the bias associated with nuisance parameter estimation.

Inspired by the Frisch-Waugh-Lovell theorem in linear models \cite{frisch1933partial, lovell1963seasonal}, DML uses an orthogonalization step (see Remark 4) that partials out the variation explained by the nuisance functions, effectively correcting for the bias that arises from imperfect nuisance estimation. Furthermore, sample splitting and cross-fitting are key techniques in DML that prevent overfitting by ensuring that the nuisance functions are estimated on different data from the one used to estimate the target parameter. This helps maintain independence between nuisance estimation errors and target estimation, enabling valid inference even when flexible machine learning methods are used. 

Following Chernozhukov et al. \cite{chernozhukov2018double}, we construct the two DML-based estimators for WATE in Algorithm \ref{alg:DMLs}. Following notation in Section \ref{sec:EIF}, we first write a linear score function for WATE $\gamma_0$ by $L(\mb V;\gamma_0,\phi_0)=\gamma_0 D(\mb V;\phi_0)-N(\mb V;\phi_0)$, then we have the following algorithm.  

\begin{algorithm}[H]
\caption{DML-based Estimators}\label{alg:DMLs}
\textbf{Input:} Observed data $\mc V=(\mb V_i)_{i=1}^n$, user-specified $K > 1$.

1: Randomly partition the index set $\{1,\cdots,n\}$ into $K$ folds, denoted by $(I_k)_{k=1}^K$, such that the size of each fold $I_k$ is $n/K$ (assumed to be an integer for simplicity; in practice, two folds may have a difference on size of $\pm 1$). For each $k \in \{1,\cdots,K\}$, define $I_k^c = \{1,\cdots,n\} \setminus I_k$. 

2: \textbf{Cross-fitting procedure over $K$ folds:}
\textbf{for} each $k \in \{1,\cdots,K\}$
\begin{itemize}
    \item Construct an estimator for nuisance functions using methods from \texttt{SuperLearner} \citep{van2007super}:
    $$
    \hat{\phi}_{k} = \hat{\phi}((\mb V_i)_{i \in I_k^c}).
    $$
    \item Construct estimator $\hat{\gamma}_{k}$ by solving:
    $$
    \frac{1}{n/K} \sum_{i \in I_k} L(\mb V; \hat{\gamma}_{k}, \hat{\phi}_{k}) = 0.
    $$
\end{itemize}
\textbf{end for}

3: Output the \textbf{DML-1 estimator}:
$$
\hat{\gamma}^{\text{dml-1}} = K^{-1} \sum_{k=1}^K \hat{\gamma}_{k}.
$$
4: Output the \textbf{DML-2 estimator} as the solution to:
$$
\frac{1}{n} \sum_{k=1}^K \sum_{i \in I_k} L(\mb V; \hat{\gamma}_{k}, \hat{\phi}_{k}) = 0.
$$

\end{algorithm} 

Based on Algorithm \ref{alg:DMLs}, the DML-1 and DML-2 estimators both rely on the cross-fitting procedure to mitigate overfitting and ensure robustness in high-dimensional settings. The primary difference lies in their aggregation strategies. The DML-1 estimator is a simple average of the fold-specific parameter estimates $\hat\gamma_k$, providing computational simplicity and robustness to small sample variations. In contrast, the DML-2 estimator refines the solution by solving a single global estimating equation that combines all folds, leveraging the full dataset for additional efficiency. Despite their differences, both estimators share the common goal of debiasing and account for nuisance parameter estimation through cross-fitting.

In Section \ref{subsec:CAN-DML}, we present the theoretical properties of the two DML estimators in Theorem \ref{thm:RDR-DML}, together with the required regularity conditions. 

\subsection{Regularity conditions and asymptotic properties for the DML-based estimators}\label{subsec:CAN-DML}

We first introduce Conditions \ref{cond:DMLconverge}--\ref{cond:DMLlowerB} below. The derivation of these conditions is provided in our \textit{Online Supplemental Material} and follows similar reasoning to that used in verifying the assumptions of Chernozhukov et al. \cite{chernozhukov2018double}. 

Let $C_1, C_2, C_{\max}, C_{\min}$ be some finite constants such that $C_1\leq C_2<\infty$ and $0<C_{\min}<C_{\max}<\infty$. Let $q>2$ and $K\geq 2$ be some fixed integers. Below are conditions needed for the consistency and asymptotic normality for the two DML estimators. 

\begin{condition}\label{cond:DMLconverge} Given a random subset $I$ of $\{1,\cdots,n\}$ with size $n/K$, the nuisance parameter estimator $\hat{\phi}_0=\hat{\phi}_0((\mb V_i)_{i\in I^c})$ obeys the following conditions for all $n\geq 1$. With probability $\geq 1-\Delta_n$ for a positive sequence $\{\Delta_n\}_{n\geq 1}$ converging to 0, $\Vert\hat{e}(\mb X)-e(\mb X)\Vert_2\left\{\displaystyle\sum_{a=0}^1\Vert\hat\mu_{a}(\mb X)-\mu_a(\mb X) \Vert_2 + \sup_{r\in[0,1]}\eta_r\{e(\mb X)\}\right\} \leq \delta_n n^{-1/2}$, with $\{\delta_n\}_{n\geq 1}$ is a positive sequence converging to 0 with $\delta_n\geq n^{-1/2}$ and 
$$
\eta_r\{e(\mb X)\} = \left\{\Vert\lambda^{(3)}\{e_r(\mb X)\}\Vert_2\cdot\Vert\hat{e}(\mb X)-e(\mb X)\Vert_2 + \Vert\ddot\lambda\{e_r(\mb X)\}\Vert_2\right\}\Vert\hat{e}(\mb X)-e(\mb X)\Vert_2,
$$
for $e_r(\mb X) = e(\mb X) + r\{\hat e(\mb X)-e(\mb X)\}$ and $r\in[0,1]$, where $\lambda^{(3)}(t) = \rmd^3\lambda(t)/\rmd t^3$ is the third-order derivative of function $\lambda(t)$. 
\end{condition} 

Similar to Condition \ref{cond:EIF-nuisop} in Theorem \ref{thm:RDR-EIF}, Condition \ref{cond:DMLconverge} here does not require the estimated nuisance functions to have the exact correct functional form of their true counterparts. 

\begin{remark}\label{rmk:cond2}
    Condition \ref{cond:DMLconverge} shares some similarities with Condition \ref{cond:EIF-nuisop}, but the error term includes a different element, $\eta_r\{e(\mb X)\}$, which is characterized by the second and third derivatives of the weight function. It is straightforward to verify that for ATE, ATT, and ATC, the term $\eta_r\{e(\mb X)\} = 0$, as their weight functions have identically zero second and third derivatives. Thus, for these three estimands, the bound applies to the standard product-type error term, consistent with the result in Chernozhukov et al. \cite{chernozhukov2018double}. The $\eta_r\{e(\mb X)\}$ term, therefore, captures the additional conditions on the propensity score needed for other estimands. For example, in the case of ATO, $\lambda^{(3)}\{e(\mb X)\} \equiv 0$ and $\ddot\lambda\{e(\mb X)\} \equiv -2$, so $\eta_r\{e(\mb X)\} = 2\Vert\hat e(\mb X) - e(\mb X)\Vert_2$. This requires $\Vert\hat e(\mb X) - e(\mb X)\Vert_2 \cdot \Vert\hat e(\mb X) - e(\mb X)\Vert_2$ to be bounded at the rate $\delta_n n^{-1/2}$.
\end{remark}

\begin{condition}\label{cond:DMLlambda}
	$\displaystyle\left\vert\E[\lambda\{e(\mb X)\}\tau(\mb X)]\right\vert<C_{\max}$ and $0<C_{\min}<\left\vert\E[\lambda\{e(\mb X)\}]\right\vert<C_{\max}$.
\end{condition}

\begin{condition}\label{cond:DMLbound} 
\begin{align*}
   \sup_{\mb X}\bigg( & \left\vert\frac{\lambda\{\hat e(\mb X)\}}{\hat e(\mb X)^3}\right\vert \vee
		\left\vert \frac{\lambda\{\hat e(\mb X)\}}{\{1-\hat e(\mb X)\}^3}\right\vert
        \vee \left\vert \frac{\dot{\lambda}\{\hat e(\mb X)\}}{\hat e(\mb X)^2}\right\vert \vee
		\left\vert \frac{\dot{\lambda}\{\hat e(\mb X)\}}{\{1-\hat e(\mb X)\}^2}\right\vert \vee \left\vert \frac{\ddot{\lambda}\{\hat e(\mb X)\}}{\hat e(\mb X)}\right\vert \vee 
		\left\vert \frac{\ddot{\lambda}\{\hat e(\mb X)\}}{1-\hat e(\mb X)}\right\vert \vee 
        \left\vert\lambda^{(3)}\{\hat e(\mb X)\}\right\vert\bigg) \\
        & \leq C_{\max}.
\end{align*}
\end{condition}

\begin{condition} \label{cond:DMLLips} 
	$\lambda(t),\lambda(t)/t, \lambda(t)/(1-t)$ and $\dot{\lambda}(t)$ are locally Lipschitz continuous on a bounded interval $[C_1,C_2]$, where $\displaystyle C_1=\inf_{\mb X}\hat e(\mb X)$ and $\displaystyle C_2=\sup_{\mb X}\hat e(\mb X)$. 
\end{condition}

Conditions \ref{cond:DMLlambda}--\ref{cond:DMLLips} have similar interpretations to those in Conditions \ref{cond:EIFbound}, \ref{cond:EIFsup}, and \ref{cond:EIFLipch} of Theorem \ref{thm:RDR-EIF}. Condition \ref{cond:DMLbound} involves higher-order derivatives of the weight function $\lambda(t)$, but for common WATEs listed in Table \ref{tab:WATEs}, these higher-order derivatives readily satisfy the required bounds under Assumption \ref{assmp:positivity}.

\begin{condition}\label{cond:DMLgamma}
	$\displaystyle\sup_{\mb X}\left\vert\tau(\mb X)\right\vert\leq C_{\max}$ and $\displaystyle\sum_{a=0}^1 \left\Vert\mu_a(\mb X)-\tilde{\mu}_a(\mb X)\right\Vert_q\leq C_{\max}$.
\end{condition}

\begin{condition}\label{cond:DML-U}
	Let $U=A\{Y-\mu_1(\mb X)\}+(1-A)\{Y-\mu_0(\mb X)\}$, then $\E[U^2\mid \mb X]\leq C_{\max}$ for almost surely and $\Vert U\Vert_q\leq C_{\max}$.
\end{condition}

\begin{condition}\label{cond:DMLlowerB} $\E[\varphi(\mb V;\phi_0)^2]\geq C_{\min}$. 
\end{condition}

Conditions \ref{cond:DMLgamma}--\ref{cond:DMLlowerB} further impose certain bounds on the outcome models and the second moment of the EIF, which are relatively mild and typically hold in practice. In addition, in contrast to the conditions required for Theorem \ref{thm:RDR-EIF}, our result does not rely on Donsker class assumptions, which are generally regarded as more stringent and technically challenging to verify in practice. 

Based on the above conditions, Table \ref{tab:WATEconditions} in Appendix \ref{subapp:condRDR-DML} summarizes the requirements on the estimated propensity score $\hat e(\mb X)$ for specific WATE members. We then establish the following asymptotic results for the two DML-based estimators.

\begin{theorem}\label{thm:RDR-DML} 
Under Conditions \ref{cond:DMLconverge}--\ref{cond:DMLlowerB}, it follows that 
\begin{itemize}
    \item $\hat\gamma^{\text{dml-1}}-\hat\gamma^{\text{dml-2}}=o_p(1)$; and 
    \item For $d=1,2$, 
    \begin{align*}
	\sqrt{n}\sigma^{-1}(\hat{\gamma}^{\text{dml-}d}-\gamma_0)\rightarrow_d\mc N(0,1), \text{ where }\sigma^2=J^{-2}\E[L(\mb V;\gamma_0,\phi_{0})^2].
    \end{align*} 
\end{itemize}
Additionally,
\begin{align*}
	\hat{\sigma}^2=\hat{J}^{-2}\frac{1}{K}\sum_{k=1}^{K}\frac1n\sum_{k\in I_k}L(\mb V;\hat{\gamma},\hat{\phi}_{k})^2, 
 \text{ with }\hat{J}=\frac{1}{K}\sum_{k=1}^{K}\frac1n\sum_{k\in I_k}D(\mb V;\hat{\phi}_{k})
\end{align*}
can consistently estimate the asymptotic variance $\sigma^2$, and the $100(1-\alpha)\%$ Wald-type confidence interval is valid in the sense that 
\begin{align*}
	\lim_{n\rightarrow\infty}\sup_n\left\vert P\left(\gamma_0\in [\hat{\gamma}^{\text{dml-}d}\pm \Phi^{-1}(1-\alpha/2)\hat{\sigma}/\sqrt{n})]\right) -(1-\alpha)\right\vert = 0, 
\end{align*}
for an $\alpha\in(0,1)$. 
\end{theorem}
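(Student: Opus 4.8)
The plan is to verify that our DML-1 and DML-2 estimators fit into the general Neyman-orthogonal moment framework of Chernozhukov et al. \cite{chernozhukov2018double}, and then invoke their main theorem (Theorems 3.1 and 3.3 therein). Concretely, I would first show that the linear score $L(\mb V;\gamma,\phi) = \gamma D(\mb V;\phi) - N(\mb V;\phi)$ is Neyman-orthogonal, i.e. its pathwise (Gateaux) derivative with respect to each nuisance component, evaluated at the truth $(\gamma_0,\phi_0)$, vanishes. This follows because $L$ is, up to the normalizing denominator $\E\{D(\mb V;\phi_0)\}$, exactly the non-normalized EIF from Theorem \ref{thm:EIF}: the EIF property guarantees that the first-order bias terms in the nuisance directions cancel. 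I would also record the identification/moment equation $\E[L(\mb V;\gamma_0,\phi_0)] = 0$ and the Jacobian $J = \partial_\gamma \E[L(\mb V;\gamma,\phi_0)]|_{\gamma=\gamma_0} = \E[D(\mb V;\phi_0)]$, which is bounded away from zero by Condition \ref{cond:DMLlambda}.

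Next I would check, one by one, the regularity hypotheses of the abstract DML theorem: linearity of the score in $\gamma$ (immediate from the definition of $L$); the moment/second-moment bounds and the non-degeneracy $\E[L(\mb V;\gamma_0,\phi_0)^2] \geq C_{\min}$ (which follows from Condition \ref{cond:DMLlowerB} since $L = \E\{D(\mb V;\phi_0)\}\cdot\varphi(\mb V;\phi_0)$ up to a nonzero constant, combined with Conditions \ref{cond:DMLlambda}--\ref{cond:DML-U}); and, crucially, the two rate requirements on the nuisance estimators. The first is a consistency-type bound $\Vert\hat\phi_k - \phi_0\Vert \leq \delta_n$ (a consequence of Condition \ref{cond:DMLconverge}, which forces each factor on its left-hand side to be $o_p(1)$), and the second is the product-rate bound controlling the second-order remainder: here I would perform a Taylor expansion of $L(\mb V;\gamma_0,\hat\phi_k)$ in $\hat\phi_k$ around $\phi_0$, up to third order in the propensity-score direction because $\lambda(\hat e)$, $\dot\lambda(\hat e)$ and their couplings with $A-\hat e$ generate second- and third-derivative terms of $\lambda$. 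The linear term vanishes by Neyman orthogonality; the remaining quadratic/cubic terms are exactly bounded by $\Vert\hat e - e\Vert_2\{\sum_a\Vert\hat\mu_a - \mu_a\Vert_2 + \sup_r \eta_r\{e(\mb X)\}\}$, which Condition \ref{cond:DMLconverge} controls at rate $\delta_n n^{-1/2} = o(n^{-1/2})$. The boundedness needed to make these expansions rigorous—uniform control of $\lambda(\hat e)/\hat e^3$, $\dot\lambda(\hat e)/\hat e^2$, $\ddot\lambda(\hat e)/\hat e$, $\lambda^{(3)}(\hat e)$, etc.—is supplied by Conditions \ref{cond:DMLbound}, \ref{cond:DMLLips} (local Lipschitz continuity, to pass from $\hat e$-arguments to $e$-arguments) and Assumption \ref{assmp:positivity}.

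With all hypotheses verified, the abstract theorem delivers: (a) $\sqrt n\,\sigma^{-1}(\hat\gamma^{\text{dml-}d} - \gamma_0)\to_d \mc N(0,1)$ with $\sigma^2 = J^{-2}\E[L(\mb V;\gamma_0,\phi_0)^2]$ for $d=1,2$, which since $L$ is the un-normalized EIF equals $\E[\varphi(\mb V;\phi_0)^2]$ as stated; (b) the asymptotic equivalence $\hat\gamma^{\text{dml-1}} - \hat\gamma^{\text{dml-2}} = o_p(1)$, which in fact one gets at the sharper $o_p(n^{-1/2})$ rate by noting that both solve (exactly or on average) the same orthogonalized estimating equation and differ only in aggregation order, so their difference is a smooth function of fold-wise remainders each of order $o_p(n^{-1/2})$; and (c) consistency of the plug-in variance estimator $\hat\sigma^2 = \hat J^{-2} K^{-1}\sum_k n^{-1}\sum_{i\in I_k} L(\mb V_i;\hat\gamma,\hat\phi_k)^2$ with $\hat J = K^{-1}\sum_k n^{-1}\sum_{i\in I_k} D(\mb V_i;\hat\phi_k)$, via a uniform law of large numbers over the cross-fitted folds together with the continuous mapping theorem, using that $\hat J \to_p J \neq 0$; the uniform validity of the Wald interval then follows by Slutsky and Polya's theorem.

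I expect the main obstacle to be the careful bookkeeping of the third-order Taylor expansion in the propensity-score direction: unlike the ATE case (where $\dot\lambda \equiv \text{const}$ and $\ddot\lambda \equiv \lambda^{(3)} \equiv 0$), here the remainder genuinely involves $\ddot\lambda$ and $\lambda^{(3)}$ evaluated at intermediate points $e_r(\mb X) = e(\mb X) + r\{\hat e(\mb X) - e(\mb X)\}$, and one must show these intermediate-point evaluations are controlled uniformly in $r\in[0,1]$ and match the $\eta_r$ term appearing in Condition \ref{cond:DMLconverge}. Handling the ratios $\lambda(\hat e)/\hat e$, $\lambda(\hat e)/(1-\hat e)$ (entering through $\psi_\tau$) and their derivatives simultaneously—keeping every denominator bounded away from zero via positivity of $\hat e$ and $1-\hat e$ on the compact interval $[C_1,C_2]$—is where the bulk of the technical work, and the role of Conditions \ref{cond:DMLbound}--\ref{cond:DMLLips}, will lie.
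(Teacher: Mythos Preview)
Your proposal is correct and follows essentially the same route as the paper: verify that the linear score $L$ satisfies the Neyman-orthogonality and regularity Assumptions 3.1--3.2 of Chernozhukov et al.\ \cite{chernozhukov2018double} (via first- and second-order Gateaux derivatives, the latter producing the $\ddot\lambda$ and $\lambda^{(3)}$ terms that match $\eta_r$ in Condition \ref{cond:DMLconverge}), then invoke their Theorems 3.1--3.2 to obtain asymptotic normality, first-order equivalence of DML-1 and DML-2, and consistency of $\hat\sigma^2$. The paper organizes the verification identically---Lemma for $\partial_r\E[L]=0$ at $r=0$, Lemma for the explicit second-order Gateaux derivative, then item-by-item checking of the moment bounds $m_n,m_n'$, the rate quantities $r_n,r_n',\lambda_n'$, and variance non-degeneracy---so your anticipated obstacle (uniform-in-$r$ control of the intermediate-point derivative terms) is exactly where the technical work lies.
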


The proof of Theorem \ref{thm:RDR-DML} is provided in our \textit{Online Supplemental Material} (Section B). For completeness, Appendix \ref{subapp:condRDR-DML} contains a more technical statement of the weak convergence (convergence in distribution) of $\hat\gamma^{\text{dml-}d}$ ($d=1,2$) above. To maintain readability in the main text, we omit these technical details here. 

At the end of this section, we provide a remark summarizing key techniques used in DML-based theory, offering interested readers insights into why DML-based estimators are proposed in their particular forms and how they work. 
\begin{remark} The DML-based methods in  Chernozhukov et al. \cite{chernozhukov2018double} rely on two key concepts: \textit{Neyman orthogonality} and \textit{Gateaux derivatives}. Neyman orthogonality ensures that small errors in estimating nuisance parameters (e.g., propensity score, conditional outcomes) minimally affect the final WATE estimator. This is achieved by constructing a \textit{Neyman orthogonal score} $\psi(\mb V;\phi)$ for each nuisance parameter, where $\mb V$ represents the data and $\phi$ the nuisance parameter vector. The estimator $\hat\phi_0$ solves the equation 
$$
\frac1n\sum_{i=1}^n\psi(\mb V_i;\hat\phi_0) = 0,
$$ subject to moment conditions formalized via Gateaux derivatives, which measure estimator sensitivity to perturbations. Neyman orthogonality ensures that this derivative is zero at the true parameter values. Sample splitting and cross-fitting mitigate overfitting and maintain orthogonality, reducing bias when using the same data for nuisance parameter estimation and WATE evaluation. For technical details, see our \textit{Online Supplemental Material} (Section B) and  Chernozhukov et al. \cite{chernozhukov2018double}
\end{remark}

So far, we have introduced our proposed methods. The next Sections \ref{sec:simu} and \ref{sec:data} present numerical experiments and empirical applications to evaluate our methods in finite-sample and assist statistical practices.

\section{Numerical Experiments}\label{sec:simu}

We conduct extensive numerical experiments using Monte Carlo simulations. Our simulation study is implemented using the R package we developed, \texttt{WATE}, which is available at \url{https://github.com/yiliu1998/WATE}. The \texttt{WATE} package integrates all methods available in the \texttt{SuperLearner} R package \citep{van2007super, polley2011super} for fitting and predicting nuisance functions.

We compare the following WATE estimators: the EIF-based, the two DML-based, and two other na\"ive estimators. The na\"ive estimators are denoted by $\hat{\gamma}_{\text{na\"ive}-1}$ and $\hat{\gamma}_{\text{na\"ive}-2}$, and are defined by
\begin{align}\label{eq:naive-est}
	\hat{\gamma}_{\text{na\"ive}-1}=\frac{\pr_n[\lambda\{\hat{e}(\mb X)\}\{\hat{\mu}_1(\mb X)-\hat{\mu}_0(\mb X)\}]}{\pr_n [\lambda\{\hat{e}(\mb X)\}]}, \text{ and }
	\hat{\gamma}_{\text{na\"ive}-2}=\frac{\pr_n[\lambda\{\hat{e}(\mb X)\}\hat\tau^{\text{ipw}}(\mb X)]}{\pr_n[\lambda\{\hat{e}(\mb X)\}]}, 
\end{align}
where $\hat\tau^{\text{ipw}}(\mb X)$ is an estimator of $\tau^{\text{ipw}}(\mb X) = \dfrac{AY}{e(\mb X)}-\dfrac{(1-A)Y}{1-e(\mb X)}$ using the estimated propensity score $\hat e(\mb X)$ for $e(\mb X)$. These two na\"ive estimators are naturally motivated from the fact that both the outcome difference $\hat{\mu}_1(\mb X)-\hat{\mu}_0(\mb X)$ and the inverse probability weighting (IPW) component $\tau^{\text{ipw}}(\mb X)$ are consistent for estimating CATE $\tau(\mb X)$ if the nuisance functions are correctly specified. They serve as competing methods for comparing efficiency and unbiasedness of our developed estimators. 

\subsection{Data generating processes}\label{subsec:DGP}
We consider the following data generation strategies that enable us to generate outcomes and treatment data through both simple and complex mechanisms. First, consider a basic outcome model, defined by
\begin{align*}
Y=\tau(\mb X) A +\mu_0(\mb X)+U, 
\end{align*}
where $\mu_0(\mb X)$ is a baseline outcome for every individual, $\tau(\mb X)=\mu_1(\mb X)-\mu_0(\mb X)$ is the CATE given the covariate vector $\mb X$, $A$ is a binary treatment generated from the Bernoulli distribution with probability $e(\mb X)$, and $U$ is a random error with $U\sim\mc N(0,1)$. The vector $\mb X\in \mathbb{R}^p=(X_1,\cdots,X_p)'$ consists of $p=10$ different variables. Each variable $X_i$ is generated from independent $\mathcal{N}(0,1)$ distributions. The baseline outcome is given by $\mu_0(\mb X)=X_1+X_5+X_4\times X_5$. 

For the treatment assignment, we consider not only the simple scenario of completely random treatment assignment in case (a), which uses a constant propensity score across all study participants, but also three additional sophisticated scenarios in cases (b) to (d).  Specifically, we generate $e(\mb X)$ using
$$e(\mb X)=\phi\left[\frac{a(\mb X)-\E\{a(\mb X)\}}{\sigma\{a(\mb X)\}}\right],$$
where $\phi(\cdot)$ is the cumulative distribution function of the standard normal distribution, with three different $a(\mb X)$'s are specified in (b) to (d) below. 
\begin{itemize}\renewcommand{\labelitemi}{\hspace{0.5em}}
	\item  (a) random: $e(\mb X)=c$ with $c=0.3$;
	\item  (b) linear: $a(\mb X)=X_2+X_3+X_5-X_8$;
	\item  (c) interaction: $a(\mb X)=\mb X'\mb b+X_2+X_5+X_3\times X_8$; and
	\item  (d) non-linear: $a(\mb X)=\mb X'\mb b+X_2+1.5\cdot\cos(X_4\times X_8)+2\cdot\sin(X_5)$.
\end{itemize}
In (c) and (d), the vector $\mb b=\mb l^{-1}$ with $\mb l=(1,2\cdots,p)'$ denotes the weights for each covariate. In addition, we choose four different CATEs $\tau(\mb X)$ with various degrees of heterogeneity below. A binary effect (case (b)) and no effect (case (b)) are with less heterogeneity, in contrast to cases (c) and (d).
\begin{itemize}\renewcommand{\labelitemi}{\hspace{0.5em}}
	\item (a) zero: $\tau(\mb X)=0$;
	\item (b) binary: $\tau(\mb X)= 2I(X_2\geq 0) -I(X_2<0)$;
	\item (c) linear: $\tau(\mb X)=X_1+X_2$; and
	\item (d) non-linear: $\tau(\mb X)=2\cdot\sin(X_1+0.5\cdot X_2+0.33\cdot X_3)+1.5\cdot \cos(X_{10})$.
\end{itemize}

We consider five complete data generating processes (DGPs) listed in Table \ref{tab:fiveDGPs} by different combinations of $e(\mb X)$ and $\tau(\mb X)$. Inherently, the nuisance functions within DGPs 2 to 5 possess sufficient smoothness to satisfy the Donsker conditions as per Theorem \ref{thm:RDR-EIF}. Conversely, DGP 1 may not meet these conditions, necessitating the use of adaptive machine learning (ML) methods as outlined in  Theorem \ref{thm:RDR-DML}. Given that the unbiasedness of a WATE estimator {hinges} on the  prediction accuracy of nuisance functions, selecting appropriate ML methods for their estimation is crucial. The \texttt{SuperLearner} R package \citep{van2007super} offers an array of viable modern ML methods, including a strategic hybrid method known as ensemble. The ensemble method allows for the combination of multiple ML approaches to estimate nuisance functions, while the \texttt{SuperLearner} algorithm subsequently assigns weights to these methods, employing data-driven strategies to minimize prediction error.

\begin{table}
    \centering
    \caption{Data generation processes (DGPs) based on different $e(\mb X)$ and $\tau(\mb X)$ models}\label{tab:fiveDGPs}
    \begin{tabular}{rcccccc}
        \toprule
        DGP  &  1 & 2 & 3 & 4 & 5 \\
        \midrule 
        Propensity score $e(\mb X)$ &  nonlinear  & linear & interaction & random & linear \\
        CATE $\tau(\mb X)$ &   nonlinear  & linear & binary & linear &  zero \\
        \bottomrule 
    \end{tabular}			
\end{table}

In addition, we consider two cases of nuisance function (conditional outcome and propensity score) model specifications: 
\begin{itemize}
    \item Simple GLM: We use only \texttt{SL.glm} for the nuisance functions in all estimators. Specifically, it models the propensity score using logistic regression and the outcomes using linear regression. This approach can be viewed as a systematic model misspecification when the true models are nonlinear and/or include higher-order covariate terms. However, this aligns with common practices in some epidemiological studies, where some investigators prioritize the interpretability, simplicity, and transparency of nuisance function models \citep{frank2015regression, shmueli2010explain}. This further enables us to evaluate whether our proposed methods exhibit greater robustness when traditional parametric models are used for nuisance functions.
    
    \item Methods ensemble: We employ an ensemble of parametric and machine learning methods to fit nuisance functions, aiming to improve prediction accuracy. These methods include the generalized linear model (\texttt{SL.glm} and \texttt{SL.glm.interaction}), Lasso (\texttt{SL.glmnet}), and gradient-boosted trees (\texttt{SL.xgboost}) provided in the \texttt{SuperLearner} R package \citep{van2007super}.  
\end{itemize}

\subsection{Weight functions, estimands and competing methods}\label{subsec:wtfuncs}

To evaluate the proposed methods, we consider various weight functions $\lambda(t)$ and their target estimands. Alongside the standard causal estimands—ATE, ATT, ATC, ATO, and ATEN—we include five ATB estimands (ATB1, $\dots$, ATB5) with parameters $(\nu_1, \nu_2)$ set as $(3,4)$, $(4,3)$, $(2,4)$, $(4,2)$, and $(4,4)$. We exclude trimming, truncation, and ATM due to the non-differentiability of their weight functions, which conflicts with the theoretical conditions in this paper. As ATM resembles ATO and ATEN, we expect similar empirical results for these estimands.

We perform $M=500$ Monte Carlo simulations, varying the sample size $n\in\{100, 400, 1000, 4000\}$ in each replication. For each run, we compute point and variance estimates using the EIF, DML-1, DML-2, and two na\"ive methods from \eqref{eq:naive-est}. For the EIF estimator, we estimate the nuisance functions on the full dataset and apply them in Equation \eqref{eq:EIFestimator}, with variance derived from the mean square of the estimated EIFs. For the DML estimators, we use 5-fold cross-fitting with 10 sample splits, estimating nuisance functions on training sets and predicting on validation sets. 

We follow the mean and median strategies from  Chernozhukov et al. \cite{chernozhukov2018double} to stabilize variability across sample splits. We refer to each DML-based estimator as ``DML-$d$-variance strategy ($d=1,2$)'', e.g., DML-1-mean for the DML-1 estimator with the mean strategy. Similar variations include DML-1-median, DML-2-mean, and DML-2-median. Inference for all EIF and DML-based estimators is done via normal approximation (Wald-type confidence interval). 

\subsection{Evaluation metrics}\label{subsec:metrics}

To evaluate the finite-sample performance of all competing methods, we use the following metrics. Let $\gamma_0$ represent the true value of the WATE estimand and $\hat{\gamma}_i$ its estimate from the $i$-th simulation, for $i = 1, \dots, 500$.
\begin{itemize}
\item To evaluate the accuracy of WATE estimation, suggested by Daniel et al. \cite{daniel2015causal} and Tao and Fu \cite{tao2019doubly}, we compute the absolute bias (ABias), standard deviation (SD) and root mean square error (RMSE) through
\begin{align*}
	\text{ABias} =\left\vert\frac{1}{M}\sum_{i=1}^{M}\hat{\gamma}_i-\gamma_0\right\vert, 
	\text{SD} = \sqrt{\frac{1}{M}\sum_{i=1}^{M}(\hat{\gamma}_i-\bar{\hat{\gamma}})^2}, \text{ and } 
	\text{RMSE} = \sqrt{\frac{1}{M}\sum_{i=1}^{M}(\hat{\gamma}_i-\gamma_0)^2}.
\end{align*}
To get the true value $\gamma_0$, we note that for DGP 5, it is straightforward that $\gamma_0=0$ for all WATEs. However, for DGPs 1 to 4, due to the heterogeneity in $\tau(\mb X)$,  we approximate $\gamma_0$ via
\begin{align*}
	\gamma_0 = \frac{1}{M}\sum_{i=1}^{M}\frac{\pr_{n_{\text{large}}}[ \lambda\{e(\mb X)\}\tau(\mb X)]}{\pr_{n_{\text{large}}}[\lambda\{e(\mb X)\}]},
\end{align*}
using $10$ independent replications of $n_{\text{large}}=10^7$ randomly generated full data of propensity score $e(\mb X)$, potential outcomes $Y(0)$, $Y(1)$, and CATE $\tau(\mb X)$ under the true DGPs 1 to 4 aforementioned. With this large sample size $10^7$ and averaging over $10$ independent replications, the uncertainty associated with each $\gamma_0$ calculation is negligible, so $\gamma_0$ can be viewed as the truth of the corresponding estimand. 
\item Coverage probability (CP\%): to evaluate the validity of the asymptotic normality by Theorem \ref{thm:RDR-EIF} and \ref{thm:RDR-DML}, we compute the CP\% of the Wald-type CI for each WATE as follows:
\begin{align*}
	\mbox{CP\%}=100\%\times\frac{1}{M}\sum_{i=1}^{M} I(\gamma_0\in[\hat{\gamma}_i \pm \Phi^{-1}(1-\alpha/2)\hat{\sigma}_i/\sqrt{n}]),
\end{align*}
where $\Phi^{-1}(\cdot)$ is the quantile function of the standard normal distribution and $[\hat{\gamma}_i\pm\Phi^{-1}(1-\alpha/2)\hat{\sigma}_i/\sqrt{n})]$ is the $100\cdot(1-\alpha)\%$ level Wald-type confidence interval constructed in the $i$th simulation. Notice that because of Monte Carlo error, a CP\% over $M=500$ simulation can be viewed as not significantly different from the 95\% nominal level if it is in the range $95\pm 1.96\sqrt{(95\times5)/M} = [93, 97]$. 
\end{itemize}

\begin{remark}\label{rmk:CP}
    We evaluate the variance estimation of the two na\"ive estimators using the mean squared values of their estimated influence functions. However, these estimators are likely invalid, as they do not fully account for the uncertainty in estimating the nuisance functions, as noted in prior works \citep{lunceford2004stratification, williamson2012variance, zou2016variance, mao2018propensity, austin2022bootstrap}. We include them in our simulations for completeness and to show the advantage of the proposed DML-based methods, that is, the validity of using this simple variance estimator. In addition, we point that the valid variance estimators for the two na\"ive methods might be obtained via (i) nonparametric bootstrap or (ii) empirical sandwich variance estimation. For empirical studies, see Matsouaka et al. \cite{matsouaka2023variance, matsouaka2024overlap} and Li et al. \cite{li2025variance}. However, we comment on their drawbacks: bootstrap can be time-consuming especially when using machine learning models or dealing with big data. The sandwich estimation is restrictive with nonparametric models, and it also lacks a uniform form across parametric models and does not fit ensemble methods like \texttt{SuperLearner} \citep{van2007super}. Fortunately, our DML-based proposal avoids using both bootstrap and sandwich methods. 
\end{remark}

\subsection{Simulation results}\label{subsec:simures}

To be succinct, in this section, we report representative results by {$n=1000$ and 4000 under DGP 1 with only estimands} ATE, ATT, ATC, ATO and ATEN but both nuisance model specifications described in Section \ref{subsec:DGP}. The complete simulation results under {all sample sizes, estimands and DGPs} are available in our \textit{Online Supplemental Material} (Section C). 

First of all, Figure \ref{fig:main-PS} displays the predicted propensity score distributions {as well as the love plots for covariate balance by absolute standardized mean difference (ASMD) \cite{austin2009balance, zhou2020psweight}} for DGP 1 (figures for other DGPs are in our \textit{Online Supplemental Material} [Section C]). Both simple GLM and methods ensemble yield similar distributions across treatment groups. The predicted propensity scores show good overlap between the two treatment groups, with a higher proportion of controls under this DGP. {Furthermore, the propensity score weights for ATO and ATEN achieve better overall covariate balance, followed by those for ATE, ATT and ATC. In contrast, weights from the beta family (ATB1–ATB5) can sometimes lead to poorer balance—even relative to the unweighted differences—due to the inherent instability of their weighting functions, which introduces greater finite-sample instability. }

\begin{figure}
    \centering
    \includegraphics[width=\textwidth]{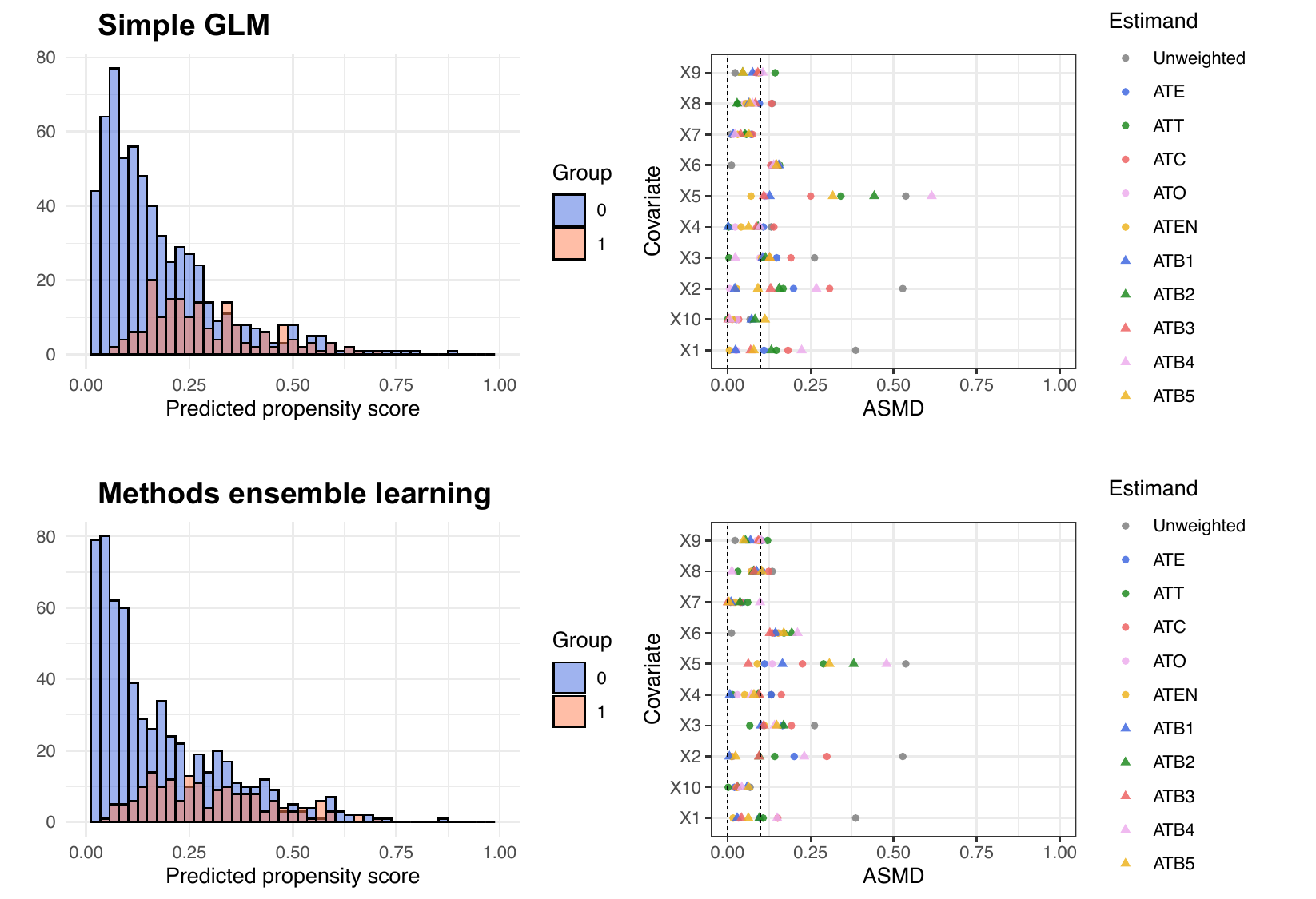}
    \caption{Predicted propensity score distributions by treatment group and covariate balance by absolute standardized mean difference (ASMD), under DGP 1 from the prediction set of a random sample ($n=4000$), where 80\% of the data are used for the training set for training the propensity score model.}
    \label{fig:main-PS}
\end{figure}

\begin{figure}
    \centering
    \includegraphics[width=\textwidth]{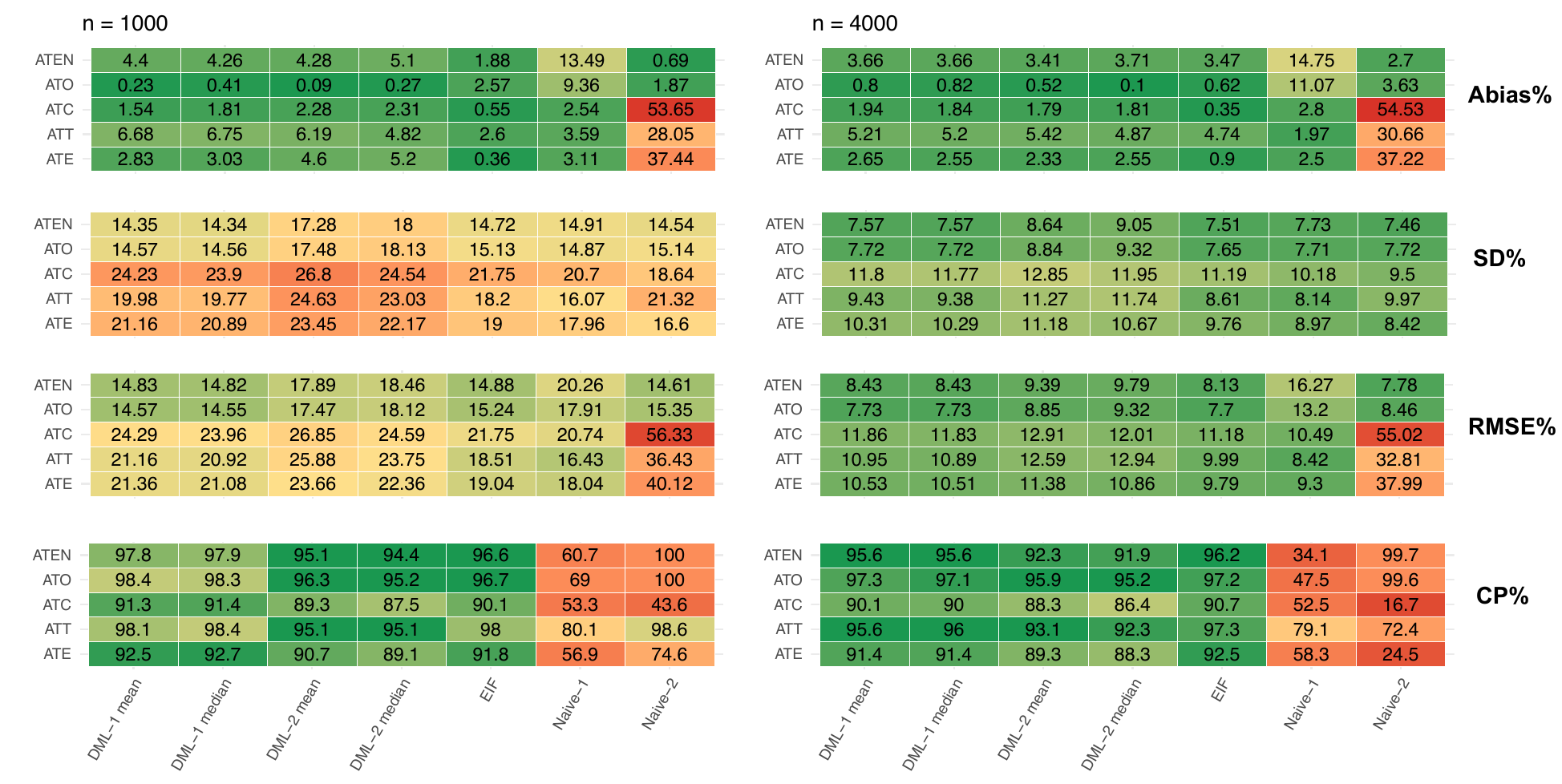}
    \caption{Simulation results by the proposed and na\"ive estimators under DGP 1 and simple GLMs for nuisance model specifications.}
    \label{fig:main-GLM}
\end{figure}

\begin{figure}
    \centering
    \includegraphics[width=\textwidth]{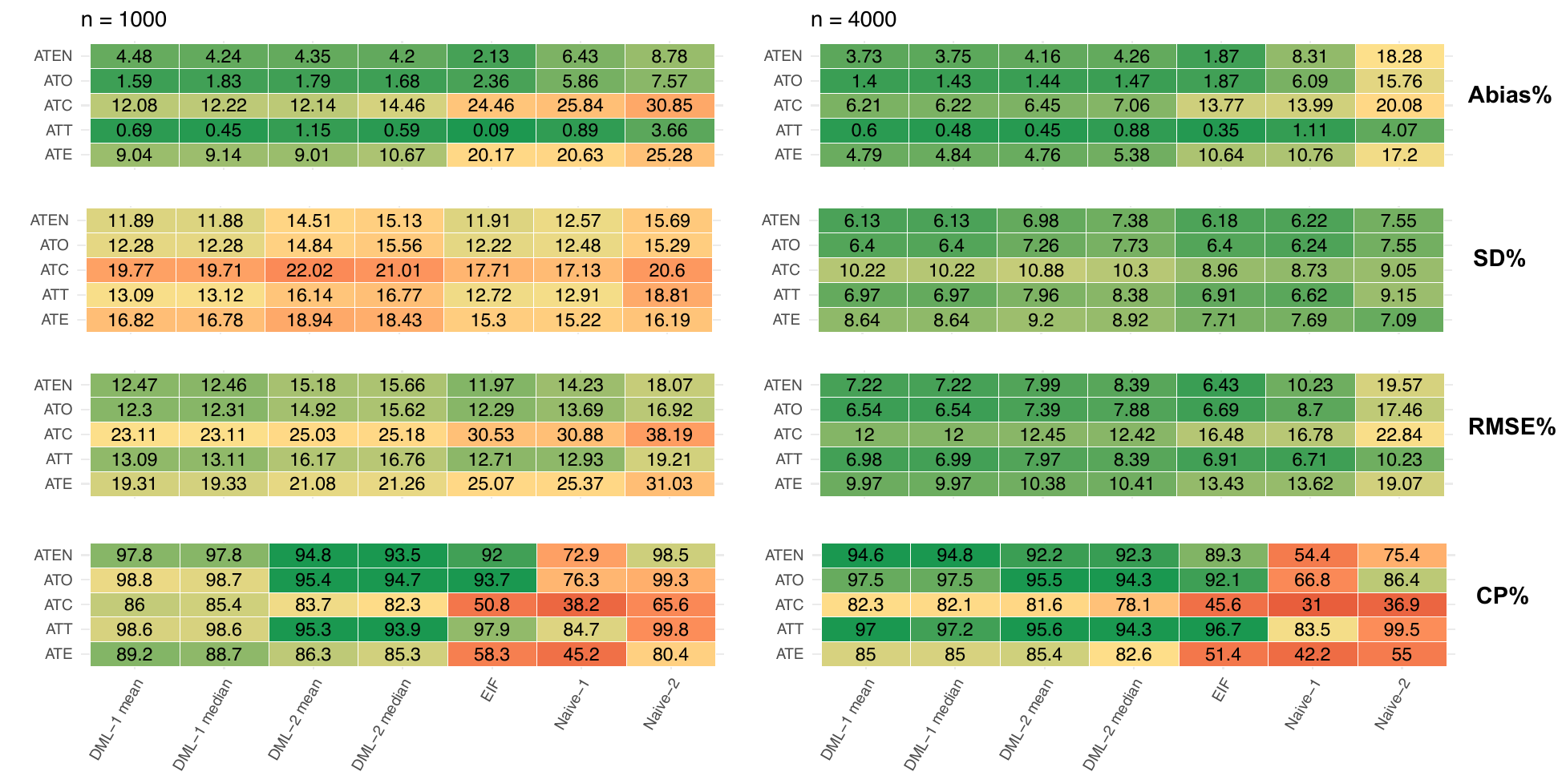}
    \caption{Simulation results by the proposed and na\"ive estimators under DGP 1 and methods ensemble for nuisance model specifications.}
    \label{fig:main-ensb}
\end{figure}

Next, we compare the different estimators. Overall, the three proposed estimators outperform the na\"ive estimators in terms of bias and RMSE. From the four ABias\% panels in Figures \ref{fig:main-GLM} and \ref{fig:main-ensb}, we can see that the largest ABias\% for the same estimand is always associated with one of the two na\"ive estimators. For example, in the leftmost panel (ABias\%, $n=1000$) of Figure \ref{fig:main-GLM}, the na\"ive-1 estimator exhibits the highest bias for ATEN and ATO, while the na\"ive-2 estimator shows the highest bias for ATC, ATT, and ATE. Similarly, RMSE results confirm that the na\"ive estimators consistently underperform, suggesting they are more sensitive to bias from nuisance models. Even with systematic model misspecification using a simple GLM (Figure \ref{fig:main-GLM}), the proposed EIF and DML-based methods effectively correct for bias and are more robust. Additionally, comparing DML-1 and DML-2 shows that DML-1 generally performs better, with lower bias and RMSE across all cases. Additionally, the DML-based estimators offer better CP\% compared to other methods. In both model specifications, the two na\"ive estimators show poor CP\% that significantly deviate from the nominal 95\% level. {This poor performance of the na\"ive estimators can be attributed to at least two factors: (i) as discussed in Remark \ref{rmk:CP}, the variance estimators applied to these methods are expected to be invalid; we include them here to demonstrate evidence of their invalidity and to highlight the advantage of our proposed estimator under the same variance estimation approach; and (ii) the poor CP\% values are also driven by the biases of the na\"ive estimators, which shift the estimates away from the truth.

We further explore the differences between the two model specification approaches, each with its advantages. First, using a simple GLM yields better CP\% for ATE and ATC in this DGP, likely due to some larger finite-sample instability in the methods ensemble, even with DML-based estimators. We suspect this is caused by extreme values of the estimated propensity scores from the methods ensemble. In such cases, propensity score weights for ATC and ATE can become large, particularly when estimated $e(\mb X)$ values are close zero in the treated group ($A=1$). This is supported by Figure \ref{fig:main-PS}, which shows the methods ensemble producing more observations with low propensity scores in the treated group, as indicated by the leftmost five coral color bars. This implies that some observations in the treated group can have large weights ($1/e(\mb X)$ for ATE and $\{1-e(\mb X)\}/e(\mb X)$ for ATC). A similar performance of the CP\% for the ATE has also been observed in a recent study by Tan et al. \cite{tan2025double}. Second, for other estimands, the methods ensemble produces similar biases and CP\%'s to the simple GLM but with smaller SDs and RMSEs, suggesting that when regularity conditions are met, flexible nonparametric modeling of nuisance functions can improve efficiency in WATE estimation.

Furthermore, we comment on the trends in the estimated variance across the two sample sizes. The SD\% panels of Figures \ref{fig:main-GLM} and \ref{fig:main-ensb} show that the SD for $n=4000$ is roughly half that for $n=1000$, indicating that the variance for $n=4000$ is about one-fourth of that for $n=1000$. This suggests that the efficiency of all assessed methods improves, with variance decreasing proportionally as the sample size increases. 

At the same time, although we did not present results for other sample sizes ($n=100$ and $400$), DGPs 2--5, and for the beta estimands (ATB1--ATB5), we briefly summarize the findings from these settings below. Overall, {under the two larger sample sizes $n=1000$ and $4000$}, results for the main estimands (ATE, ATT, ATC, ATO, ATEN) are consistent with those from DGP 1, and the key conclusions remain as aforementioned. Moreover, under the two smaller sample sizes and the main estimands, our methods continue to perform well in terms of ARBias\%, with overall smaller biases compared to the two na\"ive estimators. However, DML-1-mean and DML-2-mean often exhibit higher RMSEs when $n=100$; we suspect this is due to increased finite-sample instability from certain sample-splitting replications, as the nuisance functions are trained on small folds ($n_{\text{training}} = 80\% \times 100 = 80$). Nevertheless, DML-1-median and DML-2-median remain robust overall. When the sample size increases to $n=400$, the results from all DML methods become more stable across all main estimands and align closely with those observed at the larger sample sizes. Furthermore, under the two smaller sample sizes, comparing the two model specifications for nuisance functions, we find that the simple GLM generally results in higher RMSEs and biases---particularly when $n=100$ and for DML-1-mean and DML-2-mean estimators of ATE, ATT, and ATC.

{Next, comparing across the four sample sizes, we observe that the RMSEs of the proposed methods decrease as $n$ increases, confirming their rate convergence property. In addition, regarding CP\%, DML-1-median and DML-2-median perform best even at the smallest sample size $n=100$, providing the most accurate inferential results. When $n \geq 400$, all DML-based estimators perform well, except for some under-coverage cases for ATE and ATC under DGPs 1 and 3. However, as noted earlier, this is likely due to extreme propensity scores} (see figures in \textit{Online Supplemental Material} [Section C.1]). 

Finally, for beta weights, DML estimators using the mean procedure often exhibit higher variances {under all sample sizes and nuisance function models}, particularly with large $\nu_1$ or $\nu_2$ that makes cross-fitting unstable. Although these beta weights might be less frequently used in practice, we include them for investigating trends by different parameter setting (weight functions) in our simulation. However, we exclude them from the case studies in Section \ref{sec:data} based on the findings here. 

\section{Case Studies}\label{sec:data}

We conduct two case studies to illustrate our proposed methods in Sections \ref{sec:EIF} and \ref{sec:DML}, using data from the 2007–2008 U.S. National Health and Nutrition Examination Survey (NHANES) and the 1991 Survey of Income and Program Participation (SIPP). The first case estimates the causal effects of smoking on blood lead levels \citep{hsu2013calibrating, yang2018asymptotic}, while the second extends the work of  Chernozhukov et al. \cite{chernozhukov2018double}. By estimating WATEs in addition to ATE, we can gain more insights about the effect of the interested treatment.  

In both studies, we apply EIF, DML-1, DML-2, and the two na\"ive estimators defined in \eqref{eq:naive-est}, evaluating ATE, ATT, ATC, ATO, and ATEN estimands, with point and standard error estimates reported. Following  Chernozhukov et al. \cite{chernozhukov2018double}, we use 5-fold cross-fitting over 100 sample splits for DML-1 and DML-2, summarizing results using both mean and median strategies. For all estimators, we employ an ensemble of \texttt{SL.glm}, \texttt{SL.glm.interaction}, \texttt{SL.glmnet}, and \texttt{SL.ranger} from the \texttt{SuperLearner} R package \citep{van2007super} to predict nuisance functions. Sections \ref{subsec:smoke} and \ref{subsec:401k} provide background on the two datasets, and Table \ref{tab:datares} in Section \ref{subsec:datares} presents the analysis results.

\subsection{Study I: Effects of smoking on blood lead level}\label{subsec:smoke}

This study uses data from the 2007–2008 NHANES to evaluate the effects of smoking on blood lead levels. The dataset, available in the supporting information of Hsu and Small \cite{hsu2013calibrating}, includes 3,340 participants: 679 smokers ($A=1$) and 2,661 nonsmokers ($A=0$). The outcome $Y$ is blood lead level measured in $\mu$g/dl, with age, income, race, education, and gender included as covariates $\mb X$.

\subsection{Study II: Eligibility for participation in the 401(k) plan}\label{subsec:401k}

The second study uses data from the 1991 SIPP, available in the \texttt{DoubleML} R package, consisting of 9,915 household-level samples. The outcome $Y$ is net financial assets, and the treatment $A$ is eligibility for enrolling in a 401(k) plan. The covariates $\mb X$ include age, income, education, family size, marital status, two-earner status, pension benefits, IRA participation, and homeownership. To address the lack of random assignment, we follow the widely accepted assumption that 401(k) eligibility can be treated as exogenous after conditioning on income and related job factors, as argued by Poterba et al. \cite{poterba1992401, poterba1995Do}. Their analysis used linear regressions with limited covariates, raising concerns about insufficient control for income and the limited use of ML methods.  Chernozhukov et al. \cite{chernozhukov2018double} revisited the data and estimated ATE using the DML method with advanced ML tools. To build on this, we estimate additional WATEs using our proposed methods to further explore insights from the data. 

\subsection{Data analysis results}\label{subsec:datares}

We summarize the results of our case studies in {Figure \ref{fig:studies}} and Table \ref{tab:datares}, followed by comments and discussion. {Figure \ref{fig:studies} displays the estimated propensity score distributions by treatment group and the covariate balance assessments based on propensity score weights for all estimands considered across both studies. The results in Figure \ref{fig:studies} are based on the estimated propensity scores obtained from the ensemble learning model described earlier, using the entire sample. The estimated propensity score distributions for both studies exhibit moderate to high overlap. At the same time, in both studies, the weights corresponding to ATT, ATO, and ATEN achieve better covariate balance, with most ASMDs falling below 0.1. In contrast, for some covariates, the ASMDs under ATE and ATC exceed 0.1. However, the weights for all estimands lead to substantially better covariate balance compared to the unweighted mean differences between the two treatment groups in both studies.} 

\begin{figure}[H]
    \centering
    \includegraphics[width=\linewidth]{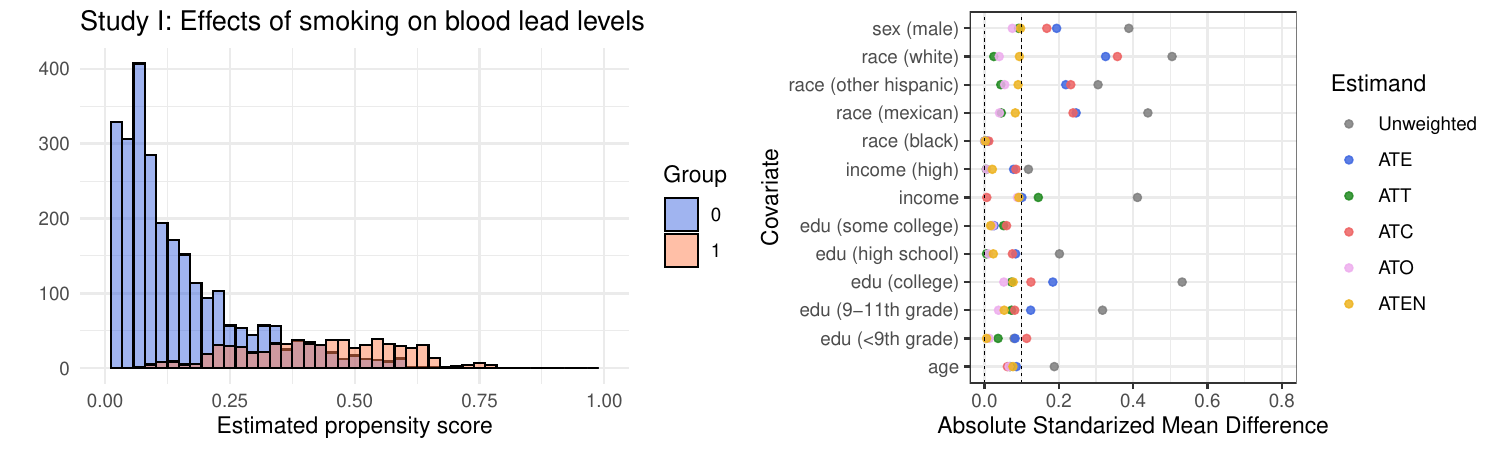}
    \includegraphics[width=\linewidth]{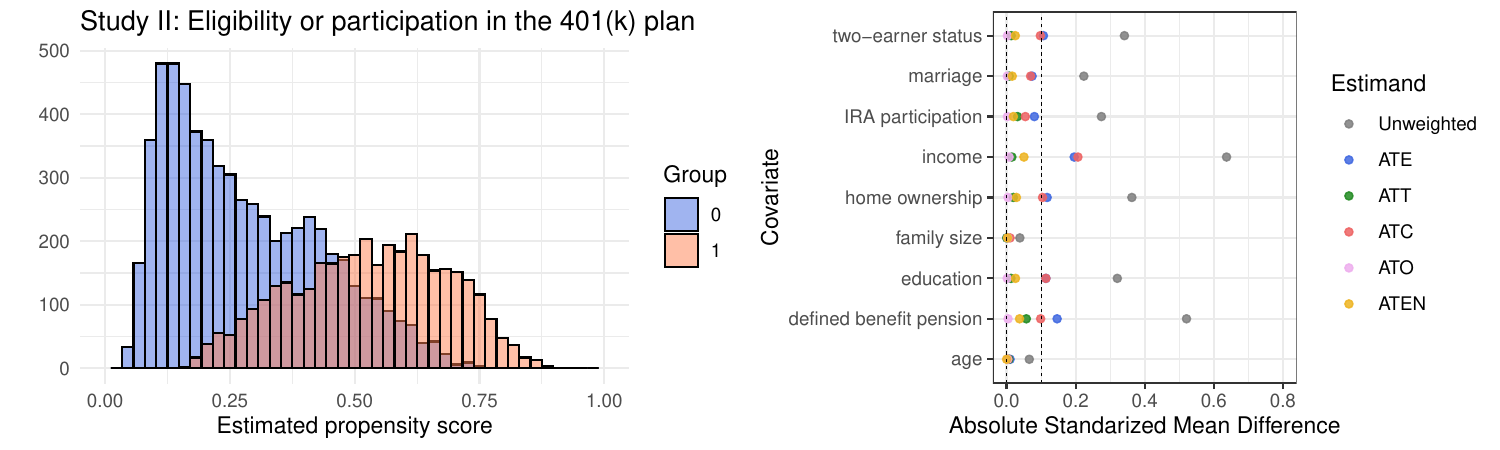}
    \caption{Estimated propensity score distributions and covariate balance in the two case studies.} 
    \label{fig:studies}
\end{figure}

\begin{table}[H]
    \centering
    \small
    \caption{Point estimates (standard error estimates) for WATEs in the two case studies.}\label{tab:datares}
    \begin{tabular}{rcccccccccccccc} 
        \toprule
        & \multicolumn{2}{c}{DML-1} & \multicolumn{2}{c}{DML-2} & \multirow{2}{*}{EIF} & \multirow{2}{*}{Na\"ive-1} & \multirow{2}{*}{Na\"ive-2} \\ 
        \cmidrule(r){2-3} \cmidrule(r){4-5}
         & Mean & Median & Mean & Median &  &  &  \\
        \midrule
        & \multicolumn{7}{c}{\textbf{Study I: Effects of smoking on blood lead levels ($\mu$g/dl)}} \\
        \addlinespace
        ATE  & 0.80 (0.10)  & 0.80 (0.10) & 0.78 (0.10) & 0.77 (0.10) & 0.85 (0.05) & 0.86 (0.01) & $-0.19$ (0.08) \\  
        ATT  & 0.76 (0.08)  & 0.76 (0.08) & 0.74 (0.10) & 0.73 (0.09) & 0.77 (0.08) & 0.80 (0.01) & 1.11 (0.13) \\
        ATC  & 0.81 (0.11)  & 0.81 (0.10) & 0.79 (0.10) & 0.77 (0.10) & 0.87 (0.04) & 0.88 (0.01) & $-0.51$ (0.08) \\
        ATO  & 0.79 (0.09)  & 0.79 (0.09) & 0.77 (0.09) & 0.73 (0.09) & 0.76 (0.08) & 0.82 (0.01) & 0.69 (0.12) \\ 
        ATEN & 0.79 (0.09)  & 0.80 (0.09) & 0.77 (0.09) & 0.75 (0.09) & 0.77 (0.07) & 0.83 (0.01) & 0.50 (0.11) \\ 
        \midrule
        & \multicolumn{7}{c}{\textbf{Study II: Financial assets difference by eligibility to enroll in the 401(k) plan (\$)}} \\
        \addlinespace
        ATE  & 7269 (1222)  & 7672 (1219) & 7105 (1226) & 6918 (1214) & 7690 (911) & 7759 (161)  & 6310 (1278) \\ 
        ATT  & 9345 (2003)  & 9364 (1994) & 9354 (2000) & 9431 (1945) & 9833 (1399) & 10199 (273) & 14046 (2041) \\ 
        ATC  & 6043 (893)   & 6029 (893)  & 5774 (905)  & 5745 (894)  & 6423 (707)  & 6423 (115)  & 1732 (909) \\ 
        ATO  & 8557 (1303)  & 8556 (1305) & 8392 (1303) & 8102 (1322) & 8696 (1062) & 8412 (176)  & 6937 (1476) \\ 
        ATEN & 8274 (1283)  & 8280 (1285) & 8110 (1285) & 7846 (1290) & 8501 (1023) & 8269 (172)  & 6805 (1429) \\ 
        \bottomrule
    \end{tabular}
\end{table}

{Furthermore, based on the results in Table \ref{tab:datares}, } we provide our interpretations on the WATE estimates, using ATE and its DML-1 (mean) estimates as example. In Study I, the ATE is 0.80 $\mu$g/dl with a standard error of 0.10 $\mu$g/dl, yielding a p-value of $<0.001$ (using normal approximation), indicating that smokers have, on average, 0.80 $\mu$g/dl higher blood lead levels than non-smokers. In Study II, the ATE is \$7,269 with a standard error of \$1,222, suggesting that those eligible for 401(k) enrollment have \$7,269 more in net financial assets (p-value $<0.001$). Other WATEs can be similarly interpreted for their respective target populations.

Our findings offer valuable insights for public health and policy decision-making. In Study I, the results suggest that across the overall (ATE), treated (ATT), control (ATC) populations, and populations where participants with propensity scores closer to 0.5 are more heavily weighted (ATO and ATEN), the estimated effects of smoking are consistently similar (approximately 0.8). This indicates that the impact of smoking on blood lead levels appears nearly constant across participants {from different target populations. In other words, smoking seems to exert a uniform marginal effect on these populations, and thus it suggests limited evidence of strong effect modification by observed covariates. However, we clarify that this statement is not intended to imply treatment effect homogeneity (nor would it be sufficient to do so), but rather reflects an intuition based on comparing results across different target populations. }

{Another remark for Study I is that, based on the results for ATE and ATC, the EIF and the two DML estimators yield notably different point estimates (e.g., for ATE, the EIF estimate is 0.85, whereas the DML estimates range from 0.77 to 0.80, with a difference of approximately one standard deviation). We suspect that this discrepancy is driven by finite-sample instability, particularly due to relatively large propensity score weights in ATE and ATC estimation, as illustrated in Figure \ref{fig:studies}. }


In Study II, a different pattern emerges. The estimated average effects on different populations show notable differences, although all effects are statistically significant at the 0.05 level. Specifically, the ATT, ATO, and ATEN estimands yield larger estimated values. This suggests that the treated population (participants eligible for enrolling in a 401(k) plan) and those with propensity scores closer to 0.5 might not need to be prioritized in policies aimed at improving financial assets. These findings highlight the importance of tailoring policy interventions to specific subpopulations based on their estimated effects.

{Moreover, }comparing results across methods reveals that: (i) DML estimates using the mean or median are nearly identical, suggesting that sensitivity to outliers is small in these data; (ii) Standard errors for DML-1 and DML-2 are generally larger than those for EIF, which is consistent with simulation findings in Section \ref{subsec:simures}; (iii) Na\"ive-2 estimators often yield substantially different point estimates, suggesting potential {biases} (e.g., in Study I, ATC is $-0.51$ $\mu$g/dl for na\"ive-2 but around $0.75$--$0.90$ $\mu$g/dl for other methods; in Study II, ATT is \$14,046 for naïve-2 compared to \$9,300--\$10,200 for other methods); and (iv) The na\"ive-1 estimator consistently produces smaller standard errors, aligning with simulation results that showed its variance estimates are highly anti-conservative.

Finally, since both Studies I and II have been previously analyzed in the literature \cite{yang2018asymptotic, chernozhukov2018double}, we briefly compare our findings with theirs. For Study I, Yang and Ding \cite{yang2018asymptotic} analyzed a trimmed sample by including participants with estimated propensity scores in $(0.05, 0.6)$, corresponding to the relatively more overlapped region between treatment groups. They considered IPW and AIPW estimators, using logistic regression with linear predictors for propensity score estimation. In terms of our framework, their analysis corresponds to a WATE defined by the weight function $\lambda\{e(\mb X)\} = I(0.05 < e(\mb X) < 0.6)$, using the naïve-2 (IPW) and EIF-based (AIPW) estimators. They reported point (standard error) estimates of 0.65 (0.14) and 0.77 (0.11), respectively. Notably, their EIF-based estimate is quite close to ours despite the difference in estimands, suggesting that the effect of smoking on different target populations is plausibly similar. In addition, for Study II, Chernozhukov et al. \cite{chernozhukov2018double} applied the DML-1 and DML-2 estimators for the ATE, reporting multiple results under different nuisance models (see their Table 2), with point estimates ranging from 6800 to 9300. Our DML-based estimates also fall within this range. The standard error estimates are similarly comparable, ranging from 1200 to 1360. The variation in point estimates suggests that some nuisance models may be misspecified or inconsistent; however, all analyses consistently indicate that the effect of 401(k) eligibility on net financial assets is statistically significant at the 0.05 level. 

\section{Concluding Remarks}\label{sec:conc}

In this paper, by leveraging the semiparametric EIF and DML tools, we propose three advanced estimators for WATEs in populations defined by propensity score weights. In this section, we provide a summary, practical recommendations, and discussions with remarks regarding our proposed EIF and DML-based estimators. 

\subsection{Summary}\label{subsec:summary}

We highlight the unique contributions of our work as follows. First, we extend the EIF theory from the conventional ATE estimand to the more general WATE class of estimands. Specifically, we derive the EIF for a general WATE, which leads to the construction of an EIF-based estimator with the RDR property. This development represents an important extension of the AIPW estimator (also known as the doubly robust estimator \cite{kang2007demystifying, bickel1993efficient}) for the conventional ATE to a broader class of causal estimands, as the AIPW estimator was also based on the EIF of ATE.

Second, we propose two DML-based estimators, which extend the cross-fitting and sample splitting strategies \cite{chernozhukov2018double, jacob2020cross} from ATE to WATE estimations for minimizing the prediction error. These strategies also enable the construction of simpler yet valid inference for WATE. The mean square of the EIFs can serve as a variance estimator, which our simulations have shown to be effective for uncertainty quantification {when using the DML-1 and DML-2 estimators}. This variance estimator is model-free, easy to compute, and does not require significantly more computational time {compared to the standard bootstrap approach}. In contrast, the traditional ``estimate and plug-in'' approach for nuisance functions (used in the EIF-based and na\"ive estimators) often requires computationally expensive bootstrapping or model-dependent sandwich variance estimators for valid inference \citep{matsouaka2023variance, sengupta2016subsampled, lunceford2004stratification}. Our proposed DML-based methods bypass these complexities, offering a more direct and efficient solution. 

Furthermore, Theorems \ref{thm:RDR-EIF} and \ref{thm:RDR-DML} show that our estimators possess the RDR property {under regularity conditions we find for both our EIF- and DML-based estimators. } Achieving RDR motivates the use of high-accuracy models like the \texttt{SuperLearner} R package \citep{van2007super}. {Therefore, we }also provide empirical assessments of the proposed methods using various nuisance model fitting strategies that practical researchers might adopt. 

Finally, a user-friendly R package \texttt{WATE} implementing our proposed methods is available at \url{https://github.com/yiliu1998/WATE} with usage instruction. In the next section, we provide some practical recommendations based on our theoretical investigations and empirical evidence from Section \ref{sec:simu}. 

\subsection{Practical recommendation}\label{subsec:pract}

To minimize bias and ensure consistency in WATE estimation, we recommend using our proposed estimators {motivated from the EIF theory} in applied research and case studies. The results of our empirical studies in Sections \ref{sec:simu} and \ref{sec:data} show that our methods (EIF, DML-1 and DML-2) are overall less biased and more robust than singly robust estimators (na\"ive-1 and na\"ive-2), while allowing flexible choices of parametric, nonparametric, or ensemble models for nuisance function modeling. Moreover, the DML-based estimators outperform the estimate-then-plug-in EIF approach overall in terms of the robustness and inference.  Thus, for researchers seeking more reliable point estimators for WATEs, we recommend prioritizing our DML-based methods over other approaches, especially in settings with complex covariate structures or potential model misspecification. 

However, we also caution against using DML-based methods when the sample size is too small, such as $n=100$ in our simulation. In this setting, the RMSEs of all EIF and DML-based estimators are not better than those of the two na\"ive methods due to larger finite-sample instability in their estimations, even if they have smaller biases. Therefore, we recommend that researchers first carefully select appropriate models for the propensity score and/or outcomes, and consider using one of the naïve estimators along with nonparametric bootstrap for variance estimation. Based on our simulation, a practical rule of thumb is that DML-based methods become reliable and robust when $n \geq 400$. We also encourage users to compare different approaches for fitting nuisance functions, whether through simple GLM models, single machine learning models, or methods ensemble, which can be easily implemented using our \texttt{WATE} R package.  

Furthermore, among the EIF, DML-1, and DML-2 estimators we proposed, the two DML-based methods show greater advantages for valid variance estimation due to their use of sample-splitting and cross-fitting. Compared to the traditional ``estimate-then-plug-in'' approach used in the EIF-based estimator, cross-fitting effectively mitigates the uncertainty in nuisance parameter estimation by separating model fitting and prediction across different data folds. This enables a simple variance estimator based on the estimated mean square of the EIFs to provide valid asymptotic variance approximation, as evidenced by our extensive simulations in Section \ref{sec:simu}. Consequently, it avoids reliance on computationally intensive algorithms such as nonparametric bootstrap or model-dependent (and less flexible) sandwich variance estimators \cite{lunceford2004stratification, matsouaka2023variance}. Therefore, for researchers seeking both robust estimation and valid uncertainty quantification, the DML-1 and DML-2 estimators are more strongly recommended. In contrast, for the EIF-based estimator, our results suggest that the same variance estimator tends to under-cover, indicating that developing more computationally efficient variance estimation methods beyond standard nonparametric bootstrap remains an important direction for future research.  

Finally, we caution against using beta family weights with large $\nu_1$ and $\nu_2$ values. As detailed in Section \ref{subsec:simures}, these beta weights are highly sensitive to model misspecifications and disproportionately target those with propensity scores near 0.5 \citep{matsouaka2024causal}. They also require selecting $(\nu_1,\nu_2)$ parameters and do not offer clear advantages over overlap or entropy weights. While they serve as candidate methods for a more comprehensive empirical comparison in this paper, their practical utility remains uncertain.

\subsection{Discussion}\label{subsec:discuss}

We first provide a remark on the \texttt{SuperLearner} toolbox \cite{van2007super}, which we employed in both our empirical studies and R package. Our paper does not claim that using \texttt{SuperLearner} guarantees the estimated nuisance models achieve the desired convergence rates. Rather, the goal and main theoretical contribution of the paper are to complement existing theory by specifying the regularity conditions required for RDR estimators. Although achieving correct model specification or ensuring convergence rates to the true functions is generally infeasible—since these properties are not testable or are difficult to verify in practice—\texttt{SuperLearner} offers a flexible and practical tool, particularly for complex data-generating processes such as those considered in our simulations. 

At the same time, we acknowledge several limitations of our method and propose potential extensions for future research. We have not fully addressed the issue of non-differentiability in the weight function $\lambda\{e(\mb X)\}$, particularly in cases of propensity score trimming \citep{crump2006moving, crump2009dealing}, truncation \citep{ju2019adaptive}, and matching weights \citep{li2013weighting}. {A possible solution is to approximate their weight functions $\lambda\{e(\mb X)\}$ with continuous, differentiable alternatives that are arbitrarily close to $\lambda\{e(\mb X)\}$. Such functions exist mathematically, ensuring that any discrepancy has a negligible impact on estimation. Yang and Ding \cite{yang2018asymptotic} addressed this in the context where $\lambda\{e(\mb X)\}$ is the indicator function used for propensity score trimming, proposing a smoothed weight function that incorporates an additional smoothing parameter. This approach introduces a bias-variance trade-off, highlighting the need for further empirical investigation to inform the selection of both trimming and smoothing parameters. } For matching weights and truncation, where $\lambda\{e(\mb X)\}$ is non-differentiable only at zero-measure points and remains continuous, we can consider smoothing the neighborhood of non-differentiable points.  

Moreover, our proposed variance estimators for EIF and DML-based methods may still show slight inconsistency in a few cases, as indicated by simulation results in Section \ref{subsec:simures}. Developing more robust variance estimators within this framework is a promising area for future research. Furthermore, as we have observed in both our simulation and data analysis, a few relatively large or extreme propensity score weights can still impact the estimation of certain members of the WATE class using our proposed estimators. Therefore, future research should explore techniques for targeting these estimands while addressing the instability caused by extreme weights, such as through weight truncation \cite{sturmer2010treatment}. Finally, we have not addressed assumption violations using sensitivity analysis, such as unmeasured confounders \citep{jacob2020cross, vanderweele2017sensitivity, fewell2007impact, lin1998assessing, faries2023real}.

Furthermore, our proposed methods have certain connections with targeted maximum likelihood estimation (TMLE) \cite{van2006targeted, schuler2017targeted}. Both our estimators and TMLE aim to achieve higher robustness and asymptotic efficiency by leveraging the EIF. TMLE (when available, e.g., for the ATE) and EIF-based estimators are asymptotically equivalent; however, TMLE adopts a targeted updating step to improve finite-sample performance, such as preserving the natural parameter range and enhancing stability. In contrast, our EIF- and DML-based estimators directly construct estimate-then-plug-in or cross-fitted estimators without an additional targeting step. Moreover, to our knowledge, existing TMLE methods have been primarily developed for ATE, ATT, and ATC, while our framework accommodates a broader class of estimands---the WATE. Extending TMLE or TMLE-type algorithms to general WATEs remains an interesting direction for future research.

Finally, other future research directions that could extend our framework include, but are not limited to, exploring data thinning as an alternative approach \citep{neufeld2024data}, extending our methods to covariate adjustment in randomized clinical trials \citep{liu2025coadvise, gao2024does}, multi-valued treatments  \citep{li2019propensity, yang2016propensity, liu2025assessing}, the weighted ATT (WATT) class of estimands \citep{liu2024average}, multi-source data \citep{yang2023elastic, gao2023integrating, han2025federated, liu2025targeted, wang2025integrating, zhuang2025assessment}, conformal inference \citep{yang2024doubly, liu2024multi}, instrumental variables \citep{levis2024nonparametric}, survival outcomes \citep{zeng2021propensity2, cao2024using}, {and estimating external population ATEs by re-weighting covariates to match the target population \citep{colnet2024causal, lee2022doubly, lee2024transporting, lee2023improving, lee2024genrct, yang2022rwd}.}

\subsection*{Acknowledgments}
The authors thank Peng Ding and Roland A. Matsouaka for their invaluable discussion and insights in this research endeavor. 

\subsection*{Funding Information}
Y.L. is supported by the National Heart, Lung, and Blood Institute (NHLBI) of the National Institutes of Health (NIH) under Award Number T32HL079896. The content is solely the responsibility of the authors and does not necessarily represent the official views of the NIH. S.Y. is partially supported by NSF (National Science Foundation) grant SES 2242776 and NIH-NIA (National Institute on Aging) grant 1R01AG066883.

\subsection*{Conflict of Interest}

Authors state no conflict of interest.

\subsection*{Author Contribution}

All authors accept responsibility for the content of this manuscript and consent to its submission. All authors reviewed the results and approved the final version. All contributed to the conceptualization and methodology. Y.W. and Y.L. designed the experiments, developed the code, and performed the simulations and data analysis. Y.L. drafted the manuscript with input from all co-authors. 

\subsection*{Data Availability Statement}

Data used in Section \ref{sec:data} are publicly available. The NHANES dataset in Section \ref{subsec:smoke} is available in the supporting information of Hsu and Small \cite{hsu2013calibrating} (\url{https://onlinelibrary.wiley.com/action/downloadSupplement?doi=10.1111%2Fbiom.12101&file=biom12101-sm-0002-SupExampleData.txt}). The SIPP dataset in Section \ref{subsec:401k} is available in the \texttt{DoubleML} R package (\url{https://docs.doubleml.org/stable/examples/R_double_ml_pension.html}). 

\bibliographystyle{vancouver} 
\bibliography{_refs}

\clearpage

\renewcommand{\thesection}{S}
\newcounter{Appendix}[section]
\numberwithin{equation}{subsection}
\renewcommand\theequation{\Alph{section}.\arabic{equation}}
\numberwithin{table}{section}
\numberwithin{figure}{section}

\section{Appendix: Technical Details}\label{app:techcond}

This appendix provides additional technical details and the conditions required for the theorems stated in the main text. All notation follows the definitions introduced therein. 

\subsection{Additional notation}\label{subapp:addnote}

\begin{table}[H]
    \centering
    \scriptsize
    \caption{Table of notation}\label{tab:notation}
    \begin{tabular}{rl}
    \toprule
    Notation & Definition \\
    \midrule
    $\mathbb R^d$ & Euclidean space of dimension $d$ $(0<d<\infty)$ \\
    \addlinespace
        $\mb V$ & {Observed data $(\mb X,A,Y)$, $\mb X$ is the covariate vector, $A$ is the binary treatment, and $Y$ is the outcome} \\
        \addlinespace
         $f(\mb V)$ & Nonparametric joint likelihood function based on $V$ \\
         \addlinespace
    $\theta$ & {Parameter for a regular parametric submodel $f_\theta(\mb V;\theta)$, with $\theta_0$ is the truth such that $f(\mb V)=f(\mb V;\theta_0)$} \\
    \addlinespace
    $\phi$ & Vector of nuisance parameter ${\phi}=({e},{\mu}_1, {\mu}_0)^\dagger$ \\
    $\Lambda$ & Nuisance tangent space (see Bickel et al. \cite{bickel1993efficient}) \\
    $\Lambda^\perp$ & The orthogonal complement of $\Lambda$ \\
    \addlinespace
    $\mc T$ & {A convex subset of $P$-square integrable functions containing the estimated nuisance parameter $\hat{\phi}=(\hat{e},\hat{\mu}_1,\hat{\mu}_0)$} \\
    \addlinespace
    $\rightsquigarrow$ & Weak convergence \\
    $a\preceq b$ & $a\leq Cb$ for a finite constant $C>0$\\
    $a\prec b$ & $a<Cb$ for a finite constant $C>0$ \\
    $a\vee b$ & $\max(a,b)$ \\
     $a\wedge b$ &  $\min(a,b)$ \\
     $\Vert g(\mb V)\Vert_q$ & $\left\{\int g(\mb v)^qf(\mb v) \mu(\rmd\mb v)\right\}^{1/q}$, the $L_q$-norm for function $g(\mb V)$, $q>0$\\
     \addlinespace
     $\pr[g(\mb V;\tilde\phi)]$ & $\int g(\mb v;\phi)f(\mb v)\mu(\rmd\mb v)\big\vert_{\phi=\tilde\phi}$, where $\mu(\cdot)$ is a base measure\\
    \bottomrule
    \end{tabular}
    \begin{tablenotes}
        \item $\dagger$: $e = e(\mb X)$ is the propensity score, $\mu_a=\mu_a(\mb X)$ is the model for conitional outcome $\E\{Y(a)\mid \mb X\}$ for $a=0,1$. 
    \end{tablenotes}
\end{table}

We first summarize all useful notation in Table \ref{tab:notation} that will be used in stating some technical conditions and results as well as lemmata, related theory and theorems in our \textit{Online Supplemental Material}. 

\subsection{Comparison of EIFs under different roles of the propensity score (technical details for Remark \ref{rem:EIF})}\label{subapp:rolePS}

In this section, we discuss two distinct EIFs for a general WATE. By ``distinct,'' we refer to the fact that these EIFs are derived under different assumptions about the joint distribution of the observed data $\mb V=(\mb X, A, Y)$. Specifically, we consider whether the true propensity score $e(\mb X)$ for each participant is known or unknown, and we quantify the efficiency loss due to the unknown propensity score. We follow all notation made in Section \ref{sec:EIF}. 

Consider the following decomposition of the joint likelihood: 
\begin{align}\label{eq:likeli}
f(\mb V)=f(\mb X)f(A\mid \mb X)f(Y\mid A,\mb X).
\end{align}
We then consider a one-dimensional parametric submodel of $f(\mb V)$ as $f_\theta(\mb V)$, which is assumed to contain the truth $f(\mb V)$ at $\theta=0$. In the following, we use $\dot{f}_\theta(\cdot)$ to denote the first partial derivative of a function $f_\theta(\cdot)$ with respect to $\theta$, and $s_\theta(\mb V)$ to denote the score function of $\theta$ with respect to the submodel $f_\theta(\mb V)$. In addition, from \eqref{eq:likeli}, the score function under the submodel can be decomposed as
\begin{align*}
s_\theta(\mb V)=s_\theta(\mb X)+s_\theta(A\mid \mb X)+s_\theta(Y\mid A,\mb X),
\end{align*}
where $s_\theta(\mb X)=\partial\log {f_\theta({\mb X})}/\partial\theta$, $s_\theta(A\mid \mb X)=\partial\log f_\theta(A\mid \mb X)/\partial \theta$ and $s_\theta(Y\mid A,\mb X)=\partial\log f_\theta(Y\mid A,\mb X)/\partial \theta$
are the score functions corresponding to the three components of the likelihood. We also denote $s(\cdot) = s_\theta(\cdot)\big\vert _{\theta=0}$, the score function evaluated at the parameter value corresponding to the true model. 

Note that in \eqref{eq:likeli}, we have $f(A \mid \mb X) = e(\mb X)^A \{1 - e(\mb X)\}^{1-A}$. Therefore, whether $e(\mb X)$ is known or not can affect the subsequent derivations. If $e(\mb X)$ is known, then $s_\theta(A \mid \mb X)=0$ for all $\theta$, since it is no longer treated as a random component, contributing nothing to the score function with respect to $\theta$. In this case, 
\begin{align*}
s_\theta(\mb V)=s_\theta(\mb X)+s_\theta(Y\mid A,\mb X). 
\end{align*}
Hirano et al. \cite{hirano2003efficient} derive the EIF of this case in the following proposition. 

\begin{proposition}\label{prop:hinaro}
Suppose the propensity score $e(\mb X)$ is known for any $\mb X$, then for any weight function $\lambda(t)$ with first order derivative 
$\dot{\lambda}(t)=\rmd\lambda(t)/\rmd t$, the EIF of $\gamma_0=\E\{N(\mb V;\phi_0)\}/\E\{D(\mb V;\phi_0)\}$ is given by
\begin{align}\label{eq:EIF-psknown}
    \widetilde\varphi(\mb V;\phi_0)=\dfrac{\lambda\{e(\mb X)\}}{\E\{D(\mb V;\phi_0)\}}\left\{\psi_{\tau}(\mb V)-\gamma_0\right\}, 
\end{align}
where $\psi_\tau(\mb V) = \dfrac{A}{e(\mb X)}\{Y-\mu_1(\mb X)\} - \dfrac{1-A}{1-e(\mb X)}\{Y-\mu_0(\mb X)\}-\tau(\mb X).$
\end{proposition}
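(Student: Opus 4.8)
The plan is to mimic the derivation of the general EIF in Theorem \ref{thm:EIF}, but now under the restricted likelihood decomposition in which the propensity-score factor $f(A\mid\mb X) = e(\mb X)^A\{1-e(\mb X)\}^{1-A}$ is held fixed. First I would set up the parametric submodel $f_\theta(\mb V) = f_\theta(\mb X)\,e(\mb X)^A\{1-e(\mb X)\}^{1-A}\,f_\theta(Y\mid A,\mb X)$ passing through the truth at $\theta=0$, so that the score at $\theta=0$ is $s(\mb V) = s(\mb X) + s(Y\mid A,\mb X)$ with no $s(A\mid\mb X)$ term. The nuisance tangent space $\Lambda$ is then the closed linear span of mean-zero functions of $\mb X$ alone together with mean-zero functions of $Y$ given $(A,\mb X)$; crucially, it is \emph{strictly larger} than in the unknown-propensity-score case because the constraint ``$s(A\mid\mb X)$ arbitrary'' has been dropped, which correspondingly \emph{shrinks} the space of available perturbations and hence changes the projection.

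Next I would compute the pathwise derivative $\frac{\partial}{\partial\theta}\gamma(\theta)\big|_{\theta=0}$, where $\gamma(\theta) = \E_\theta\{N(\mb V;\phi_0)\}/\E_\theta\{D(\mb V;\phi_0)\}$, being careful that $D$ and $N$ now depend on $\theta$ only through $f_\theta(\mb X)$ and $f_\theta(Y\mid A,\mb X)$ — the propensity score inside $\lambda\{e(\mb X)\}$ and $\dot\lambda\{e(\mb X)\}$ is frozen. Using the quotient rule and the identification $\gamma_0 = \E\{N\}/\E\{D\}$, this derivative takes the form $\E\big[\{N(\mb V;\phi_0) - \gamma_0 D(\mb V;\phi_0)\}\, s(\mb V)\big]/\E\{D(\mb V;\phi_0)\}$. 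The candidate influence function is therefore $\{N - \gamma_0 D\}/\E\{D\}$, and the EIF is its projection onto $\Lambda$. Because $s(A\mid\mb X)$ has been removed from the tangent set, the piece of $\{N-\gamma_0 D\}$ that lives in the ``treatment-assignment direction'' — precisely the term proportional to $\{A - e(\mb X)\}$, namely $\dot\lambda\{e(\mb X)\}\{\tau(\mb X)-\gamma_0\}\{A-e(\mb X)\}/\E\{D\}$ — is orthogonal to $\Lambda$ and gets annihilated by the projection. What survives is $\lambda\{e(\mb X)\}\{\psi_\tau(\mb V) - \gamma_0\}/\E\{D(\mb V;\phi_0)\}$, which one checks indeed lies in $\Lambda$ (it decomposes into a function of $\mb X$ plus a conditionally-mean-zero function of $Y$ given $(A,\mb X)$). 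This matches \eqref{eq:EIF-psknown}.

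The main obstacle I anticipate is the careful bookkeeping of the projection: one must verify that $\dot\lambda\{e(\mb X)\}\{\tau(\mb X)-\gamma_0\}\{A-e(\mb X)\}$ is genuinely orthogonal to \emph{every} element of the enlarged $\Lambda$ — i.e., to all mean-zero functions $g(\mb X)$ (immediate, since $\E[\{A-e(\mb X)\}\mid\mb X]=0$) and to all $h(Y,A,\mb X)$ with $\E[h\mid A,\mb X]=0$ (which requires writing the $\{A-e(\mb X)\}$ term as $\mb X$-measurable times $\{A - e(\mb X)\}$ and noting it is itself conditionally constant given $(A,\mb X)$, hence orthogonal to conditionally-centered functions of $Y$). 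One also needs to confirm the residual part genuinely reproduces the pathwise derivative along all submodels, i.e. that nothing further was discarded. A secondary check is the direct comparison with Theorem \ref{thm:EIF}: subtracting \eqref{eq:EIF-psknown} from \eqref{eq:EIF} leaves exactly $\dot\lambda\{e(\mb X)\}\{\tau(\mb X)-\gamma_0\}\{A-e(\mb X)\}/\E\{D(\mb V;\phi_0)\}$, whose squared expectation is the efficiency-gain formula quoted in Remark \ref{rem:EIF}; since the two EIF components are orthogonal, the variances add, confirming internal consistency. Alternatively, one can shortcut the whole argument by simply citing Hirano et al.\ \cite{hirano2003efficient} and supplying only this orthogonal-decomposition verification as the bridge to our Theorem \ref{thm:EIF}.
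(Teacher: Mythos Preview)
Your approach is sound and essentially coincides with how the paper handles this proposition. The paper does not re-derive the known-propensity-score EIF from scratch; it simply restates the result of Hirano et al.\ \cite{hirano2003efficient} and then, in Appendix \ref{subapp:rolePS}, verifies the orthogonal decomposition you describe in your final paragraph---namely that $\E(I_1 I_2)=0$ with $I_1 = \widetilde\varphi$ and $I_2 = \varphi - \widetilde\varphi$, which yields the efficiency-gain formula in Remark \ref{rem:EIF}. So your ``shortcut'' alternative is precisely the paper's route, and your longer derivation via projection is a valid fuller argument that the paper omits.

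One small slip: you write that the tangent space under the known-$e(\mb X)$ model is ``strictly larger'' than in the unknown case. It is in fact \emph{strictly smaller}: fixing $f(A\mid\mb X)$ removes the $\mc H_2$ direction, so $\Lambda = \mc H_1 \oplus \mc H_3 \subsetneq \mc H_1 \oplus \mc H_2 \oplus \mc H_3$. Your subsequent reasoning---drop the $\{A-e(\mb X)\}$ piece because it lies in $\mc H_2$, which is now orthogonal to the reduced $\Lambda$---is nonetheless correct and is consistent only with the smaller tangent space: projecting onto a smaller space annihilates more, which is exactly why the EIF loses a term and the efficiency bound drops. The rest of your bookkeeping (that $\lambda\{e(\mb X)\}\{\psi_\tau(\mb V)-\gamma_0\}$ decomposes as an $\mc H_1$ piece plus an $\mc H_3$ piece, and that the $\dot\lambda$ term lives entirely in $\mc H_2$) is accurate.
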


If $e(\mb X)$ is unknown, in Section A.1--A.2 of our \textit{Online Supplemental Material}, we derive the corresponding EIF, which has been shown in Theorem \ref{thm:EIF} as
\begin{align}\label{eq:EIF-psunkown}
    \varphi(\mb V;\phi_0)=\dfrac{\lambda\{e(\mb X)\}}{\E\{D(\mb V;\phi_0)\}}\left\{\psi_{\tau}(\mb V)-\gamma_0\right\} + \dfrac{\dot{\lambda}\{e(\mb X)\}}{\E\{D(\mb V;\phi_0)\}}\{\tau(\mb X)-\gamma_0\}\{A-e(\mb X)\}. 
\end{align}

As we can see, compared to \eqref{eq:EIF-psknown}, the second EIF \eqref{eq:EIF-psunkown} has the additional term 
\begin{align*}
    \dfrac{\dot{\lambda}\{e(\mb X)\}}{\E\{D(\mb V;\phi_0)\}}\{\tau(\mb X)-\gamma_0\}\{A-e(\mb X)\} := I_2,
\end{align*}
and in the following we denote the first term 
\begin{align*}
    \dfrac{\lambda\{e(\mb X)\}}{\E\{D(\mb V;\phi_0)\}}\left\{\psi_{\tau}(\mb V)-\gamma_0\right\} := I_1. 
\end{align*}

The two EIFs \eqref{eq:EIF-psknown} and \eqref{eq:EIF-psunkown} motivate their corresponding EIF-based estimators. We calculate the difference of their asymptotic efficiencies as follows:
\begin{align*}
    \E\{\varphi(\mb V;\phi_0)^2\} - \E\{\widetilde\varphi(\mb V;\phi_0)^2\} = \E\{(I_1+I_2)^2\} - \E(I_1^2) = \E(I_2^2) + 2\E(I_1I_2). 
\end{align*}
Furthermore, 
\begin{align*}
    \E(I_1I_2) & = \E\left[\dfrac{\lambda\{e(\mb X)\}\dot{\lambda}\{e(\mb X)\}}{\E\{D(\mb V;\phi_0)\}^2}\left\{\psi_{\tau}(\mb V)-\gamma_0\right\}\{\tau(\mb X)-\gamma_0\}\{A-e(\mb X)\}\right] \\
    & = \E\left[\E\left\{\dfrac{\lambda\{e(\mb X)\}\dot{\lambda}\{e(\mb X)\}}{\E\{D(\mb V;\phi_0)\}^2}\left\{\psi_{\tau}(\mb V)-\gamma_0\right\}\{\tau(\mb X)-\gamma_0\}\{A-e(\mb X)\}~\bigg|~\mb X\right\}\right] \\
    & = \E\left[\dfrac{\lambda\{e(\mb X)\}\dot{\lambda}\{e(\mb X)\}}{\E\{D(\mb V;\phi_0)\}^2}\{\tau(\mb X)-\gamma_0\}\E\left[\left\{\psi_{\tau}(\mb V)-\gamma_0\right\}\{A-e(\mb X)\}\mid\mb X\right]\right].
\end{align*}
Moreover, we examine the inner conditional expectation above:
\begin{align*}
    & \E\left[\left\{\psi_{\tau}(\mb V)-\gamma_0\right\}\{A-e(\mb X)\}\mid\mb X\right] \\
    & = \E\left[\left\{\dfrac{A}{e(\mb X)}\{Y-\mu_1(\mb X)\} - \dfrac{1-A}{1-e(\mb X)}\{Y-\mu_0(\mb X)\}-\tau(\mb X)-\gamma_0\right\}\{A-e(\mb X)\}~\bigg|~\mb X\right] \\
    & = \E\left[\left\{\dfrac{A}{e(\mb X)}\{Y(1)-\mu_1(\mb X)\} - \dfrac{1-A}{1-e(\mb X)}\{Y(0)-\mu_0(\mb X)\}\right\}\{A-e(\mb X)\}~\bigg|~\mb X\right]~~(\text{by Consistency}) \\
    & \qquad - \underbrace{\E[\{\tau(\mb X)+\gamma_0\}\{A-e(\mb X)\}\mid\mb X]}_{=0} \\
    & = \E\left[A\dfrac{A}{e(\mb X)}\{Y(1)-\mu_1(\mb X)\}~\bigg|~\mb X\right] - \E\left[e(\mb X)\left\{\dfrac{A}{e(\mb X)}\{Y(1)-\mu_1(\mb X)\} - \dfrac{1-A}{1-e(\mb X)}\{Y(0)-\mu_0(\mb X)\}\right\}~\bigg|~\mb X\right]\\
    & = \left[\dfrac{e(\mb X)}{e(\mb X)}\{\mu_1(\mb X)-\mu_1(\mb X)\}\right] - \left[e(\mb X)\left\{\dfrac{e(\mb X)}{e(\mb X)}\{\mu_1(\mb X)-\mu_1(\mb X)\} - \dfrac{1-e(\mb X)}{1-e(\mb X)}\{\mu_0(\mb X)-\mu_0(\mb X)\}\right\}\right] \\
    & = 0, 
\end{align*}
by Assumption \ref{assmp:unconfound} and $\E\{Y(a)\mid\mb X\}=\mu_a(\mb X)$ for $a=0,1$. Therefore, $\E(I_1 I_2) = 0$, which implies that the difference in asymptotic variances is
\begin{align*}
\E\{\varphi(\mb V;\phi_0)^2\} - \E\{\widetilde\varphi(\mb V;\phi_0)^2\} = \E(I_2^2) = \E\left[\dfrac{\dot{\lambda}\{e(\mb X)\}^2}{\E\{D(\mb V;\phi_0)\}^2} \{\tau(\mb X) - \gamma_0\}^2 \{A - e(\mb X)\}^2\right],
\end{align*}
which quantifies the amount of efficiency loss resulting from not knowing the true propensity score. 

For a general WATE, the above difference is non-zero. However, a notable exception is the ATE, for which $\dot{\lambda}\{e(\mb X)\} = 0$ because the corresponding weighting function ${\lambda}\{e(\mb X)\} = 1$ is constant. This aligns with the result of Hahn \cite{hahn1998role}, which shows that the propensity score plays an ancillary role in semiparametrically efficient estimation of the ATE, whereas for other estimands, it does not.

\subsection{Additional technical details for Theorem \ref{thm:RDR-DML}}\label{subapp:condRDR-DML}

\begin{table}[H]
	\caption{Conditions on estimated propensity score $\hat e(\mb X)$ for WATE estimation}
	\label{tab:WATEconditions}
	\centering
    \begin{tabular}{cclllll}
		\toprule 
		 &  $\lambda(t)$  & Conditions by Theorem \ref{thm:RDR-EIF} & Conditions by Theorem \ref{thm:RDR-DML} \\
		\midrule 
        ATE &  $1$ & $\hat{e}(\mb X)\in[C_1, C_2]$ & $\hat{e}(\mb X)\in[C_1, C_2]$ \\
        ATT &  $t$ & $\hat{e}(\mb X)\in (0,C_2]$ & $\hat{e}(\mb X)\in[C_1, C_2]$ \\
        ATC &  $1-t$ & $\hat{e}(\mb X)\in[C_1,1)$ & $\hat{e}(\mb X)\in[C_1, C_2]$  \\
        \addlinespace 
        ATEN & $\displaystyle\sum_{k\in\{t,1-t\}}-k\log k$ & $\hat{e}(\mb X)\in[C_1, C_2]$ & $\hat{e}(\mb X)\in[C_1, C_2]$ \\
        \addlinespace 
        ATO & $t(1-t)$  & $\hat{e}(\mb X)\in(0,1)$ & $\hat{e}(\mb X)\in[C_1, C_2]$ \\
        \addlinespace 
        ATB & 	
        $t^{\nu_{1}-1}(1-t)^{\nu_{2}-1}$ & $\nu_1=\nu_2\in\{2\}\cup[3,\infty)$: $\hat{e}(\mb X)\in(0,1)$ & $\nu_1=\nu_2\geq 4$: $ \hat{e}(\mb X)\in(0,1)$ \\
        & $(\nu_1, \nu_2\geq 2)$ & $\nu_1=\nu_2\in(2,3)$: $\hat{e}(\mb X)\in[C_1, C_2]$ & $\nu_1=\nu_2<4$: $\hat{e}(\mb X)\in[C_1, C_2]$ \\
\bottomrule 
\end{tabular}
\begin{tablenotes}
    \item $C_1$ and $C_2$ are some constants with $0<C_1\leq C_2<1$.
\end{tablenotes}
\end{table}

In addition, we have the following more technical writing in mathematics for results in Theorem \ref{thm:RDR-DML}.  
\begin{itemize}
\item If $\delta_n\geq n^{-1/2}$ for all $n\geq 1$, then 
$\hat\gamma^{\text{dml-1}}-\hat\gamma^{\text{dml-2}}=o_p(1)$. 
\item For $d=1,2$,
$$
\sqrt{n}\sigma^{-1}(\hat{\gamma}^{\text{dml-}d}-\gamma_0)=\frac{1}{\sqrt{n}}\sum_{i=1}^{n}\bar{g}(\mb V_i)+o_p(\rho_n)\rightsquigarrow\mc N(0,1),
$$
uniformly over $n$, where the remainder term obeys $\rho_n\preceq\delta_n$ and $\bar{g}(\cdot)=-(\sigma J)^{-1}g(\cdot;\gamma_0,\phi_0)$ is the EIF and the approximate variance is $\sigma^2=J^{-2}\E[L(\mb V;\gamma_0,\phi_{0})^2]$
with 
$J=\E[D(\mb V;\phi_0)]$. 
\item Suppose $\delta_n\geq n^{-[(1-2/q)\wedge(1/2)]}$ for all $n\geq 1$, then the result in the second item above continues to hold if $\sigma^2$ is replaced by $\hat{\sigma}^2$ defined in Theorem \ref{thm:RDR-DML} (Section \ref{sec:DML}). 
\end{itemize}

\clearpage

\begin{center}
    \LARGE \bf Online Supplemental Material
\end{center}

\appendix
\renewcommand\thesection{\Alph{section}}
\numberwithin{equation}{subsection}
\renewcommand\theequation{\Alph{section}.\arabic{subsection}.\arabic{equation}}
\renewcommand\theassumption{\Alph{section}.\arabic{subsection}.\arabic{assumption}}
\renewcommand\thetheorem{\Alph{section}.\arabic{subsection}.\arabic{theorem}}
\renewcommand\thelemma{\Alph{section}.\arabic{subsection}.\arabic{lemma}}
\renewcommand\theremark{\Alph{section}.\arabic{remark}}
\numberwithin{table}{section}
\numberwithin{figure}{section}

\section{Theory of EIF-based Estimator}\label{apx:theory-EIF}
\subsection{Preliminaries and Lemmata}\label{subapx:prelim-EIF}

Let $\mb V=(\mb X,A,Y)$ be the vector of observed variables with factorization of the likelihood function 
\begin{align}\label{eq:likeli-1}
f(\mb V)=f(\mb X)f(A\mid \mb X)f(Y\mid A,\mb X).
\end{align}
To derive the efficient influence function (EIF) of WATE, we consider a one-dimensional parametric submodel of $f(\mb V)$ as $f_\theta(\mb V)$, which is assumed to contain the truth $f(\mb V)$ at $\theta=0$. 
In the following, we use $\dot{f}_\theta(\cdot)$ to denote the first partial derivative of a function $f_\theta(\cdot)$ with respect to $\theta$, and $s_\theta(\mb V)$ to denote the score function of $\theta$ with respect to the submodel $f_\theta(\mb V)$. In addition, from \eqref{eq:likeli-1}, the score function under the submodel can be decomposed as
\begin{align*}
s_\theta(\mb V)=s_\theta(\mb X)+s_\theta(A\mid \mb X)+s_\theta(Y\mid A,\mb X),
\end{align*}
where $s_\theta(\mb X)=\partial\log {f_\theta({\mb X})}/\partial\theta$, $s_\theta(A\mid \mb X)=\partial\log f_\theta(A\mid \mb X)/\partial \theta$ and $s_\theta(Y\mid A,\mb X)=\partial\log f_\theta(Y\mid A,\mb X)/\partial \theta$
are the score functions corresponding to the three components of the likelihood. We also denote $s(\cdot) = s_\theta(\cdot)\big\vert _{\theta=0}$, the score function evaluated at the parameter value corresponding to the true model. 

Based on the semiparametric theory, we can define the model tangent space
\begin{align*}
\Lambda = \mc H_1\oplus \mc H_2\oplus \mc H_3,
\end{align*}
by the direct sum of the following three spaces:
\begin{align*}
	\mc H_1&=\{h(\mb X): \E\{h(\mb X)\}=0\},\\
	\mc H_2&=\{h(A,\mb X): \E\{h(A,\mb X)\mid \mb X\}=0\},\\
	\mc H_3&=\{h(Y,A,\mb X): \E\{h(Y,A,\mb X)\mid A,\mb X\}=0\},
\end{align*}
where $\mc H_1,\mc H_2$ and $\mc H_3$ are orthogonal to each other. The EIF for $\gamma_n$ and $\gamma_d$, denoted by $\varphi_n,\varphi_d\in\Lambda$, must satisfy, respectively,
\begin{align*}
\dot{\gamma}_{n,\theta}\big\vert_{\theta=0} &= \E\{\varphi_n(\mb V)s(\mb V)\}, \\
\dot{\gamma}_{d,\theta}\big\vert_{\theta=0} &= \E\{\varphi_d(\mb V)s(\mb V)\}.
\end{align*}
To derive these EIFs, we calculate $\dot{\gamma}_{n,\theta}\big\vert_{\theta=0}$ and $ \dot{\gamma}_{d,\theta}\big\vert_{\theta=0}$ separately. To simplify the proof, we introduce some lemmata.

\begin{lemma}\label{lem:ratio} Consider a ratio-type parameter $R=N/D$.
	If there exists functions $\varphi_N$ and $\varphi_D$ such that $\dot{N}_{\theta}\big\vert_{\theta=0}=\E\{\varphi_{N}(\mb V)s(\mb V)\}$ and $\dot{D}_{\theta}\big\vert_{\theta=0}=\E\{\varphi_{D}(\mb V)s(\mb V)\}$,
	then $\dot{R}_{\theta}\big\vert_{\theta=0}=\E\{\varphi_{R}(\mb V)s(\mb V)\}$ where
	$$
	\varphi_{R}(\mb V)=\frac{1}{D}\varphi_{N}(\mb V)-\frac{R}{D}\varphi_{D}(\mb V).
 $$
	In particular, if $\varphi_{N}(\mb V)$ and $\varphi_{D}(\mb V)$ are the
	EIFs for $N$ and $D$, respectively, then $\varphi_{R}(\mb V)$ is the EIF for $R$.
\end{lemma}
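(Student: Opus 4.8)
The plan is to apply the quotient rule for pathwise derivatives along the one-dimensional submodel $f_\theta$, and then exploit linearity of the map $g\mapsto\E\{g(\mb V)s(\mb V)\}$. First I would observe that since $D=D_0\neq 0$ and $\theta\mapsto D_\theta$ is differentiable at $\theta=0$, by continuity $D_\theta\neq 0$ in a neighborhood of $0$, so $\theta\mapsto R_\theta=N_\theta/D_\theta$ is differentiable there, with
\begin{align*}
\dot{R}_\theta\big\vert_{\theta=0}=\frac{\dot{N}_\theta\big\vert_{\theta=0}\,D-N\,\dot{D}_\theta\big\vert_{\theta=0}}{D^2}=\frac{1}{D}\dot{N}_\theta\big\vert_{\theta=0}-\frac{R}{D}\dot{D}_\theta\big\vert_{\theta=0},
\end{align*}
using $N/D^2=R/D$.

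Next I would substitute the assumed representations $\dot{N}_\theta\big\vert_{\theta=0}=\E\{\varphi_N(\mb V)s(\mb V)\}$ and $\dot{D}_\theta\big\vert_{\theta=0}=\E\{\varphi_D(\mb V)s(\mb V)\}$. Since $1/D$ and $R/D$ are fixed scalars (not functions of $\mb V$), they can be pulled inside the expectations, yielding
\begin{align*}
\dot{R}_\theta\big\vert_{\theta=0}=\E\left\{\left(\frac{1}{D}\varphi_N(\mb V)-\frac{R}{D}\varphi_D(\mb V)\right)s(\mb V)\right\}=\E\{\varphi_R(\mb V)s(\mb V)\},
\end{align*}
which is precisely the claimed pathwise-derivative identity, so $\varphi_R$ is a valid gradient (influence function) for $R$.

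For the ``in particular'' claim I would invoke the characterization that the EIF is the unique gradient lying in the model tangent space $\Lambda$. Since $\varphi_N$ and $\varphi_D$ are, by hypothesis, the EIFs for $N$ and $D$, both lie in $\Lambda$; because $\Lambda$ is a linear subspace of $L_2$, the linear combination $\varphi_R=D^{-1}\varphi_N-(R/D)\varphi_D$ also lies in $\Lambda$. Combined with the gradient identity established above, $\varphi_R$ is a gradient in $\Lambda$, hence by uniqueness it equals the EIF for $R$.

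The algebra here is routine; the only points needing care are the preliminaries---justifying that the submodel-level quotient rule applies (differentiability of $\theta\mapsto N_\theta,D_\theta$ at $0$ and $D\neq 0$) and making explicit that ``EIF'' is used in the sense of the unique gradient in $\Lambda$. Thus the main (minor) obstacle is stating these regularity and definitional preliminaries cleanly rather than carrying out any delicate estimate.
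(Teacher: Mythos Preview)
Your proposal is correct and follows essentially the same approach as the paper: apply the quotient rule to $R_\theta=N_\theta/D_\theta$ at $\theta=0$, substitute the assumed representations for $\dot{N}_\theta\big\vert_{\theta=0}$ and $\dot{D}_\theta\big\vert_{\theta=0}$, pull the scalar constants inside the expectation by linearity, and then conclude the EIF claim from the fact that the tangent space $\Lambda$ is closed under linear combinations. Your version is slightly more careful about stating the regularity preliminaries (differentiability, $D\neq 0$) and about invoking uniqueness of the gradient in $\Lambda$, but the argument is the same.
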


\begin{proof}
    Let $R_\theta$, $N_\theta$ and $D_\theta$ denote quantities $R,N$ and $D$, respectively, evaluated with respect to the parametric submodel $f_\theta(\mb V)$. By the chain rule, we have
\begin{align*}
	\dot{R}_\theta\big\vert_{\theta=0}&=\frac{\dot{N}_\theta}{D}\bigg\vert_{\theta=0}-R_\theta\frac{\dot{D}_\theta}{D}\bigg\vert_{\theta=0}\\
	&=\frac{1}{D}\E\{\varphi_{N}(\mb V)s(\mb V)\}-\frac{R}{D}\E\{\varphi_D(\mb V)s(\mb V)\}\\
	&=\E\bigg[\bigg\{\frac{1}{D}\varphi_{N}(\mb V)-\frac{R}{D}\varphi_{D}(\mb V)\bigg\}s(\mb V)\bigg]
\end{align*}
When $\varphi_{N}(\mb V)$ and $\varphi_{D}(\mb V)$ are the EIFs for $N$ and $D$, then $\varphi_{N}$ and $\varphi_{D}(\mb V)$ are in the tangent space, and thus $\varphi_{R}$ is also in the tangent space. Therefore, $\varphi_{R}$ is the EIF for $R$.
\end{proof}

\begin{lemma} \label{lem:scoreE} For any $h(\mb V)$ that does
	not depend on $\theta$, $\partial\E_{\theta}\{h(\mb V)\}/\partial\theta\big\vert_{\theta=0}=\E\{h(\mb V)s(\mb V)\}$.
\end{lemma}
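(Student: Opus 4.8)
The plan is to establish this as the familiar ``score identity'' obtained by differentiating under the integral sign. Writing the expectation as an integral against a dominating measure $\mu$, we have $\E_\theta\{h(\mb V)\} = \int h(\mb v)\, f_\theta(\mb v)\, \mu(\rmd\mb v)$, where $f_\theta$ is the density of the one-dimensional parametric submodel through the truth at $\theta=0$ and, by hypothesis, $h$ does not depend on $\theta$.

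First I would interchange $\partial/\partial\theta$ with the integral. Because $f_\theta(\mb V)$ is a \emph{regular} parametric submodel (smooth in $\theta$, e.g.\ differentiable in quadratic mean, with the usual local domination of $h(\mb v)\,\dot f_\theta(\mb v)$ by a $\mu$-integrable envelope in a neighbourhood of $\theta=0$), dominated convergence justifies the exchange, giving $\partial\E_\theta\{h(\mb V)\}/\partial\theta = \int h(\mb v)\,\dot f_\theta(\mb v)\, \mu(\rmd\mb v)$. Next I would multiply and divide by $f_\theta$ on the set $\{f_\theta>0\}$ and invoke the definition $s_\theta(\mb v) = \dot f_\theta(\mb v)/f_\theta(\mb v) = \partial\log f_\theta(\mb v)/\partial\theta$ to rewrite this as $\int h(\mb v)\, s_\theta(\mb v)\, f_\theta(\mb v)\, \mu(\rmd\mb v) = \E_\theta\{h(\mb V)\, s_\theta(\mb V)\}$. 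Finally, evaluating at $\theta = 0$, where $f_0 = f$ is the true density (so $\E_0 = \E$) and $s(\cdot) = s_\theta(\cdot)\big\vert_{\theta=0}$, yields $\partial\E_\theta\{h(\mb V)\}/\partial\theta\big\vert_{\theta=0} = \E\{h(\mb V)\, s(\mb V)\}$, which is the claim.

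The only step carrying any content is the justification for differentiating under the integral; the remaining manipulations are bookkeeping with the definition of the score. Since the paper works throughout with regular parametric submodels, the required smoothness and domination hold by assumption, so I would simply cite this regularity and carry out the three displayed identities. A fully rigorous treatment would route the interchange through the standard quadratic-mean-differentiability machinery (e.g.\ van der Vaart \cite{van1996weak}), but that level of detail exceeds what this lemma requires.
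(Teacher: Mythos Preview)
Your proposal is correct and matches the paper's approach exactly: the paper's own proof is the single sentence ``The proof is straightforward if we assume interchangeability of integral and derivative,'' and you have simply spelled out that interchange and the bookkeeping with the score definition in more detail.
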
 
\begin{proof}
    The proof is straightforward if we assume interchangeability of integral and derivative. 
\end{proof}

\begin{lemma}\label{lem:condx} For propensity score $e(\mb X)$ and CATE $\tau(\mb X)$, we have 
\begin{align*}
    \dot{e}(\mb X)\big\vert_{\theta=0} & = \E\left[\{A-e(\mb X)\}s(A\mid \mb X)\mid \mb X\right],\\
    \dot{\tau}_{\theta}(\mb X)\big\vert_{\theta=0} & = \E\left\{ \psi_{\tau}(\mb V)s(Y\mid A,\mb X)\mid \mb X\right\} .
\end{align*}
\end{lemma}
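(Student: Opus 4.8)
The plan is to obtain each derivative by differentiating the defining conditional expectations directly, differentiating under the integral sign (licensed by the regularity of the parametric submodel, exactly as already used in Lemma~\ref{lem:scoreE}), and then re-centering each resulting expression using the conditional mean-zero property of score functions, namely $\E\{s(A\mid\mb X)\mid\mb X\}=0$ and $\E\{s(Y\mid A,\mb X)\mid A,\mb X\}=0$.

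For the first identity I would start from $e_\theta(\mb X)=\E_\theta(A\mid\mb X)=\int a\,f_\theta(a\mid\mb X)\,\rmd a$, differentiate using $\dot f_\theta(a\mid\mb X)=f_\theta(a\mid\mb X)\,s_\theta(A\mid\mb X)$, and evaluate at $\theta=0$ to get $\dot{e}(\mb X)\big\vert_{\theta=0}=\E\{A\,s(A\mid\mb X)\mid\mb X\}$. Subtracting $e(\mb X)\,\E\{s(A\mid\mb X)\mid\mb X\}=0$ converts this into $\E[\{A-e(\mb X)\}\,s(A\mid\mb X)\mid\mb X]$, which is the claimed form.

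For the second identity the first step is the analogous computation for each outcome regression: starting from $\mu_{a,\theta}(\mb X)=\E_\theta(Y\mid A=a,\mb X)=\int y\,f_\theta(y\mid A=a,\mb X)\,\rmd y$ and re-centering with $\E\{s(Y\mid A,\mb X)\mid A=a,\mb X\}=0$ yields $\dot\mu_{a}(\mb X)\big\vert_{\theta=0}=\E[\{Y-\mu_a(\mb X)\}\,s(Y\mid A,\mb X)\mid A=a,\mb X]$. The second step is to check that $\E\{\psi_\tau(\mb V)\,s(Y\mid A,\mb X)\mid\mb X\}$ reproduces $\dot\mu_1(\mb X)-\dot\mu_0(\mb X)$: the $-\tau(\mb X)$ summand drops out since the score has conditional mean zero given $\mb X$; conditioning on $A$, the term $\E[\tfrac{A}{e(\mb X)}\{Y-\mu_1(\mb X)\}\,s(Y\mid A,\mb X)\mid\mb X]$ has a vanishing $A=0$ branch and its $A=1$ branch equals $e(\mb X)\cdot\tfrac{1}{e(\mb X)}\,\E[\{Y-\mu_1(\mb X)\}\,s(Y\mid A,\mb X)\mid A=1,\mb X]=\dot\mu_1(\mb X)$; symmetrically the $\tfrac{1-A}{1-e(\mb X)}$ term contributes $\dot\mu_0(\mb X)$. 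Hence $\dot\tau_\theta(\mb X)\big\vert_{\theta=0}=\dot\mu_1(\mb X)-\dot\mu_0(\mb X)=\E\{\psi_\tau(\mb V)\,s(Y\mid A,\mb X)\mid\mb X\}$, as asserted.

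There is no genuine obstacle beyond bookkeeping: the only point requiring care is tracking which conditioning event each score function is centered under, so that every ``subtract a mean-zero term'' step is legitimate, and handling the weights $A/e(\mb X)$ and $(1-A)/\{1-e(\mb X)\}$ correctly when passing to the expectation conditional on $\mb X$ alone. Interchange of differentiation and integration is taken for granted under the same smooth-submodel regularity invoked elsewhere in this section.
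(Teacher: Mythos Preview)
Your proposal is correct and follows essentially the same approach as the paper. The only cosmetic difference is the order of operations for the second identity: the paper first writes $\dot\tau_\theta(\mb X)\big\vert_{\theta=0}$ as $\E\!\big[\{AY/e(\mb X)-(1-A)Y/(1-e(\mb X))\}\,s(Y\mid A,\mb X)\mid\mb X\big]$ and then subtracts a mean-zero piece to reach $\psi_\tau$, whereas you center each $\dot\mu_a$ immediately and verify the $\psi_\tau$ form term by term; both routes use the same score mean-zero facts and the same $A/e(\mb X)$, $(1-A)/\{1-e(\mb X)\}$ reweighting.
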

\begin{proof}
First,
\begin{align*}
	\dot e(\mb X)\big\vert_{\theta=0} 
	& = \frac{\partial}{\partial \theta}\E_\theta\{A\mid \mb X\}\big\vert_{\theta=0}\\
	& = \E\{A\times s(A\mid \mb X)\mid \mb X\} \qquad \mbox{by Lemma \ref{lem:scoreE}}\\
	& = \E\left[\{A-e(\mb X)\}s(A\mid \mb X)\mid \mb X\right].
\end{align*}
The last ``$=$'' is by $\E\{e(\mb X)s(A\mid \mb X)\mid \mb X\}=e(\mb X)\E\{s(A\mid \mb X)\mid \mb X\}=0$, since 
$\E\{s(A\mid \mb X)\mid \mb X\}=\dfrac{\partial}{\partial\theta}\E_\theta(1\mid \mb X)\big\vert_{\theta=0} 
= 0 $ by Lemma \ref{lem:scoreE}. In addition, 
\begin{align*}
	\dot{\tau}_\theta(\mb X)\big\vert_{\theta=0} 
	&= \frac{\partial}{\partial \theta}\{\E_\theta(Y\mid A=1,\mb X)-\E_\theta(Y\mid A=0,\mb X)\}\big\vert_{\theta=0}\\
	&= \E\{Ys(Y\mid A=1,\mb X)\mid A=1,\mb X\} - \E\{Ys(Y\mid A=0,\mb X)\mid A=0,\mb X\}\qquad \mbox{by Lemma \ref{lem:scoreE}}\\
	&= \E\left[\left\{\frac{AY}{e(\mb X)}-\frac{(1-A)Y}{1-e(\mb X)}\right\}s(Y\mid A,\mb X)~\bigg\vert~ \mb X\right]\\
	&= \E\{\psi_{\tau}(\mb V)s(Y\mid A,\mb X)\mid \mb X\},
\end{align*}
where the last equality is by $\E\left[\left\{\dfrac{A}{e(\mb X)}\mu_1(\mb X)-\dfrac{1-A}{1-e(\mb X)}\mu_0(\mb X)-\tau(\mb X)\right\}s(Y\mid A,\mb X)~\bigg\vert~ \mb X\right]=0$ as shown below.
\begin{align*}
& \quad\E\left[\left\{\frac{A}{e(\mb X)}\mu_1(\mb X)-\frac{1-A}{1-e(\mb X)}\mu_0(\mb X)-\tau(\mb X)\right\}s(Y\mid A,\mb X)~\bigg\vert~ \mb X\right]\\
	&=\mu_1(\mb X)\E \{ s(Y\mid A=1,\mb X)\mid A=1,\mb X\}-\mu_0(\mb X)\E\{s(Y\mid A=0,\mb X)\mid A=0,\mb X\} \\
 & \quad - \tau(\mb X)e(\mb X) {\E\{s(Y\mid A=1,\mb X)\mid A=1,\mb X\} + \tau(\mb X)\{1-e(\mb X)\}}\E\{s(Y\mid A=0,\mb X)\mid A=0,\mb X\}\\
	&=0,
\end{align*}
because by Lemma \ref{lem:scoreE}, both $\E\{s(Y\mid A=1,\mb X)\mid A=1,\mb X\}=\dfrac{\partial}{\partial\theta}\E(1\mid A=1,\mb X)\big\vert_{\theta=0}$ and $\E\{s(Y\mid A=0,\mb X)\mid A=0,\mb X\}=\dfrac{\partial}{\partial \theta}\E(1\mid A=0,\mb X)\big\vert_{\theta=0}$ are $0$, which completes the proof.
\end{proof}

Next, we introduce some lemma related to the Donsker property of a class of functions. Consider a set of (nuisance) functions by $\mc{G}_{\phi_0}=\{\phi:\Vert\phi-\phi_0\Vert_2<\delta\}$ for some $\delta>0$ and a given function $\phi_0$. Denote $l^\infty(\mc{G}_{\phi_0})$ the collection of all bounded functions $f:\mc{G}_{\phi_0}\mapsto\mathbb{R}^p$.

\begin{lemma}\label{lem:donsker} Assuming Condition \ref{cond:EIFdonsker} hold, we have the following convergence results. 
\begin{align*}
	\sup_{\phi\in l^\infty(\mc{G}_{\phi_0})}\big\Vert \pr_n [N(\mb V;\phi)]-\pr[N(\mb V;\phi)]\big\Vert_2 \to_p 0, \\
	\sup_{\phi\in l^\infty(\mc{G}_{\phi_0})}\big\Vert \pr_n [D(\mb V;\phi)]-\pr[D(\mb V;\phi)] \big\Vert_2 \to_p 0, \\
    n^{-{1}/{2}}(\pr_n-\pr)[N(\mb V;\phi)] \rightsquigarrow Z\in l^\infty(\mc{G}_{\phi_0}), \text{ and } \\
    n^{-{1}/{2}}(\pr_n-\pr)[D(\mb V;\phi)] \rightsquigarrow Z\in l^\infty(\mc{G}_{\phi_0}),
\end{align*}
 as $n\to\infty$, where the limit process $Z=\{Z(\phi):\phi\in\mc{G}_{\phi_0} \}$ is a mean-zero multivariate Gaussian process and the sample paths of $Z$ belong to $\{z:l^\infty(\mc{G}_{\phi_0}):z\mbox{ is a uniformly continuous with respect to }\Vert\cdot\Vert\}$
\end{lemma}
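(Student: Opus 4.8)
The plan is to derive Lemma \ref{lem:donsker} as an essentially immediate consequence of Condition \ref{cond:EIFdonsker}, combined with the boundedness hypotheses, by invoking standard empirical-process theory (van der Vaart \cite{van1996weak}; see also Kennedy \cite{kennedy2016semiparametric}). Write $\mc{N}=\{N(\mb V;\phi):\phi\in\mc{G}_{\phi_0}\}$ and $\mc{D}=\{D(\mb V;\phi):\phi\in\mc{G}_{\phi_0}\}$, regarded as classes of real-valued functions of $\mb V$ indexed by the nuisance value $\phi$; Condition \ref{cond:EIFdonsker} is precisely the assertion that $\mc{N}$ and $\mc{D}$ are $\pr$-Donsker. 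The first step is to make the integrability concrete: by Assumption \ref{assmp:positivity} and Conditions \ref{cond:EIFbound}, \ref{cond:EIFsup} and \ref{cond:EIFLipch}, the quantities $\lambda\{e(\mb X)\}$, $\dot\lambda\{e(\mb X)\}$, $\tau(\mb X)$, $1/e(\mb X)$ and $1/\{1-e(\mb X)\}$ are bounded uniformly over $\mb X$ and over $\phi\in\mc{G}_{\phi_0}$, so that both $N(\mb V;\phi)$ and $D(\mb V;\phi)$ are dominated by a single envelope $G(\mb V)=C\,(1+|Y|)$ for a finite constant $C$; since membership of $\mc{N}$ in a Donsker class already forces $\mc{N}\subset L_2(\pr)$, the outcome has a finite conditional second moment and hence $\E[G(\mb V)^2]<\infty$, i.e.\ $\mc{N}$ and $\mc{D}$ have a common square-integrable envelope. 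Measurability of the suprema appearing below is not an obstacle, since $\mc{G}_{\phi_0}$ may be taken separable and $\phi\mapsto N(\mb V;\phi),D(\mb V;\phi)$ are continuous; alternatively one works with outer probability throughout.

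The two weak-convergence statements are then just the definition of the Donsker property unpacked: because $\mc{N}$ is $\pr$-Donsker, the empirical process $\sqrt{n}(\pr_n-\pr)[N(\mb V;\phi)]$ converges weakly in $l^\infty(\mc{G}_{\phi_0})$ to a tight, mean-zero Gaussian process $Z$ with covariance kernel $\mathrm{Cov}\{Z(\phi),Z(\phi')\}=\pr[N(\mb V;\phi)N(\mb V;\phi')]-\pr[N(\mb V;\phi)]\,\pr[N(\mb V;\phi')]$, and identically for $\mc{D}$. Tightness of a Gaussian limit in $l^\infty$ forces its sample paths to be uniformly continuous with respect to the intrinsic semimetric $\rho(\phi,\phi')=\{\pr[(N(\mb V;\phi)-N(\mb V;\phi'))^2]\}^{1/2}$ (and the analogue for $\mc{D}$), which is exactly the sample-path property claimed. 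The two uniform laws of large numbers follow at once from the same weak convergence: asymptotic tightness of $\sqrt{n}(\pr_n-\pr)$ in $l^\infty(\mc{G}_{\phi_0})$ gives $\sup_{\phi\in\mc{G}_{\phi_0}}\bigl|\sqrt{n}(\pr_n-\pr)[N(\mb V;\phi)]\bigr|=O_p(1)$, whence $\sup_{\phi\in\mc{G}_{\phi_0}}\Vert\pr_n[N(\mb V;\phi)]-\pr[N(\mb V;\phi)]\Vert_2=O_p(n^{-1/2})=o_p(1)$, and likewise for $D$ --- that is, a $\pr$-Donsker class is automatically $\pr$-Glivenko--Cantelli, so no separate argument is needed.

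Since Condition \ref{cond:EIFdonsker} hands us the Donsker property outright, there is no deep obstacle; the only piece of bookkeeping that requires genuine care is producing a single square-integrable envelope valid for $\mc{N}$ and $\mc{D}$ over the whole neighborhood $\mc{G}_{\phi_0}$ rather than only at $\phi_0$ and $\hat\phi$, because that envelope is what underwrites both the finiteness and identification of the Gaussian covariance kernel and (if one wants the stronger almost-sure form) the Glivenko--Cantelli conclusion. This is exactly where Conditions \ref{cond:EIFbound}, \ref{cond:EIFsup}, \ref{cond:EIFLipch} and Assumption \ref{assmp:positivity} enter; everything else is a direct citation of the theorems above.
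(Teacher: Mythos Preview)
The paper does not supply a proof of Lemma \ref{lem:donsker}; it is stated in Section A.1 as a preliminary and then used without further justification, evidently being treated as a direct consequence of Condition \ref{cond:EIFdonsker} together with standard empirical-process theory (the paper cites van der Vaart \cite{van1996weak} and Kennedy \cite{kennedy2016semiparametric} when introducing the Donsker condition). Your proposal correctly unpacks exactly that: the weak-convergence statements are the definition of a Donsker class, the sample-path continuity comes from tightness of the Gaussian limit, and the uniform laws of large numbers follow because Donsker implies Glivenko--Cantelli. Your envelope discussion goes a bit beyond what the paper explicitly records, but it is the natural bookkeeping one would want in order to make the citation rigorous over the whole neighborhood $\mc{G}_{\phi_0}$ rather than only at $\hat\phi$ and $\phi_0$; given how Condition \ref{cond:EIFdonsker} is phrased, that extra care is warranted.
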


\subsection{EIFs of $\gamma_d$, $\gamma_n$ and $\gamma$ (Proof of Theorem \ref{thm:EIF} in Section \ref{sec:EIF})}\label{subapx:EIFs}
Based on results in Section \ref{subapx:prelim-EIF}, we derive the EIFs of $\gamma_d$, $\gamma_n$ and $\gamma$ as follows. For EIF of $\gamma_d$, we first calculate $\dot{\gamma}_{d,\theta}\big\vert_{\theta=0}$ as
\begin{align*}
	\dot{\gamma}_{d,\theta}(\mb X)\big\vert_{\theta=0}
	&=\frac{\partial}{\partial \theta}\E_\theta[\lambda\{e(\mb X)\}]\big\vert_{\theta=0} \\
	&=\E[\dot{\lambda}\{e(\mb X)\}\dot{e}(\mb X)+\lambda\{e(\mb X)\}s(\mb X)] \qquad \mbox{by Lemma \ref{lem:scoreE} {(Section \ref{subapx:prelim-EIF})}}\\
	&=\E[\dot{\lambda}\{e(\mb X)\}\E[ \{A-e(\mb X)\}s(A\mid \mb X)  \mid \mb X]+\lambda\{e(\mb X)\}s(\mb X)]\qquad \mbox{by Lemma \ref{lem:condx} {(Section \ref{subapx:prelim-EIF})}}\\
	&=\E\left(\dot{\lambda}\{e(\mb X)\}\{A-e(\mb X)\}s(A\mid \mb X) +[\lambda\{e(\mb X)\}-\gamma_d]s(\mb X)\right)\\
	&=\E\{\varphi_d(\mb V)s(\mb V)\},
\end{align*}
where the fourth ``='' follows from $\E\{\gamma_d s(\mb X)\}=\gamma_d\frac{\partial}{\partial \theta}\E\{1\}\big\vert_{\theta=0}=0$ and
the last ``='' follows from 
\begin{align}
\E[\dot{\lambda}\{e(\mb X)\}\{A-e(\mb X)\}s(\mb X)]&=0 \label{eq:prove EIF 1}\\
\E([\lambda\{e(\mb X)\}-\gamma_d]s(Y\mid A,\mb X))&=0  \label{eq:prove EIF 2}\\
\E([\lambda\{e(\mb X)\}-\gamma_d]s(A\mid \mb X))&=0  \label{eq:prove EIF 3}\\
\E[\dot{\lambda}\{e(\mb X)\}\{A-e(\mb X)\}s(Y\mid A,\mb X)]&=0  \label{eq:prove EIF 4}
\end{align}
as shown below. 
\eqref{eq:prove EIF 1}--\eqref{eq:prove EIF 3} hold because
\begin{align*}
	\E[\dot{\lambda}\{e(\mb X)\}\{A-e(\mb X)\}s(\mb X)]=\E[\dot{\lambda}\{e(\mb X)\}s(\mb X) \E\{A-e(\mb X)\mid \mb X\}]&= 0,\\
	\E([\lambda\{e(\mb X)\}-\gamma_d]s(Y\mid A,\mb X))
	=\E([\lambda\{e(\mb X)\}-\gamma_d] \cdot \E \{s(Y\mid A,\mb X)\mid A,\mb X\}) &= 0,\\
	\E([\lambda\{e(\mb X)\}-\gamma_d]s(A\mid \mb X))
	=\E([\lambda\{e(\mb X)\}-\gamma_d]\cdot \E\{s(A\mid \mb X)\mid \mb X\})&=0,
\end{align*}
since $\E\{A-e(\mb X)\mid \mb X\}$, $\E\{s(Y\mid A,\mb X)\mid A,\mb X\}$ and  $\E\{s(A\mid \mb X)\mid \mb X\}$ are all equal to 0.
\eqref{eq:prove EIF 4} also holds because
\begin{align*}
	&\quad\E[\dot{\lambda}\{e(\mb X)\}\{A-e(\mb X)\}s(Y\mid A,\mb X)]\\
	&= \E[\dot{\lambda}\{e(\mb X)\}\E[\{A-e(\mb X)\}s(Y\mid A,\mb X)\mid \mb X]]\\
	&=\E[\dot{\lambda}\{e(\mb X)\}\{1-e(\mb X)\}e(\mb X)]\cdot\left[\E\{s(Y\mid A=1,\mb X)\mid A=1,\mb X\} - \E\{s(Y\mid A=0,\mb X)\mid A=0,\mb X\}\right]\\
	&=0.
\end{align*}
Then, by \eqref{eq:prove EIF 1} and \eqref{eq:prove EIF 2}, we can verify that $\dot{\lambda}\{e(\mb X)\}\{A-e(\mb X)\}\in \mc H_2$ and $\lambda\{e(\mb X)\}-\gamma_d\in \mc H_1$. Therefore, $\varphi_d(\mb V)$ lies in the tangent space and hence is the EIF of $\gamma_d$.

Similarly, to derive the EIF of $\gamma_n$, we first calculate $\dot{\gamma}_{n,\theta}\big\vert_{\theta=0}$. 
\begin{align*}
	\dot{\gamma}_{n,\theta}\big\vert_{\theta=0}
	&=\E[\dot{\lambda}\{e(\mb X)\}\dot{e}(\mb X)\tau(\mb X)]+\E[\lambda\{e(\mb X)\}\dot{\tau}(\mb X)] + \E[\lambda\{e(\mb X)\}\tau(\mb X)s(\mb X)]\\
	&=\E (\dot{\lambda}\{e(\mb X)\}\tau(\mb X) \E[\{A-e(\mb X)\}s(A\mid \mb X)\mid \mb X])+\E(\lambda\{e(\mb X)\}\E[\psi_{\tau}(\mb V)s(Y\mid A,\mb X)\mid \mb X])\\ 
 &\quad +\E[\lambda\{e(\mb X)\}\tau(\mb X)s(\mb X)] \qquad \mbox{by Lemma \ref{lem:condx} {(Section \ref{subapx:prelim-EIF})}}
	\\
	&=\E[\dot{\lambda}\{e(\mb X)\}\tau(\mb X)\{A-e(\mb X)\}s(A\mid \mb X)]+\E[\lambda\{e(\mb X)\} \tau(\mb X)s(Y\mid A,\mb X)]\\
	&\quad +\E\left[\lambda\{e(\mb X)\}\left[\frac{A}{e(\mb X)}\{Y-\mu_1(\mb X)\}-\frac{1-A}{1-e(\mb X)}\{Y-\mu_0(\mb X)\}\right]s(Y\mid A,\mb X)\right]\\
	&\quad +\E[\lambda\{e(\mb X)\}\tau(\mb X)s(\mb X)] - \E\{\gamma_n s(\mb X)\}-\E\{\gamma_ns(Y\mid A,\mb X)\}\\
	&=\E[\dot{\lambda}\{e(\mb X)\}\tau(\mb X)\{A-e(\mb X)\}s(A\mid \mb X)] + \E[\lambda\{e(\mb X)\}\tau(\mb X)-\gamma_n]\{s(\mb X)+s(Y\mid A,\mb X)\}\\
	& \quad+\E\left[\lambda\{e(\mb X)\}\left[\frac{A}{e(\mb X)}\{Y-\mu_1(\mb X)\}-\frac{1-A}{1-e(\mb X)}\{Y-\mu_0(\mb X)\}\right]s(Y\mid A,\mb X)\right]
	\\
	&=\E\{\varphi_n(\mb V)s(\mb V)\},
\end{align*}
where the third ``='' follows from $\E\{s(\mb X)\}=0$ and $\E\{s(Y\mid A,\mb X)\}=0$ and
the last ``='' follows from 
\begin{align}
\E[\dot{\lambda}\{e(\mb X)\}\tau(\mb X)\{A-e(\mb X)\}\{s(Y\mid A,\mb X)+s(A\mid \mb X)\}]&=0 \label{eq:prove EIF 5}\\
\E[\lambda\{e(\mb X)\}\tau(\mb X)-\gamma_n]s(A\mid \mb X)&=0 \label{eq:prove EIF 6}\\
\E\left[\lambda\{e(\mb X)\}\left[\frac{A}{e(\mb X)}\{Y-\mu_1(\mb X)\}-\frac{1-A}{1-e(\mb X)}\{Y-\mu_0(\mb X)\}\right]\{s(A\mid \mb X)+s(\mb X)\}\right]&=0 \label{eq:prove EIF 7}
\end{align}
as shown below.
\eqref{eq:prove EIF 5} holds because
\begin{align*}
	& \E[\dot{\lambda}\{e(\mb X)\}\tau(\mb X)\{A-e(\mb X)\}s(Y\mid A,\mb X)] =  \E[\dot{\lambda}\{e(\mb X)\}\tau(\mb X)\cdot\E\{(A-e(\mb X))s(Y\mid A,\mb X)\mid \mb X\}]\\
	& = \E[\dot{\lambda}\{e(\mb X)\}\tau(\mb X)\cdot
	\{1-e(\mb X)\}e(\mb X)\E\{s(Y\mid A=1,\mb X)\mid A=1,\mb X\}]\\
	& \quad -\E[\dot{\lambda}\{e(\mb X)\}\tau(\mb X)\cdot e(\mb X)\{1-e(\mb X)\}\E\{s(Y\mid A=0,\mb X)\mid A=0,\mb X\}] =0,
\end{align*}
and
\begin{align*}
& \E[\dot{\lambda}\{e(\mb X)\}\tau(\mb X)\{A-e(\mb X)\}s(A\mid \mb X)] = \E(\dot{\lambda}\{e(\mb X)\}\tau(\mb X)\E[\{A-e(\mb X)\}s(A\mid \mb X)\mid \mb X])\\
	& = \E[\dot{\lambda}\{e(\mb X)\}\tau(\mb X)\cdot\{1-e(\mb X)\}e(\mb X)\E\{s(A=1\mid \mb X)\mid A=1,\mb X\}] \\
    & \qquad -\E[\dot{\lambda}\{e(\mb X)\}\tau(\mb X)\cdot e(\mb X)\{1-e(\mb X)\}\E\{s(A=0\mid \mb X)\mid A=0,\mb X\}] =0,
\end{align*}
since $\E\{s(Y\mid A=1,\mb X)\mid A=1,\mb X\}$, $\E\{s(Y\mid A=0,\mb X)\mid A=0,\mb X\}$, $\E\{s(A=1\mid \mb X)\mid A=1,\mb X\}$ and $\E\{s(A=0\mid \mb X)\mid A=0,\mb X\}$ are all equal to 0.
\eqref{eq:prove EIF 6} holds because
\begin{align*}
	\E([\lambda\{e(\mb X)\}\tau(\mb X)-\gamma_n]s(A\mid \mb X))= \E([\lambda\{e(\mb X)\}\tau(\mb X)-\gamma_n]\cdot\E[s(A\mid \mb X)\mid \mb X])=0.
\end{align*}
\eqref{eq:prove EIF 7} holds because
\begin{align*}
	&\E\left(\lambda\{e(\mb X)\}\left[\frac{A}{e(\mb X)}\{Y-\mu_1(\mb X)\}-\frac{1-A}{1-e(\mb X)}\{Y-\mu_0(\mb X)\}\right]s(A\mid \mb X)\right)\\
	& =\E\left(\frac{\lambda\{e(\mb X)\}}{e(\mb X)}\E[A\{Y-\mu_1(\mb X)\}s(A\mid \mb X)\mid \mb X]\right)-
	\E \left(\frac{\lambda\{e(\mb X)\}}{1-e(\mb X)} \E[(1-A)\{Y-\mu_0(\mb X)\}s(A\mid \mb X)\mid \mb X]\right)\\
	&=\E (\lambda\{e(\mb X)\}\E[\{Y-\mu_1(\mb X)\}s(A=1\mid \mb X)\mid A=1, \mb X])-\E (\lambda\{e(\mb X)\}\E[\{Y-\mu_0(\mb X)\}s(A=0\mid \mb X)\mid A=0, \mb X])\\
	&=\E (\lambda\{e(\mb X)\}s(A=1\mid \mb X)\E[\{Y-\mu_1(\mb X)\}\mid A=1, \mb X])-\E(\lambda\{e(\mb X)\}s(A=0\mid \mb X)\E[\{Y-\mu_0(\mb X)\}\mid A=0, \mb X])\\
&=0,
\end{align*}
and
\begin{align*}
	&\E\left(\lambda\{e(\mb X)\}\left[\frac{A}{e(\mb X)}\{Y-\mu_1(\mb X)\}-\frac{1-A}{1-e(\mb X)}\{Y-\mu_0(\mb X)\}\right]s(\mb X)\right)\\
	&=\E\left(\lambda\{e(\mb X)\}s(\mb X)\E\left[\frac{A}{e(\mb X)}\{Y-\mu_1(\mb X)\}-\frac{1-A}{1-e(\mb X)}\{Y-\mu_0(\mb X)\}~\bigg\vert~ \mb X\right]\right)=0.
\end{align*}
By \eqref{eq:prove EIF 5}--\eqref{eq:prove EIF 7}, it can be verified that $\dot{\lambda}\{e(\mb X)\}\tau(\mb X)\{A-e(\mb X)\}\in \mc H_2,
\lambda\{e(\mb X)\}\tau(\mb X)-\gamma_n\in \mc H_1$ and
$\lambda\{e(\mb X)\}\left[\dfrac{A}{e(\mb X)}\{Y-\mu_1(\mb X)\}-\dfrac{1-A}{1-e(\mb X)}\{Y-\mu_0(\mb X)\}\right]\in \mc H_3$. Therefore,  $\varphi_n(\mb V)$ lies in the tangent space, and thus is the EIF for $\gamma_n$. Finally, according to Lemma \ref{lem:ratio} {in Section \ref{subapx:prelim-EIF}}, the EIF for the WATE $\gamma={\gamma_n}/{\gamma_d}$ is calculated by
\begin{align*}
	\varphi(\mb V)
	&=\frac{1}{\gamma_d}\varphi_n(\mb V)-\frac{\gamma}{\gamma_d}\varphi_{d}(\mb V)\\
	&=\frac{\lambda\{e(\mb X)\}\psi_{\tau}(\mb V) + \dot{\lambda}\{e(\mb X)\}\tau(\mb X)\{A-e(\mb X)\}-\gamma_n}{\gamma_d}
	- \frac{\gamma}{\gamma_d}[ \lambda\{e(\mb X)\}+\dot{\lambda}\{e(\mb X)\}\{A-e(\mb X)\}-\gamma_d]\\
	&=\frac{\lambda\{e(\mb X)\}}{\gamma_d}\{\psi_\tau(\mb V)-\gamma\}+\frac{\dot{\lambda}\{e(\mb X)\}}{\gamma_d}\{\tau(\mb X)-\gamma\}\{A-e(\mb X)\},
\end{align*}
which completes the proof.

\subsection{Proof of Theorem \ref{thm:RDR-EIF} {in Section \ref{sec:EIF}}}
First, we show the consistency of the EIF-based estimator, $\hat{\gamma}^{\text{eif}}$. By Slutsky's theorem, it is sufficient to show
\begin{align}
\pr_n[D(\mb V;\hat{\phi})] - \pr_n[D(\mb V;\phi_0)]& =o_p(1),\text{ and } \label{eq:prove 16}\\
\pr_n[N(\mb V;\hat{\phi})] - \pr_n[N(\mb V;\phi_0)]& =o_p(1). \label{eq:prove 15}
\end{align}
To show \eqref{eq:prove 16}, it suffices to show $\Vert\pr_n[D(\mb V;\hat{\phi})] - \pr_n [D(\mb V;\phi_0)]\Vert_2\to 0$ as $n\to\infty$. Now, by some direct decomposition and triangle inequality, we have
\begin{align}
& \Vert\pr_n[D(\mb V;\hat{\phi})] - \pr_n[D(\mb V;\phi_0)]\Vert_2\nonumber\\
& \leq\Vert\pr_n[D(\mb V;\hat{\phi})] - \pr[D(\mb V;\hat{\phi})]\Vert_2
+\Vert\pr_n[D(\mb V;\phi_0)]-\pr[D(\mb V;\phi_0)]\Vert_2
+ \Vert\pr[D(\mb V;\hat{\phi})]- \pr[D(\mb V;\phi_0)]\Vert_2 \nonumber\\
& \leq 2\sup_{\phi\in l^\infty(\mc{G}_{\phi_0})}\Vert\pr_n [D(\mb V;\phi)]-\pr[D(\mb V;\phi)]\Vert_2 + \Vert\pr[D(\mb V;\hat{\phi})]-\pr [D(\mb V;\phi_0)]\Vert_2.
\label{eq:prove 20}
\end{align}
Both terms on the right-hand side of \eqref{eq:prove 20} are $o_p(1)$ by the following proofs. Using Taylor expansion of $\lambda\{\hat{e}(\mb X)\}$ at $e(\mb X)$, we have
\begin{align}
	\lambda\{\hat{e}(\mb X)\}=\lambda\{e(\mb X)\}+\dot{\lambda}\{\tilde{e}(\mb X)\}\{\hat{e}(\mb X)-e(\mb X)\}, \label{eq:Taylorlambda}
\end{align}
with $\tilde{e}(\mb X)$ lies between $\hat{e}(\mb X)$ and $e(\mb X)$. Then we have
\begin{align}
	\pr[D(\mb V;\hat{\phi})-D(\mb V;\phi_0)]
	&=\pr[\lambda\{\hat{e}(\mb X)\}+\dot{\lambda}\{\hat{e}(\mb X)\}\{e(\mb X)-\hat{e}(\mb X)\} -  \lambda\{e(\mb X)\}] \nonumber\\
	&=\pr[\{\dot{\lambda}\{\tilde{e}(\mb X)\}- \dot{\lambda}\{\hat{e}(\mb X)\}\}\{\hat{e}(\mb X)-e(\mb X)\}]\nonumber\\
    & \preceq \Vert\dot{\lambda}\{\hat{e}(\mb X)\}-\dot{\lambda}\{e(\mb X)\}\Vert_2\cdot\Vert \hat{e}(\mb X)- e(\mb X) \Vert_2, \label{eq:prove 28}
\end{align} 
using Conditions \ref{cond:EIFbound}, \ref{cond:EIFsup} and \ref{cond:EIFLipch}.

Hence, further by Condition \ref{cond:EIF-nuisop} in Theorem \ref{thm:RDR-EIF} {(Section \ref{sec:EIF})}, we can get
\begin{align}
\Vert\pr[D(\mb V;\hat{\phi})]-\pr[D(\mb V;\phi_0)]\Vert_2=o_p(1).\label{eq:prove 22}
\end{align}
By Lemma \ref{lem:donsker} {(Section \ref{subapx:prelim-EIF})}, we have
\begin{align}
\sup_{\phi\in l^\infty(\mc{G}_{\phi_0})}\Vert\pr_n[D(\mb V;\phi)]-\pr[D(\mb V;\phi)]\Vert_2\to_p 0 \label{eq:prove 23}
\end{align}
as $n\to\infty$. Plugging \eqref{eq:prove 22} and \eqref{eq:prove 23} into \eqref{eq:prove 20} leads to \eqref{eq:prove 16}.

Similarly, to show \eqref{eq:prove 15}, we need $\Vert\pr_n[N(\mb V;\hat{\phi})] - \pr_n[N(\mb V;\phi_0)]\Vert_2\to 0$ as $n\to\infty$, which is bounded by
\begin{align}
& \quad\Vert\pr_n[N(\mb V;\hat{\phi})]-\pr[N(\mb V;\hat{\phi})]\Vert_2
+\Vert\pr_n[N(\mb V;\phi_0)]-\pr[N(\mb V;\phi_0)]\Vert_2
+\Vert\pr[N(\mb V;\hat{\phi})]- \pr[N(\mb V;\phi_0)]\Vert_2 \nonumber\\
&\leq 2 \sup_{\phi\in l^\infty(\mc{G}_{\phi_0})}\Vert \pr_n[N(\mb V;\phi)]-\pr[N(\mb V;\phi)]\Vert_2 + \Vert\pr[N(\mb V;\hat{\phi})]-\pr[N(\mb V;\phi_0)]\Vert_2. 
\label{eq:prove 21}
\end{align}

Both terms on the right-hand side of \eqref{eq:prove 21} are $o_p(1)$, with proofs below. To show the second term is $o_p(1)$, we first express 

\begin{align*}
&\pr [N(\mb V;\hat{\phi})]-\pr [N(\mb V;\phi_0)]\\
&=\pr\left[\lambda\{\hat{e}(\mb X)\} \frac{e(\mb X)}{\hat{e}(\mb X)}\{\mu_1(\mb X)-\hat{\mu}_1(\mb X)\} -\lambda\{\hat{e}(\mb X)\}\frac{1-e(\mb X)}{1-\hat{e}(\mb X)}\{\mu_0(\mb X)-\hat{\mu}_0(\mb X)\} +\lambda\{\hat{e}(\mb X)\}\{\hat{\mu}_1(\mb X)-\hat{\mu}_0(\mb X)\}\right]\\
&\quad-\pr[\lambda\{e(\mb X)\}\{\mu_1(\mb X)-\mu_0(\mb X)\}] +\pr[\dot{\lambda}\{\hat{e}(\mb X)\}\hat{\tau}(\mb X)\{e(\mb X)-\hat{e}(\mb X)\}]\\
&= \pr\left[\frac{\lambda\{\hat{e}(\mb X)\}}{\hat{e}(\mb X)} \{e(\mb X)-\hat{e}(\mb X)\} \{\mu_1(\mb X)-\hat{\mu}_1(\mb X)\}\right] \nonumber\\
& \quad -\pr\left[\frac{\lambda\{\hat{e}(\mb X)\}}{1-\hat{e}(\mb X)} \{\hat{e}(\mb X)-e(\mb X)\} \{\mu_0(\mb X)-\hat{\mu}_0(\mb X)\}\right]+\pr[\dot{\lambda}\{\hat{e}(\mb X)\}\{\hat{\tau}(\mb X)-\tau(\mb X)\}\{e(\mb X)-\hat{e}(\mb X)\}] \nonumber\\
& \quad +\pr[\tau(\mb X)\{\lambda\{\hat{e}(\mb X)\}-\lambda\{e(\mb X)\} -\dot{\lambda}\{\hat{e}(\mb X)\}\{\hat{e}(\mb X)-e(\mb X)\}\}],
\end{align*}
and we further bound $\Vert \pr[N(\mb V;\hat{\phi})]-\pr[N(\mb V;\phi_0)]\Vert_2 $ by
\begin{align}
&\left\Vert \pr\left[\frac{\lambda\{\hat{e}(\mb X)\}}{\hat{e}(\mb X)} \{e(\mb X)-\hat{e}(\mb X)\} \{\mu_1(\mb X)-\hat{\mu}_1(\mb X)\right]\right\Vert_2 + \left\Vert\pr\left[\frac{\lambda\{\hat{e}(\mb X)\}}{1-\hat{e}(\mb X)} \{\hat{e}\{\mb X\}-e(\mb X)\}\{\mu_0(\mb X)-\hat{\mu}_0(\mb X)\}\right]\right\Vert_2\nonumber\\
& \quad +\Vert \pr[\dot{\lambda}\{\hat{e}(\mb X)\}\{\hat{\tau}(\mb X)-\tau(\mb X)\}\{e(\mb X)-\hat{e}(\mb X)\}] \Vert_2 \nonumber\\
& \quad +\Vert \pr [\tau(\mb X)\{\lambda\{\hat{e}(\mb X)\}-\lambda\{e(\mb X)\} -\dot{\lambda}\{\hat{e}(\mb X)\}\{\hat{e}(\mb X)-e(\mb X)\}\}]\Vert_2 . \label{eq:prove 31}
\end{align}
All four terms in \eqref{eq:prove 31} are $o_p(1)$ as shown below. By triangle inequality and Cauchy-Schwarz inequality, the first three terms in \eqref{eq:prove 31} satisfies
\begin{align}
\left\Vert  \pr\left[\frac{\lambda\{\hat{e}(\mb X)\}}{\hat{e}(\mb X)} \{e(\mb X)-\hat{e}(\mb X)\} \{\mu_1(\mb X)-\hat{\mu}_1(\mb X)\}\right]\right\Vert_2 \leq \sup_{\mb X}\left\vert \frac{\lambda\{\hat{e}(\mb X)\}}{\hat{e}(\mb X)}\right\vert\cdot \Vert e(\mb X)-\hat{e}(\mb X) \Vert_2 \cdot\Vert  \mu_1(\mb X)-\hat{\mu}_1(\mb X) \Vert_2 \label{eq:prove 33},\\
\left\Vert \pr\left[\frac{\lambda\{\hat{e}(\mb X)\}}{1-\hat{e}(\mb X)} \{e(\mb X)-\hat{e}(\mb X)\} \{\mu_0(\mb X)-\hat{\mu}_0(\mb X)\}\right] \right\Vert_2\leq \sup_{\mb X}\left\vert \frac{\lambda\{\hat{e}(\mb X)\}}{1-\hat{e}(\mb X)} \right\vert \cdot\Vert e(\mb X)-\hat{e}(\mb X) \Vert_2\cdot  \Vert  \mu_0(\mb X)-\hat{\mu}_0(\mb X)\Vert_2 \label{eq:prove 34},\\
\Vert\pr[\dot{\lambda}\{\hat{e}(\mb X)\}\{\hat{\tau}(\mb X)-\tau(\mb X)\}\{e(\mb X)-\hat{e}(\mb X)\}]\Vert_2
\leq \sup_{\mb X}\left\vert \dot{\lambda}\{\hat{e}(\mb X)\}\right\vert\cdot \Vert e(\mb X)-\hat{e}(\mb X) \Vert_2 \cdot \Vert \tau(\mb X)-\hat\tau(\mb X)\Vert_2, \label{eq:prove 35}
\end{align}
which thus are bounded by $o_p(1)$ using Condition \ref{cond:EIFsup} and Condition \ref{cond:EIF-nuisop} in Theorem \ref{thm:RDR-EIF} {(Section \ref{sec:EIF})}. 
By Taylor expansion (similar to \eqref{eq:Taylorlambda}), the fourth term in \eqref{eq:prove 31} can be expanded as
\begin{align}
&\Vert \pr [\tau(\mb X)\{\lambda\{\hat{e}(\mb X)\}-\lambda\{e(\mb X)\} -\dot{\lambda}\{\hat{e}(\mb X)\}\{\hat{e}(\mb X)-e(\mb X)\}\}]\Vert_2 \nonumber\\
&= \Vert\pr[\{\dot{\lambda}\{\tilde{e}(\mb X)\} -\dot{\lambda}\{\hat{e}(\mb X)\}\{\hat{e}(\mb X)-e(\mb X)\}\tau(\mb X)\}]\Vert_2 \nonumber\\
&\leq\sup_{\mb X}\left\vert\tau(\mb X)\right\vert\cdot \Vert\dot{\lambda}\{\hat{e}(\mb X)\}-\dot{\lambda}\{e(\mb X)\}\Vert_2 \cdot \Vert \hat e(\mb X)-e(\mb X)\Vert_2, \label{eq:prove 32}
\end{align}
which is bounded by $o_p(1)$ using Condition \ref{cond:EIFsup} and \ref{cond:EIFLipch} and Condition \ref{cond:EIF-nuisop} in Theorem \ref{thm:RDR-EIF} {(Section \ref{sec:EIF})}. 

Plugging \eqref{eq:prove 33}--\eqref{eq:prove 32} into \eqref{eq:prove 31}, and by Conditions \ref{cond:EIFbound} and \ref{cond:EIFLipch}, we obtain
\begin{align}
\Vert\pr[N(\mb V;\hat{\phi})]-\pr[N(\mb V;\phi_0)]\Vert_2\preceq \left\Vert \hat e(\mb X)-{e}(\mb X)\right\Vert_2\left\{\sum_{a=0}^1 \left\Vert\hat\mu_a(\mb X)-{\mu}_a(\mb X)\right\Vert_2 + \Vert\dot{\lambda}\{\hat{e}(\mb X)\}-\dot{\lambda}\{e(\mb X)\}\Vert_2\right\}. \label{eq:prove 26}
\end{align}
Thus by Condition \ref{cond:EIF-nuisop} in Theorem \ref{thm:RDR-EIF}  {(Section \ref{sec:EIF})}, we can show that 
\begin{align}
\Vert\pr[N(\mb V;\hat{\phi})]-\pr[N(\mb V;\phi_0)]\Vert_2=o_p(1). \label{eq:prove 36}
\end{align}
By Lemma \ref{lem:donsker} {(Section \ref{subapx:prelim-EIF})}, we get
\begin{align}
\sup_{\phi\in l^\infty(\mc{G}_{\phi_0})}\Vert\pr_n[N(\mb V;\phi)]-\pr[N(\mb V;\phi)]\Vert_2 \to_p 0 \label{eq:prove 27}
\end{align}
as $n\to\infty$. Plugging \eqref{eq:prove 36}--\eqref{eq:prove 27} into \eqref{eq:prove 21} leads to \eqref{eq:prove 15}. Combining \eqref{eq:prove 16} and \eqref{eq:prove 15}, we finally complete proving the consistency of $\hat{\gamma}^{\text{eif}}$.

Second, we show the asymptotic normal distribution of $\hat{\gamma}^{\text{eif}}$. Consider  Taylor expansion of ${\pr_n[N(\mb V;\hat\phi)]}/{\pr_n[D(\mb V;\hat\phi)]}$ at $(\pr[N(\mb V;\phi_0)],\pr[D(\mb V;\phi_0)])$ (more explicitly, consider expanding ${x}/{y}$ near $(x_0,y_0)$ for $(x,y)$ close to $(x_0,y_0)$) together with \eqref{eq:prove 16} and \eqref{eq:prove 15}, we have
\begin{align*}
	\frac{\pr_n[N(\mb V;\hat{\phi})]}{\pr_n[D(\mb V;\hat{\phi})]}
	&= \frac{\pr[N(\mb V;\phi_0)]}{\pr[D(\mb V;\phi_0)]} + \frac{1}{\pr[D(\mb V;\phi_0)]} \{\pr_n[N(\mb V;\hat{\phi})] - \pr[N(\mb V;\phi_0)]\} \\ 
    & \quad -
	\frac{\pr[N(\mb V;\phi_0)]}{\{\pr[D(\mb V;\phi_0)]\}^2} \{\pr_n[D(\mb V;\hat{\phi})] - \pr[D(\mb V;\phi_0)]\} \\
	&\quad -\left[\frac{2}{\{\pr[D(\mb V;\phi_0)]\}^2}+o_p(1)\right]\cdot\{\pr_n [N(\mb V;\hat{\phi})] - \pr_n[N(\mb V;\phi_0)]\} \{\pr_n[D(\mb V;\hat{\phi})] - \pr_n[ D(\mb V;\phi_0)] \}\\
	&\quad +\left[\frac{2 \pr[N(\mb V;\phi_0)]}{\{\pr[D(\mb V;\phi_0)]\}^3}+o_p(1)\right] \cdot \{\pr_n[D(\mb V;\hat{\phi})] - \pr_n [D(\mb V;\phi_0)]\}^2 \\
	& = \gamma + \pr_n[\varphi(\mb V;\phi_0)] + \frac{\pr_n[N(\mb V;\hat{\phi})]- \pr_n [N(\mb V;\phi_0)]}{\pr[D(\mb V;\phi_0)]} \\ 
    & \quad
	- \frac{\{\pr_n[D(\mb V;\hat{\phi})] - \pr_n[D(\mb V;\phi_0)]\} \pr[N(\mb V;\phi_0)]}{\{\pr[D(\mb V;\phi_0)]\}^2 }\\
	&\quad -\left[\frac{2}{\{\pr[D(\mb V;\phi_0)]\}^2}+o_p(1)\right]\cdot\{\pr_n [N(\mb V;\hat{\phi})]- \pr_n[N(\mb V;\phi_0)]\} \{\pr_n[D(\mb V;\hat{\phi})] - \pr_n [D(\mb V;\phi_0)] \}\\
	&\quad +\left[\frac{2 \pr[N(\mb V;\phi_0)]}{\{\pr[D(\mb V;\phi_0)]\}^3   }+o_p(1)\right] \cdot \{\pr_n[D(\mb V;\hat{\phi})] - \pr_n[D(\mb V;\phi_0)]\}^2. 
\end{align*}
 By Conditions \ref{cond:EIFbound}, \ref{cond:EIFsup} and \ref{cond:EIFLipch}, we only need to show the following: 
\begin{align}
\pr_n[N(\mb V;\hat{\phi})]- \pr_n[N(\mb V;\phi_0)]& =o_p(n^{-{1}/{2}}), \text{ and}  \label{eq:prove 29}\\
\pr_n[D(\mb V;\hat{\phi})] - \pr_n[D(\mb V;\phi_0)]& =o_p(n^{-{1}/{2}}) \label{eq:prove 30}.
\end{align}
To show \eqref{eq:prove 29}, we need 
\begin{align}
\pr[N(\mb V;\hat{\phi})]- \pr[N(\mb V;\phi_0)]&=o_p(n^{-1/2}), \text{ and} \label{eq:prove 11}\\
(\pr_n-\pr)[N(\mb V;\hat{\phi})] - (\pr_n-\pr) [N(\mb V;\phi_0)]&=o_p(n^{-1/2})\label{eq:prove 12}.
\end{align}

By Condition \ref{cond:EIF-nuisop} in Theorem \ref{thm:RDR-EIF} {(Section \ref{sec:EIF})}, Condition \ref{cond:EIFsup}  and the bound in \eqref{eq:prove 32}, we can get \eqref{eq:prove 11}. Then, Lemma \ref{lem:donsker} {(Section \ref{subapx:prelim-EIF})} further implies
\begin{align*}
	n^{1/2}(\pr_n-\pr)[N(\mb V;\phi)]\rightsquigarrow Z\in l^\infty(\mc{G}_{\phi_0})
\end{align*}
as $n\to\infty$. Furthermore, given the fact that $\Vert\hat{\phi}-\phi_0 \Vert_2=o_p(1)$, we have
\begin{align}\label{eq:CLT-N}
	\begin{pmatrix}
		n^{1/2}(\pr_n-\pr)[N(\mb V;\phi)]\\
		\hat{\phi}
	\end{pmatrix}
	\rightsquigarrow 
	\begin{pmatrix}
		Z\\
		\phi_0
	\end{pmatrix}
\end{align}
in $l^\infty(\mc{G}_{\phi_0})\times\mc{G}_{\phi_0}$ as $n\to\infty$.
Consider function $s:l^\infty(\mc{G}_{\phi_0})\times \mc{G}_{\phi_0}\mapsto \mathbb{R}^p$, where $s(z,\phi)=z(\phi)-z(\phi_0)$, which is continuous for all $(z,\phi)$ and $\phi\mapsto z(\phi)$ is continuous. By Lemma \ref{lem:donsker} {(Section \ref{subapx:prelim-EIF})}, all sample paths of $Z$ are continuous on $\mc{G}_{\phi_0}$ and thus $s(z,\phi)$ is continuous for $(Z,\phi)$. By continuous mapping theorem,
\begin{align}\label{eq:CMT-N}
	s(Z,\hat{\phi})=n^{1/2}(\pr_n-\pr)N(\mb V;\hat{\phi})-n^{1/2}(\pr_n-\pr)N(\mb V;\phi_0)\rightsquigarrow s(Z,\phi_0)=0,
\end{align}
which leads to \eqref{eq:prove 12}. Combining \eqref{eq:prove 11} and \eqref{eq:prove 12}, we can verify \eqref{eq:prove 29}. 
 
Next, \eqref{eq:prove 30} can be shown similarly. Therefore, we obtain
\begin{align*}
	\hat{\gamma}^{\text{eif}}= \pr_n[\varphi(\mb V;\phi_0)] + \gamma + o_p(n^{-1/2}),
\end{align*}
implying that $\hat{\gamma}^{\text{eif}}$ is root-$n$ consistent with
\begin{align*}
	n^{1/2}(\hat{\gamma}^{\text{eif}}-\gamma)\rightsquigarrow \mc N(0,\E[\varphi(\mb V;\phi_0)^2]).
\end{align*}

\section{Theory of DML-based Estimators}\label{apx:theory-DML}
\subsection{Preliminaries and Lemmata}\label{subapx:prelim-DML}

\subsubsection{Theory of Gateaux derivatives}\label{subsubapx:Gate}

For convenience, we first introduce the following knowledge of \textit{Gateaux derivative}. For a function set $\tilde{T}=\{\phi-\phi_0:\phi\in \mc{T}\}$, the Gateaux derivative is a mapping, $D_r:\tilde{T}\mapsto \mathbb{R}$, defined by
\begin{align*}
	D_r[\phi-\phi_0]:=\partial_r\{\E[L(\mb V;\gamma_0,\phi+r(\phi-\phi_0))]\}, 
\end{align*}
where $\phi\in \mc{T}$ for all $r\in[0,1)$. We also denote 
\begin{align*}
\partial_\phi\E[L(\mb V;\gamma_0,\phi_0)][\phi-\phi_0]:=D_0[\phi-\phi_0], \quad \mbox{ for }\phi\in \mc{T},
\end{align*}
which is well-defined given that $\mc{T}$ is a convex set, and it holds that for all $r\in[0,1)$ and $\phi\in \mc{T}$,
\begin{align*}
	\phi_0+r(\phi-\phi_0)=(1-r)\phi_0+r\phi\in \mc{T}.
\end{align*}

We first introduce the following two Lemmata, \ref{lem:firstGate} and \ref{lem:secondGate}, for computing the first- and second-order Gateaux derivatives of $\tilde T\in\mc T$ when $\phi=(\tilde e, \tilde\mu_0,\tilde\mu_1)$ chosen in Section \ref{sec:DML}. 

\begin{lemma}[First-order Gateaux]  \label{lem:firstGate}For any $\phi\in \mc{T}$, the Gateaux derivative in the direction $\phi-\phi_0$ is defined as 
\begin{align*}
	\partial_\phi \E[L(\mb V;\gamma_0,\phi_0)][\phi-\phi_0]:=\partial_r \E[L(\mb V;\gamma_0,\phi_r)]\big\vert_{r=0},
\end{align*}
for $r\in[0,1)$. Following the notations in Section \ref{sec:DML}, if we choose $\phi=(\tilde{e},\tilde{\mu}_0,\tilde{\mu}_1)$, $\phi-\phi_0=(\tilde{e}-e,\tilde{\mu}_0-\mu_0,\tilde{\mu}_1-\mu_1)$, and $\phi_r =\phi_0+r(\phi-\phi_0)=(e_r, \mu_{0,r},\mu_{1,r} )$, where $e_r=e+r(\tilde{e}-e)$, $\mu_{0,r}=\mu_0+r(\tilde{\mu}_0-\mu_0)$ and
$\mu_{1,r}=\mu_1+r(\tilde{\mu}_1-\mu_1)$, then we have the first-order Gateaux derivative evaluated at $r=0$ is $\partial_\phi \E[L(\mb V;\gamma_0,\phi_0)][\phi-\phi_0]=0$.
\end{lemma}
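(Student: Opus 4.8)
The plan is to compute $\partial_r \E[L(\mb V;\gamma_0,\phi_r)]\big\vert_{r=0}$ directly and verify it vanishes. Recall $L(\mb V;\gamma_0,\phi) = \gamma_0 D(\mb V;\phi) - N(\mb V;\phi)$, so by linearity it suffices to differentiate $\E[D(\mb V;\phi_r)]$ and $\E[N(\mb V;\phi_r)]$ separately and combine. Since $D(\mb V;\phi) = \lambda\{e(\mb X)\} + \dot\lambda\{e(\mb X)\}\{A - e(\mb X)\}$ depends only on the propensity score component, and $N(\mb V;\phi) = \lambda\{e(\mb X)\}\psi_\tau(\mb V;\phi) + \dot\lambda\{e(\mb X)\}\tau(\mb X)\{A - e(\mb X)\}$ depends on all three components, I would substitute $\phi_r = (e_r, \mu_{0,r}, \mu_{1,r})$ with $e_r = e + r(\tilde e - e)$, etc., and differentiate under the integral sign (justified by the boundedness Conditions \ref{cond:DMLbound} and \ref{cond:DML-U}, which control all relevant functions uniformly). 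The derivative $\partial_r$ acts on $e_r$ through the chain rule, producing factors $(\tilde e - e)$, and on $\mu_{a,r}$ producing factors $(\tilde\mu_a - \mu_a)$.

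The key structural point is that the terms involving $(\tilde e - e)$ and $(\tilde\mu_a - \mu_a)$ must each integrate to zero once we take the outer expectation and use iterated conditioning on $\mb X$. For the outcome-model directions: differentiating $\psi_\tau(\mb V;\phi)$ in $\mu_{1,r}$ gives a term proportional to $-\frac{A}{e(\mb X)}(\tilde\mu_1 - \mu_1) + (\tilde\mu_1 - \mu_1)$ (from the $-\tau(\mb X) = -\mu_1(\mb X)+\mu_0(\mb X)$ piece as well), weighted by $\lambda\{e(\mb X)\}$; taking $\E[\cdot\mid \mb X]$ and using $\E[A\mid\mb X] = e(\mb X)$ makes the coefficient $\E[-\frac{A}{e(\mb X)} + 1\mid\mb X] = 0$. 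The same cancellation handles the $\mu_0$ direction via $\E[1-A\mid\mb X] = 1-e(\mb X)$. For the propensity-score direction, the $\partial_r$ of the terms $\{A - e_r(\mb X)\}$ and of $\lambda\{e_r(\mb X)\}$, $\dot\lambda\{e_r(\mb X)\}$ combine; after conditioning on $\mb X$, every surviving term carries a factor that is linear in $\{A - e(\mb X)\}$ or arises from the known residual structure $\E[Y - \mu_a(\mb X)\mid A=a,\mb X] = 0$, so it again averages to zero. This is precisely the Neyman-orthogonality property, and in fact mirrors the calculation already carried out in Appendix \ref{subapp:rolePS} showing $\E(I_1 I_2) = 0$ and the vanishing of analogous conditional expectations.

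I expect the main obstacle to be bookkeeping rather than conceptual: the propensity-score direction generates several terms (from $\lambda$, from $\dot\lambda$, and from the explicit $e(\mb X)$ inside $\psi_\tau$ and inside $\{A-e(\mb X)\}$), and one must carefully collect them and show the aggregate conditional expectation is zero, rather than each term individually. A clean way to organize this is to note that $N(\mb V;\phi_0)$ and $D(\mb V;\phi_0)$ were constructed (in Theorem \ref{thm:EIF} and its proof in Appendix \ref{subapx:EIFs}) precisely so that $\varphi(\mb V;\phi_0) = \{L(\mb V;\gamma_0,\phi_0)\}/\{-\E[D(\mb V;\phi_0)]\}$ lies in the tangent space and has the double-robust/mixed-bias form; the first-order Gateaux derivative of $\E[L]$ being zero is the infinitesimal version of that mixed-bias structure. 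Concretely, I would reduce to verifying, for each of the three coordinate directions $v \in \{\tilde e - e, \tilde\mu_0 - \mu_0, \tilde\mu_1 - \mu_1\}$, an identity of the form $\E\big[\kappa_v(\mb X)\,v(\mb X)\big] = 0$ where $\kappa_v(\mb X) = \E[\partial_{\phi}L\cdot(\text{direction})\mid\mb X]$, and show each $\kappa_v(\mb X) \equiv 0$ using only $\E[A\mid\mb X]=e(\mb X)$, $\E[Y\mid A=a,\mb X]=\mu_a(\mb X)$, and the definition $\tau = \mu_1 - \mu_0$. Summing the three contributions then yields $\partial_\phi\E[L(\mb V;\gamma_0,\phi_0)][\phi-\phi_0] = 0$, completing the proof.
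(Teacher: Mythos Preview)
Your proposal is correct and follows essentially the same approach as the paper: split $L = \gamma_0 D - N$ by linearity, differentiate under the integral, and use iterated conditioning on $\mb X$ together with $\E[A\mid\mb X]=e(\mb X)$ and $\E[Y\mid A=a,\mb X]=\mu_a(\mb X)$ to show the derivative vanishes at $r=0$. The paper organizes the $N$-derivative via the product rule into five pieces $\mc I_1,\ldots,\mc I_5$ (with $\mc I_2\vert_{r=0}+\mc I_5\vert_{r=0}=0$ rather than each vanishing separately, exactly the ``aggregate cancellation'' you anticipate for the propensity-score direction), whereas you group by the three nuisance directions and show each $\kappa_v(\mb X)\equiv 0$; these are equivalent bookkeepings of the same calculation.
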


\begin{proof}
We consider $L(\mb V;\gamma_0,\phi_r) = \gamma_0D(\mb V;\phi_r) + N(\mb V;\phi_r)$ is linear in $\gamma_0$. 
First, by interchanging the order of derivative and expectation, we can write
\begin{align}\label{eq:firstGate}
\frac{\partial}{\partial r}\E[L(\mb V;\gamma_0,\phi_r)]
=\gamma_0\E\left[\frac{\partial}{\partial r} D(\mb V;\phi_r)\right]-\E\left[\frac{\partial}{\partial r}N(\mb V;\phi_r)\right]. 
\end{align}
We further expand the first term in the right-hand side of \eqref{eq:firstGate}. The inner part of the expectation can be expressed by
\begin{align}
\frac{\partial}{\partial r} D(\mb V;\phi_r) 
&=\frac{\partial}{\partial r} [\lambda\{e_r(\mb X)\}] + \{A-e_r(\mb X)\}\frac{\partial}{\partial r}[\dot{\lambda}\{e_r(\mb X)\}] + \dot{\lambda}\{e_r(\mb X)\}\cdot\frac{\partial}{\partial r}\{A-e_r(\mb X)\}\nonumber
\\
&=\dot{\lambda}\{e_r(\mb X)\}\{\tilde{e}(\mb X)-e(\mb X)\}+\ddot{\lambda}\{e_r(\mb X)\}\{\tilde{e}(\mb X)-e(\mb X)\}\{A-e_r(\mb X)\}\nonumber\\
& \quad -\dot{\lambda}\{e_r(\mb X)\}\{\tilde{e}(\mb X)-e(\mb X)\}\nonumber
\\
&=\ddot{\lambda}\{e_r(\mb X)\}\{\tilde{e}(\mb X)-e(\mb X)\}\{A-e_r(\mb X)\}, \label{eq:firstDer-ga}
\end{align}
hence
\begin{align*}
\E\left[\frac{\partial}{\partial r} D(\mb V;\phi_r)\right]
=\E[\ddot{\lambda}\{e_r(\mb X)\}\{\tilde{e}(\mb X)-e(\mb X)\}\{e(\mb X)-e_r(\mb X)\}]
=-r\cdot\E[\ddot{\lambda}\{e_r(\mb X)\}\{\tilde{e}(\mb X)-e(\mb X)\}^2],
\end{align*}
and so 
\begin{align}
\E\left[\frac{\partial}{\partial r} D(\mb V;\phi_r)\right]\bigg\vert_{r=0}=0.\label{eq:firstDer-ga-value}
\end{align}
Next, since
\begin{align}\label{eq:firstDer-gb}
\frac{\partial}{\partial r} N(\mb V;\phi_r) & =\lambda\{e_r(\mb X)\}\left\{\frac{\partial}{\partial r}\psi_{\tau}(\mb V;\phi_r)\right\}\nonumber  \\
 & \quad + \psi_{\tau}(\mb V;\phi_r)\frac{\partial}{\partial r}\lambda\{e_r(\mb X)\} + \left[\frac{\partial}{\partial r} {\dot{\lambda}}\{e_r(\mb X)\}\right] \{\mu_{1,r}(\mb X)-\mu_{0,r}(\mb X)\}\{A-e_r(\mb X)\} \nonumber\\
& \quad +
\dot{\lambda}\{e_r(\mb X)\}\left[\frac{\partial}{\partial r} \{\mu_{1,r}(\mb X)-\mu_{0,r}(\mb X)\}\right]\{A-e_r(\mb X)\} \nonumber \\
& \quad +\dot{\lambda}\{e_r(\mb X)\}\{\mu_{1,r}(\mb X)-\mu_{0,r}(\mb X)\}\frac{\partial}{\partial r}\{A-e_r(\mb X)\}, 
\end{align}
we decompose the second term in the right-hand side of \eqref{eq:firstGate} as
\begin{align}
\E\left[\frac{\partial}{\partial r} N(\mb V;\phi_r)\right]
&= \mc{I}_1 + \mc{I}_2 + \mc{I}_3 + \mc{I}_4 + \mc{I}_5, \label{eq:decompose-gb}
\end{align}
where 
\begin{align*}
\mc{I}_1 &=\E\left[\lambda\{e_r(\mb X)\}\cdot\frac{\partial}{\partial r}\psi_{\tau}(\mb V;\phi_r)\right], \\
\mc{I}_2 &= \E\left[\psi_{\tau}(\mb V;\phi_r)\cdot\frac{\partial}{\partial r}\lambda\{e_r(\mb X)\}\right],\\
\mc{I}_3  &=\E\left(\left[\frac{\partial}{\partial r} {\dot{\lambda}}\{e_r(\mb X)\}\right]\cdot \{\mu_{1,r}(\mb X)-\mu_{0,r}(\mb X)\}\{A-e_r(\mb X)\}\right),\\
\mc{I}_4  &=
\E\left(\dot{\lambda}\{e_r(\mb X)\}\cdot \left[\frac{\partial}{\partial r} \{\mu_{1,r}(\mb X)-\mu_{0,r}(\mb X)\}\right]\cdot\{A-e_r(\mb X)\}\right),\\
\mc{I}_5 &=\E\left(\dot{\lambda}\{e_r(\mb X)\}\{\mu_{1,r}(\mb X)-\mu_{0,r}(\mb X)\}\cdot \frac{\partial}{\partial r}\{A-e_r(\mb X)\}\right). 
\end{align*}
Note that in $\mc I_1$,
\begin{align}
\frac{\partial}{\partial r}\psi_{\tau}(\mb V;\phi_r)
&= \{Y-\mu_{1,r}(\mb X)\}\cdot\frac{\partial}{\partial r}\left\{\frac{A}{e_r(\mb X)}\right\} -\frac{A}{e_r(\mb X)}\cdot \frac{\partial}{\partial r}\mu_{1,r}(\mb X) -\{Y-\mu_{0,r}(\mb X)\}\cdot\frac{\partial}{\partial r}\left\{\frac{1-A}{1-e_r(\mb X)}\right\}\nonumber\\
& \quad+\frac{1-A}{1-e_r(\mb X)}\cdot \frac{\partial}{\partial r}\mu_{0,r}(\mb X) +\frac{\partial}{\partial r}\{\mu_{1,r}(\mb X)-\mu_{0,r}(\mb X)\}\nonumber\\
&=-\frac{A}{ e_r(\mb X)^2 }\{\tilde{e}(\mb X)-e(\mb X)\}\{Y-\mu_{1,r}(\mb X)\}
-\frac{A}{e_r(\mb X)}\{\tilde{\mu}_1(\mb X)-\mu_1(\mb X)\}\nonumber\\
&\quad -\frac{1-A}{ \{1-e_r(\mb X)\}^2 }\{\tilde{e}(\mb X)-e(\mb X)\}\{Y-\mu_{0,r}(\mb X)\}\nonumber\\
&\quad +\frac{1-A}{1-e_r(\mb X)}\{\tilde{\mu}_0(\mb X)-\mu_0(\mb X)\}+[\{\tilde{\mu}_1(\mb X)-\mu_1(\mb X)\} - \{\tilde{\mu}_0(\mb X)-\mu_0(\mb X)\}] 
\label{eq:I1-tau-psiDer}.
\end{align}

Therefore, 
\begin{align*}
	\mc I_1 & = -r\cdot \E\left[\lambda\{e_r(\mb X)\} \frac{e(\mb X)}{e_r(\mb X)^2}\{\tilde{e}(\mb X)-e(\mb X)\}\{\tilde{\mu}_1(\mb X)-\mu_1(\mb X)\}\right] -
	\E\left[\lambda\{e_r(\mb X)\}\frac{e(\mb X)}{e_r(\mb X)}\{\tilde{\mu}_1(\mb X)-\mu_1(\mb X)\}\right]\\
&\quad +r\cdot\E\left[\lambda\{e_r(\mb X)\}\frac{1-e(\mb X)}{ \{1-e_r(\mb X)\}^2 } \{\tilde{e}(\mb X)-e(\mb X)\}\{\tilde{\mu}_0(\mb X)-\mu_0(\mb X)\}\right] \\
&\quad
+\E\left[\lambda\{e_r(\mb X)\}\frac{1-e(\mb X)}{1-e_r(\mb X)}\{\tilde{\mu}_0(\mb X)-\mu_0(\mb X)\}\right] +\E(\lambda\{e_r(\mb X)\}[\{\tilde{\mu}_1(\mb X)-\mu_1(\mb X)\} - \{\tilde{\mu}_0(\mb X)-\mu_0(\mb X)\}])\\ 
& = -r\cdot \E\bigg[\lambda\{e_r(\mb X)\} \frac{e(\mb X)}{e_r(\mb X)^2}\{\tilde{e}(\mb X)-e(\mb X)\}\{\tilde{\mu}_1(\mb X)-\mu_1(\mb X)\} \\ 
& \quad -\lambda\{e_r(\mb X)\}\frac{1-e(\mb X)}{ \{1-e_r(\mb X)\}^2 } \{\tilde{e}(\mb X)-e(\mb X)\}\{\tilde{\mu}_0(\mb X)-\mu_0(\mb X)\}\bigg] \\
& \quad + \underbrace{\E\left[\lambda\{e_r(\mb X)\}\{\tilde{\mu}_1(\mb X)-\mu_1(\mb X)\}-\lambda\{e_r(\mb X)\}\frac{e(\mb X)}{e_r(\mb X)}\{\tilde{\mu}_1(\mb X)-\mu_1(\mb X)\}\right]}_{ =~ 0\text{ when }r~=~0} \\
& \quad + \underbrace{\E\left[\lambda\{e_r(\mb X)\}\frac{1-e(\mb X)}{ \{1-e_r(\mb X)\}^2 }\{\tilde{\mu}_0(\mb X)-\mu_0(\mb X)\} - \lambda\{e_r(\mb X)\}\{\tilde{\mu}_0(\mb X)-\mu_0(\mb X)\}\right]}_{ = ~0\text{ when }r~=~0}. 
\end{align*}
In addition, it is straightforward to show
\begin{align*}
	\mc I_3 & =-r\cdot \E[\ddot{\lambda}\{e_r(\mb X)\}\{\tilde{e}(\mb X)-e(\mb X)\}^2 \{\mu_{1,r}(\mb X)-\mu_{0,r}(\mb X)\}], \text{ and }\\
 \mc I_4 & = -r \cdot  \E(\dot{\lambda}\{e_r(\mb X)\}\cdot   [\{\tilde{\mu}_{1}(\mb X)-\mu_{1}(\mb X)\} -\{\tilde{\mu}_0(\mb X)-\mu_0(\mb X)\}  ]\cdot\{\tilde{e}(\mb X)-e(\mb X)\} ). 
\end{align*}
Therefore, 
\begin{align*}
\mc I_1\big\vert_{r=0} = \mc I_3\big\vert_{r=0} = \mc I_4\big\vert_{r=0}= 0.
\end{align*}
Moreover,
\begin{align*}
	\mc I_2 & =\E[\dot{\lambda}\{e_r(\mb X)\}\{\tilde{e}(\mb X)-e(\mb X)\}\psi_{\tau}(\mb V;\phi_r)]\\
	& = \E[\dot{\lambda}\{e_r(\mb X)\}\{\tilde{e}(\mb X)-e(\mb X)\}\{\mu_{1,r}(\mb X)-\mu_{0,r}(\mb X)\}] \\
    & \quad - r\cdot \E \left[\dot{\lambda}\{e_r(\mb X)\}\{\tilde{e}(\mb X)-e(\mb X)\}\frac{e(\mb X)}{e_r(\mb X)}\{\tilde{\mu}_1(\mb X)-\mu_{1}(\mb X)\}\right] \\
    & \quad +r\cdot \E \left[\dot{\lambda}\{e_r(\mb X)\}\{\tilde{e}(\mb X)-e(\mb X)\}\frac{1-e(\mb X)}{1-e_r(\mb X)}\{\tilde{\mu}_0(\mb X)-\mu_{0}(\mb X)\}\right],
\end{align*}
therefore 
\begin{align*}
\mc I_2\big\vert_{r=0}
=\E[\dot{\lambda}\{e(\mb X)\}\{\tilde{e}(\mb X)-e(\mb X)\}\{\mu_{1}(\mb X)-\mu_{0}(\mb X)\}]. 
\end{align*}
It is also straightforward to get
\begin{align*}
\mc I_5\big\vert_{r=0}= -\E[\dot{\lambda}\{e(\mb X)\}\{\mu_{1}(\mb X)-\mu_{0}(\mb X)\}\{\tilde{e}(\mb X)-e(\mb X)\}].
\end{align*}
Therefore, we note that $\mc I_5\big\vert_{r=0} + \mc I_2\big\vert_{r=0}=0$, and so 
\begin{align}
\E\left[\frac{\partial }{\partial r}N(\mb V;\phi_r)\right]\bigg\vert_{r=0} = 
\mc I_1\big\vert_{r=0} + \mc I_2\big\vert_{r=0} + \mc I_3\big\vert_{r=0} + \mc I_4\big\vert_{r=0} + \mc I_5\big\vert_{r=0} = 0. \label{eq:decompose-gb-r0value}
\end{align}
Finally, plugging \eqref{eq:firstDer-ga-value} and \eqref{eq:decompose-gb-r0value} in \eqref{eq:firstGate}, we get $\dfrac{\partial}{\partial r} \E[L(\mb V;\gamma_0,\phi_r)]\big\vert_{r=0}
=0$, which completes the proof.
\end{proof}

\begin{lemma}[Second-order Gateaux] \label{lem:secondGate} Following notations in Lemma \ref{lem:firstGate}, for any $\phi=(\tilde{e},\tilde{\mu}_0,\tilde{\mu}_1)\in \mc{T}$, the second-order Gateaux derivative in the direction $\phi-\phi_0=(\tilde{e}-e,\tilde{\mu}_0-\mu_0,\tilde{\mu}_1-\mu_1)$ is given by 
	\begin{align}
	& \partial_r^2 \E[L(\mb V;\gamma_0,\phi_r)]
	= -r\cdot\gamma_0 \E[\lambda^{(3)}\{e_r(\mb X)\}\{\tilde{e}(\mb X)-e(\mb X)\}^3]-\gamma_0\E[\ddot{\lambda}\{e_r(\mb X)\}\{\tilde{e}(\mb X)-e(\mb X)\}^2]  \nonumber\\
	& \quad +2r\cdot \E\left(\dot{\lambda}\{e_r(\mb X)\}\cdot\{\tilde{e}(\mb X)-e(\mb X)\}^2\left[\frac{e(\mb X)}{e_r(\mb X)^2}\{\tilde{\mu}_1(\mb X)-\mu_1(\mb X)\} - \frac{1-e(\mb X)}{\{1-e_r(\mb X)\}^2}\{\tilde{\mu}_0(\mb X)-\mu_0(\mb X)\}\right]\right)\nonumber\\
	& \quad
	+2\E\left(\dot{\lambda}\{e_r(\mb X)\}\cdot\{\tilde{e}(\mb X)-e(\mb X)\}\left[\frac{e(\mb X)}{e_r(\mb X)}\{\tilde{\mu}_1(\mb X)-\mu_1(\mb X)\} - \frac{1-e(\mb X)}{1-e_r(\mb X)}\{\tilde{\mu}_0(\mb X)-\mu_0(\mb X)\}\right]\right) \nonumber\\
	&\quad +r\cdot \E\left(\ddot{\lambda}\{e_r(\mb X)\}\cdot \{\tilde{e}(\mb X)-e(\mb X)\}^2\left[\frac{e(\mb X)}{e_r(\mb X)}\{\tilde{\mu}_1(\mb X)-\mu_1(\mb X)\} -\frac{1-e(\mb X)}{1-e_r(\mb X)}\{\tilde{\mu}_0(\mb X)-\mu_0(\mb X)\}\right]\right)  \nonumber\\
	&\quad +2r\cdot \E\left(\lambda\{e_r(\mb X)\}\cdot\{\tilde{e}(\mb X)-e(\mb X)\}^2\left[\frac{e(\mb X) }{ e_r(\mb X)^3 } \{\tilde{\mu}_1(\mb X)-\mu_1(\mb X)\} - \frac{1-e(\mb X)}{\{1-e_r(\mb X)\}^3 } \{\tilde{\mu}_0-\mu_0(\mb X)\}\right]\right)\nonumber\\
    &\quad
    +2\E\left( \lambda\{e_r(\mb X)\} \cdot\{\tilde{e}(\mb X)-e(\mb X)\}\left[\frac{e(\mb X)}{ e_r(\mb X)^2 } \{\tilde{\mu}_1(\mb X)-\mu_1(\mb X)\} - \frac{1-e(\mb X)}{ \{1-e_r(\mb X)\}^2 }\{\tilde{\mu}_0(\mb X)-\mu_0(\mb X)\}\right]\right)\nonumber\\
    &\quad+\E[ \ddot{\lambda}\{e_r(\mb X)\}\cdot\{\tilde{e}(\mb X)-e(\mb X)\}^2\{\mu_1(\mb X)-\mu_0(\mb X)\}]\nonumber\\
    &\quad+r\cdot \E[\lambda^{(3)}\{e_r(\mb X)\}\cdot\{\tilde{e}(\mb X)-e(\mb X)\}^3\{\mu_{1}(\mb X)-\mu_0(\mb X)\}]\nonumber\\
    &\quad+3r\cdot\E(\ddot{\lambda}\{e_r(\mb X)\}\{\tilde{e}(\mb X)-e(\mb X)\}^2[\{\tilde{\mu}_1(\mb X)-\mu_1(\mb X)\}-\{\tilde{\mu}_0(\mb X)-\mu_0(\mb X)\}]) \nonumber\\
	&\quad +r^2\cdot \E(\lambda^{(3)}\{e_r(\mb X)\}\cdot\{\tilde{e}(\mb X)-e(\mb X)\}^3[\{\tilde{\mu}_{1}(\mb X)-\mu_1(\mb X)\} 
    -\{\tilde{\mu}_{0}(\mb X)-\mu_0(\mb X)\}]). \label{eq:secGate-formula}
	\end{align}
\end{lemma}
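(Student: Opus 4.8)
The plan is to differentiate a second time under the expectation sign and then reduce every resulting term to a conditional expectation given $\mb X$. Since $L(\mb V;\gamma_0,\phi_r)=\gamma_0 D(\mb V;\phi_r)-N(\mb V;\phi_r)$ is affine in $\gamma_0$, I would write $\partial_r^2\E[L(\mb V;\gamma_0,\phi_r)]=\gamma_0\,\partial_r^2\E[D(\mb V;\phi_r)]-\partial_r^2\E[N(\mb V;\phi_r)]$, with the interchange of $\partial_r^2$ and $\E$ justified by dominated convergence using the uniform bounds of Conditions \ref{cond:DMLbound}, \ref{cond:DMLgamma} and \ref{cond:DML-U} on the bounded interval $[C_1,C_2]$ of Condition \ref{cond:DMLLips}, exactly as in the first-order argument of Lemma \ref{lem:firstGate}.

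For the $D$-term I would start from the first derivative already computed in \eqref{eq:firstDer-ga}, $\partial_r D(\mb V;\phi_r)=\ddot\lambda\{e_r(\mb X)\}\{\tilde e(\mb X)-e(\mb X)\}\{A-e_r(\mb X)\}$, differentiate once more using $\partial_r\ddot\lambda\{e_r(\mb X)\}=\lambda^{(3)}\{e_r(\mb X)\}\{\tilde e(\mb X)-e(\mb X)\}$ and $\partial_r\{A-e_r(\mb X)\}=-\{\tilde e(\mb X)-e(\mb X)\}$, and take $\E[\,\cdot\mid\mb X]$ via $\E[A-e_r(\mb X)\mid\mb X]=e(\mb X)-e_r(\mb X)=-r\{\tilde e(\mb X)-e(\mb X)\}$. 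This already produces the two $\gamma_0$-terms on the first line of \eqref{eq:secGate-formula}.

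The substance of the proof is the $N$-term. I would differentiate the five summands of $\partial_r N(\mb V;\phi_r)$ displayed in \eqref{eq:firstDer-gb} once more in $r$, recording the elementary identities $\partial_r\lambda\{e_r\}=\dot\lambda\{e_r\}(\tilde e-e)$, $\partial_r\dot\lambda\{e_r\}=\ddot\lambda\{e_r\}(\tilde e-e)$, $\partial_r\ddot\lambda\{e_r\}=\lambda^{(3)}\{e_r\}(\tilde e-e)$, $\partial_r(A/e_r)=-A(\tilde e-e)/e_r^2$, $\partial_r^2(A/e_r)=2A(\tilde e-e)^2/e_r^3$ and the analogues for $(1-A)/(1-e_r)$, $\partial_r\mu_{a,r}=\tilde\mu_a-\mu_a$, $\partial_r^2\mu_{a,r}=0$, and $\partial_r(A-e_r)=-(\tilde e-e)$; in particular $\partial_r^2\psi_\tau(\mb V;\phi_r)$ comes from differentiating \eqref{eq:I1-tau-psiDer} a second time. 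After expansion I would take $\E[\,\cdot\mid\mb X]$ and invoke $\E[A\mid\mb X]=e(\mb X)$, consistency, Assumption \ref{assmp:unconfound}, and $\E[Y(a)\mid\mb X]=\mu_a(\mb X)$; the key reductions are $\E[A\{Y-\mu_{1,r}(\mb X)\}\mid\mb X]=e(\mb X)\{\mu_1(\mb X)-\mu_{1,r}(\mb X)\}=-r\,e(\mb X)\{\tilde\mu_1(\mb X)-\mu_1(\mb X)\}$, its control-arm counterpart, and $\E[A-e_r(\mb X)\mid\mb X]=-r\{\tilde e(\mb X)-e(\mb X)\}$. Grouping the surviving terms by which derivative of $\lambda$ they carry and by the power of $r$ then yields the remaining lines of \eqref{eq:secGate-formula}. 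Equivalently, one may differentiate the already-simplified quantities $\mc I_1,\dots,\mc I_5$ from the proof of Lemma \ref{lem:firstGate} once more in $r$, which reuses several cancellations, such as those combining $\mc I_2$ and $\mc I_5$.

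The main obstacle is entirely the bookkeeping: the second derivative of the $\psi_\tau$ block generates many cross terms, namely products of $\lambda$, $\dot\lambda$, $\ddot\lambda$, $\lambda^{(3)}$ with the weight factors $e/e_r$, $e/e_r^2$, $e/e_r^3$ (and their control analogues) and with the increments $\tilde\mu_a-\mu_a$, so one must be meticulous with signs and recognize that, unlike in the first-order case where terms of the form $\{A-e(\mb X)\}\times(\text{function of }\mb X)$ vanish under $\E[\,\cdot\mid\mb X]$, here terms of the form $\{A-e_r(\mb X)\}\times(\text{function of }\mb X)$ survive and contribute a factor $-r$. No new ideas beyond those in the proof of Lemma \ref{lem:firstGate} are needed; the challenge is purely organizational.
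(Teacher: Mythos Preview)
Your proposal is correct and follows essentially the same route as the paper: split $L=\gamma_0 D-N$, differentiate $\partial_r D$ from \eqref{eq:firstDer-ga} once more for the $D$-part, and for the $N$-part apply the product rule to the five summands of \eqref{eq:firstDer-gb} (which the paper organizes as $\mc I_6,\dots,\mc I_{11}$), then reduce via $\E[A\{Y-\mu_{1,r}(\mb X)\}\mid\mb X]=-r\,e(\mb X)\{\tilde\mu_1(\mb X)-\mu_1(\mb X)\}$ and $\E[A-e_r(\mb X)\mid\mb X]=-r\{\tilde e(\mb X)-e(\mb X)\}$. The only difference is cosmetic: the paper differentiates before taking expectations and then simplifies, rather than differentiating the already-simplified $\mc I_1,\dots,\mc I_5$, but both routes produce the same bookkeeping.
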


\begin{proof}
We consider $L(\mb V;\gamma_0,\phi_r) = \gamma_0D(\mb V;\phi_r) + N(\mb V;\phi_r)$ is linear in $\gamma_0$. By interchanging the differentiation and expectation, the second-order Gateaux derivative can be rewritten as
\begin{align}
\frac{\partial^2}{\partial r^2} \E[L(\mb V;\gamma_0,\phi_r)]
=\gamma_0 \E\left[\frac{\partial^2}{\partial r^2} D(\mb V;\phi_r)\right]-\E\left[\frac{\partial^2 }{\partial r^2}N(\mb V;\phi_r)\right], \label{eq:secondGate}
\end{align}
for $r\in[0,1)$ and $\phi_r =\phi_0+r(\phi-\phi_0)=(e_r, \mu_{0,r},\mu_{1,r})$. First, we further express inner parts of expectations in \eqref{eq:firstDer-ga} as follows. 
\begin{align*}
	\frac{\partial^2}{\partial r^2} D(\mb V;\phi_r)
	&=\lambda^{(3)}\{e_r(\mb X)\}\{\tilde{e}(\mb X)-e(\mb X)\}^2\{A-e_r(\mb X)\}-\ddot{\lambda}\{e_r(\mb X)\}\{\tilde{e}(\mb X)-e(\mb X)\}^2, 
\end{align*}
which yields
\begin{align}
\E\left[\frac{\partial^2}{\partial r^2}
D(\mb V;\phi_r)\right] 
& =\E[\lambda^{(3)}\{e_r(\mb X)\}\{\tilde{e}(\mb X)-e(\mb X)\}^2\{e(\mb X)-e_r(\mb X)\}-\ddot{\lambda}\{e_r(\mb X)\}\{\tilde{e}(\mb X)-e(\mb X)\}^2]\nonumber\\
& = -r\cdot\E[\lambda^{(3)}\{e_r(\mb X)\}\{\tilde{e}(\mb X)-e(\mb X)\}^3]-\E[\ddot{\lambda}\{e_r(\mb X)\}\{\tilde{e}(\mb X)-e(\mb X)\}^2], 
\label{eq:secondGate-ga}
\end{align}
since $e(\mb X)-e_r(\mb X) = -r\cdot\{\tilde e(\mb X)-e(\mb X)\}$. 
In addition, similar to \eqref{eq:decompose-gb}, we decompose
\begin{align*}
\E\left[\frac{\partial^2}{\partial r^2} N(\mb V;\phi_r)\right] = \mc{I}_6 + \mc{I}_7+\mc{I}_8+\mc{I}_9+\mc{I}_{10}+\mc{I}_{11}, 
\end{align*}
where 
\begin{align*}
	\mc{I}_6 &= \E\left[\frac{\partial}{\partial r}\lambda\{e_r(\mb X)\}\cdot \frac{\partial}{\partial r}\psi_{\tau}(\mb V;\phi_r)\right],\\
	\mc{I}_7 &= \E\left[\lambda\{e_r(\mb X)\}\cdot \frac{\partial^2}{\partial r^2}\psi_{\tau}(\mb V;\phi_r)\right],\\
	\mc{I}_8 &= \E\left[\psi_{\tau}(\mb V;\phi_r)\cdot\frac{\partial^2}{\partial r^2}\lambda\{e_r(\mb X)\}\right] +
	\E\left[\frac{\partial }{\partial r}\psi_{\tau}(\mb V;\phi_r)\cdot\frac{\partial}{\partial r}\lambda\{e_r(\mb X)\}\right],\\
	\mc{I}_9 &= \E\left(\left[\frac{\partial^2}{\partial r^2} \dot{\lambda}\{e_r(\mb X)\}\right]\cdot\{\mu_{1,r}(\mb X)-\mu_{0,r}(\mb X)\}\{A-e_r(\mb X)\}\right)\\
	&\quad + \E\left(\left[\frac{\partial}{\partial r} {\dot{\lambda}}\{e_r(\mb X)\}\right]\cdot\left[\frac{\partial}{\partial r} \{\mu_{1,r}(\mb X)-\mu_{0,r}(\mb X)\}\right] \cdot \{A-e_r(\mb X)\}\right)\\
	&\quad +\E\left(\left[\frac{\partial}{\partial r}{\dot{\lambda}}\{e_r(\mb X)\}\right]\cdot \{\mu_{1,r}(\mb X)-\mu_{0,r}(\mb X)\}\cdot \left[\frac{\partial}{\partial r}\{A-e_r(\mb X)\}\right]\right),\\
	\mc{I}_{10} & = \E\left(\left[\frac{\partial}{\partial r}\dot{\lambda}\{e_r(\mb X)\}\right]\cdot \left[\frac{\partial}{\partial r} \{\mu_{1,r}(\mb X)-\mu_{0,r}(\mb X)\}\right]\cdot\{A-e_r(\mb X)\} \right)\\
	&\quad
    +\E\left(\dot{\lambda}\{e_r(\mb X)\}\cdot \left[\frac{\partial^2}{\partial r^2} \{\mu_{1,r}(\mb X)-\mu_{0,r}(\mb X)\}\right]\cdot\{A-e_r(\mb X)\}\right)\\
	&\quad +\E\left(\dot{\lambda}\{e_r(\mb X)\}\cdot\left[\frac{\partial}{\partial r} \{\mu_{1,r}(\mb X)-\mu_{0,r}(\mb X)\}\right]\cdot\left[\frac{\partial}{\partial r}\{A-e_r(\mb X)\}\right]\right),\\
	\mc{I}_{11}&= \E\left(\left[\frac{\partial}{\partial r}\dot{\lambda}\{e_r(\mb X)\}\right]\cdot\{\mu_{1,r}(\mb X)-\mu_{0,r}(\mb X)\}\cdot \left[\frac{\partial}{\partial r}\{A-e_r(\mb X)\}\right]\right)\\
	&\quad
    + \E\left(\dot{\lambda}\{e_r(\mb X)\}\cdot\left[\frac{\partial}{\partial r}\{\mu_{1,r}(\mb X)-\mu_{0,r}(\mb X)\}\right]\cdot \left[\frac{\partial}{\partial r}\{A-e_r(\mb X)\}\right]\right)\\
	& \quad +\E\left(\dot{\lambda}\{e_r(\mb X)\}\{\mu_{1,r}(\mb X)-\mu_{0,r}(\mb X)\}\cdot\left[\frac{\partial^2}{\partial r^2}\{A-e_r(\mb X)\}\right]\right).
\end{align*} 
By \eqref{eq:I1-tau-psiDer}, $\mc{I}_6$ can be rewritten as
\begin{align}
\mc{I}_6 
&= \E \left[\dot{\lambda}\{e_r(\mb X)\}\frac{A}{ e_r(\mb X)^2 }\{\tilde{e}(\mb X)-e(\mb X)\}^2\{Y-\mu_{1,r}(\mb X)\}\right] \\
	&\quad-\E\left[\dot{\lambda}\{e_r(\mb X)\}\frac{A}{e_r(\mb X)}\{\tilde{\mu}_1(\mb X)-\mu_1(\mb X)\}\{\tilde{e}(\mb X)-e(\mb X)\}\right]
\nonumber\\
&\quad -\E\left[\dot{\lambda}\{e_r(\mb X)\}\frac{1-A}{ \{1-e_r(\mb X)\}^2 }\{\tilde{e}(\mb X)-e(\mb X)\}^2\{Y-\mu_{0,r}(\mb X)\}\right] \\
	&\quad+\E\left[\dot{\lambda}\{e_r(\mb X)\}\frac{1-A}{1-e_r(\mb X)}\{\tilde{\mu}_0(\mb X)-\mu_0(\mb X)\}\{\tilde{e}(\mb X)-e(\mb X)\}\right]\nonumber\\
&\quad +\E(\dot{\lambda}\{e_r(\mb X)\}\cdot[\{\tilde{\mu}_1(\mb X)-\mu_1(\mb X)\} - \{\tilde{\mu}_0(\mb X)-\mu_0(\mb X)\}]\cdot\{\tilde{e}(\mb X)-e(\mb X)\}) \nonumber\\
&= -r\cdot\E\left[\dot{\lambda}\{e_r(\mb X)\}\frac{e(\mb X)}{ e_r(\mb X)^2 }\{\tilde{e}(\mb X)-e(\mb X)\}^2\{\tilde\mu_1(\mb X)-\mu_1(\mb X)\}\right] \\
	&\quad-\E\left[\dot{\lambda}\{e_r(\mb X)\}\frac{e(\mb X)}{e_r(\mb X)}\{\tilde{\mu}_1(\mb X)-\mu_1(\mb X)\}\{\tilde{e}(\mb X)-e(\mb X)\}\right]  \nonumber\\
&\quad+r\cdot\E\left[\dot{\lambda}\{e_r(\mb X)\}\frac{1-e(\mb X)}{ \{1-e_r(\mb X)\}^2 }\{\tilde{e}(\mb X)-e(\mb X)\}^2\{\tilde\mu_0(\mb X)-\mu_{0}(\mb X)\}\right]\\
	&\quad+\E\left[\dot{\lambda}\{e_r(\mb X)\}\frac{1-e(\mb X)}{1-e_r(\mb X)}\{\tilde{\mu}_0(\mb X)-\mu_0(\mb X)\}\{\tilde{e}(\mb X)-e(\mb X)\}\right]\nonumber\\
&\quad +\E(\dot{\lambda}\{e_r(\mb X)\}\cdot[\{\tilde{\mu}_1(\mb X)-\mu_1(\mb X)\} - \{\tilde{\mu}_0(\mb X)-\mu_0(\mb X)\}]\cdot\{\tilde{e}(\mb X)-e(\mb X)\}). 
\label{eq:I6-form}
\end{align}
For $\mc I_7$, by \eqref{eq:I1-tau-psiDer} again, we can first get 
\begin{align*}
\frac{\partial^2}{\partial r^2}\psi_r(\mb V;\phi_r)
	&=\frac{2A}{ e_r(\mb X)^3 }\{\tilde{e}(\mb X)-e(\mb X)\}^2 \{Y-\mu_{1,r}(\mb X)\}\\
    & \quad + \frac{2A}{ e_r(\mb X)^2 }\{\tilde{e}(\mb X)-e(\mb X)\} \{\tilde{\mu}_{1}(\mb X)-\mu_1(\mb X)\}\\
	&\quad -\frac{2(1-A)}{ \{1-e_r(\mb X)\}^3 }\{\tilde{e}(\mb X)-e(\mb X)\}^2\{Y-\mu_{0,r}(\mb X)\}\\
    & \quad +\frac{2(1-A)}{\{1-e_r(\mb X)\}^2}\{\tilde{e}(\mb X)-e(\mb X)\}\{\tilde{\mu}_0(\mb X)-\mu_{0}(\mb X)\}, 
\end{align*}
and so 
\begin{align}
\mc{I}_7
&=-2r\cdot\E\left[\lambda\{e_r(\mb X)\}\frac{e(\mb X)}{ e_r(\mb X)^3 }\{\tilde{e}(\mb X)-e(\mb X)\}^2 \{\tilde\mu_1(\mb X)-\mu_{1}(\mb X)\}\right]\\
	&\quad
-2\E\left[\lambda\{e_r(\mb X)\}\frac{e(\mb X)}{ e_r(\mb X)^2 }\{\tilde{e}(\mb X)-e(\mb X)\} \{\tilde{\mu}_{1}(\mb X)-\mu_1(\mb X)\}\right] \nonumber\\
&\quad +2r\cdot\E\left[\lambda\{e_r(\mb X)\}\frac{\{1-e(\mb X)\}}{ \{1-e_r(\mb X)\}^3 }\{\tilde{e}(\mb X)-e(\mb X)\}^2\{\mu_0(\mb X)-\mu_{0,r}(\mb X)\}\right]\\
	&\quad+2\E\left[\lambda\{e_r(\mb X)\}\frac{\{1-e(\mb X)\}}{\{1-e_r(\mb X)\}^2}\{\tilde{e}(\mb X)-e(\mb X)\}\{\tilde{\mu}_0(\mb X)-\mu_{0}(\mb X)\}\right]. \label{eq:I7-form}
\end{align}
Similarly, we can further express $\mc{I}_8$ as
\begin{align}
\mc{I}_8
&= \E\bigg(\left[\frac{A}{e_r(\mb X)}\{Y-\mu_{1,r}(\mb X)\}-\frac{1-A}{1-e_r(\mb X)}\{Y-\mu_{0,r}(\mb X)\}+\{\mu_{1,r}(\mb X)-\mu_{0,r}(\mb X)\}\right] \ddot{\lambda}\{e_r(\mb X)\}\{\tilde{e}(\mb X)-e(\mb X)\}^2\bigg) \nonumber\\
&\quad -\E\left[\frac{A}{ e_r(\mb X)^2 }\{\tilde{e}(\mb X)-e(\mb X)\}^2\{Y-\mu_{1,r}(\mb X)\}\dot{\lambda}\{e_r(\mb X)\}\right] \nonumber \\
& \quad -\E\left[\frac{A}{e_r(\mb X)}\{\tilde{\mu}_1(\mb X)-\mu_1(\mb X)\}\cdot \dot{\lambda}\{e_r(\mb X)\}\{\tilde{e}(\mb X)-e(\mb X)\}\right] \nonumber \\
& \quad -\E\left[\frac{1-A}{ \{1-e_r(\mb X)\}^2 }\{\tilde{e}(\mb X)-e(\mb X)\}^2\{Y-\mu_{0,r}(\mb X)\}\cdot \dot{\lambda}\{e_r(\mb X)\}\right] \nonumber \\ & \quad +\E\left[\frac{1-A}{1-e_r(\mb X)}\{\tilde{\mu}_0(\mb X)-\mu_0(\mb X)\}\cdot \dot{\lambda}\{e_r(\mb X)\}\{\tilde{e}(\mb X)-e(\mb X)\}\right] \nonumber\\
&\quad +\E(\dot{\lambda}\{e_r(\mb X)\}\cdot[\{\tilde{\mu}_1(\mb X)-\mu_1(\mb X)\} - \{\tilde{\mu}_0(\mb X)-\mu_0(\mb X)\}]\cdot\{\tilde{e}(\mb X)-e(\mb X)\})  \nonumber\\
&=-r\cdot\E\left[\frac{e(\mb X)}{e_r(\mb X)}\{\tilde\mu_1(\mb X)-\mu_{1}(\mb X)\} \cdot\ddot{\lambda}\{e_r(\mb X)\}\{\tilde{e}(\mb X)-e(\mb X)\}^2\right] \nonumber\\
	&\quad +r\cdot\E\left[\frac{1-e(\mb X)}{1-e_r(\mb X)}\{\tilde\mu_0(\mb X)-\mu_{0}(\mb X)\}  \cdot\ddot{\lambda}\{e_r(\mb X)\}\{\tilde{e}(\mb X)-e(\mb X)\}^2\right] \nonumber\\
&\quad +r\cdot\E\left[\frac{e(\mb X)}{ e_r(\mb X)^2 }\{\tilde{e}(\mb X)-e(\mb X)\}^2\{\tilde\mu_1(\mb X)-\mu_{1}(\mb X)\}\dot{\lambda}\{e_r(\mb X)\}\right] \nonumber\\
	&\quad -\E\left[\frac{e(\mb X)}{e_r(\mb X)}\{\tilde{\mu}_1(\mb X)-\mu_1(\mb X)\}\cdot \dot{\lambda}\{e_r(\mb X)\}\{\tilde{e}(\mb X)-e(\mb X)\}\right] \nonumber\\
&\quad +r\cdot\E\left[\frac{1-e(\mb X)}{ \{1-e_r(\mb X)\}^2 }\{\tilde{e}(\mb X)-e(\mb X)\}^2\{\tilde\mu_0(\mb X)-\mu_{0}(\mb X)\}\cdot \dot{\lambda}\{e_r(\mb X)\}\right] \nonumber\\
	&\quad +\E\left[\frac{1-e(\mb X)}{1-e_r(\mb X)}\{\tilde{\mu}_0(\mb X)-\mu_0(\mb X)\}\cdot \dot{\lambda}\{e_r(\mb X)\}\{\tilde{e}(\mb X)-e(\mb X)\}\right]\nonumber\\
&\quad +\E(\dot{\lambda}\{e_r(\mb X)\}\cdot[\{\tilde{\mu}_1(\mb X)-\mu_1(\mb X)\} - \{\tilde{\mu}_0(\mb X)-\mu_0(\mb X)\}]\cdot\{\tilde{e}(\mb X)-e(\mb X)\}). \label{eq:I8-form}
\end{align}
Moreover, 
\begin{align}
\mc{I}_9 
&=\E [\lambda^{(3)}\{e_r(\mb X)\}\cdot \{\tilde{e}(\mb X)-e(\mb X)\}^2\cdot \{\mu_{1,r}(\mb X)-\mu_{0,r}(\mb X)\}\{e(\mb X)-e_r(\mb X)\}] \nonumber\\
&\quad +\E (\ddot{\lambda}\{e_r(\mb X)\}\cdot\{\tilde{e}(\mb X)-e(\mb X)\}\cdot [\{\tilde{\mu}_{1}(\mb X)-\mu_{1}(\mb X)\} - \{\tilde{\mu}_{0}(\mb X)-\mu_{0}(\mb X)\}]\cdot \{e(\mb X)-e_r(\mb X)\} ) \nonumber\\
&\quad -\E[{\ddot{\lambda}}\{e_r(\mb X)\}\cdot \{\tilde{e}(\mb X)-e(\mb X)\}^2 \cdot\{\mu_{1,r}(\mb X)-\mu_{0,r}(\mb X)\}]\nonumber \\
&=-r\cdot\E [\lambda^{(3)}\{e_r(\mb X)\}\cdot \{\tilde{e}(\mb X)-e(\mb X)\}^3\cdot \{\mu_{1,r}(\mb X)-\mu_{0,r}(\mb X)\}] \nonumber\\
&\quad -r\cdot\E (\ddot{\lambda}\{e_r(\mb X)\}\cdot \{\tilde{e}(\mb X)-e(\mb X)\}^2\cdot [\{\tilde{\mu}_{1}(\mb X)-\mu_{1}(\mb X)\} \nonumber\\
	&\quad - \{\tilde{\mu}_{0}(\mb X)-\mu_{0}(\mb X)\}]) -\E[{\ddot{\lambda}}\{e_r(\mb X)\}\cdot \{\tilde{e}(\mb X)-e(\mb X)\}^2 \cdot\{\mu_{1,r}(\mb X)-\mu_{0,r}(\mb X)\}]\nonumber \\
&=-r\cdot\E [\lambda^{(3)}\{e_r(\mb X)\}\cdot \{\tilde{e}(\mb X)-e(\mb X)\}^3\cdot \{\mu_{1}(\mb X)-\mu_{0}(\mb X)\}] \nonumber\\
	&\quad - r^2\cdot \E (\lambda^{(3)}\{e_r(\mb X)\}\cdot \{\tilde{e}(\mb X)-e(\mb X)\}^3\cdot[\{\tilde\mu_{1}(\mb X)-\mu_{1}(\mb X)\}-\{\tilde\mu_{0}(\mb X)-\mu_{0}(\mb X)\}]) \nonumber\\
&\quad -r\cdot\E (\ddot{\lambda}\{e_r(\mb X)\}\cdot \{\tilde{e}(\mb X)-e(\mb X)\}^2\cdot [\{\tilde{\mu}_{1}(\mb X)-\mu_{1}(\mb X)\} - \{\tilde{\mu}_{0}(\mb X)-\mu_{0}(\mb X)\}]) \nonumber\\
&\quad -r\cdot\E [\ddot{\lambda}\{e_r(\mb X)\}\cdot \{\tilde{e}(\mb X)-e(\mb X)\}^2\cdot \{\mu_{1}(\mb X)-\mu_{0}(\mb X)\}]\nonumber\\
	&\quad  - r^2\cdot \E (\ddot{\lambda}\{e_r(\mb X)\}\cdot \{\tilde{e}(\mb X)-e(\mb X)\}^2\cdot[\{\tilde\mu_{1}(\mb X)-\mu_{1}(\mb X)\}-\{\tilde\mu_{0}(\mb X)-\mu_{0}(\mb X)\}]). \\
\mc{I}_{10} &=
\E(\ddot{\lambda}\{e_r(\mb X)\}\cdot\{\tilde{e}(\mb X)-e(\mb X)\}\cdot [ \{\tilde{\mu}_1(\mb X)-\mu_1(\mb X)\}-
\{\tilde{\mu}_0(\mb X)-\mu_0(\mb X)\}  ]\cdot\{e(\mb X)-e_r(\mb X)\} ) \nonumber\\
&\quad -\E(\dot{\lambda}\{e_r(\mb X)\}\cdot [ \{\tilde{\mu}_1(\mb X)-\mu_1(\mb X)\}-
\{\tilde{\mu}_0(\mb X)-\mu_0(\mb X)\}  ]\cdot\{\tilde{e}(\mb X)-e(\mb X)\}) \nonumber\\
&=
-r\cdot\E(\ddot{\lambda}\{e_r(\mb X)\}\cdot\{\tilde{e}(\mb X)-e(\mb X)\}^2\cdot[\{\tilde{\mu}_1(\mb X)-\mu_1(\mb X)\}-
\{\tilde{\mu}_0(\mb X)-\mu_0(\mb X)\}]) \nonumber\\
&\quad -\E(\dot{\lambda}\{e_r(\mb X)\}\cdot [ \{\tilde{\mu}_1(\mb X)-\mu_1(\mb X)\}-
\{\tilde{\mu}_0(\mb X)-\mu_0(\mb X)\}  ]\cdot\{\tilde{e}(\mb X)-e(\mb X)\}). \\
\mc{I}_{11}
&=
-\E[\ddot{\lambda}\{e_r(\mb X)\}\cdot\{\mu_{1,r}(\mb X)-\mu_{0,r}(\mb X)\}\cdot \{\tilde{e}(\mb X)-e(\mb X)\}^2] \nonumber\\
	&\quad - \E(\dot{\lambda}\{e_r(\mb X)\}\cdot [ \{\tilde{\mu}_1(\mb X)-\mu_1(\mb X)\} -   \{\tilde{\mu}_0(\mb X)-\mu_0(\mb X)\}   ]  \cdot \{\tilde{e}(\mb X)-e(\mb X)\})\nonumber\\
&=
-r\cdot\E [\ddot{\lambda}\{e_r(\mb X)\}\cdot \{\tilde{e}(\mb X)-e(\mb X)\}^2\cdot \{\mu_{1}(\mb X)-\mu_{0}(\mb X)\}] \nonumber\\
	&\quad - r^2\cdot \E (\ddot{\lambda}\{e_r(\mb X)\}\cdot \{\tilde{e}(\mb X)-e(\mb X)\}^2\cdot[\{\tilde\mu_{1}(\mb X)-\mu_{1}(\mb X)\}-\{\tilde\mu_{0}(\mb X)-\mu_{0}(\mb X)\}])\nonumber\\
&\quad-\E(\dot{\lambda}\{e_r(\mb X)\}\cdot [\{\tilde{\mu}_1(\mb X)-\mu_1(\mb X)\} - \{\tilde{\mu}_0(\mb X)-\mu_0(\mb X)\}]\cdot \{\tilde{e}(\mb X)-e(\mb X)\}). \label{eq:I9-11-form}
\end{align}
Plugging \eqref{eq:secondGate-ga}--\eqref{eq:I9-11-form} in \eqref{eq:secondGate} gives us \eqref{eq:secGate-formula}, which completes the proof.
\end{proof}

\subsubsection{Assumptions and two useful lemmata}\label{subsubapx:twolem-2018paper}

In this section, we introduce two lemmata by Chernozhukov et al. \cite{chernozhukov2018double}, which summarized the conditions for asymptotic normality of DML estimators. First, the following assumptions are needed, following the notations of Gateaux derivatives in Section \ref{subsubapx:Gate}. Suppose $C_1\geq C_0>0$ are some finite constants, $\{\delta_n\}_{n\geq 1}$ and $\{\Delta_n\}_{n\geq 1}$ are some sequences of positive sequence converging to 0 with $\delta_n\geq n^{-1/2}$. 
 
\begin{assumption}[Assumption 3.1 in Chernozhukov et al. \cite{chernozhukov2018double}]\label{assp:Assp3.1-2018paper} For $n\geq 3$, the following conditions hold.
	\begin{enumerate}[label=(\alph*)]
		\item The true parameter $\gamma_0$ obeys 
		$\E[L(\mb V;\gamma_0,\phi_0)]=0$. 
		\item The score function $g$ is linear in the form of $L(\mb V;\gamma,\phi)=\gamma D(\mb V;\phi) - N(\mb V;\phi)$.
		\item The mapping $\phi \mapsto \E[L(\mb V;\gamma,\phi)]$ is twice continuously Gateaux-differentiable. 
		\item The score $g$ obeys the Neyman orthogonality, or more generally, the Neyman $\lambda_n$ near-orthogonality condition at $(\theta_0,\phi_0)$ with respect to the nuisance realization set $\mc{T}_n\subset \mc{T}$ for 
		\begin{align*}
			\lambda_n:= \sup_{\phi\in\mc{T}_n}\big\vert	\partial_\phi \E[L(\mb V;\gamma_0,\phi_0)] [\phi-\phi_0]\big\vert\leq \delta_n n^{-1/2}.
		\end{align*} 
		\item The identification condition holds, that is, $J_0:=\E[D(\mb V;\phi_0)]$ is between $C_0$ and $C_1$. 
	\end{enumerate}
\end{assumption}

\begin{assumption}[Assumption 3.2 in Chernozhukov et al. \cite{chernozhukov2018double}]\label{assp:Assp3.2-2018paper}  For $n\geq 3$ and $q>2$, the following conditions hold.
	\begin{enumerate}[label=(\alph*)]
		\item Given a random subset $I$ of $n$ of size $n/K$, the nuisance
		parameter estimator $\hat{\phi}=\hat{\phi}((\mb V_i)_{i\in I^c})$ belongs to the realization set $\mc{T}_n$ with probability at least $1-\Delta_n$, where $\mc{T}_n$ contains $\phi_0$ and is constrained by the next conditions.
		\item The moment conditions hold:
		\begin{align*}
			m_n&:=\sup_{\phi\in\mc{T}_n}\left(\E[\big\vert L(\mb V;\gamma_0,\phi)\big\vert^q]\right)^{1/q}\leq C_1,\\
			m_n^\prime&:=\sup_{\phi\in\mc T_n}\left(\E[\big\vert D(\mb V;\phi)\big\vert^q]\right)^{1/q}\leq C_1. 
		\end{align*}
		\item The following conditions on the statistical rates $r_n$, $r^\prime_n$ and $\lambda_n^\prime$ hold:
		\begin{align*}
			r_n&:=\sup_{\phi\in\mc{T}_n}\big\vert\E[ D(\mb V;\phi)]-\E[D(\mb V;\phi_0)]\big\vert\leq \delta_n,  \\
			r_n^\prime&:=\sup_{\phi\in\mc{T}_n}\left(\E\big\vert L(\mb V;\gamma_0,\phi)-L(\mb V;\gamma_0,\phi_0)\big\vert^2\right)^{1/2}\leq \delta_n, \\
			\lambda^{\prime}_n &:= \sup_{r\in(0,1),\phi\in\mc{T}_n}
			\big\vert \partial_r^2\E[L(\mb V;\gamma_0,\phi_0+r(\phi-\phi_0))]\big\vert \leq \delta_n/\sqrt{n}.
			\\
		\end{align*}
		\item 
		The variance of the score $g$ is non-degenerate: all eigenvalues of the matrix 
			$\E[L(\mb V;\gamma_0,\phi_0)^{\otimes2}]$ are bounded from below by $C_0$. 
	\end{enumerate}
\end{assumption}

\begin{lemma}[Theorem 3.1 in Chernozhukov et al. \cite{chernozhukov2018double}]\label{thm:prop-DML-2018paper} For $n\geq 3$, suppose that Assumptions \ref{assp:Assp3.1-2018paper} and \ref{assp:Assp3.2-2018paper} hold. In addition, suppose that $\delta_n\geq n^{-1/2}$ for all $n\geq 1$. Then the DML-1 and DML-2 estimators $\hat{\gamma}^{\text{dml}-d}$ ($d=1,2$)  constructed in Algorithm \ref{alg:DMLs}, are first order equivalent and obey 
	\begin{align*}
		\sqrt{n}\sigma^{-1}(\hat{\gamma}^{\text{dml}-d}-\gamma_0)=\frac{1}{\sqrt{n}}\sum_{i=1}^{n} \bar{g}(\mb V_i)+O_p(\rho_n)\rightsquigarrow \mc N(0,1),
	\end{align*}
	uniformly for any $n$, where the size of the remainder term obeys 
	\begin{align*}
		\rho_n:=n^{-1/2}+r_n+r^\prime_n + n^{1/2}\lambda_n+n^{1/2}\lambda_n^\prime \lesssim \delta_n,
	\end{align*} where
$\bar{g}(\cdot):=-\sigma^{-1}J_0^{-1}L(\cdot;\gamma_0,\phi_0)$ is the influence function and the asymptotic variance is
	\begin{align*}
    \sigma^2:=J_0^{-1}\E[L(\mb V;\gamma_0,\phi_0)^{\otimes2}] J_0^{-1},
	\end{align*} 
	with $J_0=\E[D(\mb V;\phi_0)]$.
\end{lemma}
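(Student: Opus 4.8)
The plan is to recognize that Lemma \ref{thm:prop-DML-2018paper} is a verbatim restatement of Theorem 3.1 of Chernozhukov et al. \cite{chernozhukov2018double}, specialized to the scalar, linear-score setting $L(\mb V;\gamma,\phi)=\gamma D(\mb V;\phi)-N(\mb V;\phi)$; accordingly the proof follows their argument, and I would reproduce only the steps that make the roles of Assumptions \ref{assp:Assp3.1-2018paper} and \ref{assp:Assp3.2-2018paper} transparent. The starting point is linearity in $\gamma$: because Assumption \ref{assp:Assp3.1-2018paper}(b) makes $L$ affine in $\gamma$, each fold-specific estimating equation in Algorithm \ref{alg:DMLs} admits the closed form $\hat\gamma_k=\pr_{n,k}[N(\mb V;\hat\phi_k)]/\pr_{n,k}[D(\mb V;\hat\phi_k)]$, where $\pr_{n,k}$ denotes the empirical average over the fold $I_k$. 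Rearranging and using $\E[L(\mb V;\gamma_0,\phi_0)]=0$ (Assumption \ref{assp:Assp3.1-2018paper}(a)) gives the exact identity
\begin{align*}
\pr_{n,k}[D(\mb V;\hat\phi_k)]\,(\hat\gamma_k-\gamma_0)=-\pr_{n,k}[L(\mb V;\gamma_0,\hat\phi_k)].
\end{align*}
The task is then to show the right-hand side equals $-\pr_{n,k}[L(\mb V;\gamma_0,\phi_0)]$ up to a negligible remainder and that the denominator converges to $J_0$.

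First I would split $\pr_{n,k}[L(\mb V;\gamma_0,\hat\phi_k)]$ into the leading i.i.d. term $\pr_{n,k}[L(\mb V;\gamma_0,\phi_0)]$, a stochastic-equicontinuity term $(\pr_{n,k}-\E)[L(\mb V;\gamma_0,\hat\phi_k)-L(\mb V;\gamma_0,\phi_0)]$, and a bias term $\E[L(\mb V;\gamma_0,\hat\phi_k)]-\E[L(\mb V;\gamma_0,\phi_0)]$. The key to the equicontinuity term is cross-fitting: since $\hat\phi_k$ depends only on $(\mb V_i)_{i\in I_k^c}$, which is independent of $I_k$, conditioning on $I_k^c$ turns it into a mean-zero i.i.d. average, so a conditional Chebyshev bound controlled by the moment condition $m_n\leq C_1$ (Assumption \ref{assp:Assp3.2-2018paper}(b)) and the rate $r_n'\leq\delta_n$ (Assumption \ref{assp:Assp3.2-2018paper}(c)) yields $O_p(\delta_n n^{-1/2})$ \emph{without any Donsker requirement}. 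For the bias term I would Taylor-expand $r\mapsto\E[L(\mb V;\gamma_0,\phi_0+r(\hat\phi_k-\phi_0))]$ to second order: the first-order piece is bounded by the Neyman near-orthogonality $\lambda_n\leq\delta_n n^{-1/2}$ (Assumption \ref{assp:Assp3.1-2018paper}(d)), and the integral remainder is bounded by $\lambda_n'\leq\delta_n n^{-1/2}$ (Assumption \ref{assp:Assp3.2-2018paper}(c)), so the bias is $O(\delta_n n^{-1/2})$ as well.

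With the denominator handled via $\pr_{n,k}[D(\mb V;\hat\phi_k)]\to J_0$ (using $r_n\leq\delta_n$ and $J_0$ bounded away from $0$ by Assumption \ref{assp:Assp3.1-2018paper}(e)), these pieces combine to the linearization $\hat\gamma_k-\gamma_0=-J_0^{-1}\pr_{n,k}[L(\mb V;\gamma_0,\phi_0)]+O_p(\rho_n/\sqrt n)$ with $\rho_n\lesssim\delta_n$. Averaging over folds for DML-1 and solving the pooled equation for DML-2 both collapse to $\hat\gamma^{\text{dml-}d}-\gamma_0=-J_0^{-1}\pr_n[L(\mb V;\gamma_0,\phi_0)]+O_p(\rho_n/\sqrt n)$, which simultaneously delivers the first-order equivalence $\hat\gamma^{\text{dml-1}}-\hat\gamma^{\text{dml-2}}=o_p(n^{-1/2})$ and, after applying the Lindeberg CLT to the i.i.d. sum (non-degenerate by Assumption \ref{assp:Assp3.2-2018paper}(d)), the stated convergence $\sqrt n\,\sigma^{-1}(\hat\gamma^{\text{dml-}d}-\gamma_0)\rightsquigarrow\mc N(0,1)$ with $\sigma^2=J_0^{-1}\E[L(\mb V;\gamma_0,\phi_0)^{\otimes2}]J_0^{-1}$; the remainder is asymptotically negligible because $\rho_n\lesssim\delta_n\to0$, and uniformity over $n$ follows from the uniform-in-$n$ form of every bound invoked.

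The main obstacle is the equicontinuity term: establishing that the empirical process evaluated at the data-dependent $\hat\phi_k$ is $O_p(\delta_n n^{-1/2})$ is precisely what would ordinarily demand a Donsker (entropy) condition, and the entire purpose of cross-fitting is to replace that argument with the conditional second-moment calculation. Keeping careful track of the conditioning on $I_k^c$, the interplay between the $L^q$ moment bound and the $L^2$ rate $r_n'$, and the $\sqrt n$ scaling is where the delicacy lies; once that step is secured, the remaining manipulations are routine algebra driven by the linearity of the score and the boundedness of $J_0$.
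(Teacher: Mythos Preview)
The paper does not prove this lemma; it is simply stated as a citation of Theorem~3.1 of Chernozhukov et al.\ \cite{chernozhukov2018double} and then \emph{applied} in Section~\ref{apx:theory-DML} to prove Theorem~\ref{thm:RDR-DML} by verifying that Assumptions~\ref{assp:Assp3.1-2018paper} and~\ref{assp:Assp3.2-2018paper} hold under Conditions~\ref{cond:DMLconverge}--\ref{cond:DMLlowerB}. Your sketch goes further and correctly reconstructs the essential steps of the original Chernozhukov et al.\ argument---the closed-form fold estimator from linearity, the three-way decomposition into leading/equicontinuity/bias terms, the conditional second-moment bound that replaces the Donsker condition via cross-fitting, and the second-order Taylor expansion controlled by $\lambda_n$ and $\lambda_n'$---so there is nothing to correct.
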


\begin{lemma}[Theorem 3.2 in Chernozhukov et al. \cite{chernozhukov2018double}]\label{thm:prop2-DML-2018paper} For $n\geq 3$, suppose \ref{assp:Assp3.1-2018paper} and \ref{assp:Assp3.2-2018paper} hold, and suppose that $\delta_n\geq n^{-[(1-2/q)\wedge 1/2]}$ for all $n\geq 1$. Consider the following estimator of the asymptotic variance of $\sqrt{n}(\hat{\gamma}^{\text{dml}-d}-\gamma_0):$
	\begin{align*}
		\hat{\sigma}^2=\hat{J}_0^{-1}\frac{1}{K}\sum_{k=1}^{K}\E_{n,k}[g^2(\mb V;\hat{\gamma}^{\text{dml}-d},\hat{\phi}_{k})]\hat{J}_0^{-1},
	\end{align*}
	where 
	\begin{align*}
		\hat{J}_0 = \frac{1}{K}\sum_{k=1}^K\E_{n,k}[D(\mb V;\hat{\phi}_{k})],
	\end{align*}
	and $d=1,2$. This estimator concentrates around the true variance matrix $\sigma^2$,
	\begin{align*}
		\hat{\sigma}^2=\sigma^2+O_p(\varrho_n), \qquad \varrho_n:=n^{-[(1-2/q)\wedge 1/2]} + \lambda_n +\lambda_n^\prime \lesssim \delta_n.
	\end{align*}
	Moreover, $\hat{\sigma}^2$ can replace $\sigma^2$ in the statement of Theorem 3.1 with the size of the remainder term updated as $\rho_n=n^{-[(1-2/q)\wedge 1/2]}+r_n+r_n^\prime+n^{1/2}\lambda_n + n^{1/2}\lambda_n^\prime$.
\end{lemma}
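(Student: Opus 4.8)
The plan is to treat $\hat\sigma^2=\hat J_0^{-1}\hat M\hat J_0^{-1}$ as a smooth function of the pair $(\hat J_0,\hat M)$, where $\hat J_0=K^{-1}\sum_k\E_{n,k}[D(\mb V;\hat\phi_k)]$ estimates $J_0=\E[D(\mb V;\phi_0)]$ and $\hat M=K^{-1}\sum_k\E_{n,k}[L(\mb V;\hat\gamma^{\text{dml-}d},\hat\phi_k)^2]$ estimates $M=\E[L(\mb V;\gamma_0,\phi_0)^2]$ (here the score is $g=L$, and $\E_{n,k}$ is the empirical average over fold $I_k$). Since Assumption \ref{assp:Assp3.1-2018paper}(e) gives $J_0\in[C_0,C_1]$ and Assumption \ref{assp:Assp3.2-2018paper}(d) gives $M\geq C_0$, the map $(J,M)\mapsto J^{-1}MJ^{-1}$ is Lipschitz on a neighborhood of $(J_0,M)$; thus on the high-probability event where $\hat J_0$ stays bounded away from $0$ it suffices to bound $\hat J_0-J_0$ and $\hat M-M$ separately and transfer the rates through this map. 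I would first establish these two bounds, then assemble them, and finally deduce the studentization statement.

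For $\hat J_0-J_0$ I would split each fold into an empirical-process piece $\E_{n,k}[D(\mb V;\phi_0)]-\E[D(\mb V;\phi_0)]$ and a nuisance-replacement piece $\E_{n,k}[D(\mb V;\hat\phi_k)]-\E_{n,k}[D(\mb V;\phi_0)]$. The first is $O_p(n^{-1/2})$ by the law of large numbers on the held-out fold, with moments bounded via $m_n'\leq C_1$ (Assumption \ref{assp:Assp3.2-2018paper}(b)). For the second I would exploit cross-fitting: conditioning on $I_k^c$ renders $\hat\phi_k$ nonrandom and the observations in $I_k$ i.i.d., so the conditional mean equals $\E[D(\mb V;\hat\phi_k)]-\E[D(\mb V;\phi_0)]$, bounded by $r_n\leq\delta_n$ (Assumption \ref{assp:Assp3.2-2018paper}(c)), while the conditional variance is $O_p(n^{-1})$. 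Hence $\hat J_0-J_0=O_p(n^{-1/2}+r_n)=O_p(\delta_n)$.

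The crux is $\hat M-M$. Per fold write it as $T_1+T_2$ with $T_2=\E_{n,k}[L(\mb V;\gamma_0,\phi_0)^2]-M$ and $T_1=\E_{n,k}[L(\mb V;\hat\gamma,\hat\phi_k)^2]-\E_{n,k}[L(\mb V;\gamma_0,\phi_0)^2]$. The term $T_2$ is a centered empirical mean of $L(\mb V;\gamma_0,\phi_0)^2$, whose $(q/2)$-th moment is bounded by $m_n^2\leq C_1^2$; because $q/2$ may fall in $(1,2]$, the sharp rate is not $n^{-1/2}$ but is governed by the Marcinkiewicz--Zygmund (von Bahr--Esseen) inequality, giving $T_2=O_p\!\big(n^{-[(1-2/q)\wedge 1/2]}\big)$. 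This is exactly where the $q$-moment hypothesis enters, and it supplies the leading term of $\varrho_n$. For $T_1$ I would use $a^2-b^2=(a-b)(a+b)$ with $a=L(\mb V;\hat\gamma,\hat\phi_k)$, $b=L(\mb V;\gamma_0,\phi_0)$ and Cauchy--Schwarz: the factor $\Vert a+b\Vert$ is $O_p(1)$ by $m_n\leq C_1$, while $\Vert a-b\Vert$ is controlled using linearity $L=\gamma D-N$ in $\gamma$, its $\gamma$-component by $|\hat\gamma-\gamma_0|=O_p(n^{-1/2})$ from Lemma \ref{thm:prop-DML-2018paper} together with $\Vert D(\hat\phi_k)\Vert\lesssim m_n'$, and its nuisance component by $r_n'\leq\delta_n$ (Assumption \ref{assp:Assp3.2-2018paper}(c)); cross-fitting again lets me pass from empirical to population $L^2$ norms. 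The main obstacle is precisely obtaining the sharp $q$-dependent rate in $T_2$ and then verifying that every remaining contribution ($r_n$, $r_n'$, $n^{-1/2}$, and the Gateaux terms $\lambda_n,\lambda_n'$) is of smaller or equal order; this is where the hypothesis $\delta_n\geq n^{-[(1-2/q)\wedge1/2]}$ is invoked to absorb them, so that $\hat M-M=O_p(\varrho_n)$ with $\varrho_n=n^{-[(1-2/q)\wedge1/2]}+\lambda_n+\lambda_n'\lesssim\delta_n$.

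Assembling through the Lipschitz sandwich map yields $\hat\sigma^2-\sigma^2=O_p(\varrho_n)$. For the studentization claim I would write $\hat\sigma^{-1}=\sigma^{-1}\{1+O_p(\varrho_n)\}$ (valid since $\sigma^2\geq C_0>0$ and $\varrho_n\to0$), substitute into the linear expansion of Lemma \ref{thm:prop-DML-2018paper}, and collect terms: multiplying the $O_p(\rho_n)$ remainder of that lemma by $1+O_p(\varrho_n)$ and adding the new $O_p(\varrho_n)$ factor from studentization updates the remainder to $\rho_n=n^{-[(1-2/q)\wedge1/2]}+r_n+r_n'+n^{1/2}\lambda_n+n^{1/2}\lambda_n'$, after which Slutsky's theorem delivers the stated $\mc N(0,1)$ limit with $\hat\sigma^2$ in place of $\sigma^2$.
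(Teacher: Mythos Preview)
The paper does not prove this lemma. It is stated in Section~B.1.2 as a direct quotation of Theorem~3.2 from Chernozhukov et~al.\ (2018), alongside the companion Lemma~\ref{thm:prop-DML-2018paper} (their Theorem~3.1), and is used as a black-box tool: the proof of Theorem~\ref{thm:RDR-DML} in Section~B.2 consists entirely of verifying that Conditions~\ref{cond:DMLconverge}--\ref{cond:DMLlowerB} imply Assumptions~\ref{assp:Assp3.1-2018paper} and~\ref{assp:Assp3.2-2018paper}, after which both lemmata are invoked without further argument.

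Your proposal is therefore not comparable to anything in the paper, but it is a reasonable reconstruction of the Chernozhukov et~al.\ argument itself. The key ingredients you identify---conditioning on the training fold to decouple $\hat\phi_k$ from the evaluation sample, the Marcinkiewicz--Zygmund/von Bahr--Esseen inequality to obtain the $n^{-[(1-2/q)\wedge 1/2]}$ rate for the empirical mean of $L^2$ under only a $q$-moment, the $a^2-b^2$ factorization for the nuisance-replacement piece, and the Lipschitz sandwich map---are exactly the right ones. One small slip: your decomposition naturally produces contributions of order $r_n$, $r_n'$, and $n^{-1/2}$ in $\hat M-M$, not the $\lambda_n+\lambda_n'$ that appear in the stated $\varrho_n$; you paper over this by noting all of them are $\lesssim\delta_n$, which is enough for the final ``$\lesssim\delta_n$'' conclusion but does not quite reproduce the precise intermediate expression as written. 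In the original reference the $\lambda_n,\lambda_n'$ terms enter through a slightly different bookkeeping of the second-order expansion, but the headline rate is the same.
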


\subsection{Proof of Theorem \ref{thm:RDR-DML} {in Section \ref{sec:DML}}}

To prove Theorem \ref{thm:RDR-DML}, based on discussions in Sections \ref{subsubapx:Gate}--\ref{subsubapx:twolem-2018paper}, we only need to show that under all Conditions specified in Theorem \ref{thm:RDR-DML}, Assumptions \ref{assp:Assp3.1-2018paper} and \ref{assp:Assp3.2-2018paper} hold (and so Lemmata \ref{thm:prop-DML-2018paper} and \ref{thm:prop2-DML-2018paper} hold). 

\subsubsection{Proof of Assumption \ref{assp:Assp3.1-2018paper}}
We first verify the Neyman orthogonality condition. Given $\gamma_0= \dfrac{\E[\lambda\{e(\mb X)\}\tau(\mb X)]}{\E [\lambda\{e(\mb X)\}]}$,
 we have $\E[L(\mb V;\gamma_0,\phi_0)]=0$ and thus (a) in Assumption \ref{assp:Assp3.1-2018paper} naturally holds. Also, by Lemma \ref{lem:firstGate} {(Section \ref{subapx:prelim-DML})}, for any $\phi=(\tilde{e},\tilde{\mu}_0,\tilde{\mu}_1)\in \mc{T}_n$, the Gateaux derivative in the direction $\phi-\phi_0=(\tilde{e}-e,\tilde{\mu}_0-\mu_0,\tilde{\mu}_1-\mu_1)$ is given by
\begin{align*}
	\partial_\phi \E[L(\mb V;\theta_0,\phi_0)][\phi-\phi_0]=0.
\end{align*}
This verifies (d) in Assumption \ref{assp:Assp3.1-2018paper} with $\lambda_n=0$.

Note that $J_0=\E[D(\mb V;\phi)]=\E[\lambda\{e(\mb X)\}]$, so by Condition \ref{cond:DMLlambda} in Theorem \ref{thm:RDR-DML}, (e) in Assumption \ref{assp:Assp3.1-2018paper} naturally holds.
 
Note that (b) and (c) in Assumption \ref{assp:Assp3.1-2018paper} hold trivially. Therefore, the whole Assumption \ref{assp:Assp3.1-2018paper} holds.

\subsubsection{Proof of (b) in Assumption \ref{assp:Assp3.2-2018paper}}

We first prove the moment condition of $m_n^\prime$. Consider Taylor expansion of $\lambda\{\tilde{e}(\mb X)\}$ at $e(\mb X)$: 
\begin{align}
\lambda\{\tilde{e}(\mb X)\}&=\lambda\{e(\mb X)\}+\dot{\lambda}\{\bar{e}(\mb X)\}\{\tilde{e}(\mb X)-e(\mb X)\}, \label{eq:lambdaTaylor}
\end{align}
with $\bar{e}(\mb X)$ is between $\tilde{e}(\mb X)$ and $e(\mb X)$. Plugging \eqref{eq:lambdaTaylor} into $m_n^\prime$, and by Minkowski inequality when $q\geq 1$,
\begin{align*}
	\left(\E\left[\left\vert D(\mb V;\phi)\right\vert^q\right]\right)^{1/q}
	&=\left(\E\left\vert\lambda\{e(\mb X)\}+\{\tilde{e}(\mb X)-e(\mb X)\}[\dot{\lambda}\{\bar{e}(\mb X)\}-\dot{\lambda}\{\tilde{e}(\mb X)\}] +\dot{\lambda}\{\tilde{e}(\mb X)\}\{A-e(\mb X)\}\right\vert^q \right)^{1/q}
    \\
	&\leq\left(\E\left\vert\lambda\{e(\mb X)\}\right\vert^q\right)^{1/q} + \left(\E\left[\left\vert e(\mb X)-\tilde{e}(\mb X)\right\vert^q\cdot\left\vert \dot{\lambda}\{\bar{e}(\mb X)\}-\dot{\lambda}\{\tilde{e}(\mb X)\} \right\vert^q\right]\right)^{1/q} \nonumber\\
    & \qquad + 2\sup_\mb X\left\vert\dot{\lambda}\{\tilde{e}(\mb X)\}\right\vert \\
    &\leq\Vert\tilde{e}(\mb X)-e(\mb X)\Vert_q\cdot\Vert\dot{\lambda}\{\bar{e}(\mb X)\} -\dot{\lambda}\{\tilde{e}(\mb X)\}\Vert_q + 3\sup_\mb X\left\vert{\lambda}\{{e}(\mb X)\}\vee\dot{\lambda}\{\tilde{e}(\mb X)\}\right\vert. 
\end{align*}
Condition \ref{cond:DMLbound} implies $\sup_{\mb X}\left\vert \dot\lambda\{e(\mb X)\} \vee \dot{\lambda}\{\tilde{e}(\mb X)\} \right\vert$ has an upper bound, with Condition \ref{cond:DMLconverge} (Theorem \ref{thm:RDR-DML}), we find that there exists $C_1$ such that $m_n^\prime\leq C_1$. Notice that for ATE, this bound is just $3$. 

Second, we prove the moment condition of $m_n$. By Minkowski inequality, 
\begin{align*}
\left(\E\left[\left\vert L(\mb V;\gamma_0,\phi)\right\vert^q\right]\right)^{1/q}
	& = \left(\E\left[\left\vert \gamma_0 D(\mb V;\phi)-N(\mb V;\phi)\right\vert^q\right]\right)^{1/q}\\
	& \leq \left(\E\left[\left\{\left\vert \gamma_0 D(\mb V;\phi)\right\vert +\left\vert N(\mb V;\phi)\right\vert \right\}^q\right]\right)^{1/q} \\
	& \leq\left\vert\gamma_0\right\vert\cdot \left\{\E \left\vert D(\mb V;\phi)\right\vert^q\right\}^{1/q} + \left\{\E\left\vert N(\mb V;\phi) \right\vert^q \right \}^{1/q}.
\end{align*}
As we showed that $\left\{\E\left[\left\vert D(\mb V;\phi)\right\vert^q\right]\right\}^{1/q}$ is upper bounded, we only need to show the second term $\left\{\E\left[\left\vert N(\mb V;\phi)\right\vert^q\right]\right\}^{1/q}$ is bounded as follows. 
\begin{align}
\left\{\E\left[\left\vert N(\mb V;\phi)\right\vert^q\right]\right\}^{1/q}
	&=\bigg(\E\bigg\vert\lambda\{\tilde{e}(\mb X)\} \left[\frac{A}{\tilde{e}(\mb X)}\{Y-\tilde{\mu}_1(\mb X)\} -\frac{1-A}{1-\tilde{e}(\mb X)}\{Y-\tilde{\mu}_0(\mb X)\}+\{\tilde{\mu}_1(\mb X)-\tilde{\mu}_0(\mb X)\}\right] \\
    & \qquad + \dot{\lambda}\{\tilde{e}(\mb X)\}\{\tilde{\mu}_1(\mb X)-\tilde{\mu}_0(\mb X)\}\{A-\tilde{e}(\mb X)\}\bigg\vert^q\bigg)^{1/q}  \nonumber\\
	& \leq \mc{I}_{12}+\mc{I}_{13}. \label{eq:I12+13}
\end{align}
where we denote 
\begin{align*}
	\mc{I}_{12}& = \left(\E\left\vert\lambda\{\tilde{e}(\mb X)\} \left[\frac{A}{\tilde{e}(\mb X)}\{Y-\tilde{\mu}_1(\mb X)\} - \frac{1-A}{1-\tilde{e}(\mb X)}\{Y-\tilde{\mu}_0(\mb X)\}\right]\right\vert^q\right)^{1/q},\text{ and } \\
	\mc{I}_{13}& =\left(\E\left\vert\lambda\{\tilde{e}(\mb X)\}\{\tilde{\mu}_1(\mb X)-\tilde{\mu}_0(\mb X)\}+ \dot{\lambda}\{\tilde{e}(\mb X)\}\{\tilde{\mu}_1(\mb X)-\tilde{\mu}_0(\mb X)\}\{A-\tilde{e}(\mb X)\}\right\vert^q\right)^{1/q}. 
\end{align*}
$\mc{I}_{12}$ can be rewritten, by Minkowski inequality, as
\begin{align*}
	\mc{I}_{12}
	& = \bigg(\E\bigg\{\lambda\{\tilde{e}(\mb X)\}^q \cdot \bigg[\frac{A}{\tilde{e}(\mb X)}\{Y-\mu_1(\mb X)\} \\
	&\quad\quad\quad\quad
	+\frac{A}{\tilde{e}(\mb X)}\{\mu_1(\mb X)-\tilde{\mu}_1(\mb X)\} - \frac{1-A}{1-\tilde{e}(\mb X)}\{Y-\mu_0(\mb X)\}- \frac{1-A}{1-\tilde{e}(\mb X)}\{\mu_0(\mb X)-\tilde{\mu}_0(\mb X)\}\bigg]^q\bigg\}\bigg)^{1/q}\\
	& \leq \left(\E \left[\lambda\{\tilde{e}(\mb X)\} \frac{A}{\tilde{e}(\mb X)}\{Y-\mu_1(\mb X)\}\right]^q\right)^{1/q}
	+\left(\E\left[\lambda\{\tilde{e}(\mb X)\}\frac{1-A}{1-\tilde{e}(\mb X)}\{Y-\mu_0(\mb X)\}\right]^q \right)^{1/q}\\
	&\quad + \left(\E \left[\lambda\{\tilde{e}(\mb X)\} \frac{A}{\tilde{e}(\mb X)}\{\mu_1(\mb X)-\tilde{\mu}_1(\mb X)\}\right]^q\right)^{1/q} + 
	\left(\E\left[\lambda\{\tilde{e}(\mb X)\} \frac{1-A}{1-\tilde{e}(\mb X)} \{\mu_0(\mb X)-\tilde{\mu}_0(\mb X)\}\right]^q\right)^{1/q}\\
	&\leq  \sup_{\mb X}\left\vert \frac{\lambda\{\tilde{e}(\mb X)\}}{\tilde{e}(\mb X)}\vee \frac{\lambda\{\tilde{e}(\mb X)\}}{1-\tilde{e}(\mb X)} \right\vert \cdot \left(\Vert A\{Y-\mu_1(\mb X)\}\Vert_q+\Vert (1-A)\{Y-\mu_0(\mb X)\}\Vert_q + \sum_{a=0}^1 \Vert \mu_a(\mb X)-\tilde{\mu}_a(\mb X)\Vert_q\right). 
\end{align*}
Plugging the Taylor expansion \eqref{eq:lambdaTaylor} into $\mc{I}_{13}$, by Minkowski inequality, we can rewrite $\mc{I}_{13}$ as
\begin{align*}
\mc{I}_{13}	& = \big(\E\big\vert\lambda\{e(\mb X)\}\cdot\{\tilde{\mu}_1(\mb X)-\tilde{\mu}_0(\mb X)\}+
	\{\tilde{\mu}_1(\mb X)-\tilde{\mu}_0(\mb X)\}\cdot\tilde{e}(\mb X) \cdot[\dot{\lambda}\{\bar{e}(\mb X)\} - \dot{\lambda}\{\tilde{e}(\mb X)\}] \\
	&  \quad - \{\tilde{\mu}_1(\mb X)-\tilde{\mu}_0(\mb X)\}\cdot e(\mb X)\cdot[\dot{\lambda}\{\bar{e}(\mb X)\}-\dot{\lambda}\{\tilde{e}(\mb X)\}] + \{\tilde{\mu}_1(\mb X)-\tilde{\mu}_0(\mb X)\}\cdot\dot{\lambda}\{\tilde{e}(\mb X)\}\cdot\{A-e(\mb X)\}\big\vert^q\big)^{1/q} 
	\\
	&\leq\left(\E\big\vert\lambda\{e(\mb X)\}\cdot[\{\tilde{\mu}_1(\mb X)-\tilde{\mu}_0(\mb X)\}-\{\mu_1(\mb X)-\mu_0(\mb X)\}]\big\vert^q\right)^{1/q} \\ 
	&\quad 
    +\left(\E \big\vert\{\tilde{\mu}_1(\mb X)-\tilde{\mu}_0(\mb X)\}\cdot e(\mb X)\cdot[\dot{\lambda}\{\bar{e}(\mb X)\}-\dot{\lambda}\{\tilde{e}(\mb X)\}] \big\vert^q\right)^{1/q} \\
    & \quad +\left(\E \big\vert\{\tilde{\mu}_1(\mb X)-\tilde{\mu}_0(\mb X)\}\cdot\dot{\lambda}\{\tilde{e}(\mb X)\}\cdot\{A-e(\mb X)\}\big\vert^q\right)^{1/q} \\
	&\leq  \sup_{\mb X}\left\vert \lambda\{e(\mb X)\}\right\vert\cdot \sum_{a=0}^1  \Vert\tilde{\mu}_a(\mb X)-\mu_a(\mb X)\Vert_q
	+\sup_{\mb X}\left\vert \lambda\{e(\mb X)\} \right\vert\cdot\Vert\tau(\mb X)\Vert_q \\
    & \quad
    + \sup_{\mb X}\left\vert \tilde{e}(\mb X)\right\vert\cdot \left(\sum_{a=0}^1  \Vert \tilde{\mu}_a(\mb X)-\mu_a(\mb X) \Vert_q + \Vert \tau(\mb X) \Vert_q\right) \cdot\Vert \dot{\lambda}\{\bar{e}(\mb X)\}-\dot{\lambda}\{\tilde{e}(\mb X)\Vert_q
    \\
	& \quad +\sup_{\mb X}\left\vert \dot{\lambda}\{\tilde{e}(\mb X)\} \right\vert\cdot\sup_{\mb X}\left\vert A-e(\mb X) \right\vert\cdot  \left(\sum_{a=0}^1  \Vert \tilde{\mu}_a(\mb X)-\mu_a(\mb X) \Vert_q + \Vert \tau(\mb X) \Vert_q\right) \\
	& \leq 
	\left( \Vert \dot{\lambda}\{\bar{e}(\mb X)\}-\dot{\lambda}\{\tilde{e}(\mb X)\Vert_q 
	+ 2\sup_{\mb X}\left\vert {\lambda}\{e(\mb X)\}\vee \dot{\lambda}\{\tilde{e}(\mb X)\} \right\vert\right) \cdot \left(\sum_{a=0}^1  \Vert\tilde{\mu}_a(\mb X)-\mu_a(\mb X)\Vert_q+ \Vert \tau(\mb X) \Vert_q\right). 
\end{align*}
Condition \ref{cond:DMLbound} implies $\sup_{\mb X}\left\vert {\lambda}\{e(\mb X)\}\vee \dot{\lambda}\{\tilde{e}(\mb X)\} \right\vert$ has an upper bound. According to Condition \ref{cond:DMLconverge} (Theorem \ref{thm:RDR-DML}), Conditions \ref{cond:DMLgamma} and \ref{cond:DML-U} and inequality \eqref{eq:I12+13}, there is a constant $C_2$ such that
$\mc{I}_{12}+\mc{I}_{13}\leq C_2$. 
Therefore, it is easy to know, if we choose $C_1'=C_1\vee C_2$, we have $m_n\leq C_1'$ and $m_n'\leq C_1'$, and so (b) in  Assumption \ref{assp:Assp3.2-2018paper} holds.

\subsubsection{Proof of (c) in Assumption \ref{assp:Assp3.2-2018paper}}
We first bound $r_n$. By Taylor expansion,
\begin{align*}
\left\vert \E[ D(\mb V;\phi)]-\E[D(\mb V;\phi_0)]\right\vert
	= \left\vert \E [\lambda\{{e}(\mb X)\}+ \dot{\lambda}\{\bar{e}(\mb X)\}\{e(\mb X)-\tilde{e}(\mb X)\}-\lambda\{e(\mb X)\}]\right\vert 
    \preceq \Vert\tilde{e}(\mb X)-e(\mb X) \Vert_2,
\end{align*}
by Conditions \ref{cond:DMLbound} and \ref{cond:DMLLips}. 

Next, for $r_n'$, we want to show
\begin{align}
\left(\E\left\vert L(\mb V;\gamma_0,\phi)-L(\mb V;\gamma_0,\phi_0)\right\vert ^2\right)^{1/2} 
& = \Vert  \gamma_0\{D(\mb V;\phi)-D(\mb V;\phi_0)\}-\{N(\mb V;\phi)-N(\mb V;\phi_0)\}  \Vert_2 \nonumber \\
& \leq \left\vert\gamma_0\right\vert \cdot \mc{I}_{14}+\mc{I}_{15}, \label{eq:r_n_prime}
\end{align}
where 
\begin{align*}
	\mc{I}_{14} = \Vert  D(\mb V;\phi)-D(\mb V;\phi_0) \Vert_2 \text{ and }
	\mc{I}_{15} = \Vert N(\mb V;\phi)-N(\mb V;\phi_0) \Vert_2.
\end{align*}
For $\mc{I}_{14}$, we bound it by
\begin{align}
\mc{I}_{14}
&= \Vert \lambda\{\tilde{e}(\mb X)\}-\lambda\{e(\mb X)\} - \dot{\lambda}\{\tilde{e}(\mb X)\}\{A-\tilde{e}(\mb X)\}+\dot{\lambda}\{e(\mb X)\}\{A-e(\mb X)\} \Vert_2 \nonumber  \\
& = \Vert \dot\lambda\{\bar{e}(\mb X)\}\{\tilde e(\mb X)-e(\mb X)\} - \dot{\lambda}\{\tilde{e}(\mb X)\}\{A-\tilde{e}(\mb X)\}+\dot{\lambda}\{e(\mb X)\}\{A-e(\mb X)\} \Vert_2 \nonumber \\
& = \Vert \dot\lambda\{\bar{e}(\mb X)\}\{\tilde e(\mb X)-e(\mb X)\} - A\{\dot{\lambda}\{\tilde{e}(\mb X)\}-\dot{\lambda}\{e(\mb X)\}\} + \dot{\lambda}\{\tilde{e}(\mb X)\}\tilde e(\mb X) - \dot{\lambda}\{e(\mb X)\}e(\mb X) \Vert_2\nonumber   \\
& = \Vert \dot\lambda\{\bar{e}(\mb X)\}\{\tilde e(\mb X)-e(\mb X)\} - A\ddot\lambda\{\bar e(\mb X)\}\{\tilde e(\mb X)-e(\mb X)\} + \dot{\lambda}\{\tilde{e}(\mb X)\}\{\tilde e(\mb X)-e(\mb X)\}\nonumber \\
& \quad + e(\mb X)\{\dot{\lambda}\{\tilde e(\mb X)\}-\dot{\lambda}\{e(\mb X)\}\} \Vert_2\nonumber \\
& \preceq \Vert \tilde e(\mb X)-e(\mb X)\Vert_2.
 \label{eq:I14}
\end{align}
For $\mc{I}_{15}$, we first decompose it as
\begin{align*}
	\mc{I}_{15}
	& =\bigg\Vert  \lambda\{\tilde{e}(\mb X)\}\left[\frac{A}{\tilde{e}(\mb X)}\{Y-\tilde{\mu}_1(\mb X)\} - \frac{1-A}{1-\tilde{e}(\mb X)}\{Y-\tilde{\mu}_0(\mb X)\}+\{\tilde{\mu}_1(\mb X)-\tilde{\mu}_0(\mb X)\}\right]\\
	& \qquad-\lambda\{e(\mb X)\}\left[\frac{A}{e(\mb X)}\{Y-\mu_1(\mb X)\} - \frac{1-A}{1-e(\mb X)}\{Y-\mu_0(\mb X)\}+\{\mu_1(\mb X)-\mu_0(\mb X)\}\right]  \\
	& \qquad+\dot{\lambda}\{\tilde{e}(\mb X)\}\{\tilde{\mu}_1(\mb X)-\tilde{\mu}_0(\mb X)\}\{A-\tilde{e}(\mb X)\} -\dot{\lambda}\{e(\mb X)\}\{\mu_1(\mb X)-\mu_0(\mb X)\}\{A-e(\mb X)\}\bigg\Vert_2 \\
	& \leq \mc{I}_{16} +\mc{I}_{17} + \mc{I}_{18} +\mc{I}_{19}+\mc{I}_{20},
\end{align*}
where we define
\begin{align*}
    \mc{I}_{16}&= \left\Vert \left[ \frac{\lambda\{\tilde{e}(\mb X)\}}{\tilde{e}(\mb X)} - \frac{\lambda\{e(\mb X)\}}{e(\mb X)} \right] A\{Y-\mu_1(\mb X)\}  \right\Vert_2,\\
	\mc{I}_{17}&= \left\Vert  \left[ \frac{\lambda\{\tilde{e}(\mb X)\}}{1-\tilde{e}(\mb X)} - \frac{\lambda\{e(\mb X)\}}{1-e(\mb X)} \right] (1-A)\{Y-\mu_0(\mb X)\}  \right\Vert_2,\\
	\mc{I}_{18} & = \left\Vert \frac{\lambda\{\tilde{e}(\mb X)\}}{\tilde{e}(\mb X)} A \{ \tilde{\mu}_1(\mb X)-\mu_1(\mb X) \} \right\Vert_2+\left\Vert \frac{\lambda\{\tilde{e}(\mb X)\}}{1-\tilde{e}(\mb X)} (1-A) \{ \tilde{\mu}_0(\mb X)-\mu_0(\mb X) \} \right\Vert_2,\\
	\mc{I}_{19} &= \Vert \lambda\{\tilde{e}(\mb X)\}\{\tilde{\mu}_1(\mb X)-\tilde{\mu}_0(\mb X)\}
	-\lambda\{e(\mb X)\}\{\mu_1(\mb X)-\mu_0(\mb X)\} \Vert_2, \text{ and }\\
	\mc{I}_{20} &= \Vert \dot{\lambda}\{\tilde{e}(\mb X)\}\{\tilde{\mu}_1(\mb X)-\tilde{\mu}_0(\mb X)\}\{A-\tilde{e}(\mb X)\}  
	-\dot{\lambda}\{e(\mb X)\}\{\mu_1(\mb X)-\mu_0(\mb X)\}\{A-e(\mb X)\} \Vert_2.
\end{align*}
We can bound $\mc{I}_{16}$ by
\begin{align}
\mc{I}_{16} & 
\leq \sup_{\mb X}\left\vert\frac{{\lambda\{\tilde{e}(\mb X)\}} / {\tilde{e}(\mb X)}-{\lambda\{e(\mb X)\}} / {e(\mb X)} }{\tilde{e}(\mb X)-e(\mb X)}\right\vert 
\cdot \left(\E [A\{Y-\mu_1(\mb X)\}^2\{\tilde{e}(\mb X)-e(\mb X)\}^2]\right)^{1/2} \nonumber\\
&\leq \sup_{\mb X}\left\vert \frac{{\lambda\{\tilde{e}(\mb X)\}} / {\tilde{e}(\mb X)}-{\lambda\{e(\mb X)\}} / {e(\mb X)} }{\tilde{e}(\mb X)-e(\mb X)}  \right\vert\cdot 
\sup_{\mb X}\left(\E[A\{Y-\mu_1(\mb X)\}^2\mid \mb X]\right)^{1/2}\cdot \Vert \tilde{e}(\mb X)-e(\mb X) \Vert_2 \nonumber\\
&\preceq  \Vert \tilde{e}(\mb X)-e(\mb X) \Vert_2, \label{eq:I16}
\end{align}
by Conditions \ref{cond:DML-U} and \ref{cond:DMLLips} in Theorem \ref{thm:RDR-DML} {(Section \ref{sec:DML})}. Similarly, we can get the upper bound of $\mc{I}_{17}$, $\mc{I}_{18}$ and $\mc{I}_{19}$ using Conditions \ref{cond:DMLbound}, \ref{cond:DMLLips}, \ref{cond:DML-U} and \ref{cond:DMLgamma}, i.e., 

\begin{align}
\mc{I}_{17} 
&\leq\sup_{\mb X}\left\vert \frac{{\lambda\{\tilde{e}(\mb X)\}}/\{1-\tilde{e}(\mb X)\} - {\lambda\{e(\mb X)\}} / \{1-e(\mb X)\}}{\tilde{e}(\mb X)-e(\mb X)} \right\vert \nonumber\\
& \qquad \times \sup_{\mb X}\left(\E[(1-A)\{Y-\mu_0(\mb X)\}^2\mid \mb X]\right)^{1/2}\cdot \Vert \tilde{e}(\mb X)-e(\mb X) \Vert_2 \nonumber \\ 
& \preceq\Vert \tilde{e}(\mb X)-e(\mb X) \Vert_2, \label{eq:I17}  \\ 
\mc{I}_{18}
&\leq \sup_{\mb X}\left\vert \frac{\lambda\{\tilde{e}(\mb X)\}}{\tilde{e}(\mb X)}\vee\frac{\lambda\{\tilde{e}(\mb X)\}}{1-\tilde{e}(\mb X)}\right\vert\cdot\sum_{a=0}^1  \Vert \tilde{\mu}_a(\mb X)-\mu_a(\mb X)\Vert_2 \preceq \sum_{a=0}^1  \Vert \tilde{\mu}_a(\mb X)-\mu_a(\mb X)\Vert_2, \text{  and  }
  \label{eq:I18} \\
\mc{I}_{19} 
& \leq \Vert\lambda\{\tilde{e}(\mb X)\}\cdot[\{\tilde{\mu}_1(\mb X)-\tilde{\mu}_0(\mb X)\}-\{\mu_1(\mb X)-\mu_0(\mb X)\}]\Vert_2 + \Vert [\lambda\{\tilde{e}(\mb X)\}-\lambda\{e(\mb X)\}] \cdot\{\mu_1(\mb X)-\mu_0(\mb X)\}\Vert_2  \nonumber\\
& \leq  \sup_{\mb X}\left\vert \lambda\{\tilde{e}(\mb X)\} \right\vert\cdot \sum_{a=0}^1 \Vert \tilde{\mu}_a(\mb X)-\mu_a(\mb X) \Vert_2 + \sup_{\mb X}\left\vert \tau(\mb X)\right\vert\cdot \sup_{\mb X}\left\vert \dot\lambda\{\bar e(\mb X)\}\right\vert\cdot \Vert \tilde{e}(\mb X)-e(\mb X) \Vert_2  \nonumber\\
&\preceq \sum_{a=0}^1 \Vert \tilde{\mu}_a(\mb X)-\mu_a(\mb X) \Vert_2+ \Vert \tilde{e}(\mb X)-e(\mb X) \Vert_2. \label{eq:I19}
\end{align}

Similarly, we bound $\mc{I}_{20}$ by
\begin{align}
\mc{I}_{20} 
& = \Vert A\cdot[\dot{\lambda}\{\tilde{e}(\mb X)\}\cdot\{\tilde{\mu}_1(\mb X)-\tilde{\mu}_0(\mb X)\}-\dot{\lambda}\{e(\mb X)\}\cdot\{\mu_1(\mb X)-\mu_0(\mb X)\}]\nonumber\\
& \quad
-[\dot{\lambda}\{\tilde{e}(\mb X)\}\cdot\{\tilde{\mu}_1(\mb X)-\tilde{\mu}_0(\mb X)\}\cdot\tilde{e}(\mb X)-\dot{\lambda}\{e(\mb X)\}\cdot\{\mu_1(\mb X)-\mu_0(\mb X)\}\cdot e(\mb X)] \Vert_2 \nonumber\\  
&\leq \Vert  \dot{\lambda}\{\tilde{e}(\mb X)\}\{\tilde{\mu}_1(\mb X)-\tilde{\mu}_0(\mb X)\}-\dot{\lambda}\{e(\mb X)\}\{\mu_1(\mb X)-\mu_0(\mb X)\}\Vert_2 \nonumber\\
& \quad + \Vert \dot{\lambda}\{\tilde{e}(\mb X)\} \cdot \{\tilde{\mu}_1(\mb X)-\tilde{\mu}_0(\mb X)\}\cdot \tilde{e}(\mb X)-  \dot{\lambda}\{e(\mb X)\} \cdot \{\mu_1(\mb X)-\mu_0(\mb X)\}\cdot e(\mb X) \Vert_2  \nonumber\\ 
&\leq \Vert\dot{\lambda}\{\tilde{e}(\mb X)\}\cdot [\{\tilde{\mu}_1(\mb X)-\tilde{\mu}_0(\mb X)\}-\{\mu_1(\mb X)-\mu_0(\mb X)\}] \Vert_2 + \Vert [\dot{\lambda}\{\tilde{e}(\mb X)\}-\dot{\lambda}\{e(\mb X)\}] \cdot \{\mu_1(\mb X)-\mu_0(\mb X)\} \Vert_2 \nonumber\\
&\quad + \Vert \dot{\lambda}\{\tilde{e}(\mb X)\} \cdot \tilde{e}(\mb X) \cdot[\{\tilde{\mu}_1(\mb X)-\tilde{\mu}_0(\mb X)\}-\{\mu_1(\mb X)-\mu_0(\mb X)\}]\Vert_2 \nonumber \\
& \qquad +\Vert
[\dot{\lambda}\{e(\mb X)\} e(\mb X)-\dot{\lambda}\{\tilde{e}(\mb X)\}\tilde{e}(\mb X)] \cdot \tau(\mb X)\Vert_2  \nonumber\\
& \leq
2\sup_\mb X\left\vert \dot{\lambda}\{\tilde{e}(\mb X)\}\vee \dot{\lambda}\{{e}(\mb X)\} \right\vert\cdot\sum_{a=0}^1  \Vert \tilde{\mu}_a(\mb X)-\mu_a(\mb X)\Vert_2 +
\sup_{\mb X}\left\vert \tau(\mb X)\right\vert \cdot \Vert \dot\lambda\{\tilde{e}(\mb X)\}-\dot\lambda\{e(\mb X)\} \Vert_2  \nonumber\\
&\quad+ \Vert \tau(\mb X) e(\mb X)\cdot[\dot{\lambda}\{e(\mb X)\} -\dot{\lambda}\{\tilde{e}(\mb X)\}] \Vert_2 +
\Vert \tau(\mb X) \dot{\lambda}\{\tilde{e}(\mb X)\}\cdot\{\tilde{e}(\mb X)-e(\mb X)\} \Vert_2  \nonumber\\
& \preceq \sum_{a=0}^1 \Vert\tilde{\mu}_a(\mb X)-\mu_a(\mb X) \Vert_2 + \Vert \tilde{e}(\mb X)-e(\mb X) \Vert_2 + \Vert \dot\lambda\{\tilde{e}(\mb X)\}-\dot\lambda\{e(\mb X)\} \Vert_2 \nonumber \\
& \preceq \sum_{a=0}^1 \Vert\tilde{\mu}_a(\mb X)-\mu_a(\mb X) \Vert_2 + \Vert \tilde{e}(\mb X)-e(\mb X) \Vert_2. 
\label{eq:I20}
\end{align}
From \eqref{eq:I16}--\eqref{eq:I20}, 
\begin{align}
	\mc{I}_{15} \preceq \sum_{a=0}^1 \Vert\tilde{\mu}_a(\mb X)-\mu_a(\mb X) \Vert_2 + \Vert  \tilde{e}(\mb X)-e(\mb X) \Vert_2.  \label{eq:I15-upbound}
\end{align}
Furthermore, by Conditions \ref{cond:DML-U} and \ref{cond:DMLgamma} as well as plugging \eqref{eq:I14} and \eqref{eq:I15-upbound} into \ref{eq:r_n_prime}, we get an upper bound on $r_n^\prime$ as $r_n^\prime \preceq\left\vert \gamma_0\right\vert \cdot \Vert \tilde{e}(\mb X)-e(\mb X) \Vert_2 +\sum_{a=0}^1 \Vert \tilde{\mu}_a(\mb X)-\mu_a(\mb X)\Vert_2.$ 

Finally, we bound $\lambda^\prime_n$ Assumption \ref{assp:Assp3.2-2018paper}(c) by Lemma \ref{lem:secondGate} and Cauchy-Schwarz inequality:

\begin{align*}
	& \left\vert \gamma_0 \E\left[\frac{\partial^2 }{\partial^2}D(\mb V;\phi)\right] - \E\left[\frac{\partial^2 }{\partial^2}N(\mb V;\phi)\right] \right\vert \\
 & \leq 
	2r\cdot \sup_{\mb X}\left\vert e(\mb X) \right\vert\cdot\sup_{\mb X}\left\vert \tilde{e}(\mb X)-e(\mb X)\right\vert \cdot \sup_\mb X\left\vert \frac{\dot{\lambda}\{e_r(\mb X)\}}{e_r(\mb X)^2} \right\vert \cdot 
	\Vert \tilde{e}(\mb X)-e(\mb X)\Vert_2\cdot  \Vert \tilde{\mu}_1(\mb X)-\mu_1(\mb X)\Vert_2\\
	&\quad + 2r \cdot \sup_{\mb X}\left\vert 1-e(\mb X) \right\vert\cdot\sup_{\mb X}\left\vert \tilde{e}(\mb X)-e(\mb X)\right\vert \cdot \sup_\mb X\left\vert \frac{\dot{\lambda}\{e_r(\mb X)\}}{\{1-e_r(\mb X)\}^2} \right\vert \cdot \Vert \tilde{e}(\mb X)-e(\mb X) \Vert_2 \cdot \Vert \tilde{\mu}_0(\mb X)-\mu_0(\mb X) \Vert_2 \\
	& \quad + 2\cdot \sup_{\mb X}\left\vert 1-e(\mb X) \right\vert\cdot\sup_\mb X \left\vert\frac{\dot{\lambda}\{e_r(\mb X)\}}{1-e_r(\mb X)}  \right\vert \cdot \Vert \tilde{e}(\mb X)-e(\mb X)\Vert_2 \cdot \Vert\tilde{\mu}_0(\mb X)-\mu_0(\mb X)\Vert_2  \\
	&\quad+2 \cdot \sup_{\mb X}\left\vert e(\mb X) \right\vert \cdot\sup_\mb X \left\vert  \frac{\dot{\lambda}\{e_r(\mb X)\}}{e_r(\mb X)} \right\vert   \cdot  \Vert \tilde{e}(\mb X)-e(\mb X)\}\Vert_2\cdot   \Vert \tilde{\mu}_1(\mb X)-\mu_1(\mb X)\Vert_2  \\
	&\quad+3r\cdot \sup_{\mb X}\left\vert \tilde{e}(\mb X)-e(\mb X) \right\vert\cdot \sup_\mb X \left\vert \ddot{\lambda}\{e_r(\mb X)\} \right\vert   \cdot  \Vert \tilde{e}(\mb X)-e(\mb X)\Vert_2\cdot  \Vert\tilde{\mu}_1(\mb X)-\mu_1(\mb X)\Vert_2\\
	&\quad+3r\cdot \sup_{\mb X}\left\vert \tilde{e}(\mb X)-e(\mb X) \right\vert\cdot \sup_\mb X\left\vert \ddot{\lambda}\{e_r(\mb X)\} \right\vert\cdot   \Vert\tilde{e}(\mb X)-e(\mb X)\Vert_2 \cdot \Vert \tilde{\mu}_0(\mb X)-\mu_0(\mb X)\Vert_2 \\
	&\quad+r \cdot \sup_{\mb X}\left\vert e(\mb X) \right\vert\cdot\sup_{\mb X}\left\vert \tilde{e}(\mb X)-e(\mb X)\right\vert \cdot \sup_\mb X\left\vert\frac{\ddot{\lambda}\{e_r(\mb X)\}}{e_r(\mb X)} \right\vert\cdot \Vert \tilde{e}(\mb X)-e(\mb X)\Vert_2   \cdot \Vert\tilde{\mu}_1(\mb X)-\mu_1(\mb X)\Vert_2 \\
	&\quad+r\cdot \sup_{\mb X}\left\vert 1-e(\mb X) \right\vert\cdot\sup_{\mb X}\left\vert \tilde{e}(\mb X)-e(\mb X)\right\vert \cdot  \sup_\mb X \left\vert \frac{\ddot{\lambda}\{e_r(\mb X)\}}{1-e_r(\mb X)} \right\vert \cdot \big\Vert  \tilde{e}(\mb X)-e(\mb X)\big\Vert_2  \cdot \big\Vert \tilde{\mu}_0(\mb X)-\mu_0(\mb X) \big\Vert_2\\
	&\quad+2r \cdot \sup_{\mb X}\left\vert e(\mb X) \right\vert\cdot\sup_{\mb X}\left\vert \tilde{e}(\mb X)-e(\mb X)\right\vert \cdot \sup_\mb X \left\vert \frac{\lambda\{e_r(\mb X)\} }{ e_r(\mb X)^3 } \right\vert   \cdot  \Vert \tilde{e}(\mb X)-e(\mb X)\Vert_2 \cdot \Vert\tilde{\mu}_1(\mb X)-\mu_1(\mb X) \Vert_2\\
	&\quad+2\cdot \sup_{\mb X}\left\vert e(\mb X) \right\vert\cdot   \sup_\mb X \left\vert  \frac{\lambda\{e_r(\mb X)\} }{ e_r(\mb X)^2 } \right\vert   \cdot \Vert \tilde{e}(\mb X)-e(\mb X)\Vert_2\cdot \Vert \tilde{\mu}_1(\mb X)-\mu_1(\mb X)\Vert_2\\
	&\quad+2r \cdot \sup_{\mb X}\left\vert 1-e(\mb X) \right\vert\cdot\sup_{\mb X}\left\vert \tilde{e}(\mb X)-e(\mb X)\right\vert \cdot   \sup_\mb X \left\vert \frac{\lambda\{e_r(\mb X)\}}{ \{1-e_r(\mb X)\}^3 }  \right\vert \cdot \Vert \tilde{e}(\mb X)-e(\mb X)\Vert_2  \cdot \Vert \tilde{\mu}_0-\mu_0(\mb X)\Vert_2\\
	&\quad+2\cdot \sup_{\mb X}\left\vert 1-e(\mb X) \right\vert \cdot \sup_\mb X \left\vert \frac{\lambda\{e_r(\mb X)\} }{ \{1-e_r(\mb X)\}^2 } \right\vert\cdot \Vert \tilde{e}(\mb X)-e(\mb X)\Vert_2 \cdot \Vert \tilde{\mu}_0(\mb X)-\mu_0(\mb X)\Vert_2 \\
	& \quad+r^2\cdot \sup_{\mb X}\left\vert \tilde{e}(\mb X)-e(\mb X)\right\vert^2 \cdot  \sup_\mb X \left\vert \lambda^{(3)}\{e_r(\mb X)\} \right\vert  \cdot  \Vert \tilde{e}(\mb X)-e(\mb X)\Vert_2 \cdot \Vert \tilde{\mu}_{1}(\mb X)-\mu_1(\mb X)\Vert_2\\
	&\quad+r^2\cdot \sup_{\mb X}\left\vert \tilde{e}(\mb X)-e(\mb X)\right\vert^2\cdot   \sup_\mb X \left\vert \lambda^{(3)}\{e_r(\mb X)\}\right\vert  \cdot   \Vert \tilde{e}(\mb X)-e(\mb X)\Vert_2 \cdot \Vert\tilde{\mu}_{0}(\mb X)-\mu_0(\mb X) \Vert_2\\
	& \quad+\bigg\vert  -\gamma_0\cdot \E\left[r\cdot \lambda^{(3)}\{e_r(\mb X)\}\{\tilde{e}(\mb X)-e(\mb X)\}^3+\ddot{\lambda}\{e_r(\mb X)\}\{\tilde{e}(\mb X)-e(\mb X)\}^2\right]  \\
	& \qquad + \E\left(\left[r\cdot \lambda^{(3)}\{e_r(\mb X)\}\{\tilde{e}(\mb X)-e(\mb X)\}^3 +\ddot{\lambda}\{e_r(\mb X)\}\{\tilde{e}(\mb X)-e(\mb X)\}^2\right] \tau(\mb X)\right)\bigg\vert,
\end{align*}
 which gives that
\begin{align*}
    & \left\vert \gamma_0 \E\left[\frac{\partial^2 }{\partial^2}D(\mb V;\phi)\right] - \E\left[\frac{\partial^2 }{\partial^2}N(\mb V;\phi)\right] \right\vert \\
	& \preceq \Vert \tilde{e}(\mb X)-e(\mb X)\Vert_2\cdot \sum_{a=0}^1 \Vert\mu_{a}(\mb X)-\mu_a(\mb X) \Vert_2 \\
   & \quad + \sup_{r\in[0,1]}\left\{\Vert\lambda^{(3)}\{e_r(\mb X)\}\Vert_2\cdot\Vert\tilde{e}(\mb X)-e(\mb X)\Vert_2 + \Vert\ddot\lambda\{e_r(\mb X)\}\Vert_2\right\}\Vert\tilde{e}(\mb X)-e(\mb X)\Vert_2^2,
 \end{align*}
where the last three inequalities come from Conditions \ref{cond:DMLbound}, \ref{cond:DMLLips}, \ref{cond:DMLgamma} and \ref{cond:DMLconverge}. Furthermore, by Condition \ref{cond:DMLconverge}, the inequality above gives the upper bound to $\lambda_n^\prime$ in (c) in Assumption \ref{assp:Assp3.2-2018paper}. Hence we finish its verification. 

To summary, that we have shown that
\begin{align*}
\lambda_n &=0,\\
r_n &\preceq  \Vert \tilde{e}(\mb X)-e(\mb X) \Vert_2 \\
r_n^\prime &\preceq \Vert \tilde{e}(\mb X)-e(\mb X) \Vert_2 +\sum_{a=0}^1  \Vert \tilde{\mu}_a(\mb X)-\mu_a(\mb X)\Vert_2, \quad\text{ and } \\
\lambda_n^\prime &\preceq \Vert \tilde{e}(\mb X)-e(\mb X)\Vert_2\cdot \sum_{a=0}^1 \Vert\mu_{a}(\mb X)-\mu_a(\mb X) \Vert_2 \\
& \quad + \sup_{r\in[0,1]}\left\{\Vert\lambda^{(3)}\{e_r(\mb X)\}\Vert_2\cdot\Vert\tilde{e}(\mb X)-e(\mb X)\Vert_2 + \Vert\ddot\lambda\{e_r(\mb X)\}\Vert_2\right\}\Vert\tilde{e}(\mb X)-e(\mb X)\Vert_2^2.
\end{align*}
Hence by Condition \ref{cond:DMLconverge}, there must be a sequence $\{\delta_n^\prime\}_{n\geq 1}$ such that 
\begin{align*}
\lambda_n =0,\qquad
r_n\leq   \delta_n^\prime,\qquad 
r_n^\prime \leq \delta_n^\prime,\quad\text{ and }\quad
\lambda_n^\prime \leq   \delta_n^\prime/\sqrt{n}.
\end{align*}
We can replace the sequence $\{\delta_n\}_{n\geq 1}$ in Assumptions \ref{assp:Assp3.1-2018paper} and \ref{assp:Assp3.2-2018paper} by $M\{\delta_n^\prime\vee N^{-[(1-{2}/{q})\wedge{1}/{2}]}\}$ for a large constant $M$. We now completed the proof of (c) in Assumption \ref{assp:Assp3.1-2018paper}. 

In all the proofs above, if we target ATE, ATT or ATC, since $\dot\lambda\{e(\mb X)\}\equiv 0$, we can see that the proofs can be simplified substantially and the final bound of $\lambda_n'$ can be simplified to the standard product-type error term of propensity score and outcome models. 

\subsubsection{Proof of (d) in Assumption \ref{assp:Assp3.2-2018paper}}
Finally, we verify (d) in Assumption \ref{assp:Assp3.2-2018paper}. We decompose the variance as follows. 
\begin{align}
\E[L(\mb V;\gamma_0,\phi_0)^2]
&=\gamma_0^2 \cdot \mc{I}_{21} -2\gamma_0\cdot \mc{I}_{22} +\mc{I}_{23} , \label{eq:Eg2}
\end{align}
where 
\begin{align*}
	\mc{I}_{21} &=  \E [D(\mb V;\phi_0)^2],\\
	\mc{I}_{22} &= \E [D(\mb V;\phi_0)\cdot N(\mb V;\phi_0)],\\
	\mc{I}_{23} &= \E [N(\mb V;\phi_0)^2].
\end{align*}
First, we can simplify $\mc{I}_{21}$ as
\begin{align}
\mc{I}_{21}
&=\E[\lambda\{e(\mb X)\}^2] +\E(\dot{\lambda}\{e(\mb X)\}^2\cdot\{A^2-2Ae(\mb X)+e(\mb X)^2\}) +
2\E [\lambda\{e(\mb X)\}\cdot\dot{\lambda}\{e(\mb X)\}\{A-e(\mb X)\}] \nonumber\\
&= \E[\lambda\{e(\mb X)\}^2] +\E(\dot{\lambda}\{e(\mb X)\}^2\cdot e(\mb X)\{1-e(\mb X)\}). \label{eq:I21-Eg2}
\end{align}
$\mc{I}_{22}$ can be simplified as
\begin{align}
\mc{I}_{22}
&=\E ([\lambda\{e(\mb X)\}+\dot{\lambda}\{e(\mb X)\}\{A-e(\mb X)\}]\cdot [ \lambda\{e(\mb X)\}\psi_{\tau}(\mb X;\phi_0)+\dot{\lambda}\{e(\mb X)\}\tau(\mb X)\{A-e(\mb X)\}  ]) \nonumber\\
&=\E\left(\lambda\{e(\mb X)\}^2\cdot \left[ \frac{A}{e(\mb X)}\{Y-\mu_1(\mb X)\}-\frac{ 1-A  }{1-e(\mb X)}\{Y-\mu_0(\mb X)\}+\tau(\mb X) \right]  \right) \nonumber\\
&  \quad + \E [\lambda\{e(\mb X)\}\dot{\lambda}\{e(\mb X)\}\cdot\tau(\mb X)\{A-e(\mb X)\}]+\E(\dot{\lambda}\{e(\mb X)\}^2\cdot\tau(\mb X)\{A^2+e(\mb X)^2-2Ae(\mb X)\}) \nonumber\\
& \quad+\E \left(\dot{\lambda}\{e(\mb X)\}\lambda\{e(\mb X)\}\{A-e(\mb X)\}\cdot \left[ \frac{A}{e(\mb X)}\{Y-\mu_1(\mb X)\}-\frac{1-A }{1-e(\mb X)}\{Y-\mu_0(\mb X)\}+\tau(\mb X)\right]\right) \nonumber\\
&=\E[\lambda\{e(\mb X)\}^2\tau(\mb X)]+\E (\dot{\lambda}\{e(\mb X)\}^2\tau(\mb X)\{1-e(\mb X)\}e(\mb X)). \label{eq:I22-Eg2}
\end{align}
We further decompose $\mc{I}_{23}$ as
\begin{align*}
	\mc{I}_{23}&=\mc{I}_{24} + \mc{I}_{25} + \mc{I}_{26},
\end{align*}
where 
\begin{align*}
	\mc{I}_{24}  &= \E ([\dot{\lambda}\{e(\mb X)\}\tau(\mb X)\{A-e(\mb X)\}]^2)= \E [\dot{\lambda}\{e(\mb X)\}^2\tau(\mb X)^2 \cdot e(\mb X)\{1-e(\mb X)\}],\\
	\mc{I}_{25}  &= 2\E[\lambda\{e(\mb X)\} \psi_{\tau}(\mb X) \dot{\lambda}\{e(\mb X)\}\cdot \tau(\mb X)\{A-e(\mb X)\} ]= 0,\\
	\mc{I}_{26}  &= \E([\lambda\{e(\mb X)\}\psi_{\tau}(\mb X)]^2).
\end{align*}
Furthermore, $\mc{I}_{26}$ can be simplified as
\begin{align*}
	\mc{I}_{26} 
	& = \E\left[\lambda\{e(\mb X)\}^2\frac{A}{e(\mb X)^2}\{Y-\mu_1(\mb X)\}^2\right] +
	\E \left[\lambda\{e(\mb X)\}^2\frac{1-A}{\{1-e(\mb X)\}^2}\{Y-\mu_0(\mb X)\}^2\right]\\
	&   \quad + \E([\lambda\{e(\mb X)\}\tau(\mb X)]^2) 
	+ 2\E \left(\lambda\{e(\mb X)\}\tau(\mb X)\cdot \left[\frac{A}{e(\mb X)}\{Y-\mu_1(\mb X)\}  - \frac{1-A}{1-e(\mb X)}\{Y-\mu_0(\mb X)\}\right]\right)
	\\
	&= \E\left(\left[\frac{\lambda\{e(\mb X)\}}{e(\mb X)}\right]^2 A\{Y-\mu_1(\mb X)\}^2\right) \\
	& \quad +
	\E\left(\left[\frac{\lambda\{e(\mb X)\}}{1-e(\mb X)}\right]^2 (1-A)\{Y-\mu_0(\mb X)\}^2\right) + \E ([\lambda\{e(\mb X)\}\tau(\mb X)]^2), 
\end{align*}
so that we can obtain
\begin{align}
\mc{I}_{23}
&=
\E\left(\left[\frac{\lambda\{e(\mb X)\}}{e(\mb X)}\right]^2 A\{Y-\mu_1(\mb X)\}^2\right)+
	\E\left(\left[\frac{\lambda\{e(\mb X)\}}{1-e(\mb X)}\right]^2 (1-A)\{Y-\mu_0(\mb X)\}^2\right) \nonumber\\
&  \quad + \E ([\lambda\{e(\mb X)\}\tau(\mb X)]^2)+ \E [\dot{\lambda}\{e(\mb X)\}^2\tau(\mb X)^2 \cdot e(\mb X)\{1-e(\mb X)\}].  \label{eq:I23-Eg2}
\end{align}
Plugging \eqref{eq:I21-Eg2}--\eqref{eq:I23-Eg2} into \eqref{eq:Eg2} gives
\begin{align*}
	\E [L(\mb V;\gamma_0,\phi_0)^2]
	& = \E (\gamma_0^2 \lambda\{e(\mb X)\}^2 - 2\gamma_0\lambda\{e(\mb X)\}^2\tau(\mb X)+[\lambda\{e(\mb X)\}\tau(\mb X)]^2)\\
	&  \quad + \E(e(\mb X)\{1-e(\mb X)\}\cdot [\gamma_0^2\dot{\lambda}\{e(\mb X)\}^2-2\gamma_0\dot{\lambda}\{e(\mb X)\}^2\tau(\mb X)+[\dot{\lambda}\{e(\mb X)\}\tau(\mb X)]^2])\\
	&  \quad +  \E\left(\left[\frac{\lambda\{e(\mb X)\}}{e(\mb X)}\right]^2 A\{Y-\mu_1(\mb X)\}^2\right)+
	\E\left(\left[\frac{\lambda\{e(\mb X)\}}{1-e(\mb X)}\right]^2 (1-A)\{Y-\mu_0(\mb X)\}^2\right) \\
	&=\E([\gamma_0-\tau(\mb X)]^2\cdot [\lambda\{e(\mb X)\}^2 +  \dot{\lambda}\{e(\mb X)\}^2\cdot e(\mb X)\{1-e(\mb X)\}])\\
	&    \quad + 
	\E\left(\left[\frac{\lambda\{e(\mb X)\}}{e(\mb X)}\right]^2 A\{Y-\mu_1(\mb X)\}^2\right)+
	\E\left(\left[\frac{\lambda\{e(\mb X)\}}{1-e(\mb X)}\right]^2 (1-A)\{Y-\mu_0(\mb X)\}^2\right).
\end{align*}
Therefore, it can be seen that there exists $C_0>0$, such that $\E [L(\mb V;\gamma_0,\phi_0)^2]\geq C_0>0$, which verifies (d) in Assumption \ref{assp:Assp3.2-2018paper}. 

Hence, we completed all verifications of Assumptions \ref{assp:Assp3.1-2018paper} and \ref{assp:Assp3.2-2018paper}, and so the proof of Theorem \ref{thm:RDR-DML} {in Section \ref{sec:DML}} (the RDR property of DML estimators) is completed. 

\section{Complete Simulation Results}\label{app:fullSimu}

\subsection{Propensity score distributions}\label{subapp:PSdist}

\begin{figure}[H]
    \centering
    \includegraphics[width=\linewidth]{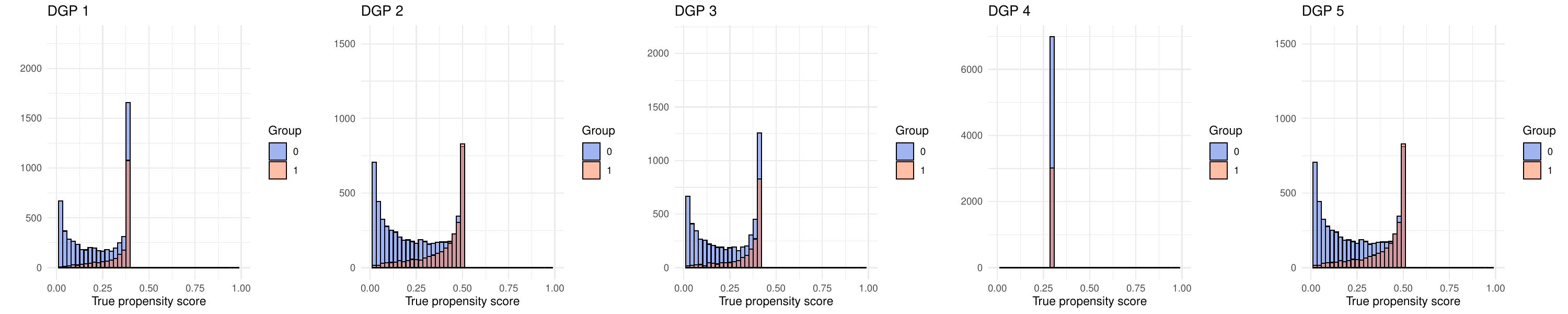}
    \caption{True propensity score distributions. Each panel is by generating a random sample ($n=10000)$.}
    \label{fig:PS-true}
\end{figure}

\begin{figure}[H]
    \centering
    \includegraphics[width=\linewidth]{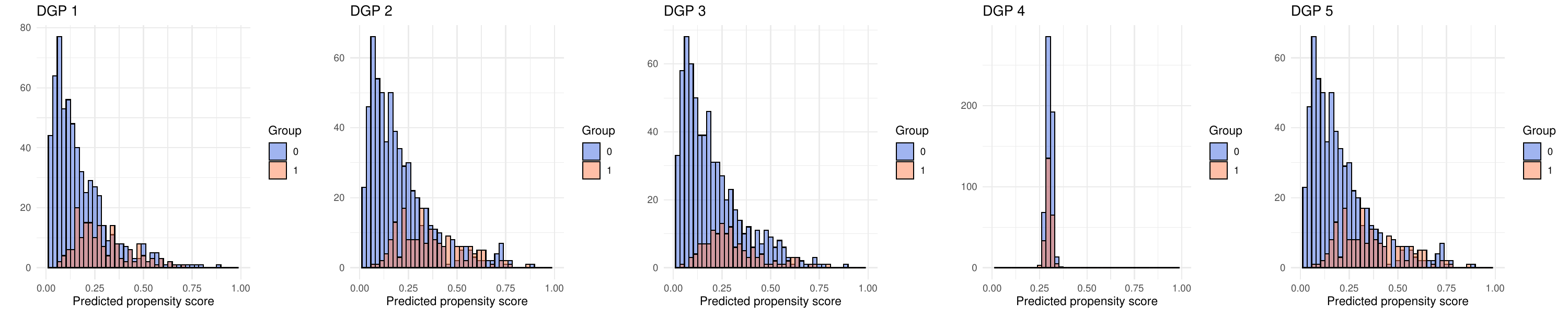}
    \caption{Predicted propensity score distributions by treatment group using simple GLM (logistic regression). Each panel is by the prediction set of a random sample ($n=4000)$, where 80\% of the data are used for the training set. }
    \label{fig:PS-glm}
\end{figure}

Comparing Figure \ref{fig:PS-true} to Figures \ref{fig:PS-glm} and \ref{fig:PS-ensb}, we observe that, at first glance, the true and fitted propensity score distributions for DGPs 1, 2, 3, and 5 differ notably. This is expected, as the propensity score generating functions in these DGPs restrict the true propensity score values to be less than 0.5. Consequently, the true propensity score distributions exhibit heavy tails concentrated near a boundary point  below 0.5. When fitting propensity score models using the observed data, it becomes challenging to closely approximate the true propensity scores for observations located near this boundary. This mismatch highlights the inherent difficulty in estimating propensity scores accurately when the true distribution is highly skewed or bounded away from the central region. However, away from the boundary region, Figures \ref{fig:PS-glm} and \ref{fig:PS-ensb} show that the propensity score distributions are fitted fairly well. 

\begin{figure}[H]
    \centering
    \includegraphics[width=\linewidth]{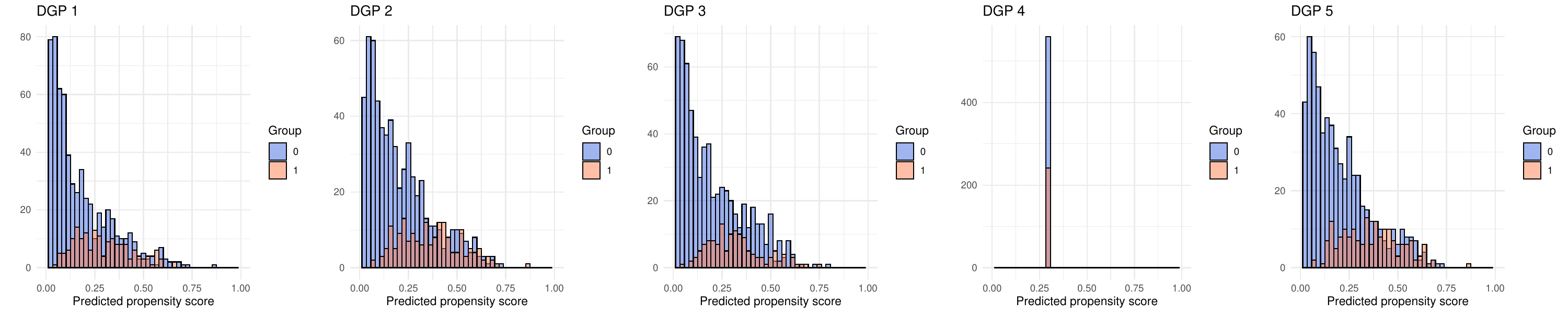}
    \caption{Predicted propensity score distributions by treatment group using methods ensemble learning. Each panel is by the prediction set of a random sample ($n=4000)$, where 80\% of the data are used for the training set.}
    \label{fig:PS-ensb}
\end{figure}

\subsection{Covariate balance by propensity score weights}\label{subapp:covbal}

In this section, we report the covariate balance results for all DGPs and estimands considered in our simulation study. Figures \ref{fig:covbal-glm} and \ref{fig:covbal-ensb} display the absolute standardized mean differences (ASMDs) under all scenarios, using a simple GLM and ensemble learning (based on SuperLearner methods in our simulation), respectively, for training and predicting the propensity scores.

\begin{figure}[H]
    \centering
    \includegraphics[width=\linewidth]{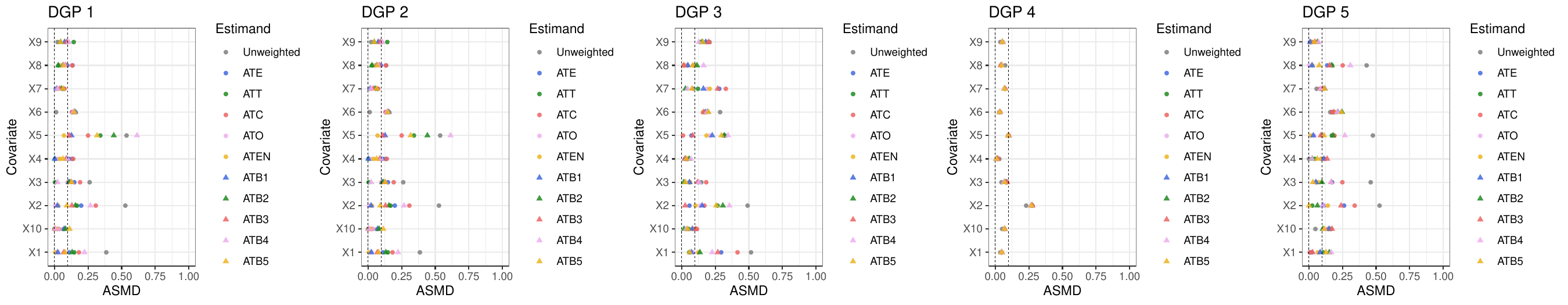}
    \caption{Love plots for checking covariate balance by absolute  standardized mean difference (ASMD). The propensity scores for constructing the balancing weights in each panel is by using a simple GLM model, and a sample-splitting is applied on a random sample $(n=4000)$ with the 80\% of the data are used for the training set. }
    \label{fig:covbal-glm}
\end{figure}

\begin{figure}[H]
    \centering
    \includegraphics[width=\linewidth]{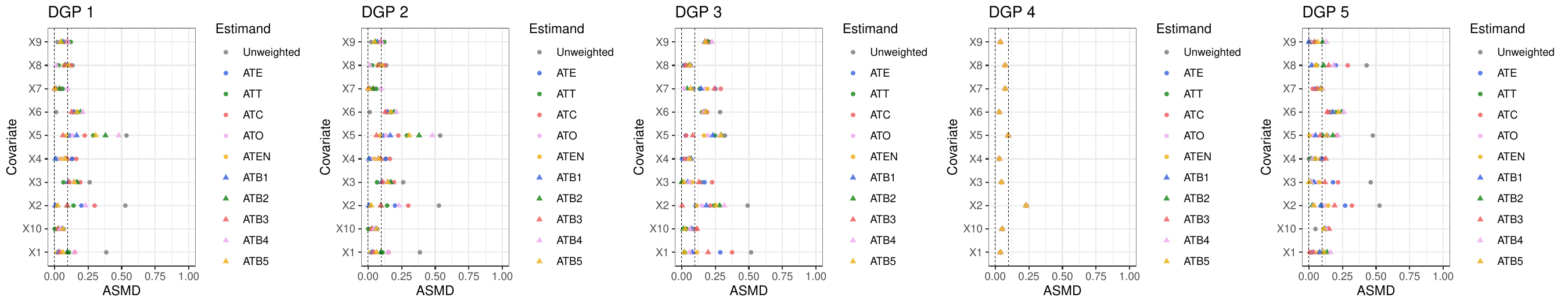}
    \caption{Love plots for checking covariate balance by absolute standardized mean difference (ASMD). The propensity scores for constructing the balancing weights in each panel is by using methods ensemble learning, and a sample-splitting is applied on a random sample $(n=4000)$ with the 80\% of the data are used for the training set. }
    \label{fig:covbal-ensb}
\end{figure}

\subsection{Results by using simple GLMs for nuisance function models}\label{subapp:simpleGLM}

\begin{figure}[H]
    \centering
    \includegraphics[width=\linewidth]{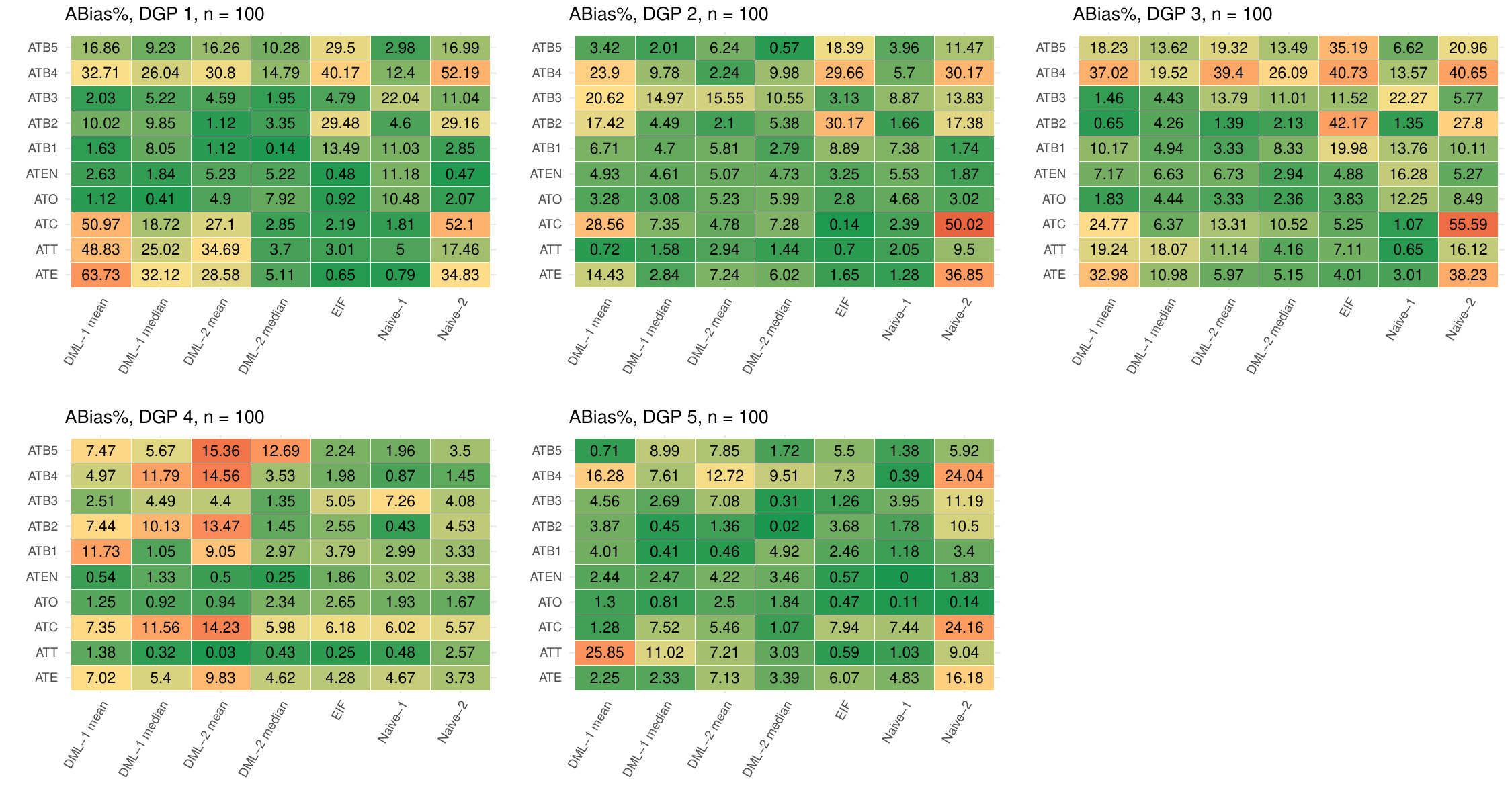}
    \caption{Simulation results of ARBias\% under $n=100$ and simple GLM models. }
\end{figure}

\begin{figure}[H]
    \centering
    \includegraphics[width=\linewidth]{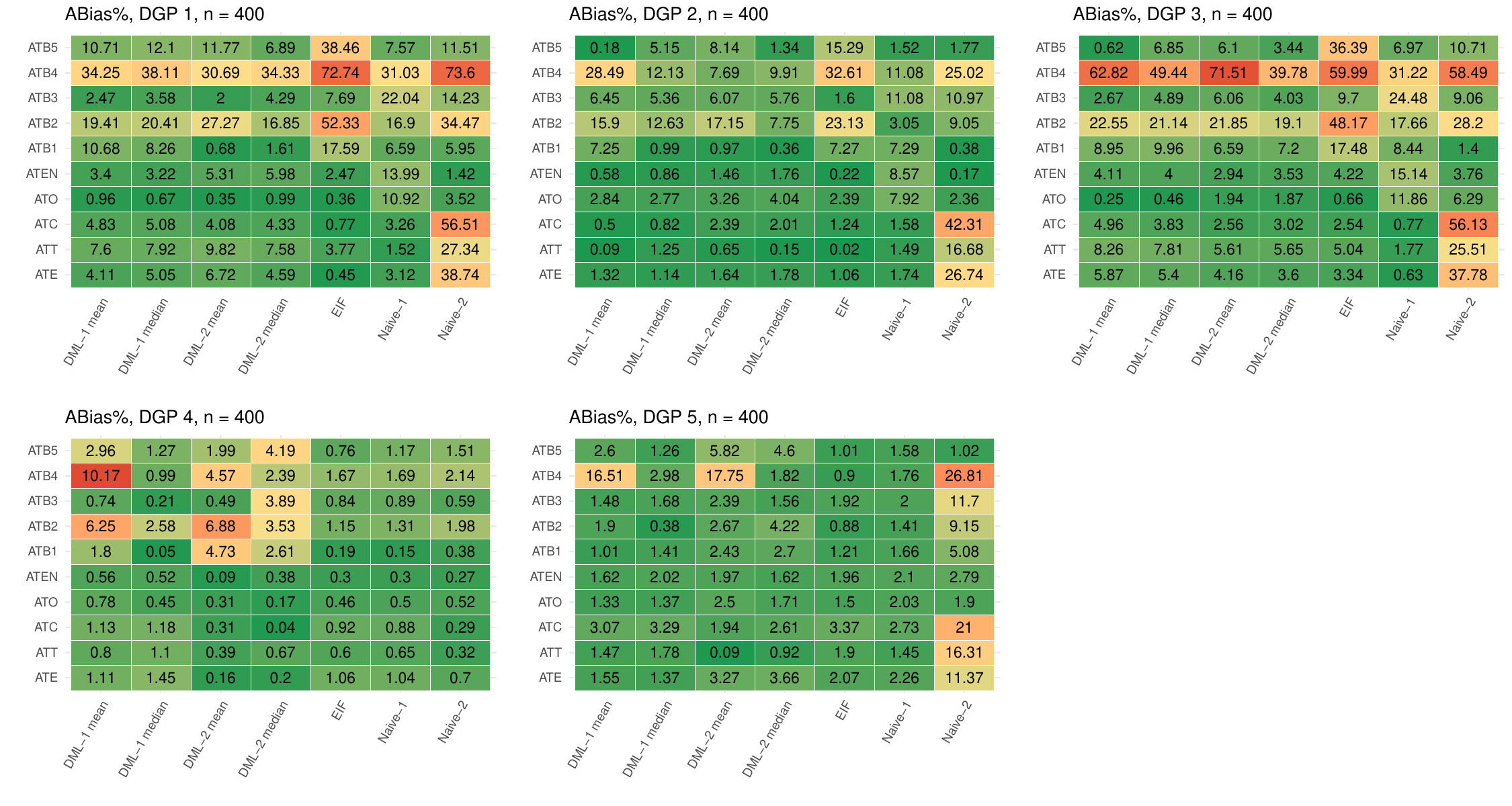}
    \caption{Simulation results of ARBias\% under $n=400$ and simple GLM models. }
\end{figure}

\begin{figure}[H]
    \centering
    \includegraphics[width=\linewidth]{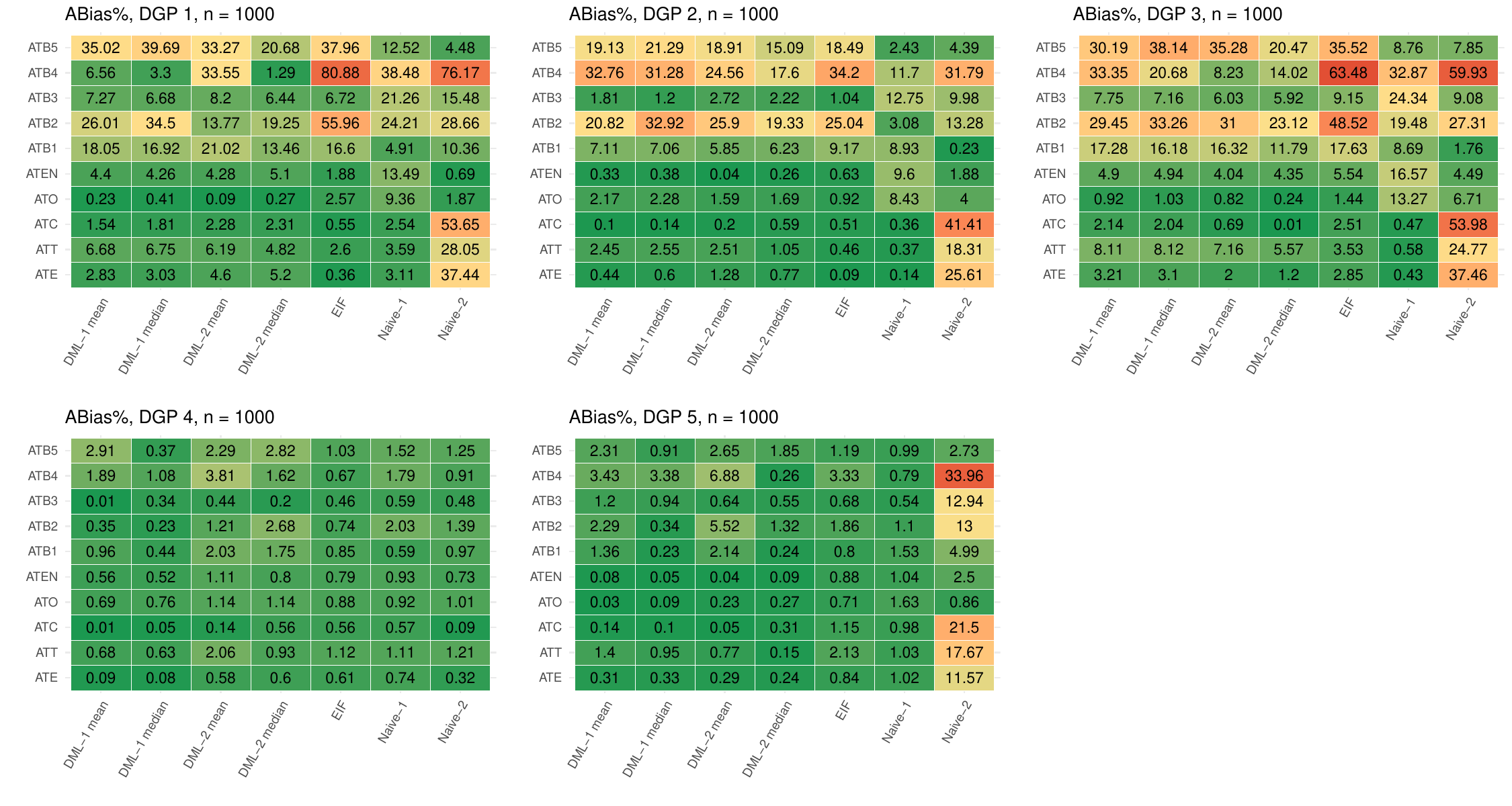}
    \caption{Simulation results of ARBias\% under $n=1000$ and simple GLM models. }
\end{figure}

\begin{figure}[H]
    \centering
    \includegraphics[width=\linewidth]{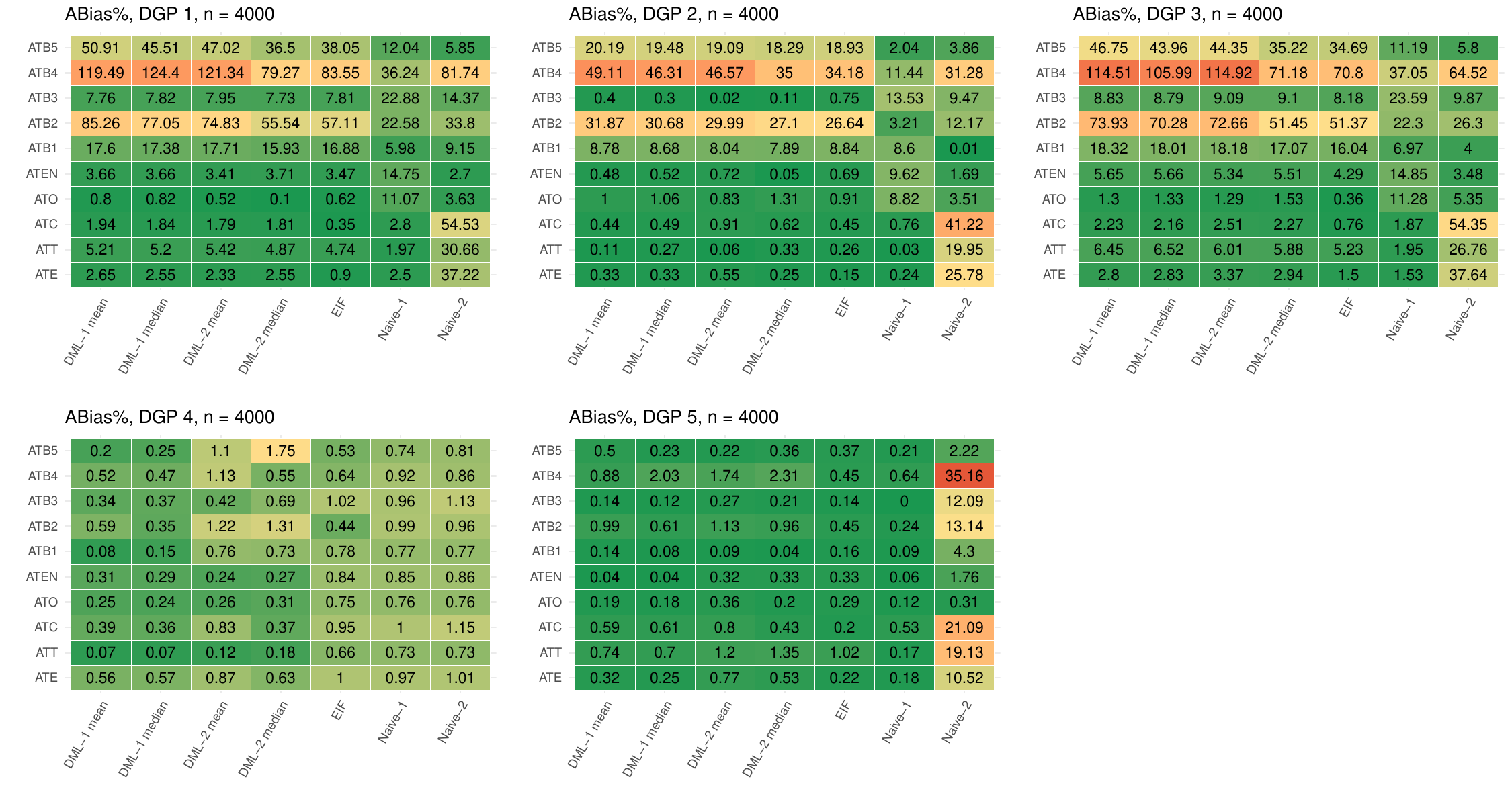}
    \caption{Simulation results of ARBias\% under $n=4000$ and simple GLM models. }
\end{figure}

\begin{figure}[H]
    \centering
    \includegraphics[width=\linewidth]{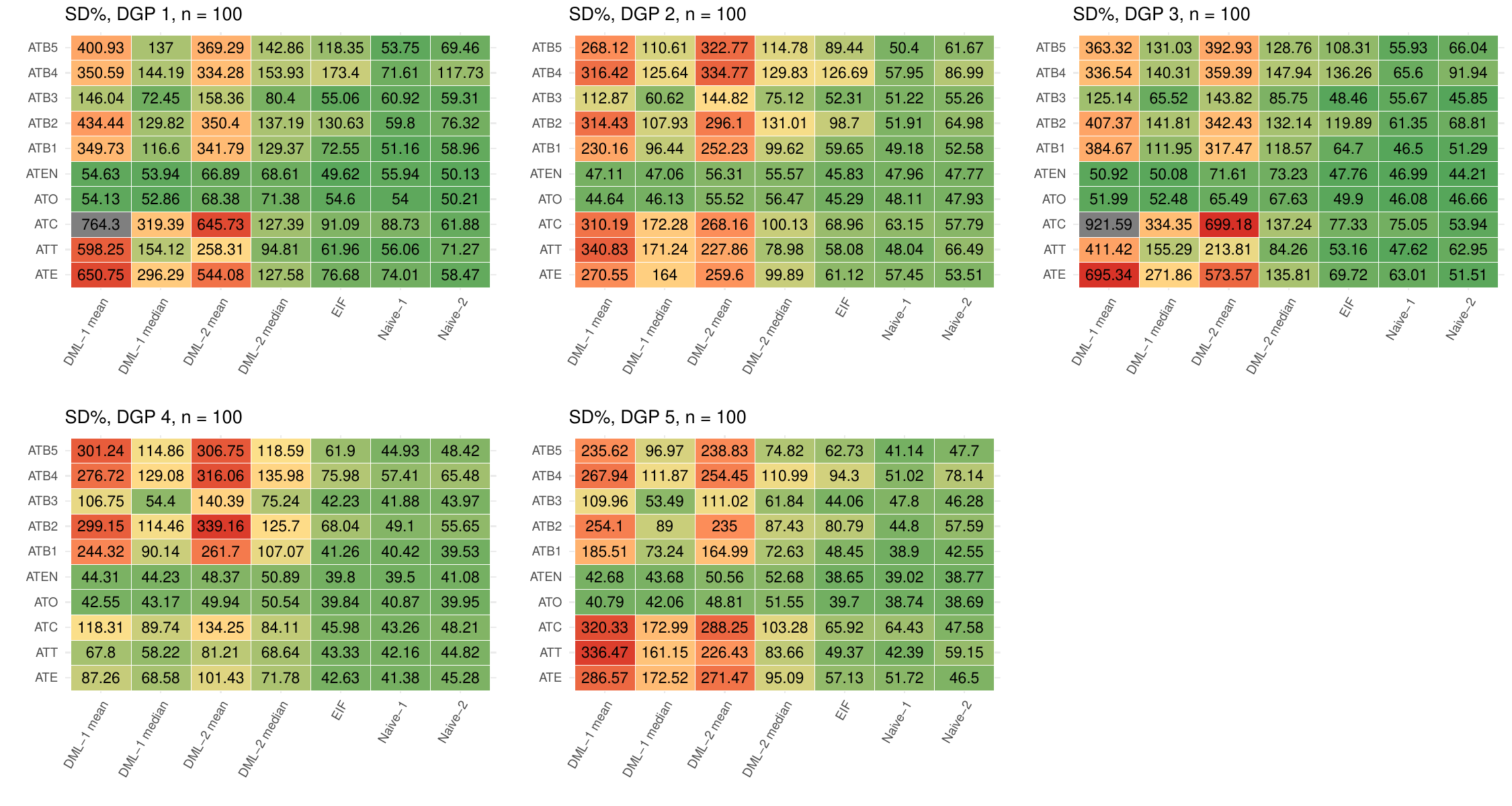}
    \caption{Simulation results of SD\% under $n=100$ and simple GLM models. }
\end{figure}

\begin{figure}[H]
    \centering
    \includegraphics[width=\linewidth]{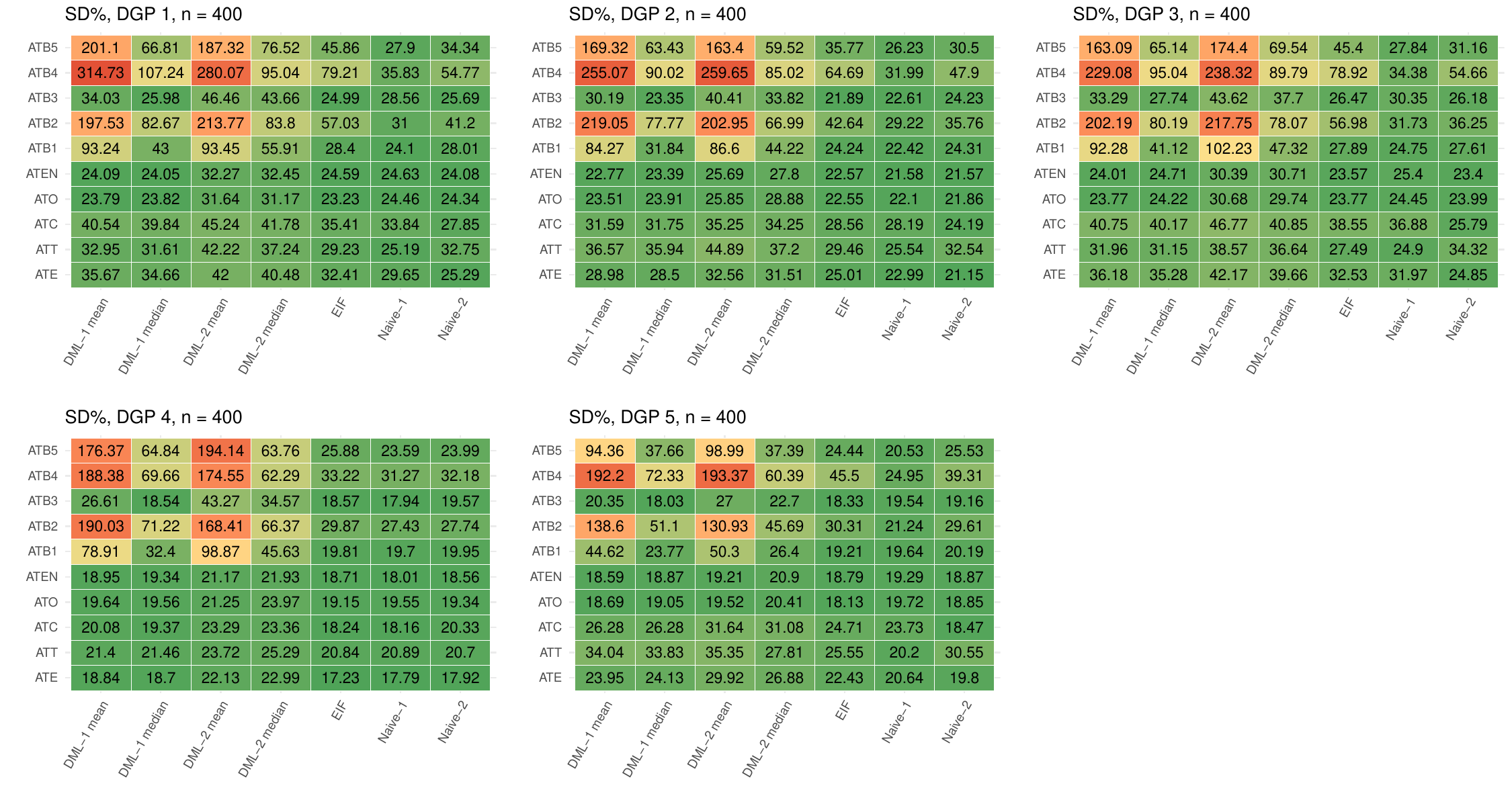}
    \caption{Simulation results of SD\% under $n=400$ and simple GLM models. }
\end{figure}

\begin{figure}[H]
    \centering
    \includegraphics[width=\linewidth]{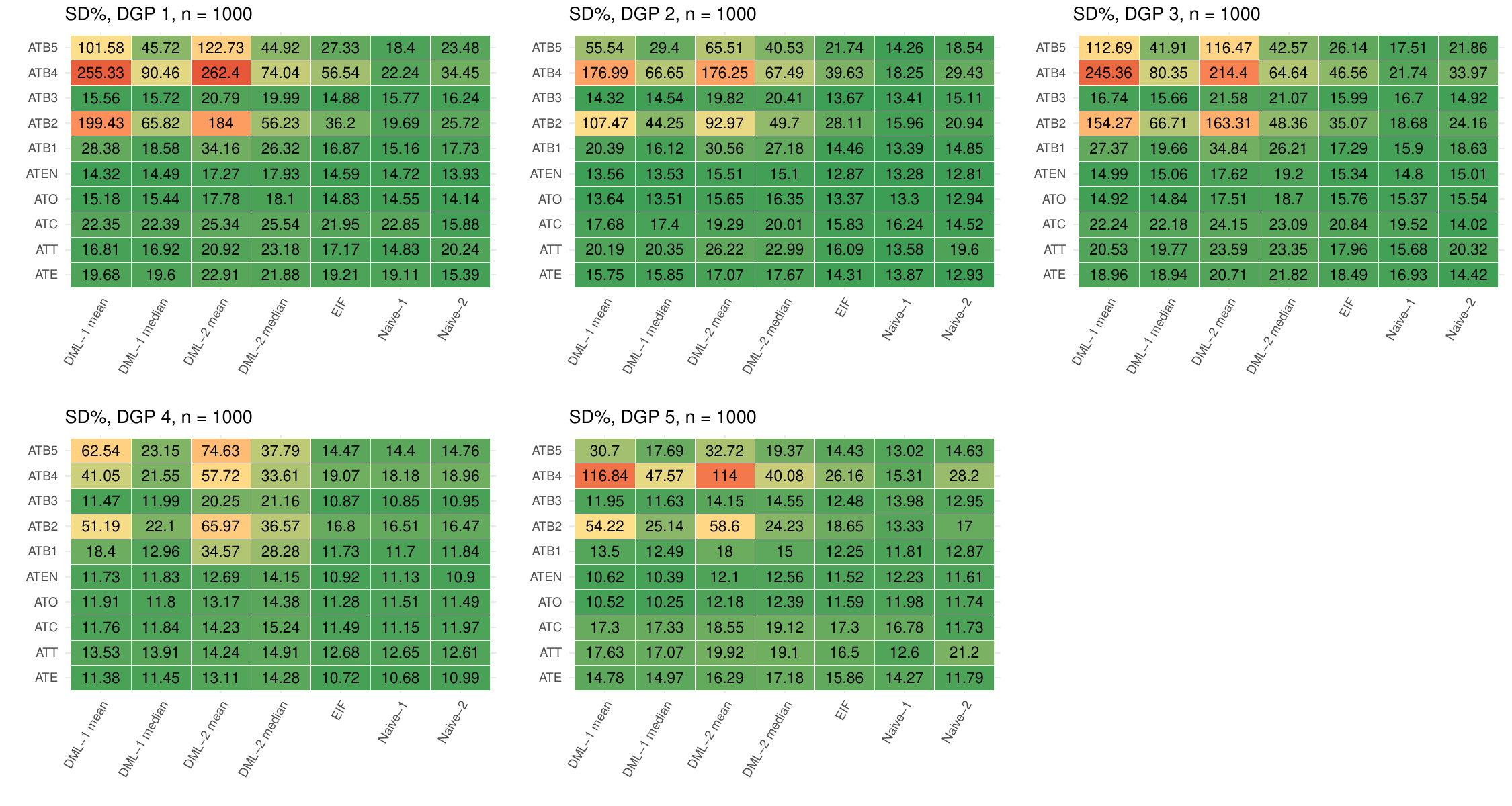}
    \caption{Simulation results of SD\% under $n=1000$ and simple GLM models. }
\end{figure}

\begin{figure}[H]
    \centering
    \includegraphics[width=\linewidth]{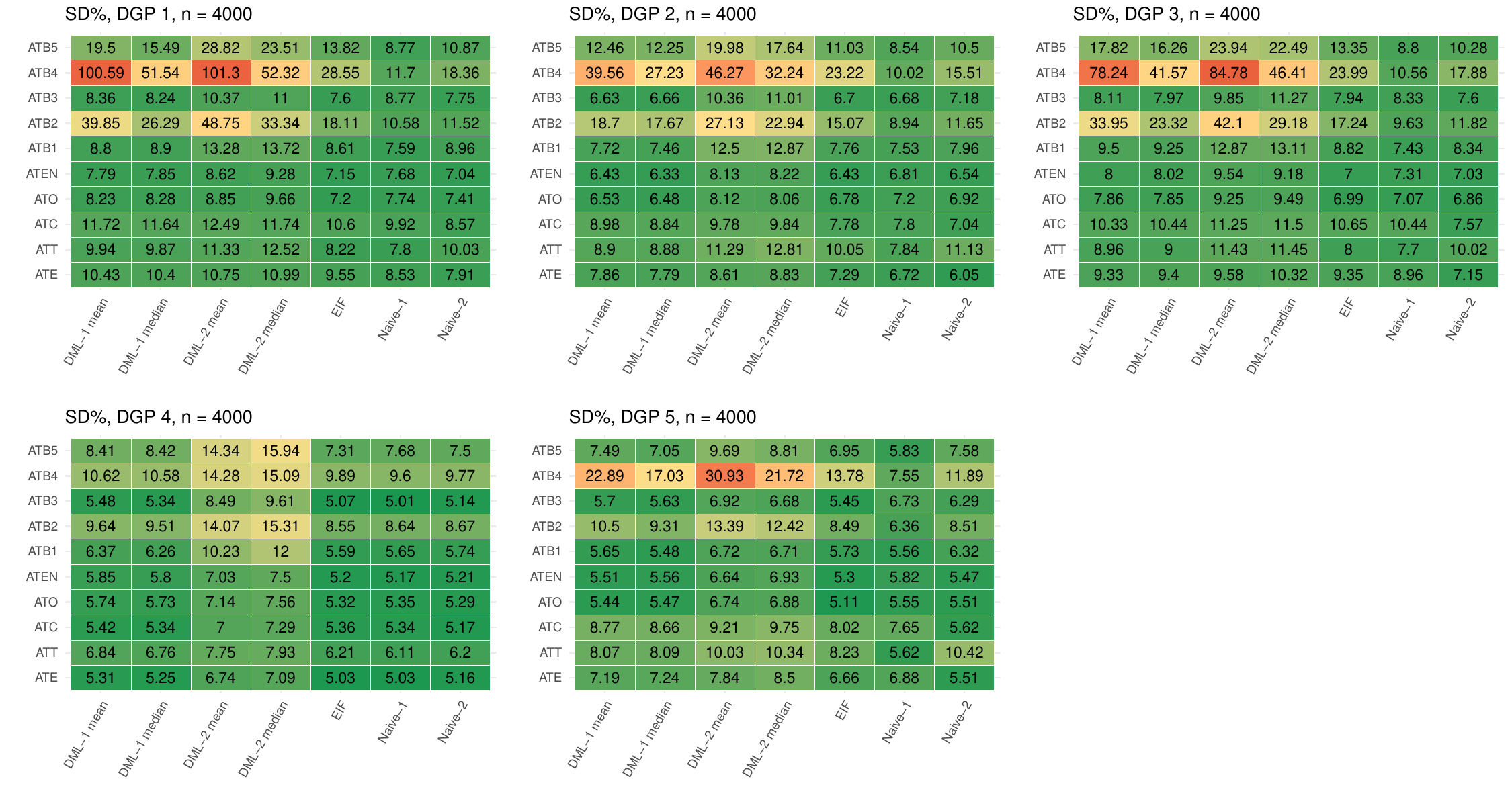}
    \caption{Simulation results of SD\% under $n=4000$ and simple GLM models. }
\end{figure}

\begin{figure}[H]
    \centering
    \includegraphics[width=\linewidth]{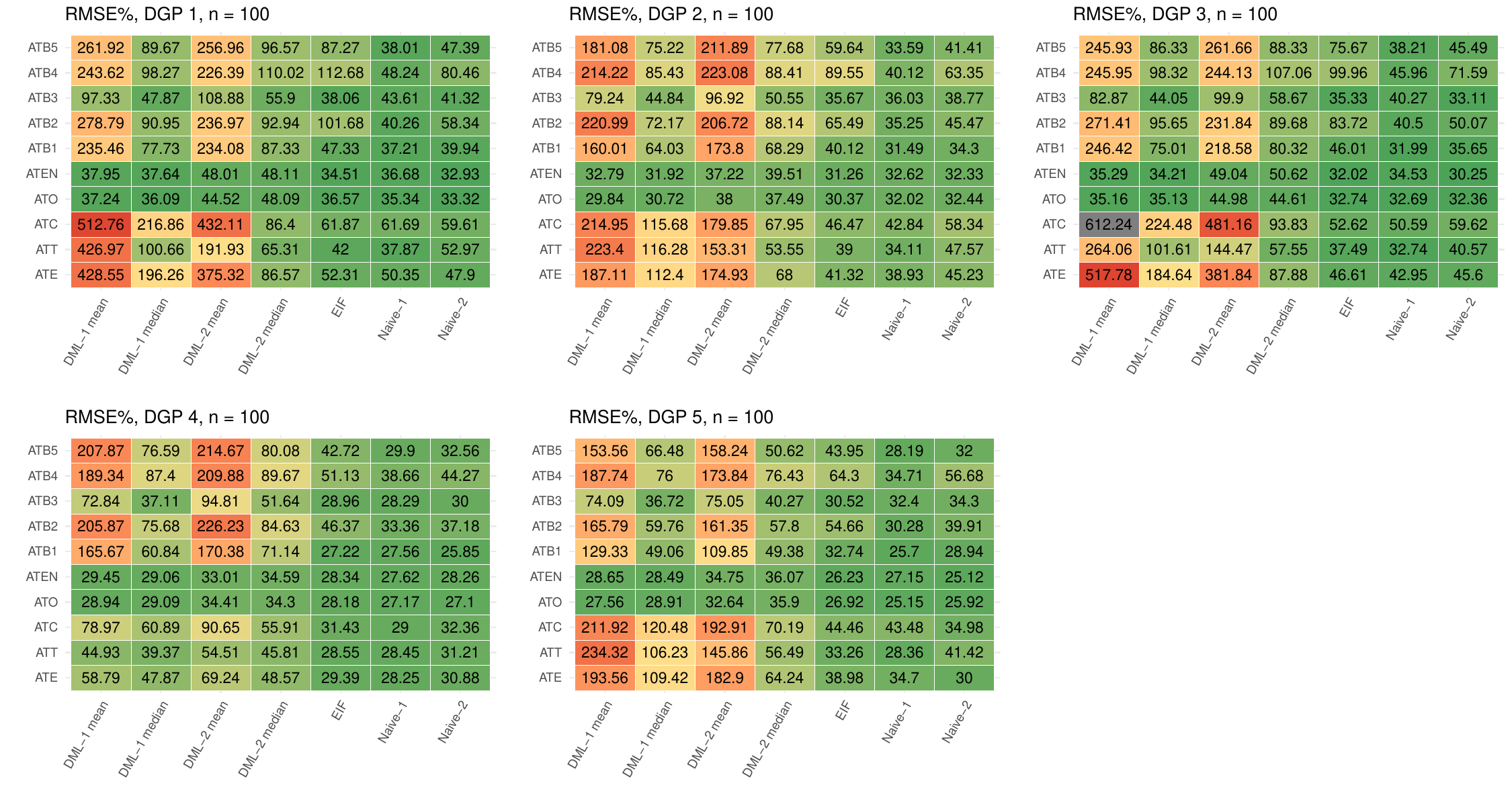}
    \caption{Simulation results of RMSE\% under $n=100$ and simple GLM models. }
\end{figure}

\begin{figure}[H]
    \centering
    \includegraphics[width=\linewidth]{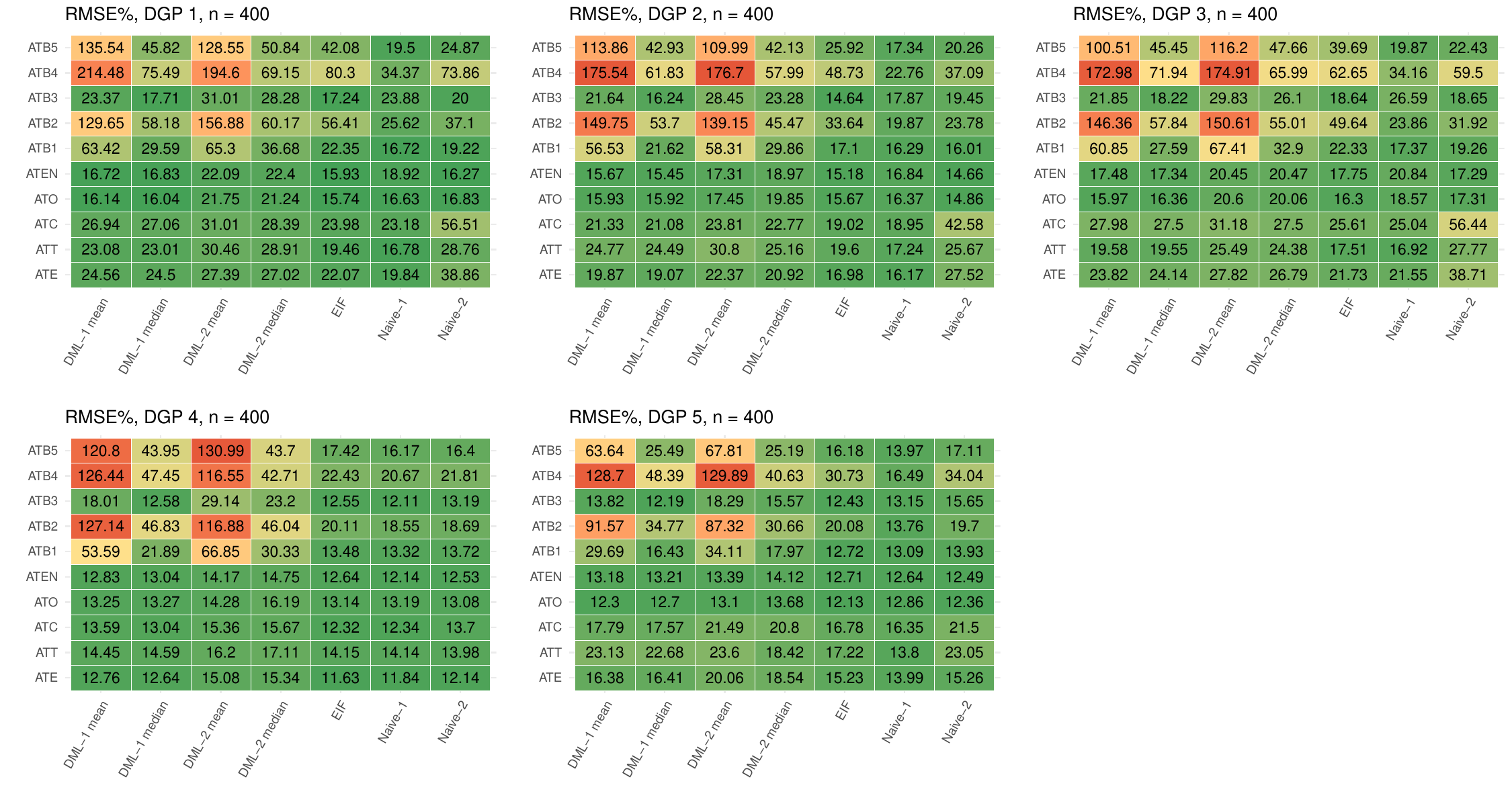}
    \caption{Simulation results of RMSE\% under $n=400$ and simple GLM models. }
\end{figure}

\begin{figure}[H]
    \centering
    \includegraphics[width=\linewidth]{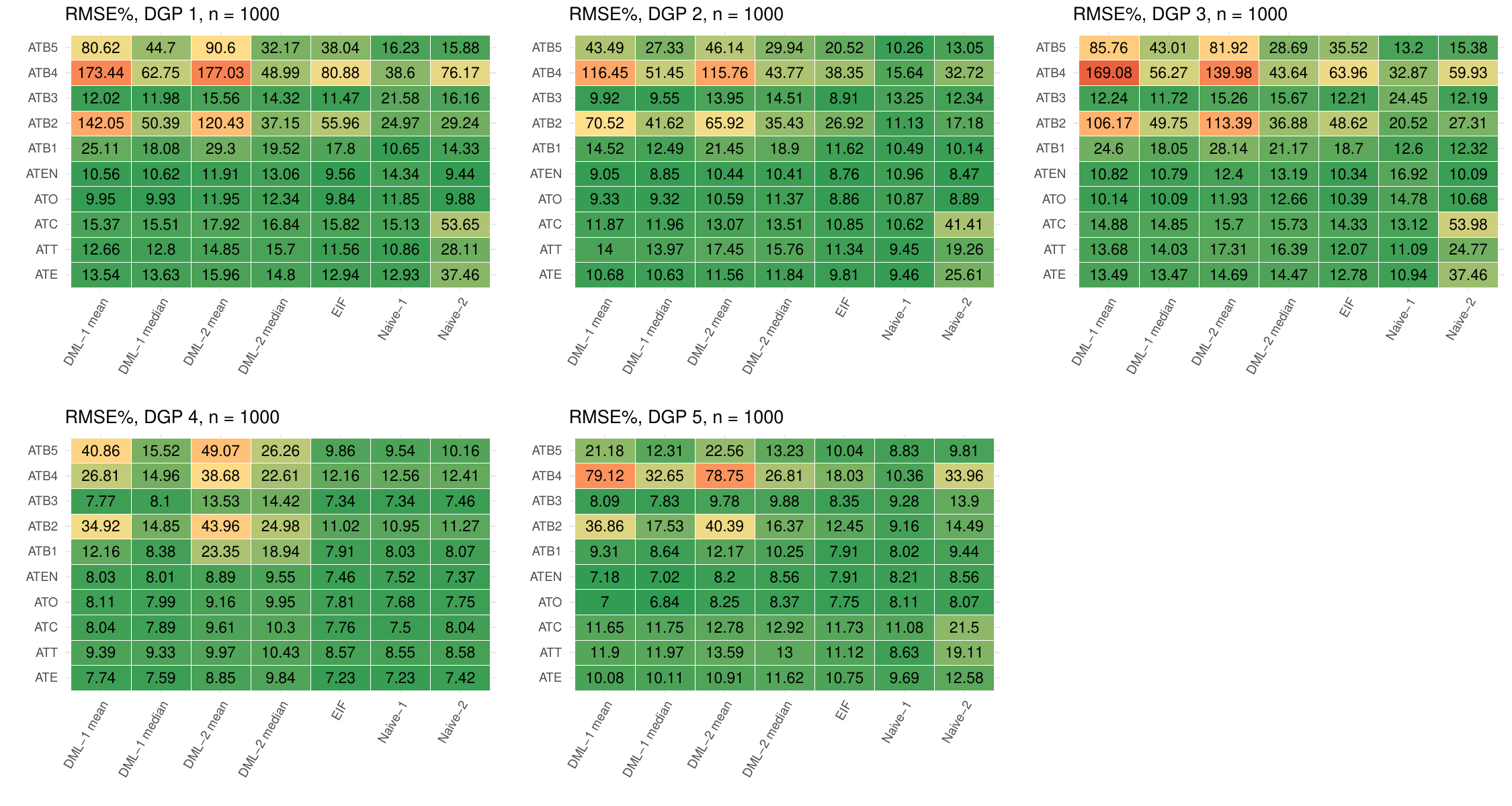}
    \caption{Simulation results of RMSE\% under $n=1000$ and simple GLM models. }
\end{figure}

\begin{figure}[H]
    \centering
    \includegraphics[width=\linewidth]{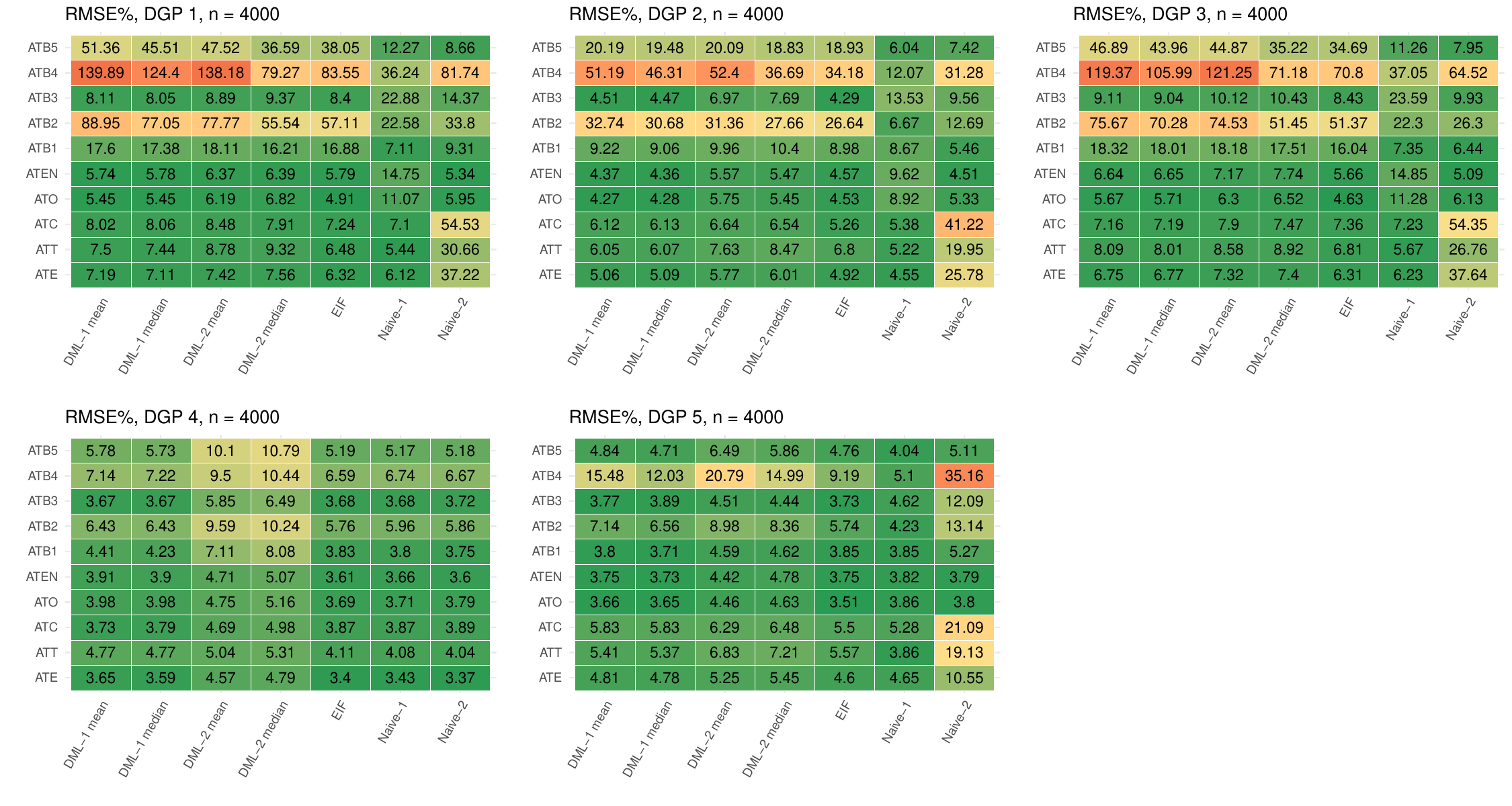}
    \caption{Simulation results of RMSE\% under $n=4000$ and simple GLM models. }
\end{figure}

\begin{figure}[H]
    \centering
    \includegraphics[width=\linewidth]{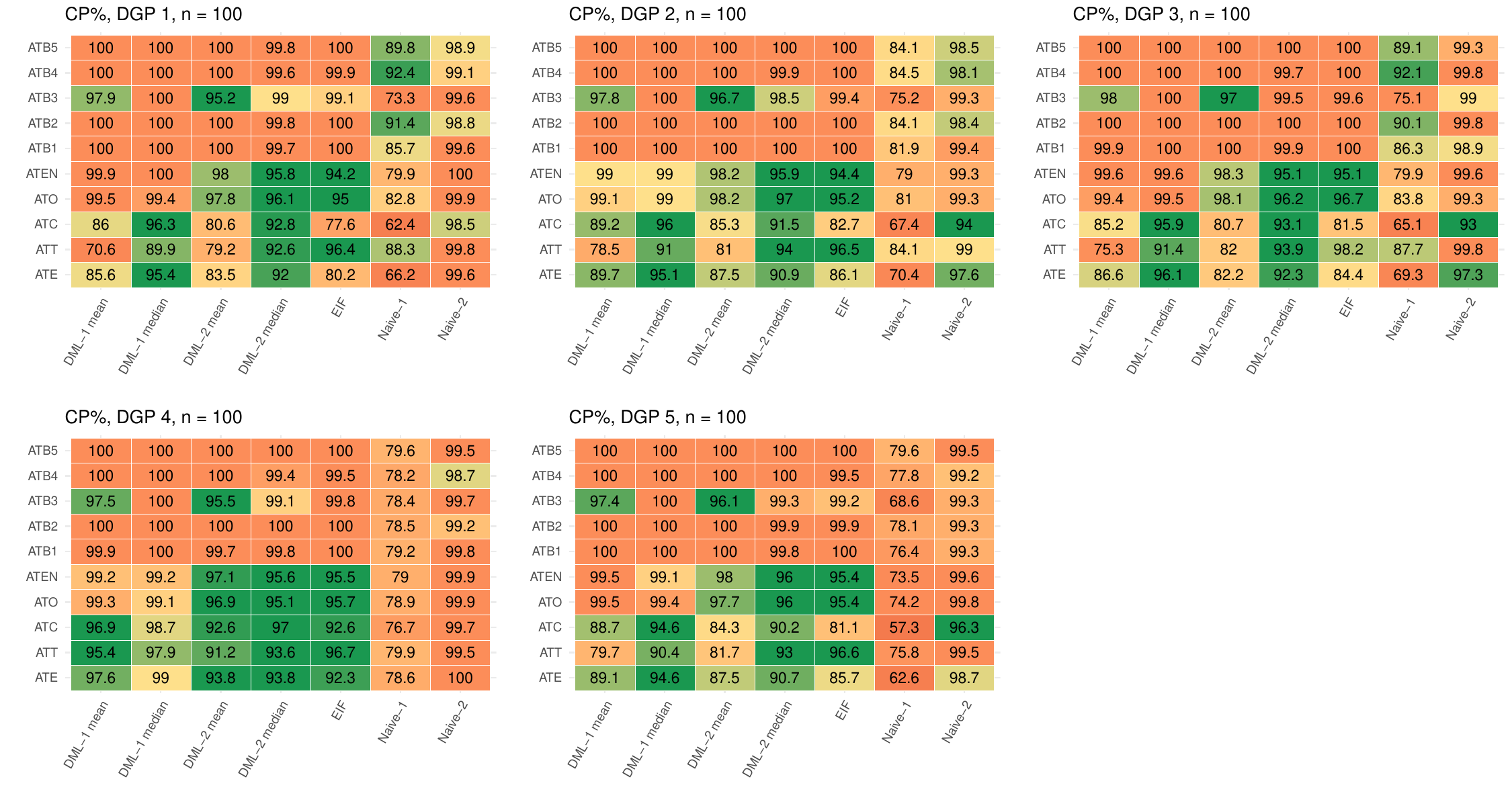}
    \caption{Simulation results of CP\% under $n=100$ and simple GLM models. }
\end{figure}

\begin{figure}[H]
    \centering
    \includegraphics[width=\linewidth]{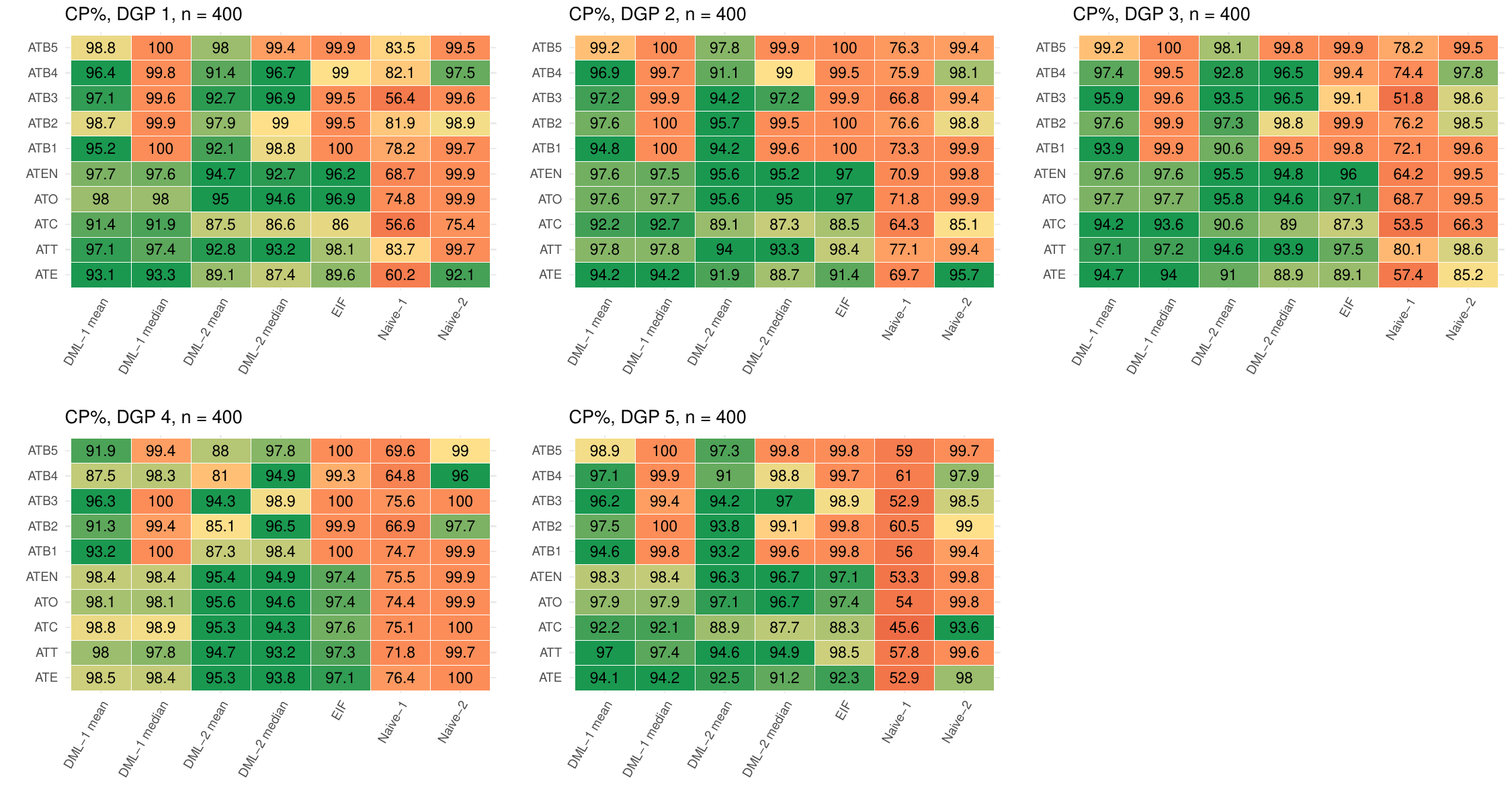}
    \caption{Simulation results of CP\% under $n=400$ and simple GLM models. }
\end{figure}

\begin{figure}[H]
    \centering
    \includegraphics[width=\linewidth]{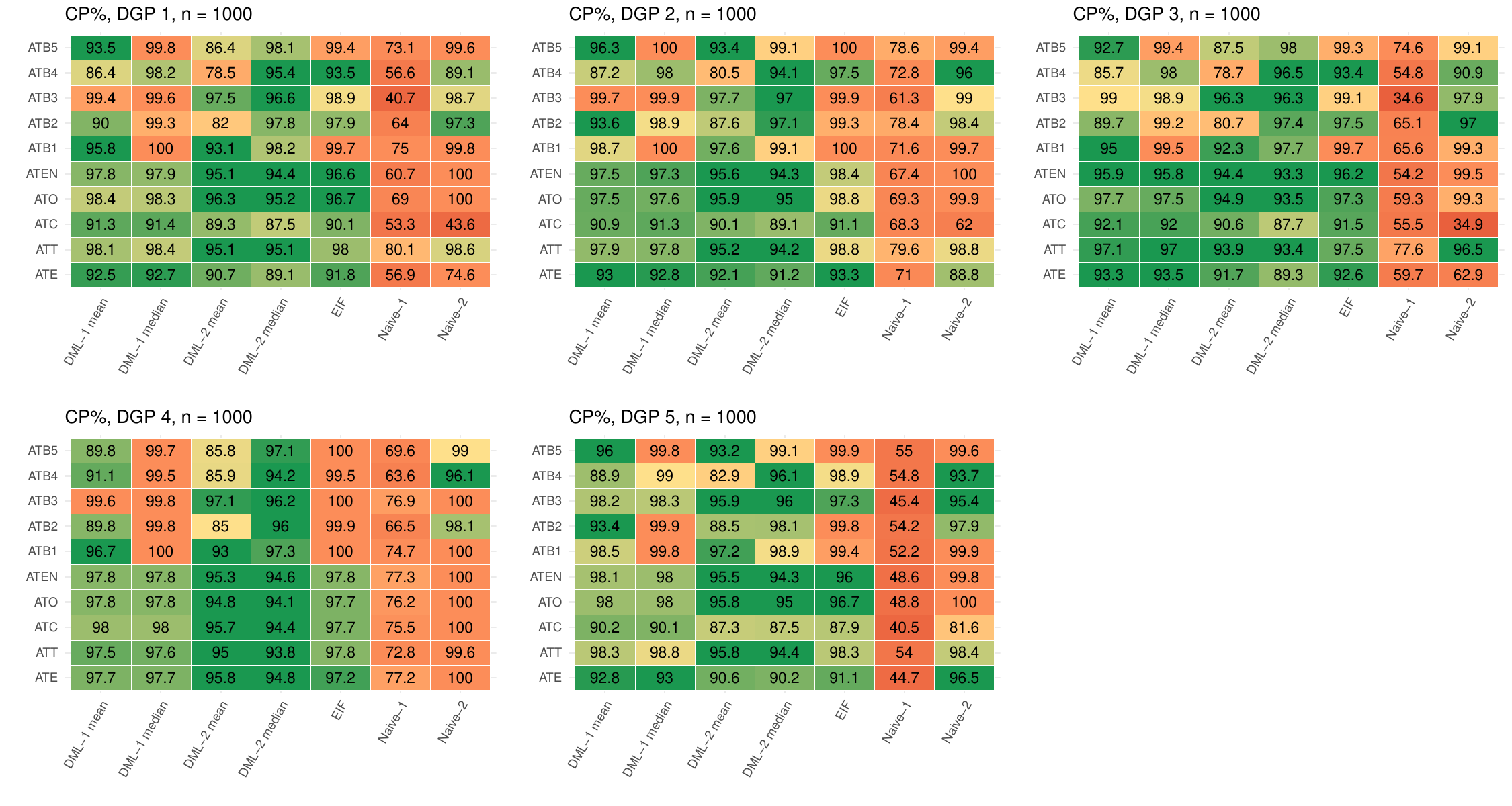}
    \caption{Simulation results of CP\% under $n=1000$ and simple GLM models. }
\end{figure}

\begin{figure}[H]
    \centering
    \includegraphics[width=\linewidth]{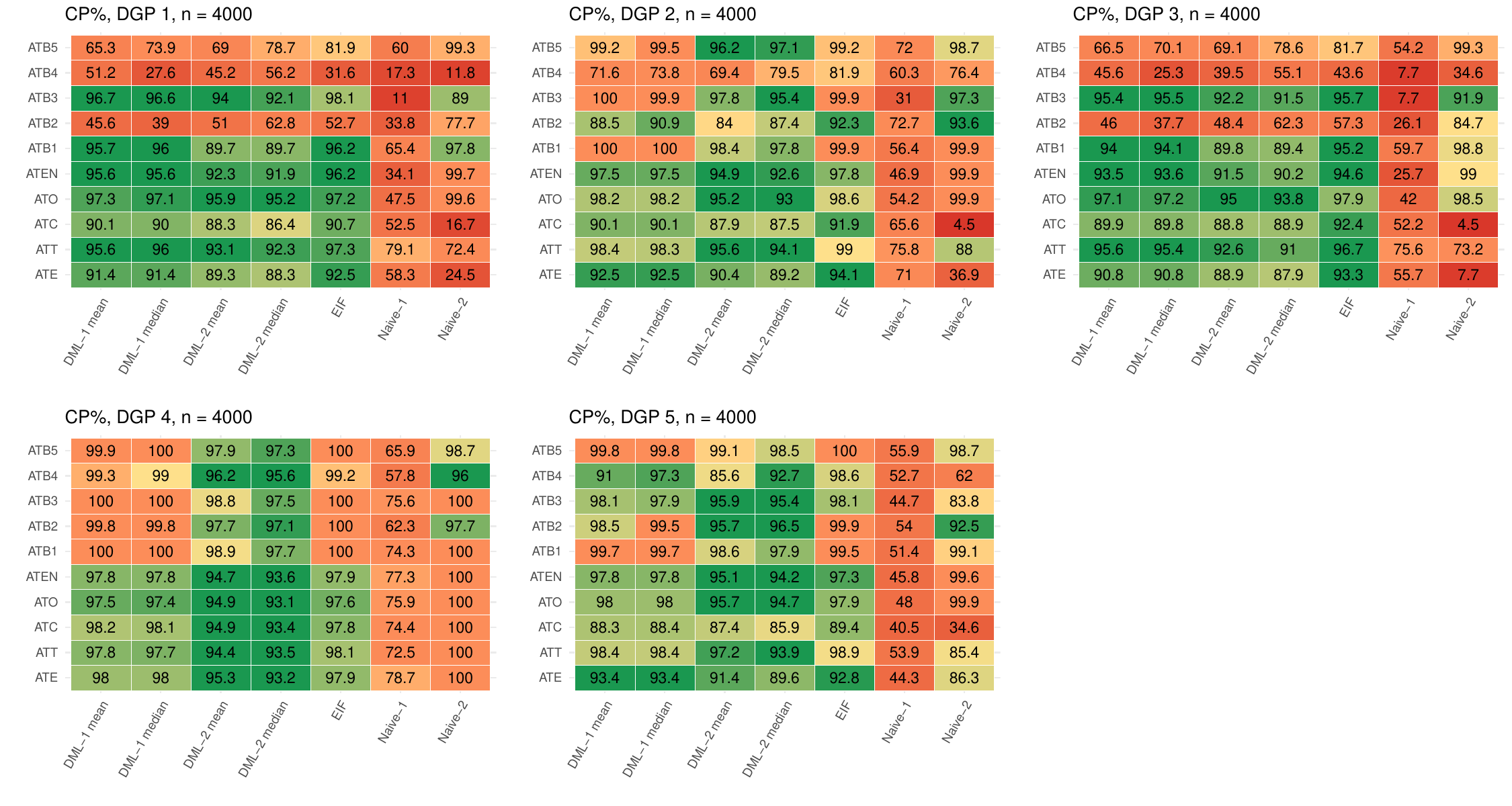}
    \caption{Simulation results of CP\% under $n=4000$ and simple GLM models. }
\end{figure}

\clearpage

\subsection{Results by using methods ensemble (parametric and ML models) for nuisance functions}\label{subapp:methodEnsb}

\begin{figure}[H]
    \centering
    \includegraphics[width=\linewidth]{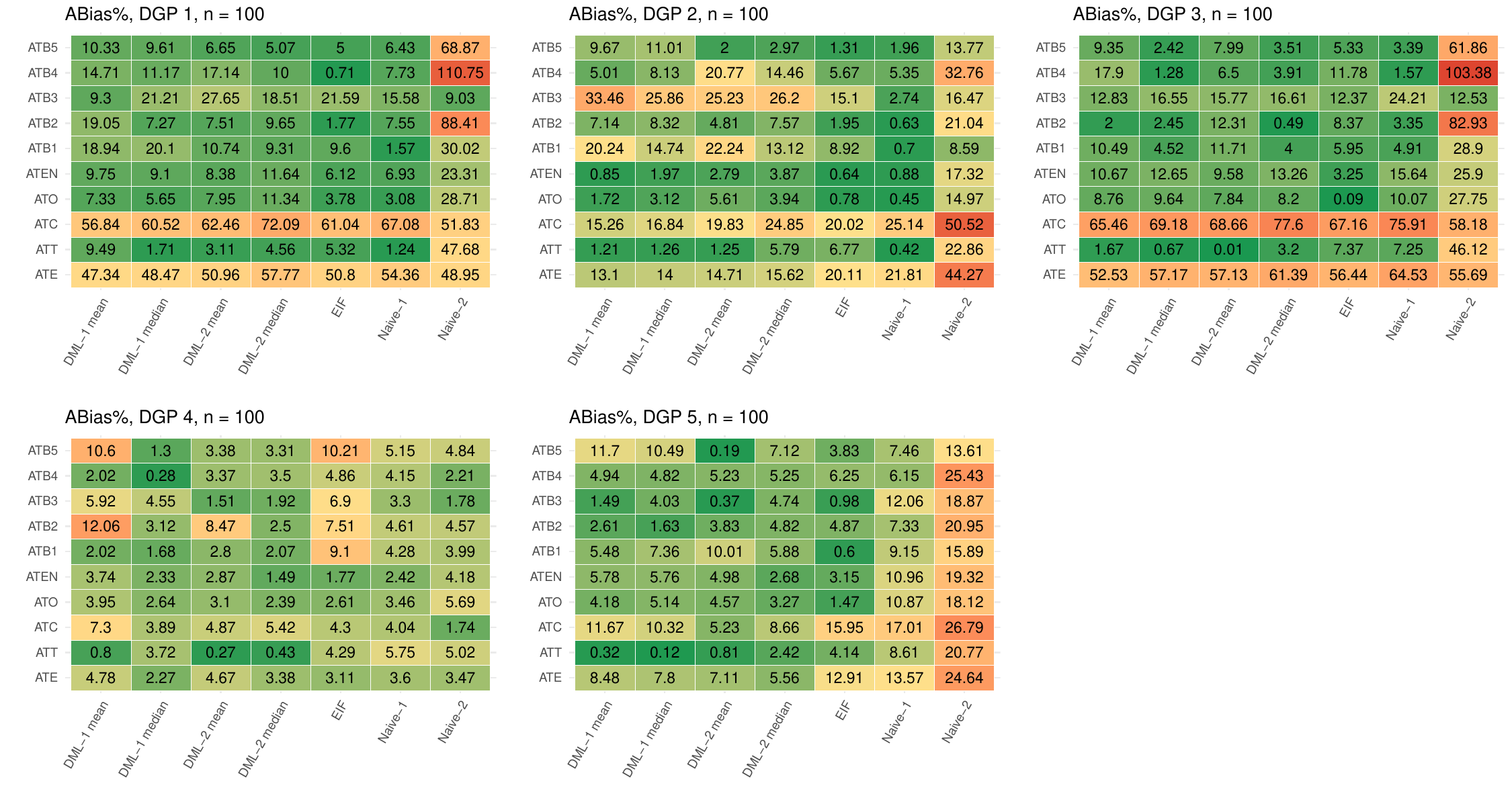}
    \caption{Simulation results of ARBias\% under $n=100$ and methods ensemble. }
\end{figure}

\begin{figure}[H]
    \centering
    \includegraphics[width=\linewidth]{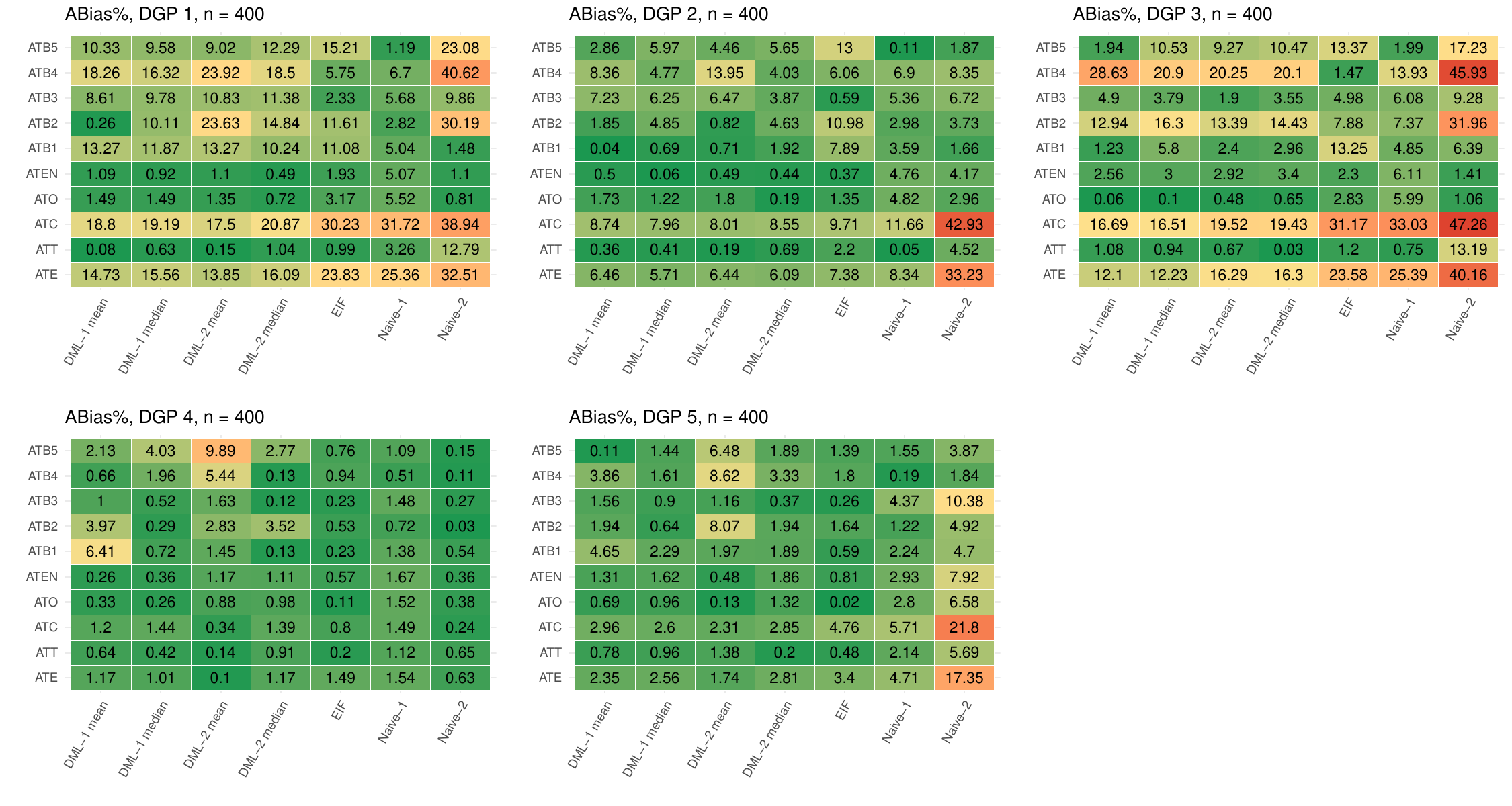}
    \caption{Simulation results of ARBias\% under $n=400$ and methods ensemble. }
\end{figure}

\begin{figure}[H]
    \centering
    \includegraphics[width=\linewidth]{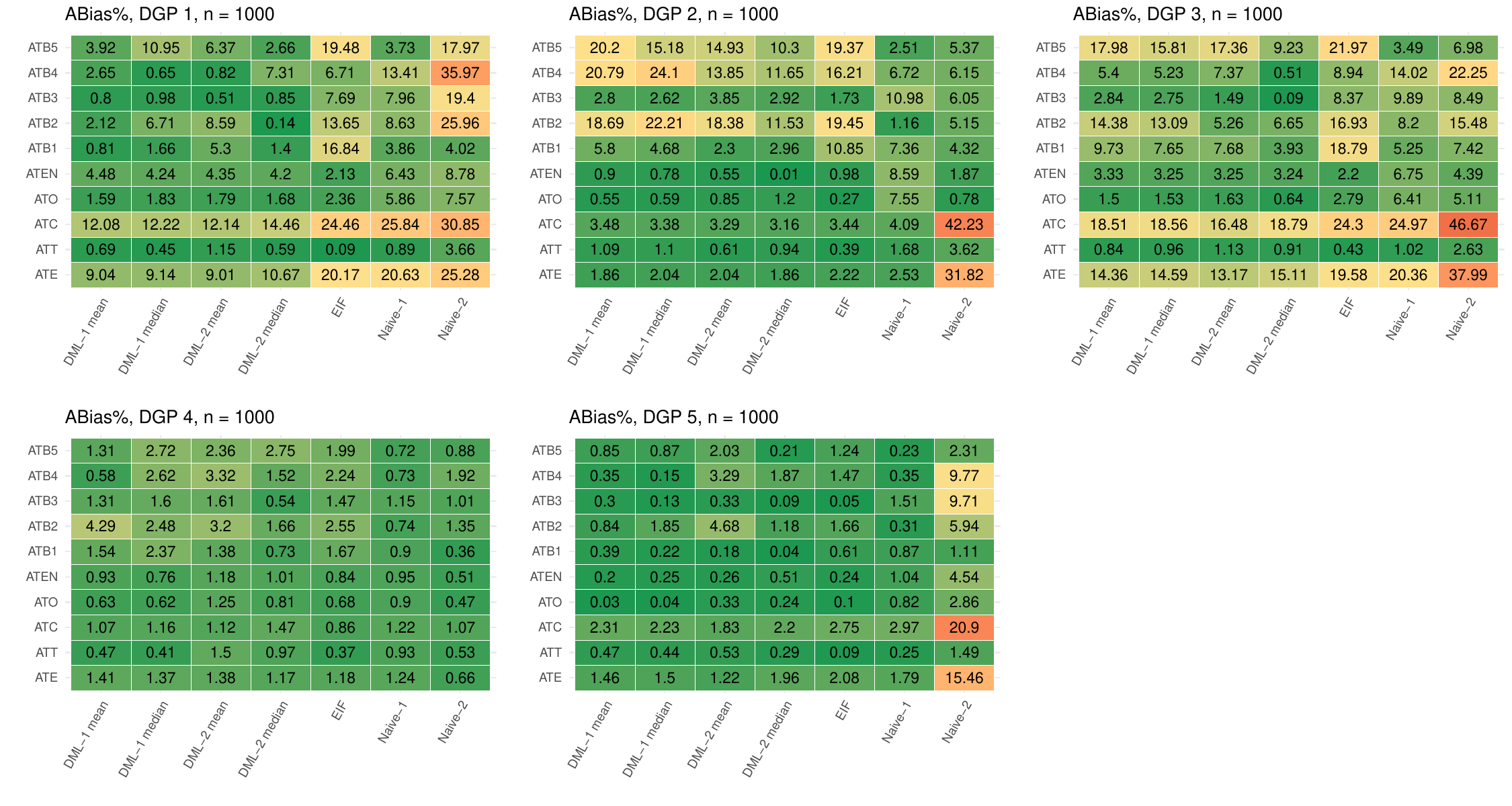}
    \caption{Simulation results of ARBias\% under $n=1000$ and methods ensemble. }
\end{figure}

\begin{figure}[H]
    \centering
    \includegraphics[width=\linewidth]{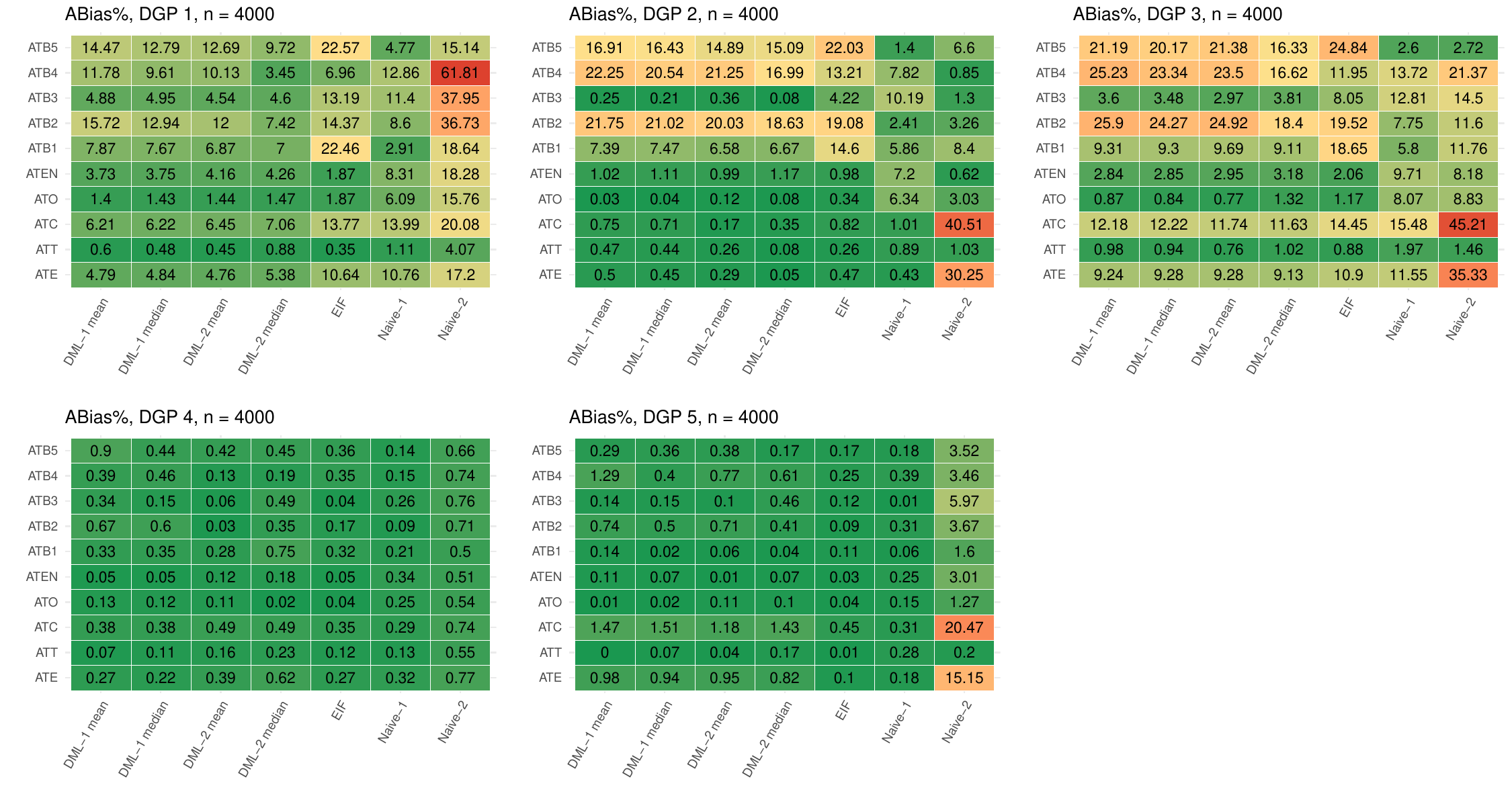}
    \caption{Simulation results of ARBias\% under $n=4000$ and methods ensemble. }
\end{figure}

\begin{figure}[H]
    \centering
    \includegraphics[width=\linewidth]{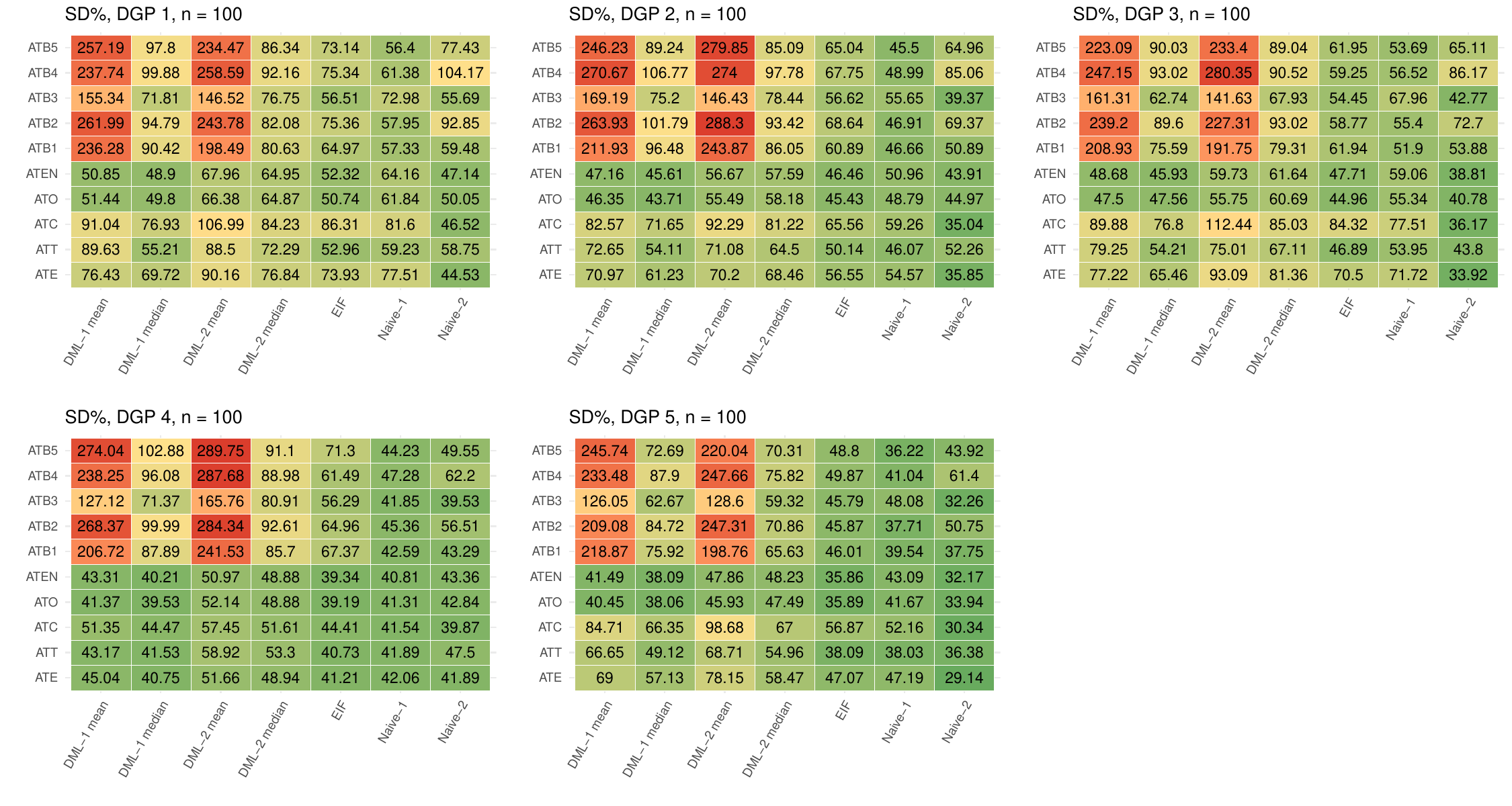}
    \caption{Simulation results of SD\% under $n=100$ and methods ensemble. }
\end{figure}

\begin{figure}[H]
    \centering
    \includegraphics[width=\linewidth]{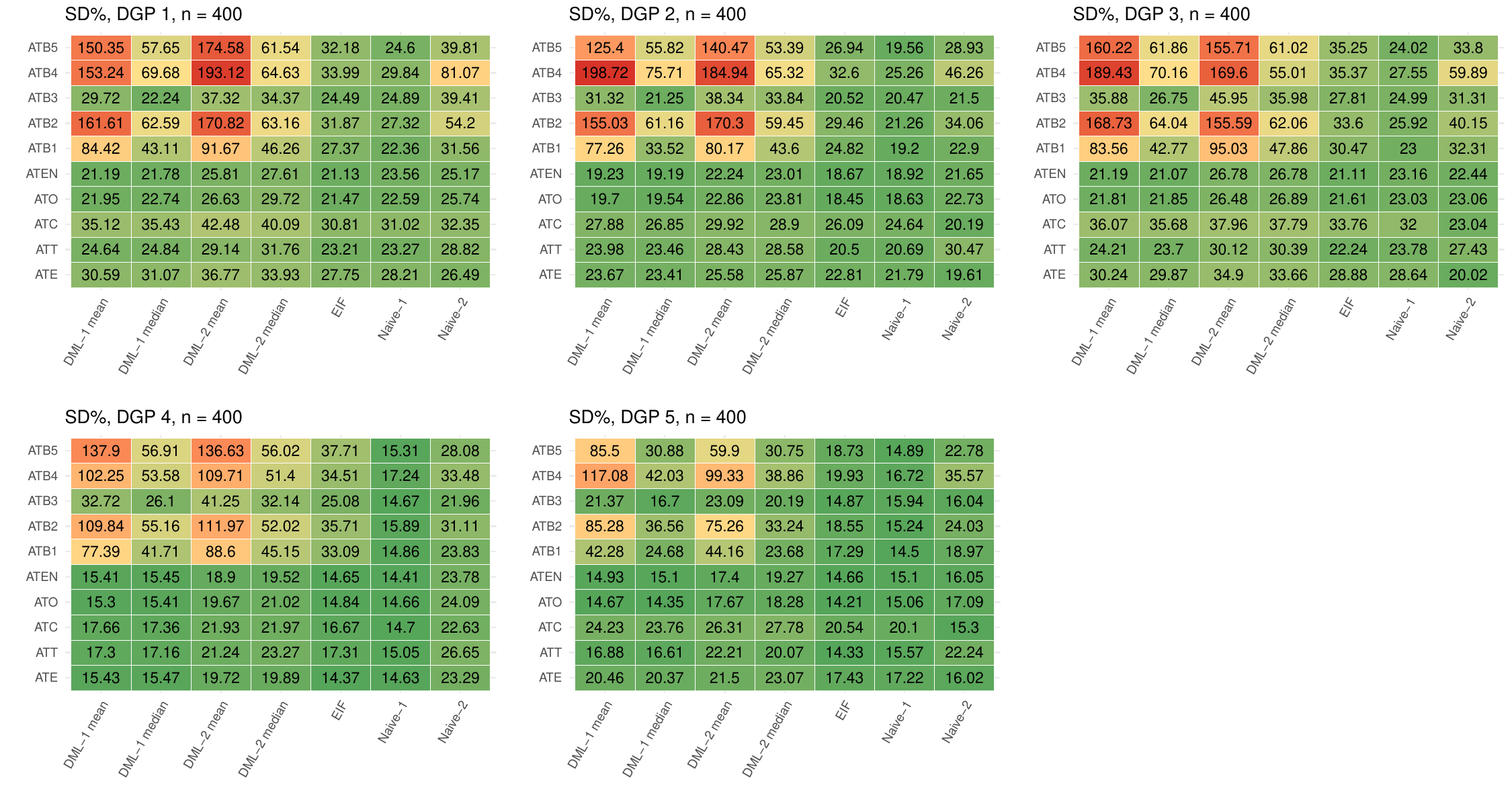}
    \caption{Simulation results of SD\% under $n=400$ and methods ensemble. }
\end{figure}

\begin{figure}[H]
    \centering
    \includegraphics[width=\linewidth]{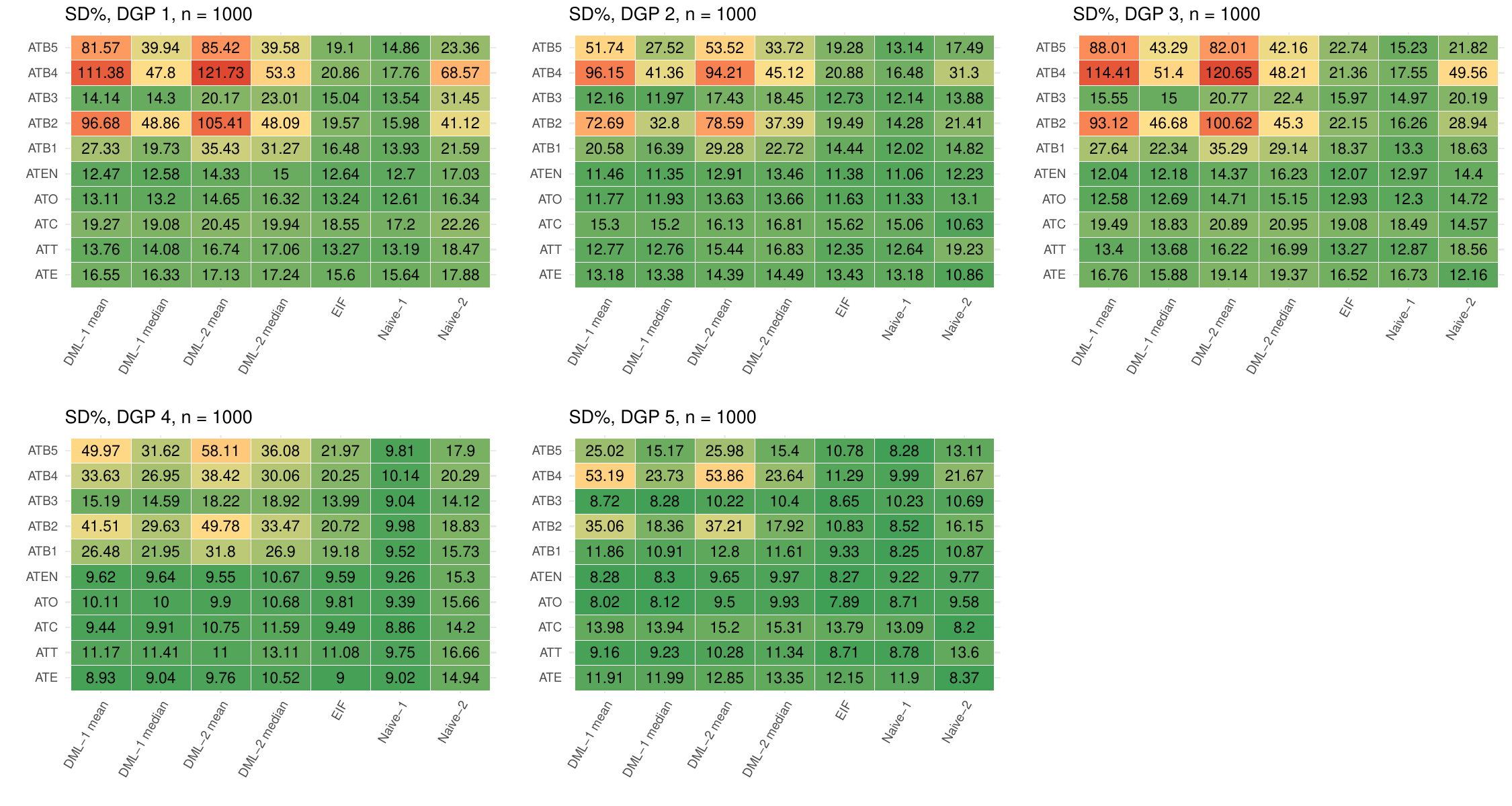}
    \caption{Simulation results of SD\% under $n=1000$ and methods ensemble. }
\end{figure}

\begin{figure}[H]
    \centering
    \includegraphics[width=\linewidth]{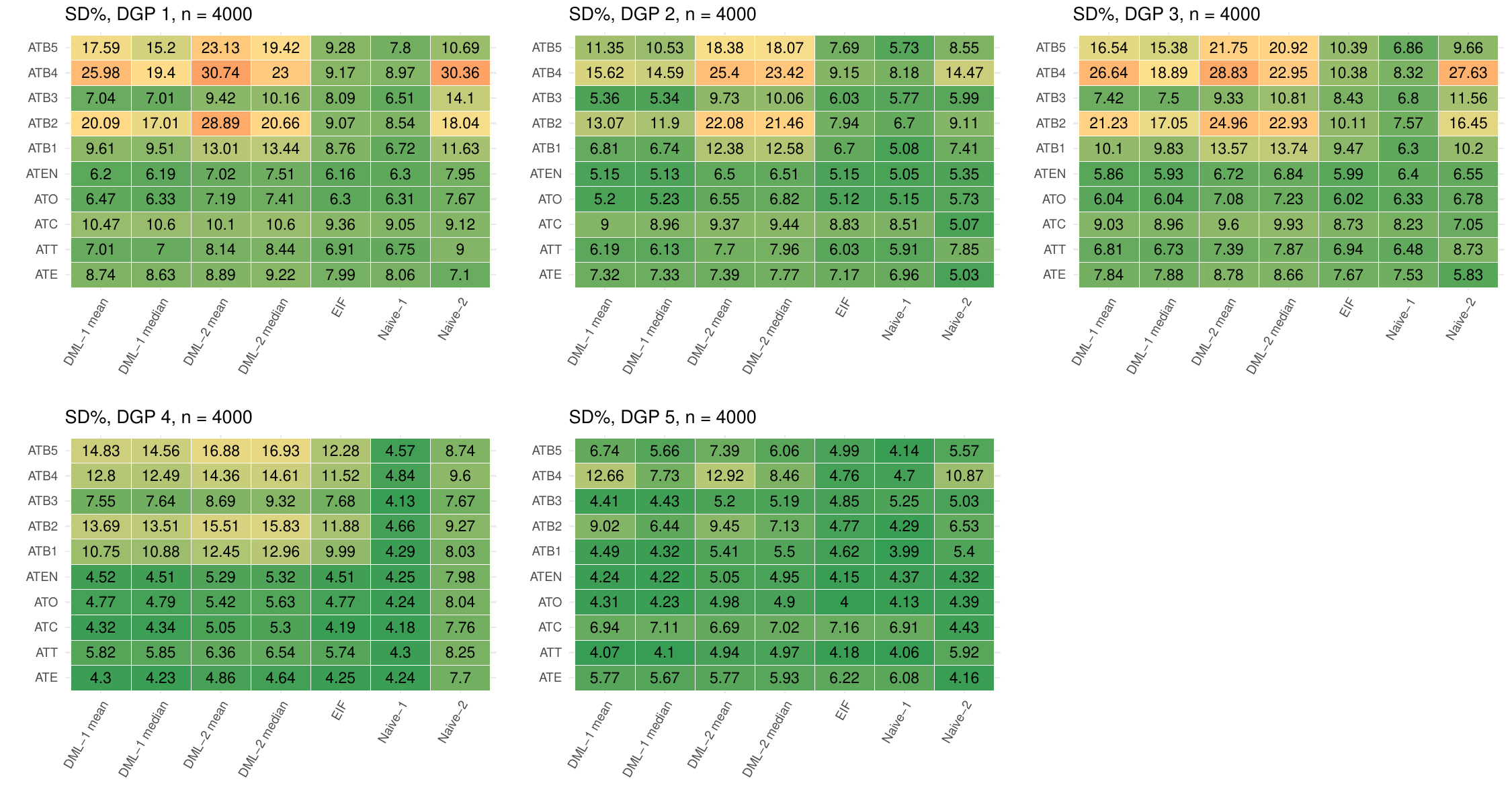}
    \caption{Simulation results of SD\% under $n=4000$ and methods ensemble. }
\end{figure}

\begin{figure}[H]
    \centering
    \includegraphics[width=\linewidth]{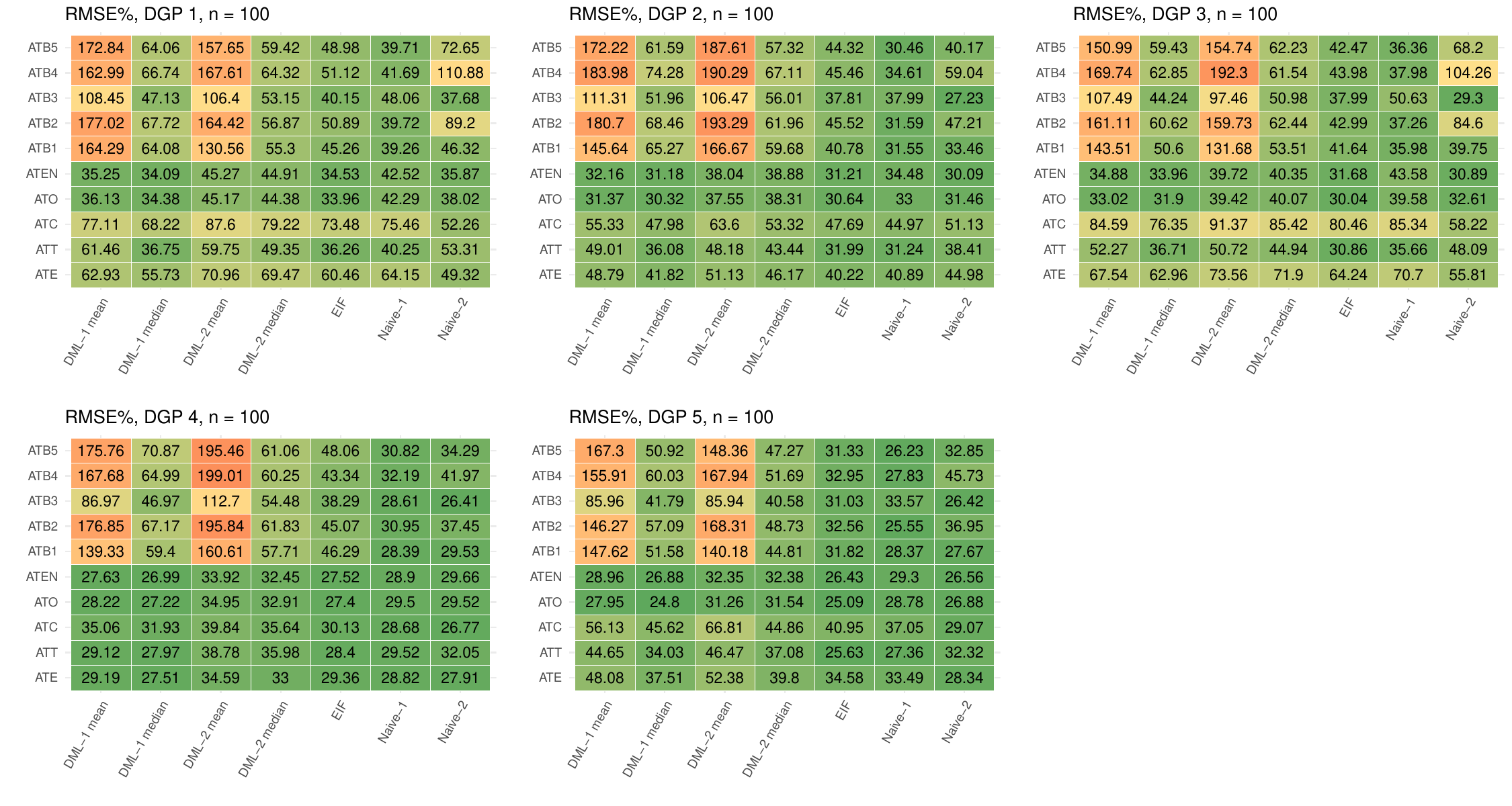}
    \caption{Simulation results of RMSE\% under $n=100$ and methods ensemble. }
\end{figure}

\begin{figure}[H]
    \centering
    \includegraphics[width=\linewidth]{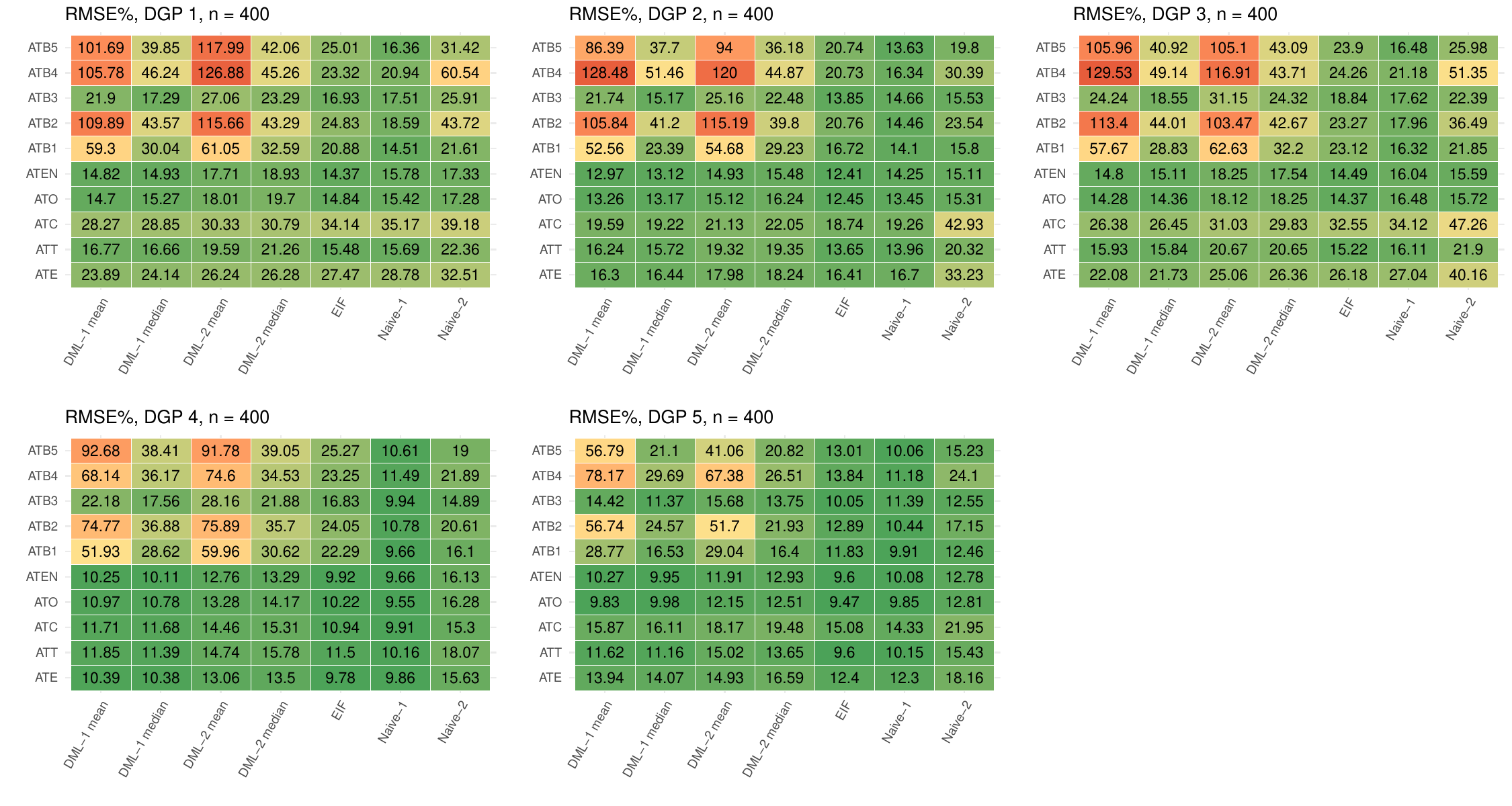}
    \caption{Simulation results of RMSE\% under $n=400$ and methods ensemble. }
\end{figure}

\begin{figure}[H]
    \centering
    \includegraphics[width=\linewidth]{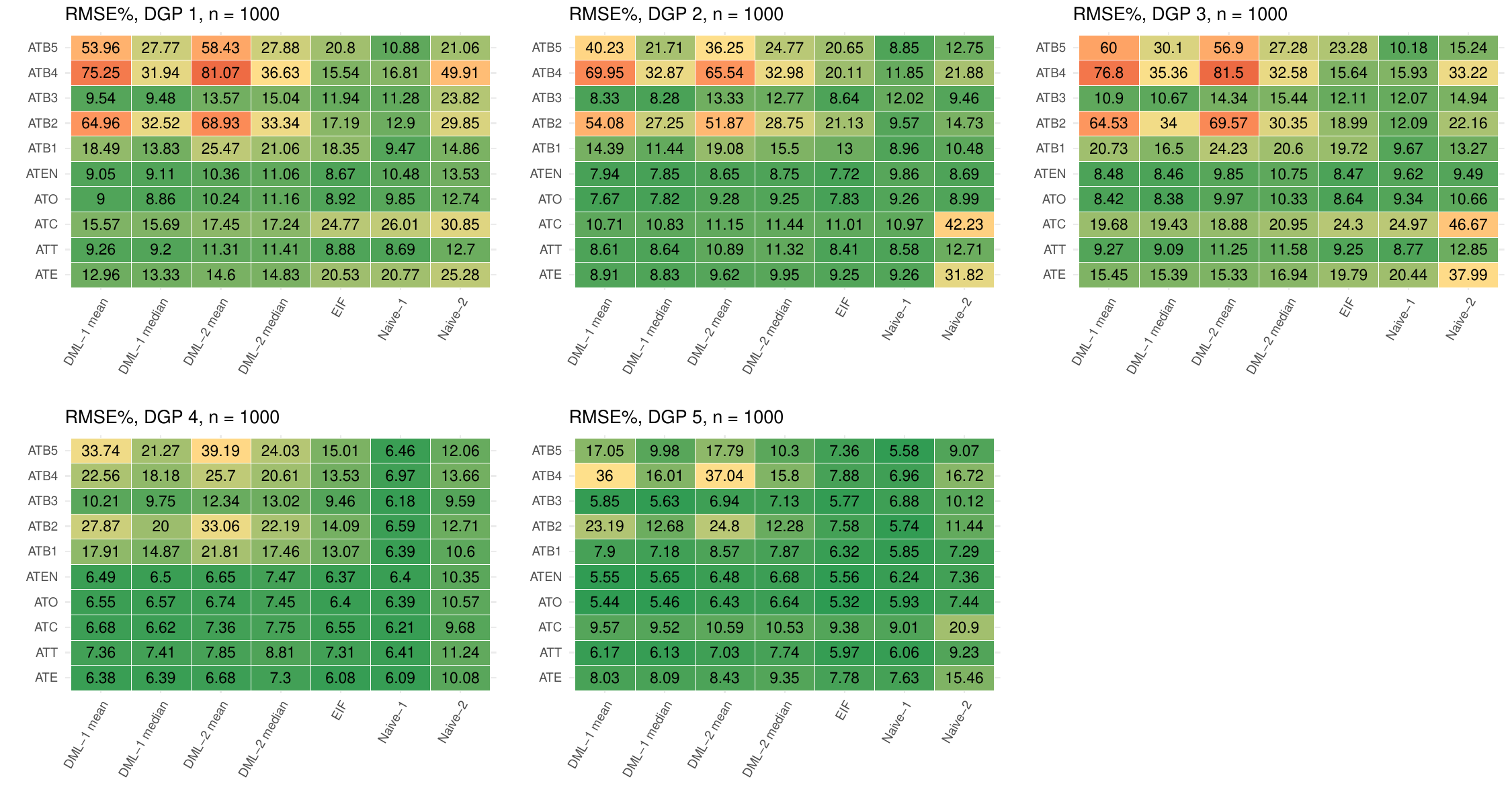}
    \caption{Simulation results of RMSE\% under $n=1000$ and methods ensemble. }
\end{figure}

\begin{figure}[H]
    \centering
    \includegraphics[width=\linewidth]{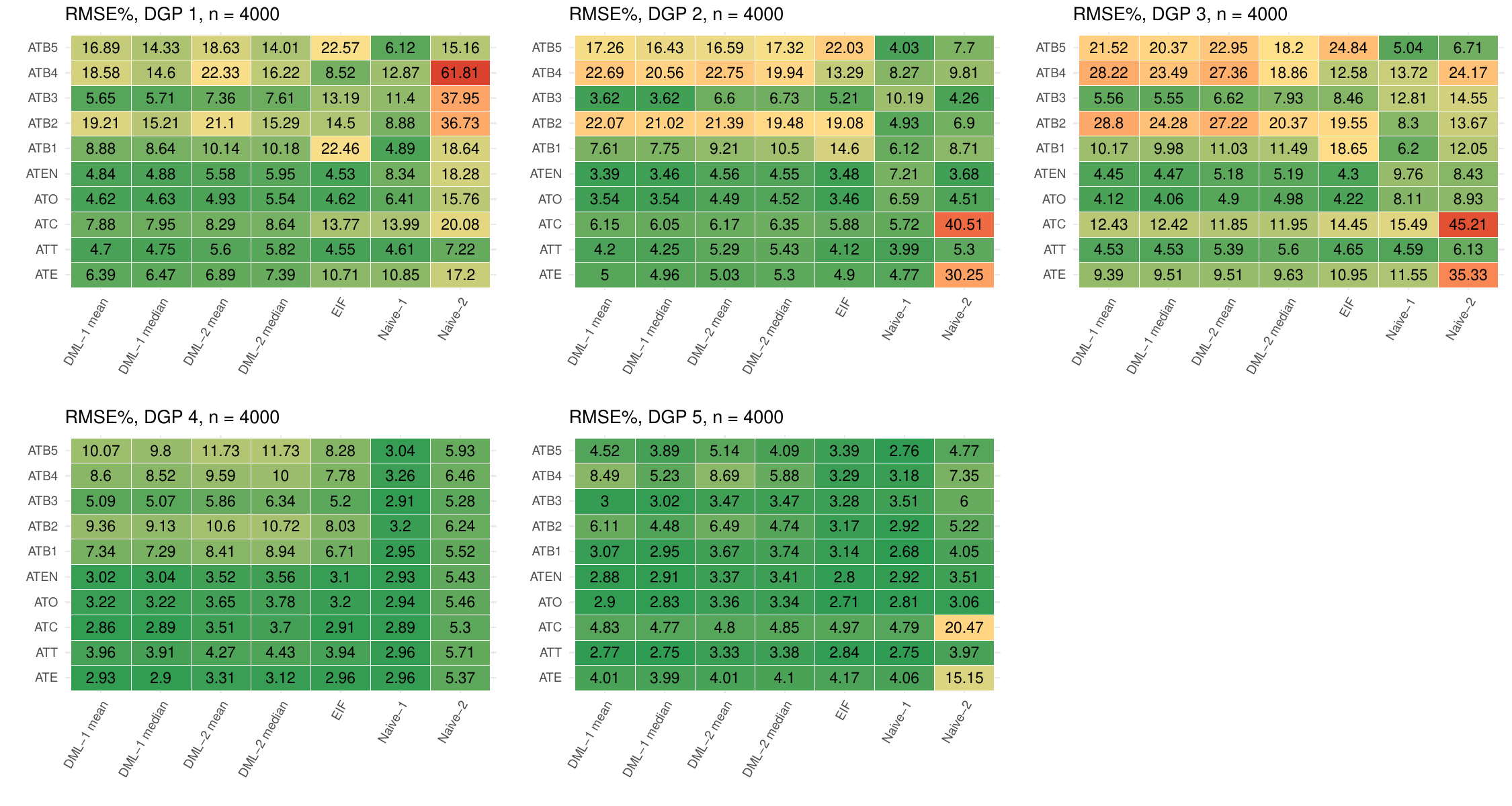}
    \caption{Simulation results of RMSE\% under $n=4000$ and methods ensemble. }
\end{figure}

\begin{figure}[H]
    \centering
    \includegraphics[width=\linewidth]{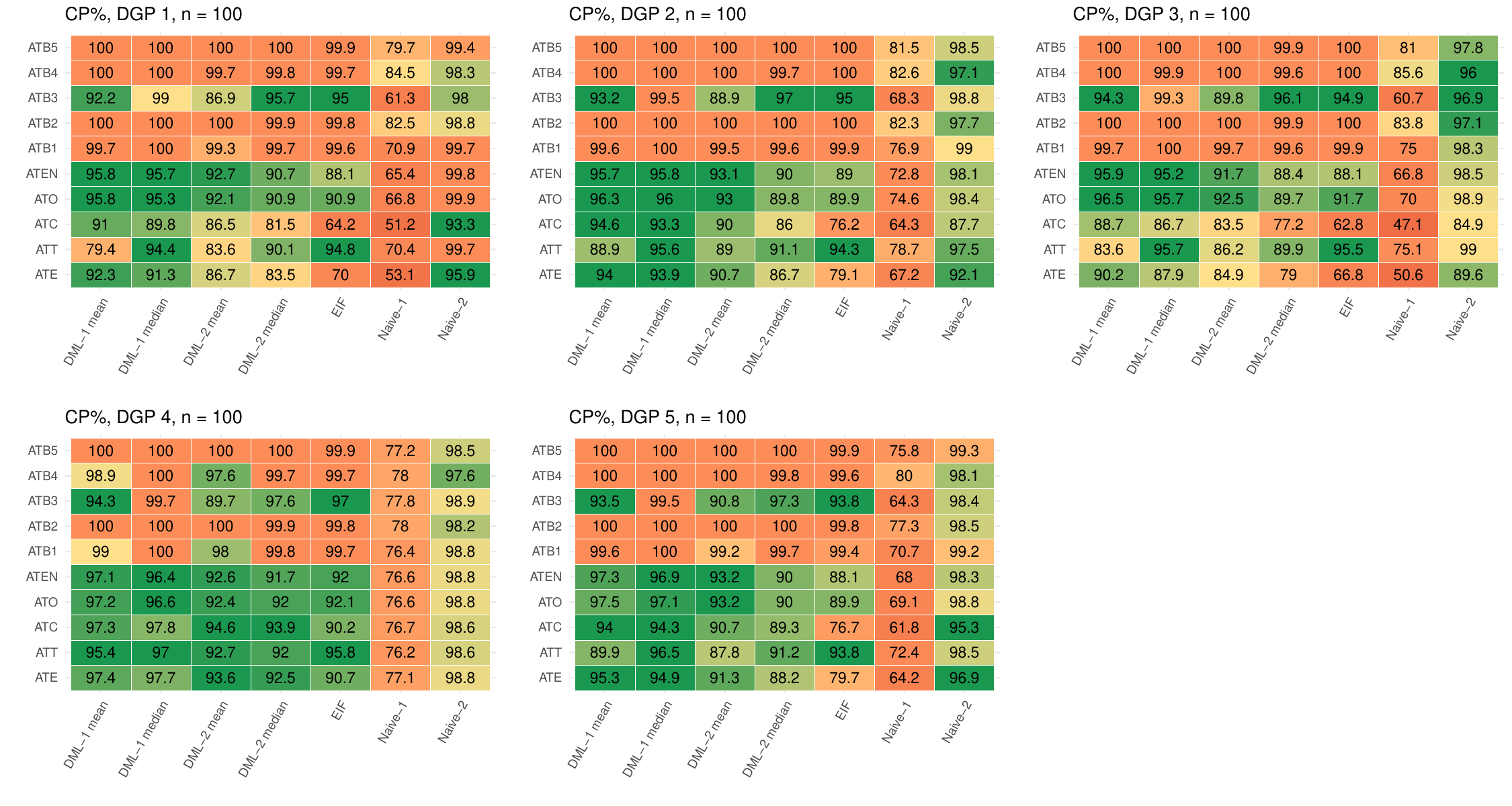}
    \caption{Simulation results of CP\% under $n=100$ and methods ensemble. }
\end{figure}

\begin{figure}[H]
    \centering
    \includegraphics[width=\linewidth]{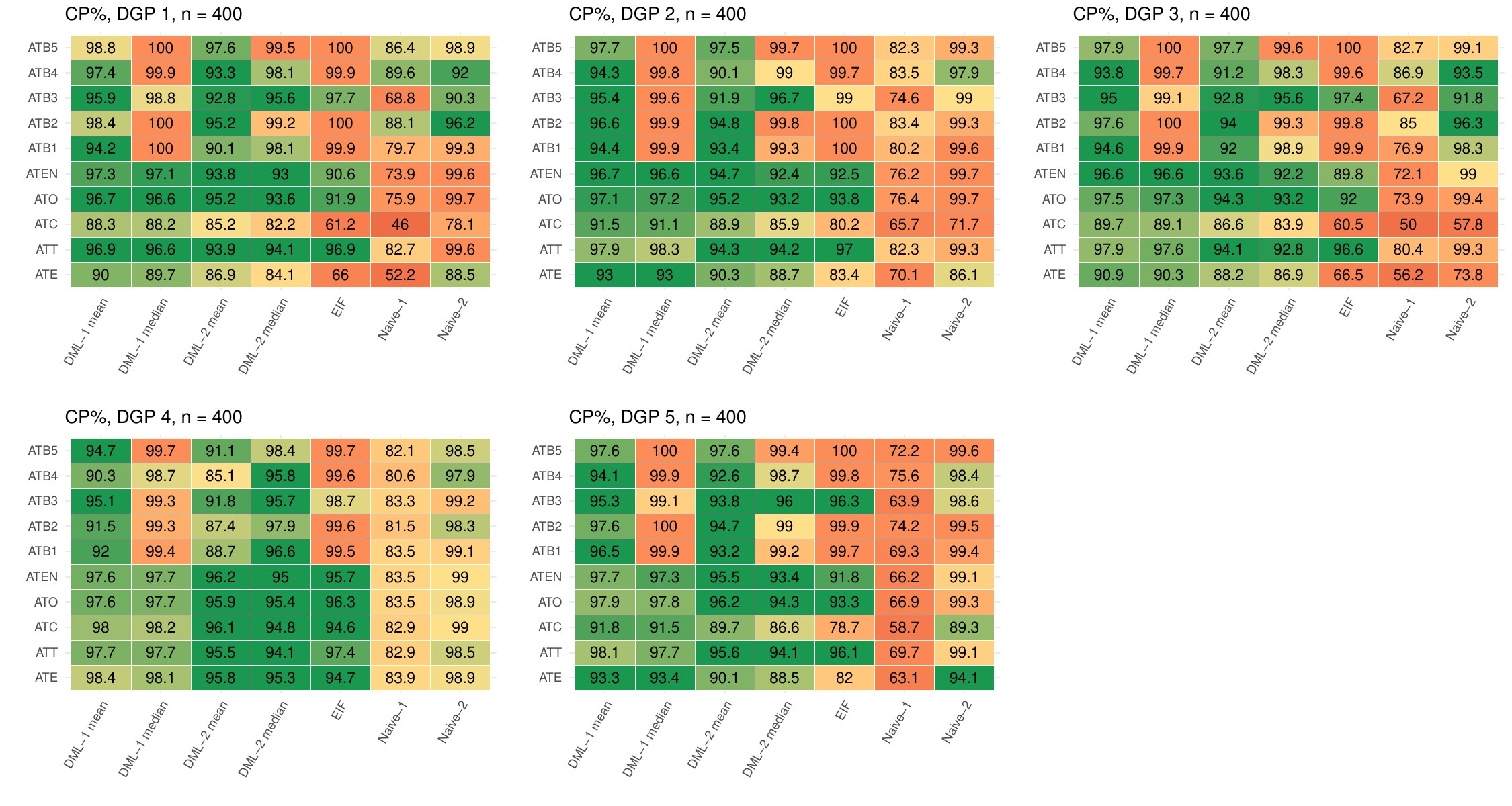}
    \caption{Simulation results of CP\% under $n=400$ and methods ensemble. }
\end{figure}

\begin{figure}[H]
    \centering
    \includegraphics[width=\linewidth]{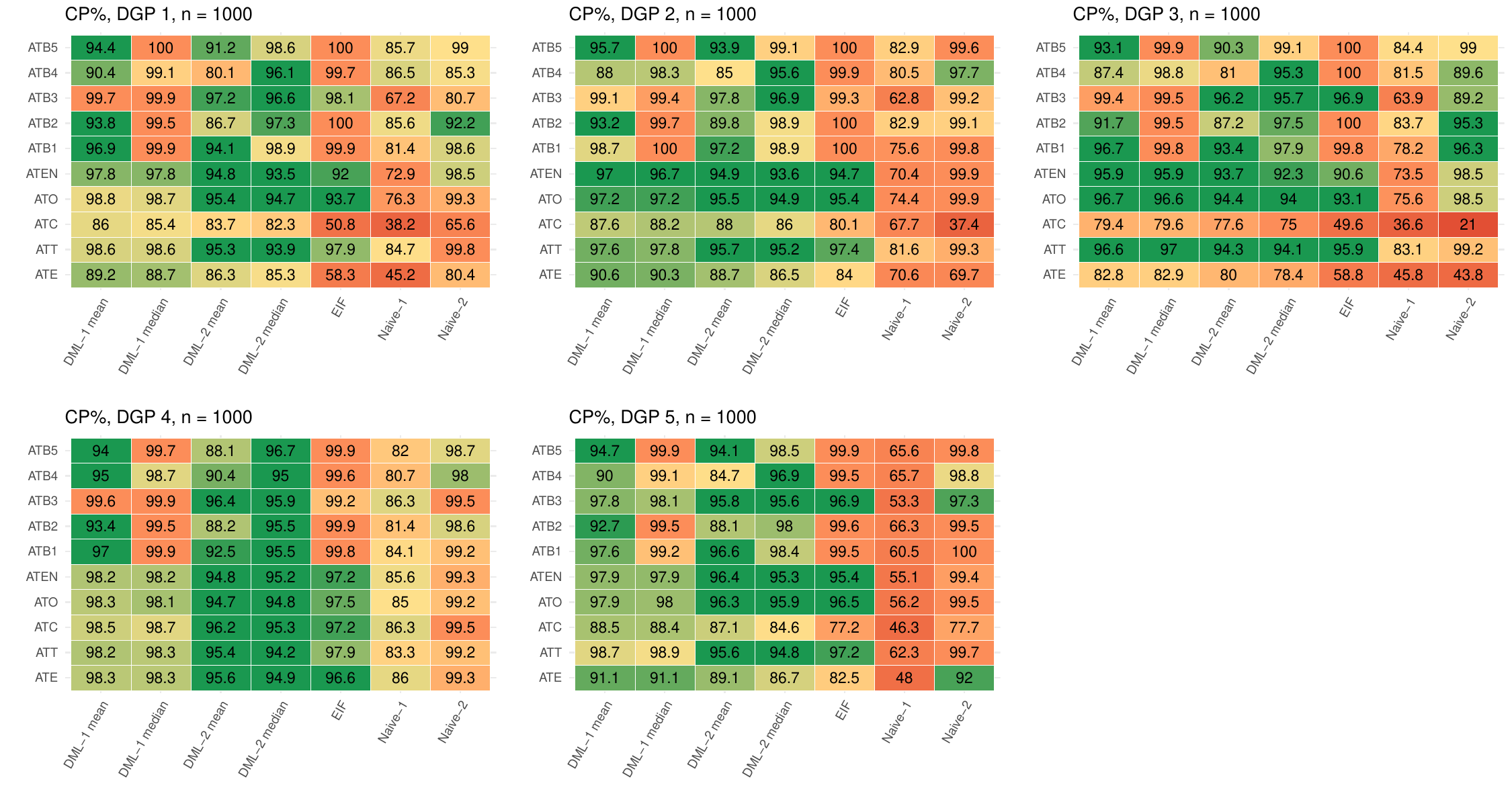}
    \caption{Simulation results of CP\% under $n=1000$ and methods ensemble. }
\end{figure}

\begin{figure}[H]
    \centering
    \includegraphics[width=\linewidth]{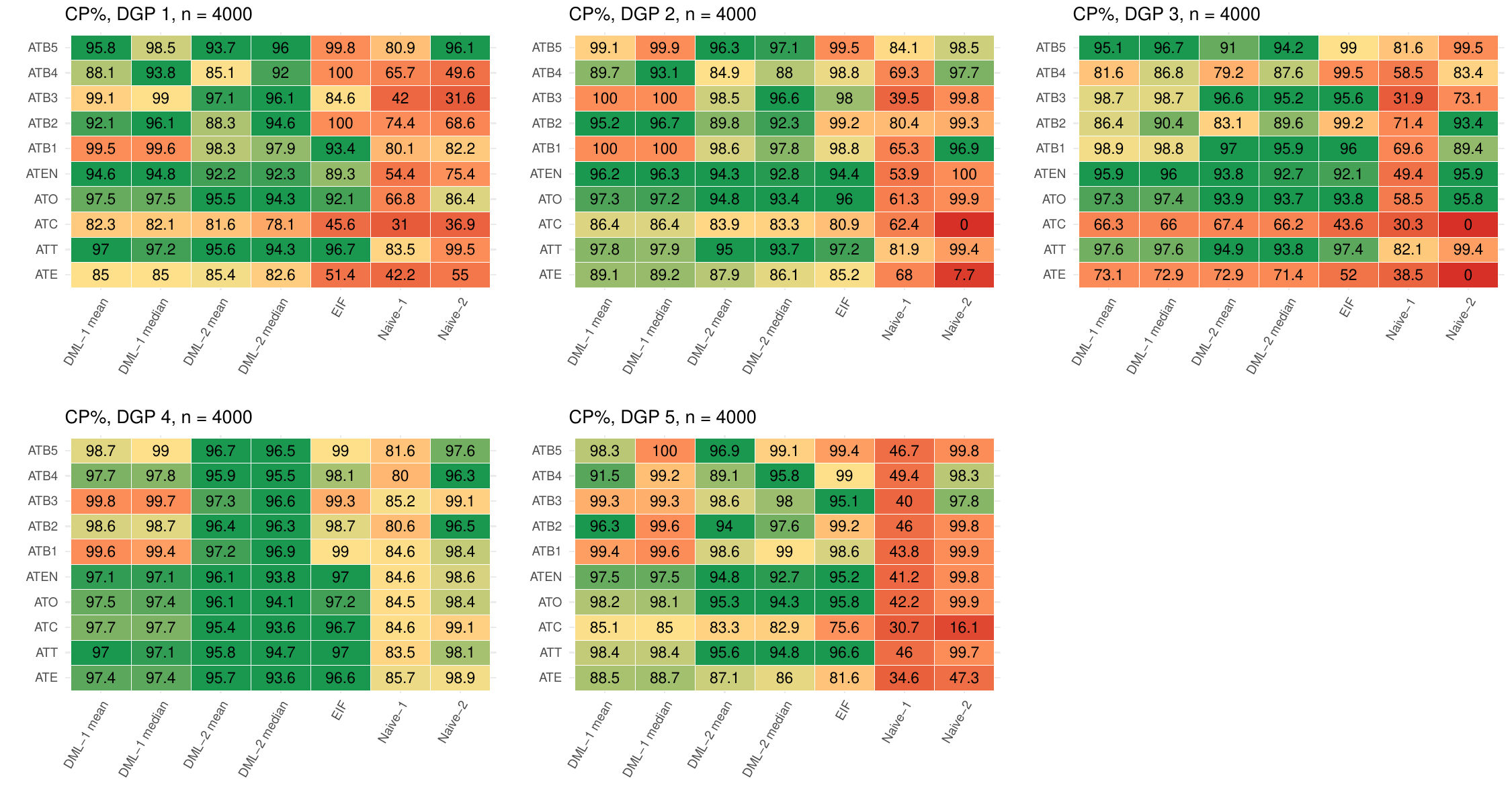}
    \caption{Simulation results of CP\% under $n=4000$ and methods ensemble. }
\end{figure}

\end{document}